%
%
%
%
%
%
%
\documentclass[%
 jmp,
]{revtex4-1}

\usepackage{extract}
\usepackage{fancyhdr}
\usepackage[utf8]{inputenc}
\usepackage{amsmath, amsfonts, amssymb, amstext, amsfonts}
\usepackage{graphicx}
\usepackage{multirow}
\usepackage{wrapfig}
\usepackage{epsfig}
\usepackage{framed}
\usepackage{textcomp}
\usepackage{dsfont}
\usepackage{tikz}
\usepackage[ruled,vlined]{algorithm2e}
\usepackage{booktabs}

\usepackage{graphicx}
\usepackage{dcolumn}
\usepackage{bm}
\usepackage{comment}
 \usepackage{amsthm}

 \usepackage{mathtools}
 \usepackage{color}
 \usepackage{enumitem}
 \usepackage{verbatim}
 \usepackage[
	colorlinks=true,
	urlcolor=blue,
	linkcolor=green
]{hyperref}

\providecommand{\customgenericname}{}
\newcommand{\newcustomtheorem}[2]{%
  \newenvironment{#1}[1]
  {%
   \renewcommand\customgenericname{#2}%
   \renewcommand\theinnercustomgeneric{##1}%
   \innercustomgeneric
  }
  {\endinnercustomgeneric}
}

\newcustomtheorem{res}{Result}

 \newtheorem*{thm*}{Theorem}
 \newtheorem{thm}{Theorem}[section]
 \newtheorem{cor}[thm]{Corollary}
 \newtheorem{lem}[thm]{Lemma}

 \theoremstyle{definition}
 \newtheorem{defn}[thm]{Definition}
 \theoremstyle{remark}
 \newtheorem{rem}[thm]{Remark}
 \newtheorem*{ex}{Example}
 
 \newcommand{\sq}[1]{`#1'}


\usepackage[utf8]{inputenc}
\usepackage[T1]{fontenc}
\usepackage{mathptmx}

\usepackage[cal=cm]{mathalfa}

\newcommand{\idmat}{\mathds{1}}
\newcommand{\unitary}{\mathcal{U}}
\newcommand{\idvec}{\boldsymbol{1}}
\newcommand{\choi}{\mathfrak{C}}
\newcommand{\flip}{\mathbb{F}}
\newcommand{\idop}{\mathrm{id}}
\newcommand{\blt}{\mathcal{B}}

\newcommand{\trcl}{\mathcal{S}_1}

\newcommand{\norm}[1]{\left\lVert #1 \right\rVert}
\newcommand{\abs}[1]{\left\vert #1 \right\vert}
\newcommand{\tr}[1]{\mathrm{tr}\left[#1\right]} 
\newcommand{\ptr}[2]{\mathrm{tr}_{#1}\left[#2\right]}
\newcommand{\set}[2]{\left\{ #1 \,\middle|\, #2 \right\}}
\newcommand{\comu}[2]{\left[ #1\, ,\, #2 \right]}
\newcommand{\acomu}[2]{\left\{ #1 \, , \, #2 \right\}}
\newcommand{\R}{\mathbb{R}}

\newcommand{\C}{\mathbb{C}}
\newcommand{\N}{\mathbb{N}}

\DeclarePairedDelimiter\bra{\langle}{\rvert}
\DeclarePairedDelimiter\ket{\lvert}{\rangle}
\DeclarePairedDelimiterX\braket[2]{\langle}{\rangle}{#1 \delimsize\vert #2}

\begin{document}

\preprint{AIP/123-QED}

\title[Sample title]{Quantum and classical dynamical semigroups of superchannels and semicausal channels}

\author{Markus Hasenöhrl}
 \email{m.hasenoehrl@tum.de}
\author{Matthias C.~Caro}%
 \email{caro@ma.tum.de}
\affiliation{ 
Department of Mathematics, Technical University of Munich, Garching, Germany\\ 
Munich Center for Quantum Science and Technology (MCQST), Munich, Germany
}%

\date{\today}

\begin{abstract}
\noindent 
Quantum devices are subject to natural decay. 
We propose to study these decay processes as the Markovian evolution of quantum channels, which leads us to dynamical semigroups of superchannels.
A superchannel is a linear map that maps quantum channels to quantum channels, while satisfying suitable consistency relations.
If the input and output quantum channels act on the same space, then
we can consider dynamical semigroups of superchannels.
No useful constructive characterization of the generators of such semigroups is known.
\noindent We characterize these generators in two ways: First, we give an efficiently checkable criterion for whether a given map generates a dynamical semigroup of superchannels.
Second, we identify a normal form for the generators of semigroups of quantum superchannels, analogous to the GKLS form in the case of quantum channels.
\noindent To derive the normal form, we exploit the relation between superchannels and semicausal completely positive maps, reducing the problem to finding a normal form for the generators of semigroups of semicausal completely positive maps. We derive a normal for these generators using a novel
technique, which applies also to infinite-dimensional systems.    
\noindent Our work paves the way to a thorough investigation of semigroups of superchannels: Numerical studies become feasible because admissible generators can now be explicitly generated and checked. And analytic properties of the corresponding evolution equations are now accessible via
our normal form. 
\end{abstract}

\maketitle

\section{Introduction and Motivation}\label{Sct:Introduction}

Anybody who has ever owned an electronic device knows: These devices have a finite lifespan after which they stop working properly. At least from a consumer perspective, a long lifespan is a desirable property for such devices. Thus, it is important for an engineer to know which kind of decay processes can affect a device, in order to suppress them by an appropriate design. 
Certainly, these considerations will also become important for the design of quantum devices. We therefore propose to study systematically the decay processes that quantum devices can be subject to. 

In this work, we take a first step in this direction by deriving the general form of linear time-homogeneous master equations that govern how quantum channels behave when inserted into a circuit board at different points in time. This leads to the study of dynamical semigroups of superchannels. Here, superchannels are linear transformations between quantum channels~\cite{Chiribella_2008}. 

\begin{figure}[!ht]
    \centering
    \includegraphics[scale = 0.45]{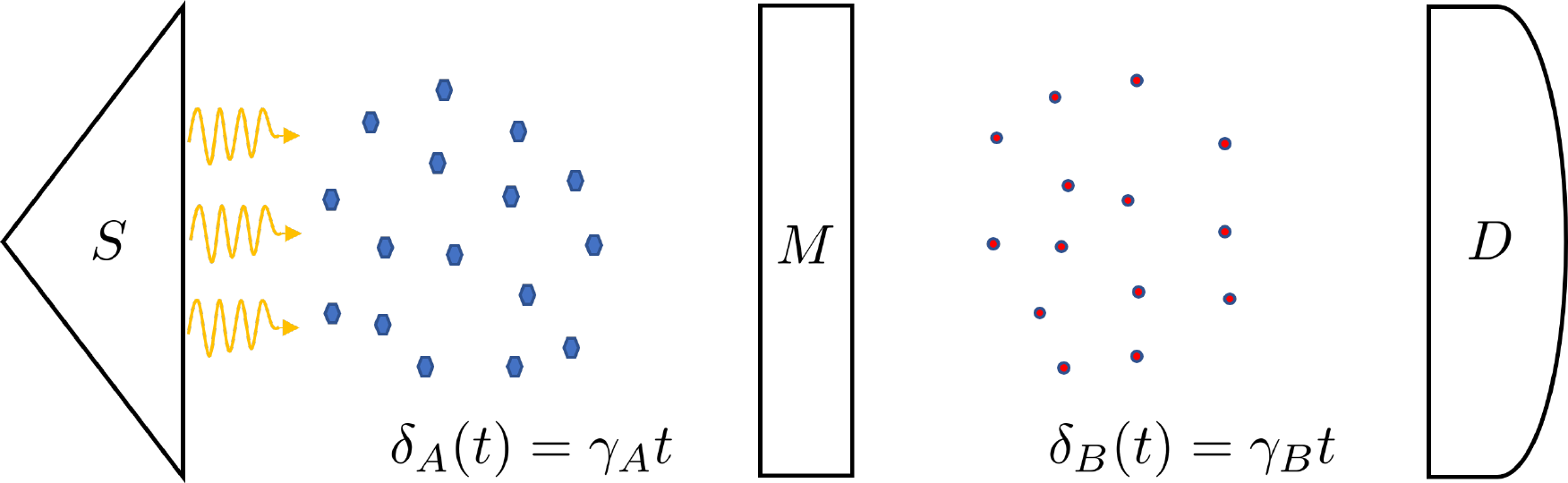}
    \caption{Estimating the transmissivity of a material under the influence of an influx of particles into the regions between the components.}
    \label{fig:initialExample}
\end{figure}

Let us consider a concrete example, see Fig.~\ref{fig:initialExample}. Suppose we are trying to estimate the optical transmissivity of some material ($M$), which we assume to depend on the polarization of the incident light. A simple approach is to send photons from a light source ($S$) through the material and to count how many photons arrive a the detector ($D$). We model the material by a quantum channel $T_M$, acting on the states of photons described as three-level systems, with the levels corresponding to vacuum, horizontal, and vertical polarization. In an idealized world, with a perfect vacuum in the regions between the source, the material, and the detector, we can infer the transmissivity from the measurement statistics of the state $T_M(\sigma)$, where $\sigma$ is the state of the photon emitted from the source. However, in a more realistic scenario, even though we might have created an (almost) perfect vacuum between the devices at construction time, some particles are leaked into that region over time. Then, interactions between the photons and these particles might occur, causing absorption or a change in polarization. Hence, the situation is no longer described accurately by $T_M$ alone, but also requires a description of the particle-filled regions. 

To find such a description, we us argue that the effect of particles in some region (here, either between $S$ and $M$; or $M$ and $D$) can be modeled by a quantum dynamical semigroup, parametrized by the particle density $\delta$. 
If the particle density is reasonably low and $Q_\delta$ is the quantum channel describing the effect of the particles on the incident light at a given $\delta$, then, as explained in Fig. \ref{fig:InitialExampleMarkovianity}, $Q_\delta$ satisfies the semigroup property $Q_{\delta_1 + \delta_2} = Q_{\delta_1} \circ Q_{\delta_2}$. 
\begin{figure}[!ht]
    \centering
    \includegraphics[scale = 0.35]{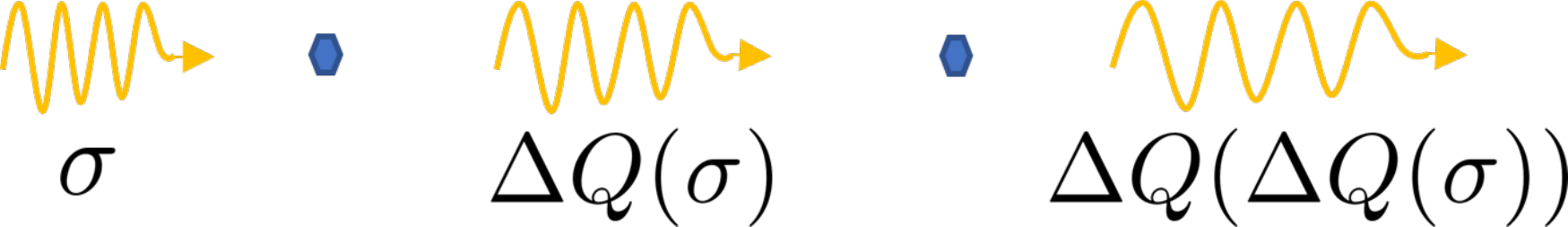}
    \caption{If the particle density is low, then the incident photon interacts with the particles in the region sequentially and independently. The effect of a single interaction can be described by a channel $\Delta Q$. Hence, the state after the first interaction is $\Delta Q(\sigma)$, the state after the second  interaction is $\Delta Q (\Delta Q(\sigma))$, and so forth. The number of interactions is given by the product of the particle density $\delta$ and the volume $V$. Hence, the effect of an region with fixed volume is described by the channel $Q_\delta = (\Delta Q)^{\delta V}$. It follows that if $\delta = \delta_1 + \delta_2$, then $Q_{\delta_1 + \delta_2} = (\Delta Q)^{\delta_1 V} (\Delta Q)^{\delta_2 V} = Q_{\delta_1} \circ Q_{\delta_2}$. The semigroup property for real $\delta$ can then be obtained in the continuum limit.}
    \label{fig:InitialExampleMarkovianity}
\end{figure}
Furthermore, if there are no particles then there should be no effect. Hence, $Q_0 = \idop$. After adding continuity in the parameter $\delta$ as a further natural assumption, the family $\{Q_\delta\}_{\delta \geq 0}$ forms a quantum dynamical semigroup. That is, we can write $Q_\delta = e^{L\delta}$, for some generator $L$ in GKLS-form. 

If we assume in our example that particles of type $A$ are leaked into the region between $S$ and $M$ at a rate $\gamma_A$ and that particles of type $B$ are leaked into the region between $M$ and $D$ at a rate $\gamma_B$, then the overall channel describing the transformation that emitted photons undergo at time $t$ is given by
\begin{align*}
    \hat{S}_t(T_M) = e^{\gamma_B L_B t} \circ T_M \circ e^{\gamma_A L_A t},
\end{align*}
where $L_A$ and $L_B$ are the generators of the dynamical semigroups describing the effect of the particles in the respective regions. 

We note that at any fixed time, $\hat{S}_t$ interpreted as a map on quantum channels is a superchannel written in `circuit'-form. This means, that $\hat{S}_t$ describes a transformation of quantum channels implemented via pre- and post-processing. Furthermore, $\hat{S}_t(T_M)$ can be determined by solving the time-homogenous master equation
\begin{align*}
    \frac{d}{dt} T(t) = \hat{L}(T(t)),
\end{align*}
where $\hat{L}(T) = \gamma_AL_A \circ T + \gamma_B T \circ L_B$, with initial condition $T(0) = T_M$. In other words, we have
\begin{align*}
    \hat{S}_t = e^{\hat{L}t}
\end{align*} 
and thus the family $\{\hat{S}_t\}_{t \geq 0}$ forms a dynamical semigroup of superchannels.

By inductive reasoning, we thus arrive at our central physical hypothesis: Decay-processes of quantum devices with some sort of influx are well described by dynamical semigroups of superchannels. It follows that such decay-processes can be understood by characterizing dynamical semigroups of superchannels. Such a characterization is the main goal of our work. 

In particular, we aim to understand dynamical semigroups of superchannels in terms of their generators. We characterize these generators fully by providing two results: First, we give an efficiently checkable criterion for whether a given map generates a dynamical semigroup of superchannels.
Second, we identify a normal form for the generators of semigroups of quantum superchannels, analogous to the GKLS form in the case of quantum channels. Interestingly, we find that the most general form of dynamical semigroups of superchannels goes beyond the simple introductory example above. 

We arrive at these results through a path (see Fig.~\ref{fig:FlowDiagramConcepts}) that also illuminates the connection to the classical case. We start by studying dynamical semigroups of classical superchannels, which (analogously to quantum superchannels being transformations between quantum channels) are transformations between stochastic matrices. We do so by establishing a one-to one correspondence between classical superchannels and certain classical semicausal channels, that is, stochastic matrices on a bipartite system ($AB$) that do not allow for communication from $B$ to $A$ (see Definition~\ref{Defn:ClassicalSemicausality}). We can then obtain a full characterization of the generators of semigroups of classical superchannels by characterizing generators of semigroups of classical semicausal maps first and then translating the results back to the level of superchannels. The study of (dynamical semigroups of) classical superchannels and classical semicausal channels is the content of Section~\ref{sct:classical}.

\begin{figure}[!ht]
    \centering
    \includegraphics[scale = 0.5]{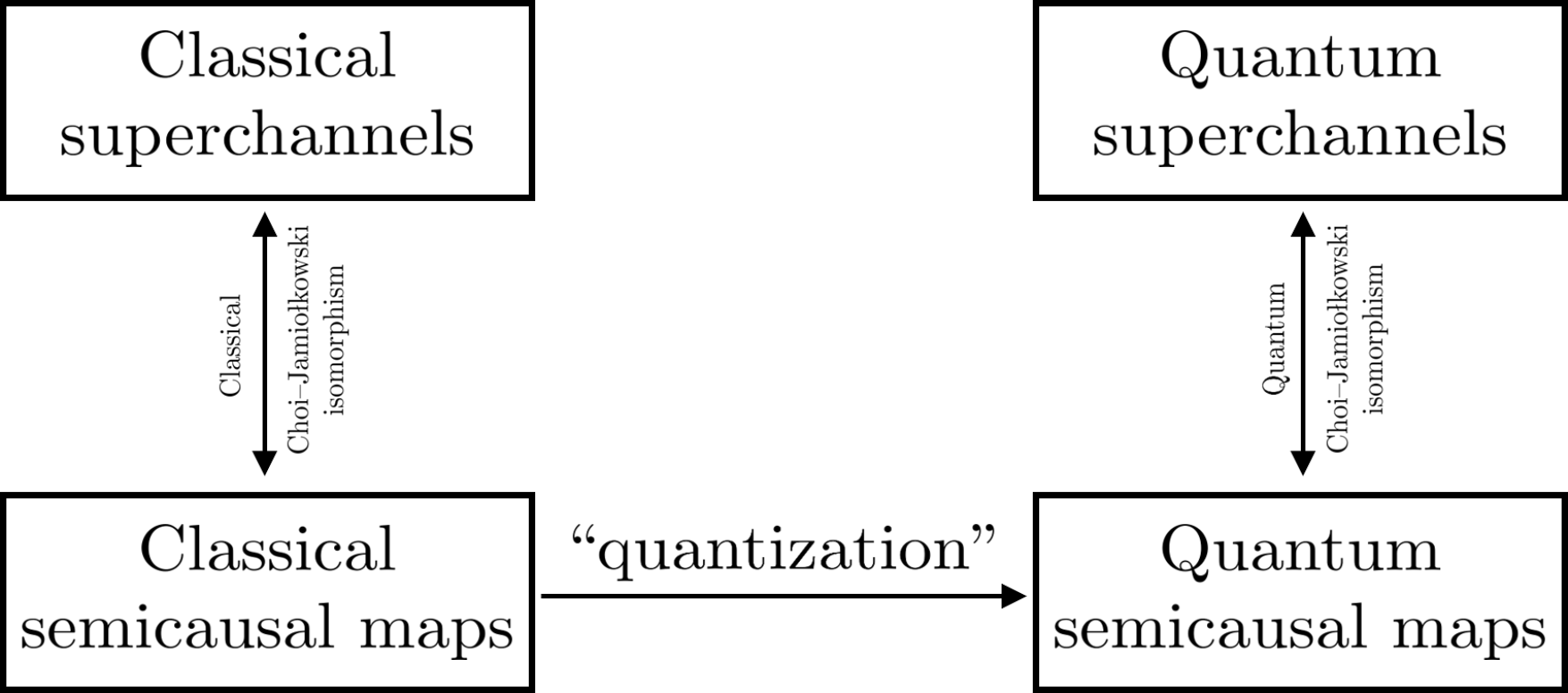}
    \caption{Schematic of the concepts studied in this work.}
    \label{fig:FlowDiagramConcepts}
\end{figure}

Armed with the intuition obtained from the classical case, we then go on to study the quantum case. We start by characterizing the generators of semigroups of semicausal~\cite{Beckman.2001} completely positive maps (CP-maps) -- our main technical result, and one of independent interest. This characterization can be obtained from the classical case by a `quantization'-procedure that allows us to see exactly which features of semigroups of semicausal CP-maps are ``fully quantum.'' Dynamical semigroups of semicausal CP-maps are discussed Section~\ref{subsec:SemicausalCPSemigroups}.  Finally, in~Section \ref{subsec:QuantumSuperchannels}, we use the one-to one correspondence (via the quantum Choi–Jamiołkowski isomorphism) between certain semicausal CP-maps and quantum superchannels to obtain a full characterization of the generators of semigroups of quantum superchannels. While the classical section (\ref{sct:classical}) and the quantum section (\ref{sct:quantum}) are heuristically related, they are logically independent and can be read independently.

This work is structured as follows. In the remainder of this section, we discuss results related to ours. Section~\ref{sct:results} contains an overview over our main results.
In Section~\ref{sct:prelims}, we recall relevant notions from functional analysis and quantum information, as well as some notation. The (logically) independent sections~\ref{sct:classical} and~\ref{sct:quantum} comprise the main body of our paper, containing complete statements and proofs of our results on dynamical semigroups of superchannels and semicausal channels. We study the classical case in Section~\ref{sct:classical} and the quantum case in Section~\ref{sct:quantum}. 
Finally, we conclude with a summary and an outlook to future research in Section~\ref{sct:conclusion}.

\subsection{Related work}

The study of quantum superchannels goes back to \cite{Chiribella_2008} and has since evolved to the study of higher-order quantum maps \cite{PhysRevLett.101.060401, PhysRevA.80.022339, doi:10.1098/rspa.2018.0706}. A peculiar feature of higher-order quantum theory is that it allows for indefinite causal order \cite{Oreshkov2012, PhysRevA.88.022318}. However, it was recently discovered that the causal order is preserved under (certain) continuous evolutions \cite{PhysRevX.8.011047, selby2020revisiting}. It therefore seems interesting to study continuous evolutions of higher-order quantum maps systematically. Our work can be seen as an initial step into his direction.

The study of (semi-)causal and (semi-)localizable quantum channels goes back to~\cite{Beckman.2001}. By proving the equivalence of semicausality and semilocalizability for quantum channels,~\cite{Eggeling.2002} resolved a conjecture raised in~\cite{Beckman.2001} (and attributed to DiVincenzo). Later,~\cite{Piani.2006} provided an alternative proof for this equivalence, and further investigated causal and local quantum operations.

\section{Results}\label{sct:results}

We give an overview over our answers to the questions identified in the previous section. In our first result, we identify a set of constraints that a linear map satisfies if and only if it generates a semigroup of quantum superchannels.

\begin{res}{1.1}[Lemma  \ref{thm:checkforGeneratorSuperchannel} - Informal] \label{Result1.1} Checking whether a linear map $\hat{L}:\blt(\blt(\mathcal{H}_A);\blt(\mathcal{H}_B))\to\blt(\blt(\mathcal{H}_A);\blt(\mathcal{H}_B))$ generates a semigroup of quantum superchannels can be phrased as a semidefinite constraint satisfaction problem. 
\end{res}

Therefore, we can efficiently check whether a given linear map is a valid generator of a semigroup of quantum superchannels. We can even solve optimization problems over such generators in terms of semidefinite programs. Thereby, this first characterization of generators of semigroups of quantum superchannels facilitates working with them computationally.

As our second result, we determine a normal form for generators of semigroups of quantum superchannels. Similar to the GKLS-form, we decompose the generator into a ``dissipative part'' and a ``Hamiltonian part,'' where the latter generates a semigroup of invertible superchannels such that the inverse is a superchannel as well.  

\begin{res}{1.2}[Theorem \ref{Thm:CharacterizationGeneratorsSuperchannels} - Informal] \label{Result1.2}
A linear map $\hat{L}:\blt(\blt(\mathcal{H}_A);\blt(\mathcal{H}_B))\to\blt(\blt(\mathcal{H}_A);\blt(\mathcal{H}_B))$ generates a semigroup of quantum superchannels if and only if it can be written as $\hat{L}(T)=\hat{D}(T) + \hat{H}(T)$, where the ``Hamiltonian part'' is of the form
\begin{align*}
    \hat{H}(T)(\rho)= -i[H_B,T(\rho)] - iT([H_A,\rho]),
\end{align*}
with local Hamiltonians $H_B$ and $H_A$, and where the ``dissipative part'' is of the form $\hat{D}(T)(\rho) = \ptr{E}{\hat{D}^\prime(T)(\rho)}$, where
\begin{subequations}
\begin{align}
    \hat{D}^\prime(T)(\rho) &= U (T \otimes \idop_E)(A(\rho \otimes \sigma)A^\dagger) U^\dagger& &- \frac{1}{2}  (T \otimes \idop_E)(\acomu{A^\dagger A} {\rho \otimes \sigma})& \label{eq:resSuCh1}\\
    &+ B (T \otimes \idop_E)(\rho \otimes \sigma) B^\dagger& &- \frac{1}{2} \acomu{B^\dagger B}{ (T \otimes \idop_E)(\rho \otimes \sigma)}& \label{eq:resSuCh2}\\
    &+ \comu{U (T \otimes \idop_E)(A(\rho \otimes \sigma))}{B^\dagger}& &+ \comu{B}{(T \otimes \idop_E)((\rho \otimes \sigma)A^\dagger)U^\dagger},& \label{eq:resSuCh3}
\end{align}
\end{subequations}
with unitary $U$ and arbitrary $A$ and $B$.
\end{res}
The ``dissipative part'' consists of three terms: Term~\eqref{eq:resSuCh1} itself generates a semigroup of superchannels (for $B = 0$), with the interpretation that the transformed channel ($\hat{S}_t(T)$) arises due to the stochastic application of $T \mapsto \ptr{E}{U (T \otimes \idop_E)(A(\rho \otimes \sigma)A^\dagger) U^\dagger}$ at different points in time (Dyson series expansion). Term~\eqref{eq:resSuCh2} itself generates a semigroup of superchannels (for $A = 0$) of the form $\hat{S}_t(T) = e^{L_B t} \circ T$, where $L_B$ is a generator of a quantum dynamical semigroup (and hence in GKLS-form). Term \eqref{eq:resSuCh3} is a ``superposition'' term, which is harder to interpret. It will become apparent from the path taken via the `quantization' of semicausal semigroups that this term is a pure quantum feature with no classical analogue. Therefore, the presence of \eqref{eq:resSuCh3} can be regarded as one of our main findings. It is also worth noting that the normal form in Result \ref{Result1.2} is more general than the form of of the generator we found in our introductory example. Hence, nature allows for more general decay-processes than the simple ones with an independent influx of particles before and after the target object. 
We also complement this structural result by an algorithm that determines the operators $U$, $A$, $B$, $H_A$ and $H_B$, if the conditions in Result \ref{Result1.1} are met. 

The proof of these results relies on the relation (via the Choi-Jamiołkowski isomorphism) between superchannels and semicausal CP-maps. Our next findings -- and from a technical standpoint our main contributions -- are the corresponding results for semigroups of semicausal CP-maps.

\begin{res}{2.1}[Lemma \ref{Lem:VerifySemicausality} - Informal] \label{Result2.1} 
Checking whether a linear map $L:\blt(\mathcal{H}_A\otimes\mathcal{H}_B)\to\blt(\mathcal{H}_A\otimes\mathcal{H}_B)$ generates a semigroup of $B \not\to A$ semicausal CP-maps can be phrased as a semidefinite constraint satisfaction problem for its Choi-matrix.  
\end{res}

Based on this insight, we can efficiently check whether a given linear map is a valid generator of a semigroup of semicausal CP-maps.

Since semigroups of semicausal CP-maps are in particular semigroups of CP-maps,  our normal form for generators giving rise to semigroups of semicausal CP-maps is a refining of the the GKLS-form. 

\begin{res}{2.2}[Theorem \ref{Thm:SemicausalInfiniteDimMainResult} - Informal] \label{Result2.2}
A linear map $L:\blt(\mathcal{H}_A\otimes\mathcal{H}_B)\to\blt(\mathcal{H}_A\otimes\mathcal{H}_B)$ generates a semigroup of $B \not\to A$ semicausal CP-maps (in the Heisenberg picture) if and only if it can be written as $L(X)=\Phi(X)-K^\dagger X - XK$, where the CP part $\Phi$ is of the form
\begin{align*}
    \Phi(X) &= V^\dagger \left(X \otimes \idmat_E \right) V, \text{ with } V = (\idmat_A \otimes U)(A \otimes \idmat_B) + (\idmat_A \otimes B), 
\end{align*}
with a unitary $U \in \blt(\mathcal{H}_E \otimes \mathcal{H}_B; \mathcal{H}_B \otimes \mathcal{H}_E)$ and arbitrary $A \in \blt(\mathcal{H}_A; \mathcal{H}_A \otimes \mathcal{H}_E)$ and $B \in \blt(\mathcal{H}_B; \mathcal{H}_B \otimes \mathcal{H}_E)$, and the $K$ in the non-CP part is of the form
\begin{align*}
    K &=  (\idmat_A \otimes B^\dagger U) (A \otimes \idmat_B) + \frac{1}{2} \idmat_A \otimes B^\dagger B + K_A \otimes \idmat_B + \idmat_A \otimes iH_B,
\end{align*}
with a self-adjoint $H_B$ and an arbitrary $K_A$.
\end{res}

This characterization has both computational and analytical implications: On the one hand, it provides a recipe for describing semicausal GKLS generators in numerical implementations. On the other hand, the constructive characterization of semicausal GKLS generators makes a more detailed analysis of their (e.g., spectral) properties tractable. It is also worth noting that in Result \ref{Result2.2} we can allow for (separable) infinite-dimensional spaces. In the finite-dimensional case, we also provide an algorithm to compute the operators $U$, $A$, $B$, $K_A$ and $H_B$, if the conditions of Result \ref{Result2.1} are met. 

Let us now turn to the corresponding results in the classical case. Here, instead of looking at (semigroups of) CP-maps and quantum channels, we look at (entry-wise) non-negative matrices and row-stochastic matrices (see Section \ref{sct:prelims} and Section \ref{sct:classical} for details) that we assume to act on $\R^\mathbb{X}$, for (finite) alphabets $\mathbb{X} \in \{\mathbb{A}, \mathbb{B}, \mathbb{E} \}$. 

The following result is the classical analogue of Result \ref{Result2.2}.

\begin{res}{3} [Corollary \ref{Corolary1ClassicalSemicausalGenerators} - Informal] \label{Result3}
A linear map $Q : \R^{\mathbb{A}} \otimes \R^{\mathbb{B}} \rightarrow \R^{\mathbb{A}} \otimes \R^{\mathbb{B}}$ generates a semigroup of (Heisenberg) $B\not\to A$ semicausal non-negative matrices if and only if it can be written as
\begin{align*}
    Q = (A \otimes \idmat_B)(\idmat_A \otimes U) - K_A\otimes \idmat_B + \sum\limits_{i=1}^{\lvert\mathbb{A}\rvert}\ket{a_i}\bra{a_i}\otimes B^{(i)},
\end{align*}
with a row-stochastic matrix $U \in \blt(\R^{\mathbb{B}}; \R^{\mathbb{E}} \otimes \R^{\mathbb{B}})$, a non-negative matrix $A \in \blt( \R^{\mathbb{A}}\otimes \R^{\mathbb{E}}; \R^{\mathbb{A}})$, a diagonal matrix $K_A$ and maps $B^{(i)} \in \blt(\R^{\mathbb{B}})$ that generate semigroups of row-stochastic matrices.  
\end{res}

We will discuss in detail how Result \ref{Result2.2} arises as the `quantization' of Result \ref{Result3} in the paragraph following the proof of Lemma \ref{Lem:VerifySemicausality}. Here, we highlight that in both the quantum and the classical case, the generators of semicausal semigroups are constructed from two basic building blocks. In the quantum case, these are a $B \not\to A$ semicausal CP-map $\Phi_{sc}$, with $\Phi_{sc}(X) = V_{sc}^\dagger (X \otimes \idmat_E) V_{sc}$ and $V_{sc} = (\idmat_A \otimes U) (A \otimes \idmat_B)$; and a GKLS generator of the form $\idop_A \otimes \hat{B}$. And in the classical case, they are a $B \not\to A$ semicausal non-negative map $\Phi_{sc} = (\idmat_A \otimes U) (A \otimes \idmat_B)$; and operators of the form $\ket{a_i}\bra{a_i} \otimes B^{(i)}$, where $B^{(i)}$ generates a semigroup of row-stochastic maps. The difference between the quantum case and the classical case then lies in the way the general form is constructed from the building blocks. While we simply take convex combinations of the building-blocks in the classical case, we have to take superpositions of the building-blocks, by which we mean that we need to combine the corresponding Strinespring operators, in the quantum case.  

As our last result, we present the normal form for generators of semigroups of classical superchannels. 

\begin{res}{4}
A linear map $\hat{Q} : \blt(\R^{\mathbb{A}}; \R^{\mathbb{B}}) \rightarrow \blt(\R^{\mathbb{A}}; \R^{\mathbb{B}})$ generates a semigroup of classical superchannels if and only if it can be written as
\begin{align*}
    \hat{Q}(M) = U (M \otimes \idmat_E) A - \sum_{i = 1}^{\abs{\mathbb{A}}} \braket{\idvec_{AE}}{A a_i} M \ket{a_i}\bra{a_i} + \sum_{i = 1}^{\abs{\mathbb{A}}} B^{(i)} M \ket{a_i}\bra{a_i},
\end{align*}
 with a column-stochastic matrix $U \in \blt(\R^\mathbb{E} \otimes \R^\mathbb{B}; \R^\mathbb{B})$, a non-negative matrix $A \in \blt(\R^\mathbb{A}; \R^\mathbb{A} \otimes \R^\mathbb{E})$, a diagonal matrix $K_A$, and a collection of generators of semigroups of column-stochastic matrices $B^{(i)} \in \blt(\R^\mathbb{B})$.
\end{res}
As in the quantum case, we have two kinds of evolutions: a stochastic application of $M \mapsto U (M \otimes \idmat_E) A$ at different points in time; and a conditioned post-processing evolution of the form $\sum_i e^{B^{(i)}t} M  \ket{a_i}\bra{a_i}$. Note that there are no ``superposition'' terms, like \eqref{eq:resSuCh3}.  

\newpage
\section{Notation and preliminaries}\label{sct:prelims}

In this section, we review basic notions from Functional Analysis, Quantum Information Theory, and the theory of dynamical semigroups. We also fix our notation for these settings as well as for a classical counterpart of the quantum setting.

\subsection{Functional analysis}

Throughout the paper, $\mathcal{H}$ (with some subscript) denotes a (in general infinite-dimensional) separable complex Hilbert space. Whenever $\mathcal{H}$ is assumed to be finite-dimensional, we explicitly state this assumption. We denote the Banach space of bounded linear operators with domain $\mathcal{H}_A$ and codomain $\mathcal{H}_B$, equipped with the operator norm, by $\blt(\mathcal{H}_A; \mathcal{H}_B)$ and write $\blt(\mathcal{H})$ for $\blt(\mathcal{H}; \mathcal{H})$. For $X \in \blt(\mathcal{H}_A; \mathcal{H}_B)$, the adjoint $X^\dagger \in \blt(\mathcal{H}_B; \mathcal{H}_A)$ of $X$ is the unique linear operator such that $\braket{\psi_B}{X \psi_A} = \braket{X^\dagger \psi_B}{\psi_A}$ for all $\ket{\psi_A} \in \mathcal{H}_A$ and all $\ket{\psi_B} \in \mathcal{H}_B$. Here, and throughout the paper, we use the standard Dirac notation.

An operator $Y \in \blt(\mathcal{H})$ is called self-adjoint if $Y^\dagger = Y$. A self-adjoint $Y\in \blt(\mathcal{H})$ is called positive semidefinite, denoted by $Y \geq 0$, if there exists an operator $Z \in \blt(\mathcal{H})$, such that $Y = Z^\dagger Z$. If $Y$ is positive semidefinite, then there exists a unique positive semidefinite operator $\sqrt{Y}$, such that $Y = \sqrt{Y}\sqrt{Y}$~\cite[p.~196]{Reed:104857}. The operator $\sqrt{Y}$ is called the square-root of $Y$. The absolute value $\abs{Y} \in \blt(\mathcal{H})$ of $Y$ is defined by $\abs{Y} = \sqrt{Y^\dagger Y}$. 

We define the set of trace-class operators $\trcl(\mathcal{H}_A; \mathcal{H}_B) = \set{\rho \in \blt(\mathcal{H}_A; \mathcal{H}_B)}{\tr{\abs{\rho}} < \infty}$, which becomes a Banach space when endowed with the norm $\norm{\rho}_1 := \tr{\abs{\rho}}$. We write $\trcl(\mathcal{H})$ for $\trcl(\mathcal{H}; \mathcal{H})$. The set $\trcl(\mathcal{H}_A; \mathcal{H}_B)$ satisfies the two-sided $^*$-ideal property: If $\rho \in \trcl(\mathcal{H}_A; \mathcal{H}_B)$ and $Y\in\blt(\mathcal{H}_A;\mathcal{H}_B)$, then $\rho^\dagger \in \trcl(\mathcal{H}_B; \mathcal{H}_A)$,  $\rho^\dagger Y \in \trcl(\mathcal{H}_A)$ and $Y \rho^\dagger \in \trcl(\mathcal{H}_B)$.

Besides the norm topology, we will use the strong operator topology and the ultraweak topology. The strong operator topology is the smallest topology on $\blt(\mathcal{H}_A; \mathcal{H}_B)$ such that for all $\ket{\psi_A} \in \mathcal{H}_A$ the map $\blt(\mathcal{H}_A; \mathcal{H}_B) \ni Y \mapsto Y\ket{\psi_A}\in\mathcal{H}_B$ is continuous, where $\mathcal{H}_B$ is equipped with the norm topology. The ultraweak topology on $\blt(\mathcal{H}_A; \mathcal{H}_B)$ is the smallest topology such that the map $\blt(\mathcal{H}_A; \mathcal{H}_B) \ni Y \mapsto \tr{\rho^\dagger Y}\in\mathbb{C}$ is continuous for all $\rho \in \trcl(\mathcal{H}_A; \mathcal{H}_B)$. Since $\mathcal{H}_A$ and $\mathcal{H}_B$ are separable, so is $\trcl(\mathcal{H}_B; \mathcal{H}_A)$. Hence, the sequential Banach Alaoglu theorem implies that every bounded sequence in $\blt(\mathcal{H}_A; \mathcal{H}_B)$ has an ultraweakly convergent subsequence. Here, we view $\blt(\mathcal{H}_A; \mathcal{H}_B)$ as the continuous dual of $\trcl(\mathcal{H}_B; \mathcal{H}_A)$. The aforementioned results can be found in many books, e.g,~\cite[ch.~VI.6]{Reed:104857}, however, usually only for the case $\mathcal{H}_A = \mathcal{H}_B$. The general results stated above can be obtained from this case by considering $\blt(\mathcal{H}_A; \mathcal{H}_B)$ and $\trcl(\mathcal{H}_A; \mathcal{H}_B)$ as subspaces of $\blt(\mathcal{H}_A \oplus \mathcal{H}_B)$ and $\trcl(\mathcal{H}_A \oplus \mathcal{H}_B)$, respectively. 

An operator $V \in \blt(\mathcal{H}_A; \mathcal{H}_B)$ is called an isometry if $\norm{V\ket{\psi_A}} = \norm{\ket{\psi_A}}$ for all $\ket{\psi_A} \in \mathcal{H}_A$. The (possibly empty) set of unitaries, the surjective isometries, is denoted by $\unitary(\mathcal{H}_A; \mathcal{H}_B)$ and we write $ \unitary(\mathcal{H})$ for $ \unitary(\mathcal{H}; \mathcal{H})$. As a special notation, if $\mathcal{H}_A^\prime$ and $\mathcal{H}_B^\prime$ are closed linear subspaces of $\mathcal{H}_A$ and $\mathcal{H}_B$, with (canonical) isometric embeddings $\idmat_{A^\prime \rightarrow A} \in \blt(\mathcal{H}_A^\prime; \mathcal{H}_A)$ and $\idmat_{B^\prime \rightarrow B} \in \blt(\mathcal{H}_B^\prime; \mathcal{H}_B)$, respectively, then we will write $\unitary_P(\mathcal{H}_A^\prime; \mathcal{H}_B^\prime) = \{\idmat_{B^\prime \rightarrow B} U \idmat_{A^\prime \rightarrow A}^\dagger \in \blt(\mathcal{H}_A; \mathcal{H}_B) \,|\, U \in \unitary(\mathcal{H}_A^\prime; \mathcal{H}_B^\prime)\}$ and, $\unitary_P(\mathcal{H})$ for $\unitary_P(\mathcal{H}; \mathcal{H})$. I.e., this is the set of partial isometries. 

\subsection{Flip operator, partial trace, complete positivity, and duality}
The flip operator $\flip_{A;B} \in \blt(\mathcal{H}_A \otimes \mathcal{H}_B; \mathcal{H}_B \otimes \mathcal{H}_A)$ is the unique operator satisfying $\flip_{A;B} (\ket{\psi_A} \otimes \ket{\psi_B}) = \ket{\psi_B} \otimes \ket{\psi_A}$, for all $\ket{\psi_A}\in\mathcal{H_A}$ and all $\ket{\psi_B}\in\mathcal{H}_B$.

The partial trace w.r.t.~the space $\mathcal{H}_A$ is the unique linear map $\mathrm{tr}_A : \trcl(\mathcal{H}_A \otimes \mathcal{H}_B; \mathcal{H}_A \otimes \mathcal{H}_C) \rightarrow \trcl(\mathcal{H}_B; \mathcal{H}_C)$ that satisfies $\tr{X \ptr{A}{\rho}} = \tr{(\idmat_A \otimes X)\rho}$, for all $\rho \in \trcl(\mathcal{H}_A \otimes \mathcal{H}_B)$ and all $X \in \blt(\mathcal{H}_C; \mathcal{H}_B)$. If the spaces involved have subscripts, the partial trace will always be denoted with the corresponding subscript. The partial trace with respect to $\rho \in \trcl(\mathcal{H}_A)$ is the unique linear map $\mathrm{tr}_\rho : \blt(\mathcal{H}_A \otimes \mathcal{H}_B; \mathcal{H}_A \otimes \mathcal{H}_C) \rightarrow \blt(\mathcal{H}_B; \mathcal{H}_C)$ that satisfies $\tr{\sigma \ptr{\rho}{X}} = \tr{(\rho \otimes \sigma)X}$, for all $\sigma \in \trcl(\mathcal{H}_C; \mathcal{H}_B)$ and all $X \in \blt(\mathcal{H}_A \otimes \mathcal{H}_B; \mathcal{H}_A \otimes \mathcal{H}_C)$. Proofs of existence and uniqueness can be found in~\cite[Thm. 2.28 and Thm. 2.30]{AttalPartialTrace}, where we used again the observation that the results above follow from the usual ones for $\mathcal{H}_B = \mathcal{H}_C$, by looking at operators on $\mathcal{H}_A \otimes (\mathcal{H}_B \oplus \mathcal{H}_C)$.

Let $T \in \blt(\blt(\mathcal{H}_B); \blt(\mathcal{H}_A))$. The map $T$ is called positive if $T(X_B)$ is positive semidefinite, whenever $X_B \in \blt(\mathcal{H}_B)$ is positive semidefinite. For $n \in \N_0$, the map $T_n : \blt(\C^n \otimes \mathcal{H}_B) \rightarrow \blt(\C^n \otimes \mathcal{H}_A)$ is uniquely defined by the requirement that $T_n(X_n \otimes X_B) = X_n \otimes T(X_B)$ for all $X_n \in \blt(\C^n)$ and all $X_B \in \blt(\mathcal{H}_B)$. The map $T$ is completely positive (CP) if the map $T_n$ is positive for all $n \in \N_0$. A CP-map $T$ is called normal if $T$ is continuous when $\blt(\mathcal{H}_A)$ and $\blt(\mathcal{H}_B)$ are both equipped with the ultraweak topology. We denote the set of normal CP-maps by $\mathrm{CP}_\sigma(\mathcal{H}_B; \mathcal{H}_A)$ and write $\mathrm{CP}_\sigma(\mathcal{H})$ for $\mathrm{CP}_\sigma(\mathcal{H}; \mathcal{H})$. By the Stinespring dilation theorem (in its form for normal CP-maps), $T$ is a normal CP-map if and only if there exists a (separable) Hilbert space $\mathcal{H}_E$ and an operator $V \in \blt(\mathcal{H}_A; \mathcal{H}_B \otimes \mathcal{H}_E)$ such that for all $X_B \in \blt(\mathcal{H}_B)$ we have $T(X_B) = V^\dagger (X_B \otimes \idmat_E) V$. Furthermore, the Stinespring dilation can be chosen to be minimal, that is, the pair $(V, \mathcal{H}_E)$ can be chosen such that $\mathrm{span}\{ (X_B \otimes \idmat_E) V \ket{\psi_A} \,|\, X_B \in \blt(\mathcal{H}_B), \ket{\psi_A} \in \mathcal{H}_A \}$ is norm-dense in $\mathcal{H}_B \otimes \mathcal{H}_E$. Furthermore, if $(V^\prime, \mathcal{H}_E^\prime)$ is another Stinespring dilation, then there exists an isometry $U \in \blt(\mathcal{H}_E; \mathcal{H}_E^\prime)$, such that $V^\prime = (\idmat_B \otimes U)V$. Another equivalent characterization is the so-called Kraus form: $T$ is a normal CP-map if and only if there exists a countable set of operators $\{L_i\}_i \subset \blt(\mathcal{H}_A; \mathcal{H}_B)$, the Kraus operators, such that for all $X_B \in \blt(\mathcal{H}_B)$, we have $T(X_B) = \sum_i L_i^\dagger X_B L_i$, where the series converges in the strong operator topology. One can obtain Kraus operators from a Stinespring dilation $(V, \mathcal{H}_E)$ by choosing an orthonormal basis $\{\ket{e_i}\}_i$ of $\mathcal{H}_E$ and defining $L_i = (\idmat_B \otimes \bra{e_i}) V$. A map $T$ is unital if $T(\idmat_B) = \idmat_A$ and a unital normal CP-map is called a Heisenberg (quantum) channel. 

Let $S \in \blt(\trcl(\mathcal{H}_A); \trcl(\mathcal{H}_B))$. The dual map $S^* \in \blt(\blt(\mathcal{H}_B); \blt(\mathcal{H}_A))$ is the unique linear map that satisfies $\mathrm{tr}[X_B^\dagger S(\rho)] = \mathrm{tr}[\left(S^*(X_B)\right)^\dagger \rho]$, for all $X_B \in \blt(\mathcal{H}_B)$ and all $\rho \in \trcl(\mathcal{H}_A)$. We call $S$ the Schrödinger picture map and $S^*$ the Heisenberg picture map. The map $S$ is called completely positive if $S^*$ is completely positive in the sense defined above. In that case, $S^*$ is automatically normal. In fact, $T$ is a normal CP-map if and only if there exists $S \in \blt(\blt(\mathcal{H}_A); \blt(\mathcal{H}_B))$, such that $S^* = T$. It follows that $S$ is completely positive if and only if there exists a separable Hilbert space $\mathcal{H}_E$ and an operator $V \in \blt(\mathcal{H}_A; \mathcal{H}_B \otimes \mathcal{H}_E)$, such that $S(\rho) = \ptr{E}{V\rho V^\dagger}$, for all $\rho \in \trcl(\mathcal{H}_A)$. Furthermore, $S$ is completely positive if and only if there exist a countable set of operators $\{L_i\}_i \subset \blt(\mathcal{H}_A; \mathcal{H}_B)$ such that $S(\rho) = \sum_i L_i \rho L_i^\dagger$ and the series converges in trace-norm. A map $S$ is trace-preserving if $\tr{S(\rho_A)} = \tr{\rho_A}$ for all $\rho_A \in \trcl(\mathcal{H}_A)$. A trace-preserving CP-map is called a (quantum) channel. 
The facts in this section are contained or follow directly from results in~\cite{davies1976quantum, AttalCPMaps}.

\subsection{Choi–Jamiołkowski isomorphism, partial transposition}

In this section, let $\mathcal{H}_A$, $\mathcal{H}_B$ and $\mathcal{H}_C$ be finite-dimensional Hilbert spaces with fixed orthonormal bases $\{\ket{a_i}\}_i$, $\{\ket{b_j}\}_j$ and $\{\ket{c_k}\}_k$, respectively. The transpose (w.r.t.~$\{\ket{a_i}\}_i$ and $\{\ket{b_j}\}_j$) of an operator $X \in \blt(\mathcal{H}_A; \mathcal{H}_B)$ is the unique linear operator $X^T \in \blt(\mathcal{H}_B; \mathcal{H}_A)$ such that $\braket{b_j}{X a_i} = \braket{a_i}{X^T b_j}$, for all elements of the orthonormal bases. The partial transposition (w.r.t.~$\{\ket{a_i}\}_i$) of an operator $X \in \blt(\mathcal{H}_A \otimes \mathcal{H}_B; \mathcal{H}_A \otimes \mathcal{H}_C)$ is the unique linear operator $X^{T_A} \in \blt(\mathcal{H}_A \otimes \mathcal{H}_B; \mathcal{H}_A \otimes \mathcal{H}_C)$ such that $(\bra{a_i} \otimes \idmat_C) X (\ket{a_j} \otimes \idmat_B) = (\bra{a_j} \otimes \idmat_C) X^{T_A} (\ket{a_i} \otimes \idmat_B)$, for all elements of the orthonormal basis.

The (quantum) Choi–Jamiołkowski isomorphism~\cite{CHOI1975285, JAMIOLKOWSKI1972275}, defined with respect to an orthonormal basis $\{\ket{a_i}\}_i$ of $\mathcal{H}_A$, is the bijective linear map $\mathfrak{C}_{A;B} : \blt(\blt(\mathcal{H}_A); \blt(\mathcal{H}_B)) \rightarrow \blt(\mathcal{H}_A \otimes \mathcal{H}_B)$, $\choi_{A;B}(T) = (\idop_A \otimes T)(\ket{\Omega}\bra{\Omega})$, and its inverse is given by $\choi_{A;B}^{-1}(\tau)(\rho) = \ptr{A}{(\rho^T \otimes \idmat) \tau}$, where $\ket{\Omega} := \sum_i \ket{a_i} \otimes \ket{a_i}$. A map $S \in \blt(\blt(\mathcal{H}_A); \blt(\mathcal{H}_B))$ is completely positive if and only if $\choi_{A;B}(S) \geq 0$; $S$ is trace-preserving if and only if $\ptr{B}{\choi_{A;B}(S)} = \idmat_A$ and we have the identity $\ptr{A}{\choi_{A;B}(S)} = S(\idmat_A)$. We will occasionally call elements of the image of $\choi_{A;B}$ Choi matrices. 

\subsection{Non-negative matrices and duality}\label{Sbsct:PreliminariesClassical}

As we provide characterizations for both the quantum and the classical case, we now also introduce the notation and definitions required for the latter. With a classical system $A$, we associate a finite alphabet $\mathbb{A} = \{a_1, a_2, \dots, a_{\abs{\mathbb{A}}}\}$ and a `state-space' $\R^\mathbb{A}$, with orthonormal basis $\{\ket{a_i}\}_{i = 1}^{\abs{\mathbb{A}}}$. We define by $\ket{\idvec_A} := \sum_i \ket{a_i}$ the all-one-vector. A vector $\ket{x} \in \R^\mathbb{A}$ is called non-negative if $\braket{a}{x} \geq 0$, for all $a \in \mathbb{A}$. A linear operator $M \in \blt(\R^\mathbb{A}; \R^\mathbb{B})$ is called non-negative if $M\ket{x}$ is non-negative, whenever $\ket{x}$ is non-negative (equivalently, all matrix elements are non-negative). A non-negative $M$ is called column-stochastic if $\bra{\idvec_B}M = \bra{\idvec_A}$; column-sub-stochastic if there exists a non-negative $P$, such that $M+P$ is column-stochastic; row-stochastic, if $M\ket{\idvec_A} = \ket{\idvec_B}$; and row-sub-stochastic if there exists a non-negative $P$, such that $M+P$ is row-stochastic. Given $\ket{x}$ or $\bra{x}$, we denote by $\mathrm{diag}(\ket{x}) = \mathrm{diag}(\bra{x})$ the diagonal matrix with the components of $x$ on the diagonal.  
Finally, we will use the `classical Choi–Jamiołkowski isomorphism' (also known as vectorization), which is a convenient notation to make the connection to the quantum case more transparent. The classical Choi–Jamiołkowski isomorphism, defined w.r.t.~$\{\ket{a_i}\}_i$, is the linear map $\choi^C_{A;B} : \blt(\R^\mathbb{A}; \R^\mathbb{B}) \rightarrow \blt(\R^\mathbb{A} \otimes \R^\mathbb{B})$ defined by $\choi^C_{A;B}(M) = (\idmat_A \otimes M)\ket{\Omega}$, where $\ket{\Omega} := \sum_i \ket{a_i}\otimes \ket{a_i}$. The inverse $(\choi^C_{A;B})^{-1}$ is then given by $(\choi^C_{A;B})^{-1}(\ket{x}) = (\bra{\Omega} \otimes \idmat_B)(\idmat_A \otimes \ket{x})$
We will sometimes refer to elements of the range of $\choi^C_{A;B}$ as Choi vectors.

\subsection{Dynamical semigroups}

Let $\mathcal{X}$ be a Banach space. A family of operators $\{T_t\}_{t \geq 0}$, with $T_t \in \blt(\mathcal{X})$ for all $t \geq 0$, is called a norm-continuous one-parameter semigroup on $\mathcal{X}$, or short, dynamical semigroup, if $T_0 = \idmat$, $T_{s+t} = T_s T_t$ for all $t, s \geq 0$ and the map $\R_{\geq 0} \ni t \mapsto T_t$ is norm-continuous. Norm-continuous dynamical semigroups are automatically differentiable and have bounded generators, that is, there exists $L \in \blt(\mathcal{X})$ such that $T_t = e^{tL}$ for all $t \geq 0$ and $L = \frac{d}{dt}\big\vert_{t = 0_+} T_t$~\cite[Thm. I.3.7]{engel2006one}.

Lindblad~\cite{Lindblad.1976} proved that $T_t \in \mathrm{CP}_\sigma(\mathcal{H})$ for all $t \geq 0$ if and only if there exist $\Phi \in \mathrm{CP}_\sigma(\mathcal{H})$ and $K \in \blt(\mathcal{H})$ such that $T_t = e^{tL}$, with $L(X) = \Phi(X) - K^\dagger X - X K$. In this case, we refer to $\{T_t\}_{t \geq 0}$ as a CP semigroup. We call the corresponding form of the generator $L$ the GKLS form~\cite{Gorini.1976, Lindblad.1976} and $\Phi$ its CP part.
If $\mathcal{H}$ is finite-dimensional, then $T_t = e^{tL} \in \mathrm{CP}_\sigma(\mathcal{H})$ for all $t \geq 0$ if and only if the operator $\mathfrak{L} := \choi_{A;B} = (\idop \otimes L)(\ket{\Omega}\bra{\Omega})$ is self-adjoint and $P^\bot \mathfrak{L}P^\bot \geq 0$, where $\ket{\Omega} = \sum_i \ket{a_i} \otimes \ket{a_i}$, for some orthonormal basis $\{a_i\}$ of $\mathcal{H}$ and $P^\bot \in \blt(\mathcal{H} \otimes \mathcal{H})$ is the orthogonal projection onto the orthogonal complement of $\{\ket{\Omega}\}$~\cite{wolf2008assessing, evans1977dilations}. 
The corresponding classical result is as follows: $\{T_t\}_{t \geq 0} \subseteq \blt(\R^\mathbb{A})$ is a dynamical semigroup of non-negative linear maps if and only if there exists a non-negative linear map $\Phi \in \blt(\R^\mathbb{A})$ and a diagonal map $K \in \blt(\R^\mathbb{A})$ (w.r.t.~the basis orthogonal basis $\{\ket{a_i}\}_i$) such that the generator $L$ has the form $\Phi - K$~\cite{batkai2017positive}.

\section{The Classical Case}\label{sct:classical}

Before studying the quantum scenario, we consider the classical version of our main question. I.e., we study continuous semigroups of classical superchannels and their generators. On the one hand, this allows us to develop an intuition that we can build upon for the quantum case. On the other hand, a comparison between the classical and the quantum case elucidates which features of the latter are actually quantum.
For the purpose of this section, $\mathbb{A}$, $\mathbb{B}$ and $\mathbb{E}$ denote finite alphabets as in Subsection \ref{Sbsct:PreliminariesClassical}.

A classical superchannel is a map that maps classical channels, i.e., stochastic matrices, to classical channels while preserving the probabilistic structure of the classical theory. To achieve the latter requirement, we require that a classical superchannel is a linear map and that probabilistic transformations, i.e., sub-stochastic matrices, are mapped to probabilistic transformations. Expressed more formally, we have

\begin{defn}[Classical Superchannels]\label{Dff:ClassicalSuperchannels}
A linear map $\hat{S} : \blt(\R^{\mathbb{A}}; \R^{\mathbb{B}})\to \blt(\R^{\mathbb{A}}; \R^{\mathbb{B}})$ is called a \emph{classical superchannel} if $\hat{S}(M)\in\blt(\R^{\mathbb{A}}; \R^{\mathbb{B}})$ is column sub-stochastic whenever $M \in \blt(\R^{\mathbb{A}}; \R^{\mathbb{B}})$ is column sub-stochastic and $\hat{S}(M) \in \blt(\R^{\mathbb{A}}; \R^{\mathbb{B}})$ is column stochastic whenever $M \in \blt(\R^{\mathbb{A}}; \R^{\mathbb{B}})$ is column stochastic.
\end{defn}

A related concept is that of a classical semicausal channel, which is a stochastic matrix on a bipartite space $\mathbb{A} \times \mathbb{B}$ such that no communication from $B$ to $A$ is allowed. We formalize this as follows:

\begin{defn}[Classical Semicausality] \label{Defn:ClassicalSemicausality} An operator $M \in \blt(\R^\mathbb{A} \otimes \R^\mathbb{B})$ is called column $B \not\to A$ semicausal if there exists $M^A \in \blt(\R^\mathbb{A})$, such that $(\idmat_A \otimes \bra{\idvec_B})M = M^A(\idmat_A \otimes \bra{\idvec_B})$. 

Similarly, $N \in \blt(\R^\mathbb{A} \otimes \R^\mathbb{B})$ is called row $B \not\to A$ semicausal if there exists $N^A \in \blt(\R^\mathbb{A})$, such that $N (\idmat_A \otimes \ket{\idvec_B}) = N^A \otimes \ket{\idvec_B}$.
\end{defn}

Clearly, $M$ is column $B \not\to A$ semicausal if and only if $M^T$ is row $B \not\to A$ semicausal. To emphasize the analogy to the quantum case, we will often refer to a column $B \not\to A$ semicausal map as a Schrödinger $B \not\to A$ semicausal map and to a row $B \not\to A$ semicausal map as a Heisenberg $B \not\to A$ semicausal map. In both cases, the maps $M^A$ and $N^A$ will be called the reduced maps. 

The structure of this section is as follows: We start by establishing the connection between classical superchannels and classical non-negative semicausal maps, followed by a characterization of classical non-negative semicausal maps as a composition of known objects; such a characterization is known in the quantum case as the equivalence between semicausality and semilocalizability. We then turn to the study of the generators of semigroups of semicausal and non-negative maps and finally use the correspondence between superchannels and semicausal channels to obtain the corresponding results for the generators of semigroups of superchannels.   

\subsection{Correspondence between classical superchannels and semicausal nonnegative linear maps}\label{SbSctClassicalSuperchannelsSemicausalChannels}

We first show, with a proof inspired by the one given in~\cite{Chiribella_2008} for the analogous correspondence in the quantum case, that we can understand classical superchannels in terms of classical semicausal channels. To concisely state this correspondence, we use the classical version of the Choi–Jamiołkowski isomorphism. Let us mention here one again that we assume all alphabets ($\mathbb{A}, \mathbb{B}, \dots$) to be finite for our treatment of the classical case.

\begin{thm}\label{Thm:ClassicalCorrespondenceSuperchannelsSemicausalMaps}
Let $\hat{S} : \blt(\mathbb{R}^\mathbb{A};\mathbb{R}^\mathbb{B})\to \blt(\mathbb{R}^\mathbb{A};\mathbb{R}^\mathbb{B})$ be a linear map and define $S\in\blt(\mathbb{R}^\mathbb{A}\otimes\mathbb{R}^\mathbb{B})$ via $S = \choi^C_{A;B} \circ \hat{S} \circ (\choi^C_{A;B})^{-1}$. Then, $\hat{S}$ is a classical superchannel if and only if $S$ is non-negative and (Schrödinger $B\not\to A$) semicausal such that the reduced map $S^{A}$ satisfies $S^{A}\ket{\idvec_A} = \ket{\idvec_A}$. In this case, $S^{A}$ is automatically non-negative.    
\end{thm}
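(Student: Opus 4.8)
The plan is to transport every condition in the statement through the classical Choi–Jamiołkowski isomorphism and reduce the problem to linear algebra on $\R^\mathbb{A}\otimes\R^\mathbb{B}$. I would write $\pi := \idmat_A\otimes\bra{\idvec_B} : \R^\mathbb{A}\otimes\R^\mathbb{B}\to\R^\mathbb{A}$ for the map recording the $B$-column sums; it is surjective since $\pi(\ket{v}\otimes\ket{b_1})=\ket{v}$, and it is non-negative because $\bra{\idvec_B}$ has non-negative entries. For $M\in\blt(\R^\mathbb{A};\R^\mathbb{B})$ with Choi vector $\ket{x}=\choi^C_{A;B}(M)$, the components of $\ket{x}$ are precisely the matrix entries of $M$, so $M$ is non-negative iff $\ket{x}\geq 0$; moreover $\pi\ket{x}$ is the vector of column sums of $M$, so $M$ is column stochastic iff $\pi\ket{x}=\ket{\idvec_A}$, and column sub-stochastic iff $\ket{x}\geq 0$ and $\pi\ket{x}\leq\ket{\idvec_A}$ componentwise. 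Since $S=\choi^C_{A;B}\circ\hat{S}\circ(\choi^C_{A;B})^{-1}$, the matrix $\hat S(M)$ has Choi vector $S\ket{x}$, so the defining clauses of a superchannel become statements about $S$ acting on these vectors.

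For the direction ``conditions $\Rightarrow$ superchannel,'' I would first observe that $S^A$ is automatically non-negative: given $\ket{v}\geq 0$, the vector $\ket{x}=\ket{v}\otimes\ket{b_1}$ satisfies $\ket{x}\geq 0$ and $\pi\ket{x}=\ket{v}$, so $S^A\ket{v}=S^A\pi\ket{x}=\pi S\ket{x}\geq 0$, using semicausality, non-negativity of $S$, and non-negativity of $\pi$. With this in hand, take $M$ column sub-stochastic with Choi vector $\ket{x}$; then $S\ket{x}\geq 0$ by non-negativity of $S$, and $\pi S\ket{x}=S^A\pi\ket{x}\leq S^A\ket{\idvec_A}=\ket{\idvec_A}$ because the non-negative (hence order-preserving) $S^A$ is applied to $\pi\ket{x}\leq\ket{\idvec_A}$, so $\hat S(M)$ is column sub-stochastic. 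If instead $M$ is column stochastic, then $\pi\ket{x}=\ket{\idvec_A}$ gives $\pi S\ket{x}=S^A\ket{\idvec_A}=\ket{\idvec_A}$, so $\hat S(M)$ is column stochastic. Hence $\hat S$ is a superchannel.

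For the converse, non-negativity of $S$ follows by scaling: any $\ket{x}\geq 0$ is the Choi vector of a non-negative $M$, and for small $\lambda>0$ the matrix $\lambda M$ is column sub-stochastic, so $\hat S(\lambda M)$ is sub-stochastic and in particular non-negative, giving $\lambda S\ket{x}\geq 0$ and thus $S\ket{x}\geq 0$. The heart of the proof is semicausality, i.e.\ producing $S^A$ with $\pi S=S^A\pi$; as $\pi$ is surjective this is exactly the inclusion $\ker\pi\subseteq\ker(\pi S)$. I would establish this by writing any $\ket{y}\in\ker\pi$ as a difference of Choi vectors of column stochastic matrices: if $N=(\choi^C_{A;B})^{-1}(\ket{y})$ has vanishing column sums and $M_0$ is the uniform (strictly positive) column stochastic matrix, then $M_0+\lambda N$ is column stochastic for small $\lambda>0$, and $\ket{y}=\lambda^{-1}\big(\choi^C_{A;B}(M_0+\lambda N)-\choi^C_{A;B}(M_0)\big)$. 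Applying $\pi S$ and using that $\hat S$ maps column stochastic matrices to column stochastic matrices, so that $\pi S$ sends both Choi vectors to $\ket{\idvec_A}$, yields $\pi S\ket{y}=0$. This defines $S^A$; evaluating $\pi S$ on a stochastic Choi vector then gives $S^A\ket{\idvec_A}=\ket{\idvec_A}$, and non-negativity of $S^A$ follows exactly as in the previous paragraph.

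The main obstacle is the semicausality step: unlike the non-negativity claims, it is not a pointwise positivity assertion but a factorization, and the superchannel hypotheses only constrain $S$ on the (sub-)stochastic cone rather than on all of $\R^\mathbb{A}\otimes\R^\mathbb{B}$. The key idea that unlocks it is that the column-sum-zero subspace $\ker\pi$ is spanned by differences of \emph{stochastic} Choi vectors, which lets the exact stochastic-preservation property of a superchannel be exploited despite its restriction to a cone. Everything else reduces to the dictionary between matrix conditions and Choi-vector conditions set up at the start.
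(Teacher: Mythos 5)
Your proof is correct and takes essentially the same route as the paper's: the forward direction is the same direct computation, and your key converse step — showing every vector in $\ker(\idmat_A\otimes\bra{\idvec_B})$ is a scaled difference of Choi vectors of stochastic matrices, so that stochasticity-preservation forces $\pi S$ to vanish on $\ker\pi$ — is precisely the paper's claim that its set $C_0$ of scaled differences equals the column-sum-zero subspace $C_0^\prime$. The only cosmetic differences are your cleaner decomposition (anchoring at the uniform stochastic matrix $M_0$ instead of the paper's construction with $\ket{p^\prime},\ket{n^\prime},\ket{y^\prime}$) and defining $S^A$ by abstract factorization through the surjection $\pi$ rather than by the paper's explicit formula $S^A\ket{x_A}=\tfrac{1}{\abs{\mathbb{B}}}(\idmat_A\otimes\bra{\idvec_B})S(\ket{x_A}\otimes\ket{\idvec_B})$.
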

\begin{proof}
We first show the ``if''-direction, i.e., that if $S$ is non-negative and (Schrödinger $B \not\to A$) semicausal, then $\hat{S} = (\choi^C_{A;B})^{-1} \circ S \circ \choi^C_{A;B}$ is a superchannel. Suppose $M$ is a non-negative matrix. Then $\hat{S}(M) $ is non-negative, since $\choi^C_{A;B}$ maps non-negative matrices to non-negative vectors, $S$ maps non-negative vectors to non-negative vectors and $(\choi^C_{A;B})^{-1}$ maps non-negative vectors to non-negative matrices.\\
Furthermore, if $M$ is column stochastic, then
\begin{align*}
    \bra{\idvec_B} \hat{S}(M)  &= \bra{\idvec_B} \left(\choi^C_{A;B}\right)^{-1} \circ S \circ \choi^C_{A;B}(M) 
    \\&= (\bra{\Omega} \otimes \bra{\idvec_B}) \left(\idmat_{A} \otimes S \left( \choi^C_{A;B}(M) \right)\right)\\
    &= \bra{\Omega} \left( \idmat_A \otimes S^A\left((\idmat_A \otimes \bra{\idvec_B}) \choi^C_{A;B}(M) \right) \right) \\
    &= \bra{\Omega} \left(\idmat_A \otimes S^A((\idmat_A \otimes (\bra{\idvec_B}M)) \ket{\Omega})\right) \\
    &= \bra{\Omega} \left(\idmat_A \otimes S^A\ket{\idvec_A}\right) \\
    &= \bra{\Omega} \left(\idmat_A \otimes \ket{\idvec_A}\right) \\
    &= \bra{\idvec_A},
\end{align*}
so $\hat{S}(M)$ is stochastic. In the preceding calculation, we used that $S$ is semicausal in the third line, that $M$ is stochastic in the fifth line, and that $S^A\ket{\idvec_A} = \ket{\idvec_A}$ in the sixth line.

\noindent Now suppose that $M$ is sub-stochastic, such that $M + Q$ is stochastic, with $Q$ non-negative. Then $\hat{S}(M+Q) = \hat{S}(M) + \hat{S}(Q)$ is stochastic and since $\hat{S}(Q)$ is non-negative, $\hat{S}(M)$ is sub-stochastic. This proves that $\hat{S}$ is a superchannel. The claim about the non-negativity of $S^A$ now follows directly from the semicausality condition.\\
For the converse, suppose $\hat{S}$ is a superchannel. Since for all $a \in \mathbb{A}$ and all $b \in \mathbb{B}$, the matrix $\ket{b}\bra{a}$ is sub-stochastic, it follows by linearity of $\hat{S}$ that $\hat{S}(M)$ is non-negative whenever $M$ is non-negative. Thus, since $(\choi^C_{A;B})^{-1}$ maps non-negative vectors to non-negative matrices, $\hat{S}$ maps non-negative matrices to non-negative matrices and $\choi^C_{A;B}$ maps non-negative matrices to non-negative vectors, it follows that $S$ is non-negative.

\noindent Next, we want to show that $S$ is Schrödinger $B \not\to A$ semicausal. Since $\hat{S}$ is a superchannel, $S$ maps Choi vectors of stochastic matrices to Choi vectors of stochastic matrices, that is, $(\idmat_A \otimes \bra{\idvec_B}) S \ket{x} = \ket{\idvec_A}$, for all non-negative vectors $\ket{x} \in \mathbb{R}^\mathbb{A}\otimes\mathbb{R}^\mathbb{B}$ that satisfy $(\idmat_A \otimes \bra{\idvec_B})\ket{x} = \ket{\idvec_A}$. 
As a tool, we define the set of scaled differences of Choi vectors of stochastic matrices by 
\begin{align} \label{Eq:DefiningC0}
    C_0 &:= \set{\lambda (\ket{p} - \ket{n})}{\lambda \in \R; \;\ket{p}, \ket{n} \in \mathbb{R}^\mathbb{A}\otimes\mathbb{R}^\mathbb{B} \text{ non-negative, with } (\idmat_A\otimes \bra{\idvec_B})\ket{p} = (\idmat_A\otimes \bra{\idvec_B})\ket{n} = \ket{\idvec_A}}.
\end{align}
We claim that 
\begin{align*}
    C_0 = C_0^\prime &:= \set{\ket{x^\prime} \in \mathbb{R}^\mathbb{A}\otimes\mathbb{R}^\mathbb{B}}{(\idmat_A \otimes \bra{\idvec_B})\ket{x^\prime} = 0}.
\end{align*}
To see this, first note that $C_0 \subseteq C_0^\prime$ follows directly from the definition. For the other inclusion, $C_0 \supseteq C_0^\prime$, we decompose $\ket{x^\prime} \in C_0^\prime$ as $\ket{x^\prime} = \ket{p^\prime} - \ket{n^\prime}$, for two non-negative vectors $\ket{p^\prime}, \ket{n^\prime} \in \mathbb{R}^\mathbb{A}\otimes\mathbb{R}^\mathbb{B}$. It follows that $(\idmat_A\otimes \bra{\idvec_B})\ket{p^\prime} = (\idmat_A\otimes \bra{\idvec_B})\ket{n^\prime}$. Furthermore, for $\epsilon > 0$ small enough, we have that $\ket{y^\prime} := \ket{\idvec_A} - \epsilon (\idmat_A\otimes \bra{\idvec_B})\ket{p^\prime}$ is non-negative. But then, for any non-negative unit $\ket{v} \in \mathbb{R}^\mathbb{B}$, with $\braket{\idvec_B}{v} = 1$, the vectors $\ket{p} := \epsilon \ket{p^\prime} +  \ket{y^\prime} \otimes \ket{v}$ and $\ket{n} := \epsilon \ket{n^\prime} + \ket{y^\prime} \otimes \ket{v}$ are Choi vectors of stochastic matrices. So $\ket{x^\prime} = \frac{1}{\epsilon}(\ket{p} - \ket{n}) \in C_0$.\\
\noindent We define $P^\bot \in \blt(\mathbb{R}^\mathbb{A}\otimes\mathbb{R}^\mathbb{B})$ by $P^\bot \ket{x} = \frac{1}{\abs{\mathbb{B}}} \left[(\idmat_A \otimes \bra{\idvec_B}) \ket{x}\right] \otimes \ket{\idvec_B}$ and $P := \idmat_{AB} - P^\bot$. Then, since $(\idmat_A \otimes \bra{\idvec_B})P \ket{x} = (\idmat_A \otimes \bra{\idvec_B})\ket{x} -  (\idmat_A \otimes \bra{\idvec_B})\ket{x} = 0$, we have that $P\ket{x} \in C_0$, for all $\ket{x} \in \mathbb{R}^\mathbb{A}\otimes\mathbb{R}^\mathbb{B}$. We define $S^A \in \blt(\mathbb{R}^\mathbb{A})$ by $S^A\ket{x_A} = \frac{1}{\abs{\mathbb{B}}}(\idmat_A \otimes \bra{\idvec_B})P^\perp S(\ket{x_A} \otimes \ket{\idvec_B}) = \frac{1}{\abs{\mathbb{B}}}(\idmat_A \otimes \bra{\idvec_B}) S(\ket{x_A} \otimes \ket{\idvec_B})$ and calculate
\begin{align*}
    (\idmat_A \otimes \bra{\idvec_B}) S\ket{x} &= (\idmat_A \otimes \bra{\idvec_B}) S(P\ket{x}) + (\idmat_A \otimes \bra{\idvec_B}) S(P^\bot \ket{x}) \\
    &= (\idmat_A \otimes \bra{\idvec_B}) S(P^\bot \ket{x}) \\&= (\idmat_A \otimes \bra{\idvec_B}) S\left(\frac{1}{\abs{\mathbb{B}}} \left[(\idmat_A \otimes \bra{\idvec_B}) \ket{x}\right] \otimes \ket{\idvec_B}\right) \\&= S^A( (\idmat_A \otimes \bra{\idvec_B}) \ket{x}),
\end{align*}
where we used in the second line that $C_0$ is invariant under $S$, a fact that follows directly from \eqref{Eq:DefiningC0}. This calculation exactly shows that $S$ is Schödinger $A \not\to B$ semicausal.\\
It remains to show that $S^A \ket{\idvec_A} = \ket{\idvec_A}$. This follows easily, since
\begin{align*}
    S^A\ket{\idvec_A} &= \frac{1}{\abs{\mathbb{B}}}(\idmat_A \otimes \bra{\idvec_B}) S(\ket{\idvec_A} \otimes \ket{\idvec_B}) \\
    &= \frac{1}{\abs{\mathbb{B}}}(\idmat_A \otimes \bra{\idvec_B}) \choi^C_{A;B} \circ \hat{S} \circ (\choi^C_{A;B})^{-1} (\ket{\idvec_A} \otimes \ket{\idvec_B}) \\
    &= \idmat_A \otimes \left[\bra{\idvec_B} \hat{S}\left(\frac{1}{\abs{\mathbb{B}}} \ket{\idvec_B}\bra{\idvec_A}\right)\right] \ket{\Omega}\\
    &= (\idmat_A \otimes \bra{\idvec_A}) \ket{\Omega}\\
    &= \ket{\idvec_A},
\end{align*}
where we used that $\frac{1}{\abs{\mathbb{B}}} \ket{\idvec_B}\bra{\idvec_A}$ is stochastic and that thus $\hat{S}(\frac{1}{\abs{\mathbb{B}}} \ket{\idvec_B}\bra{\idvec_A})$ is stochastic.
\end{proof}

In summary, Theorem \ref{Thm:ClassicalCorrespondenceSuperchannelsSemicausalMaps} tells us that, via the classical Choi–Jamiołkowski isomorphism, we can view classical superchannels equivalently also as suitably normalized semicausal non-negative maps.

\subsection{Relation between classical semicausality and semilocalizability}

The goal of this section is to get a better understanding of the structure of semicausal maps. For non-negative semicausal maps, we have the following structure theorem:

\begin{thm} \label{ClassicalSemicausalIsSemilocalizable}
A non-negative map $N \in \blt(\R^\mathbb{A} \otimes \R^\mathbb{B})$ is row $B \not\to A$ semicausal, if and only if there exists a (finite) alphabet $\mathbb{E}$, a (non-negative) row-stochastic matrix $U \in \blt(\R^\mathbb{B}; \R^\mathbb{E} \otimes \R^\mathbb{B})$ and a non-negative matrix $A \in \blt(\R^\mathbb{A} \otimes \R^\mathbb{E}; \R^\mathbb{A})$ such that 
\begin{align} \label{SemilocalizableEq}
    N = (A \otimes \idmat_B)(\idmat_A \otimes U).
\end{align}
In that case, we can choose $\abs{\mathbb{E}} = \abs{\mathbb{A}}^2$. 
\end{thm}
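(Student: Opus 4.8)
The plan is to prove both implications by decomposing $N$ into blocks along the $A$-system and reading the semicausality constraint as a statement about row-sums on the $B$-system. Throughout, $\{\ket{a_i}\}_i$ is the fixed basis of $\R^\mathbb{A}$ and I write $(N^A)_{ij} := \bra{a_i}N^A\ket{a_j}$ for the matrix elements of the reduced map.

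For the ``if'' direction I would simply assume $N = (A\otimes\idmat_B)(\idmat_A\otimes U)$ with $U$ row-stochastic and compute the reduced map directly. Since $U$ row-stochastic means $U\ket{\idvec_B} = \ket{\idvec_E}\otimes\ket{\idvec_B}$, one obtains $N(\idmat_A\otimes\ket{\idvec_B}) = (A\otimes\idmat_B)(\idmat_A\otimes\ket{\idvec_E}\otimes\ket{\idvec_B}) = \left[A(\idmat_A\otimes\ket{\idvec_E})\right]\otimes\ket{\idvec_B}$, so $N$ is row $B\not\to A$ semicausal with reduced map $N^A = A(\idmat_A\otimes\ket{\idvec_E})$, which is non-negative because $A$ is. This is a one-line verification.

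For the ``only if'' direction, the key object is, for each ordered pair of indices $(i,j)$, the block $W^{(i,j)} := (\bra{a_i}\otimes\idmat_B)\,N\,(\ket{a_j}\otimes\idmat_B)\in\blt(\R^\mathbb{B})$, which is non-negative since $N$ is. Multiplying the semicausality identity $N(\idmat_A\otimes\ket{\idvec_B}) = N^A\otimes\ket{\idvec_B}$ on the left by $\bra{a_i}\otimes\idmat_B$ and on the right by $\ket{a_j}$ shows, block by block, that semicausality is equivalent to $W^{(i,j)}\ket{\idvec_B} = (N^A)_{ij}\ket{\idvec_B}$; that is, every block $W^{(i,j)}$ has all of its row-sums equal to the single scalar $(N^A)_{ij}\geq 0$. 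Hence, whenever $(N^A)_{ij}>0$, the normalized block $U(i,j) := W^{(i,j)}/(N^A)_{ij}$ is a genuine non-negative row-stochastic matrix on $\R^\mathbb{B}$; and whenever $(N^A)_{ij}=0$, non-negativity of the entries forces $W^{(i,j)}=0$, so I may set $U(i,j)$ to be any fixed row-stochastic matrix (e.g.\ $\idmat_B$) without affecting $N$. The construction then stores the pair $(i,j)$ in the environment: take $\mathbb{E} = \mathbb{A}\times\mathbb{A}$ with basis $\{\ket{e_{ij}}\}$, so $\abs{\mathbb{E}}=\abs{\mathbb{A}}^2$, and set
\[ U = \sum_{i,j}\ket{e_{ij}}\otimes U(i,j), \qquad A = \sum_{i,j}(N^A)_{ij}\,\ket{a_i}\bra{a_j}\otimes\bra{e_{ij}}. \]
Then $U$ is row-stochastic since $U\ket{\idvec_B} = \sum_{i,j}\ket{e_{ij}}\otimes\ket{\idvec_B} = \ket{\idvec_E}\otimes\ket{\idvec_B}$, and $A\geq 0$ since $(N^A)_{ij}\geq 0$. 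A short computation gives
\[ (A\otimes\idmat_B)(\idmat_A\otimes U)(\ket{a_m}\otimes\ket{b}) = \sum_i (N^A)_{im}\,\ket{a_i}\otimes U(i,m)\ket{b} = \sum_i \ket{a_i}\otimes W^{(i,m)}\ket{b}, \]
where I use $(N^A)_{im}U(i,m)=W^{(i,m)}$ in both the positive and the degenerate case; this is exactly $N(\ket{a_m}\otimes\ket{b})$, so $N = (A\otimes\idmat_B)(\idmat_A\otimes U)$.

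The proof involves no hard estimate; its only real content is identifying the correct environment. The step I would treat most carefully — and what I regard as the main obstacle — is the reduction showing that row-semicausality forces each $A$-block of $N$ to have \emph{constant} row-sums, together with the clean handling of the degenerate blocks where $(N^A)_{ij}=0$. It is precisely this that permits a uniform, well-defined choice of row-stochastic $U(i,j)$, guarantees that the assembled $U$ is globally row-stochastic, and keeps the environment dimension at the claimed value $\abs{\mathbb{E}}=\abs{\mathbb{A}}^2$.
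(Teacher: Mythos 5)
Your proof is correct and follows essentially the same route as the paper's: both use the environment $\mathbb{E}=\mathbb{A}\times\mathbb{A}$ to record the pair of $A$-indices, define $U$ blockwise as the normalized blocks $W^{(i,j)}/(N^A)_{ij}$ (with an arbitrary row-stochastic choice on degenerate blocks, justified by the same non-negativity argument forcing $W^{(i,j)}=0$ when $(N^A)_{ij}=0$), and take $A$ to carry the weights $(N^A)_{ij}$. Your block notation $W^{(i,j)}$ merely repackages the paper's explicit matrix-element sums; the underlying construction and verification are identical.
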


Borrowing the terminology from the quantum case~\cite{Beckman.2001, Eggeling.2002}, the preceding theorem tells us that non-negative semicausal maps are semilocalizable. We formally define the latter notion for the classical case as follows:
\begin{defn}
A non-negative map $N \in \blt(\R^\mathbb{A} \otimes \R^\mathbb{B})$ is called Heisenberg $B \not\to A$ semilocalizable if it can be written in the form of Eq.~\eqref{SemilocalizableEq}.

Similarly, a non-negative map $M \in \blt(\R^\mathbb{A} \otimes \R^\mathbb{B})$ is called Schrödinger $B \not\to A$ semilocalizable if it can be written as $M = (\idmat_A \otimes U)(A \otimes \idmat_B)$, for a (non-negative) column-stochastic matrix $U \in \blt(\R^\mathbb{E} \otimes \R^\mathbb{B}; \R^\mathbb{B})$ and a non-negative matrix $A \in \blt(\R^\mathbb{A}; \R^\mathbb{A} \otimes \R^\mathbb{E})$.
\end{defn}

The requirement that $U$ is stochastic and $A$ is non-negative in the decomposition above is essential. In fact, if one drops these requirements, then a decomposition $M = (\idmat_A \otimes U)(A \otimes \idmat_B)$ can be found for any matrix $M \in \blt(\R^\mathbb{A} \otimes \R^\mathbb{B})$.

Due to Theorem \ref{ClassicalSemicausalIsSemilocalizable}, a non-negative Schrödinger $B \not\to A$ semicausal and column-stochastic map $M$ admits an operational interpretation. First, note that if $M$ is not only semicausal, but also stochastic, then also the matrix $A$ in Eq.~\eqref{SemilocalizableEq} is stochastic. Thus, the interpretation of the decomposition is: First, Alice applies some probabilistic operation ($A$) to the composite system $\mathbb{A}\times\mathbb{E}$. Then she transmits the $E$-part to Bob, who now applies a stochastic operation ($U$) to his part of the system.

Given this interpretation, the idea behind the construction in the proof of Theorem \ref{ClassicalSemicausalIsSemilocalizable} is that Alice first looks the input of system $A$ and generates the output of system $A$ according to the distribution given by the matrix $N^A$. Then she copies the input as well as her generated output and sends this information to Bob, who is then able to complete the operation by generating an output conditional on his input and the information he got from Alice. Given that this construction requires copying, it might be considered surprising that a quantum analogue is true nevertheless~\cite{Eggeling.2002}.

\begin{proof}(Theorem \ref{ClassicalSemicausalIsSemilocalizable})
If $N$ is Schrödinger $B \not\to A$ semilocalizable, then
\begin{align*}
    N(\idmat_A \otimes \ket{\idvec_B}) = (A \otimes \idmat_B)(\idmat_A \otimes U\ket{\idvec_B}) = (A \otimes \idmat_B)(\idmat_A \otimes \ket{\idvec_{EB}}) = (A (\idmat_A \otimes \ket{\idvec_E})) \otimes \ket{\idvec_B}.
\end{align*}
So, $N$ is row $B \not\to A$ semicausal. \\
\noindent Conversely, if $N$ is row $B \not\to A$ semicausal, we choose $\mathbb{E} := \mathbb{A} \times \mathbb{A}$ and define 
\begin{align} \label{Eq:DefnAU}
    A &:= \sum_{i, j, k} \braket{a_j}{N^A a_k} \;\ket{a_j}\bra{a_k} \otimes \bra{a_k \otimes a_j}, \\
    U &:= \sum_{\substack{m, n, r, s\\ \braket{a_n}{N^A a_m} \neq 0}} \frac{\braket{a_n \otimes b_r}{N\, a_m \otimes b_s}}{\braket{a_n}{N^A a_m}} \ket{a_m \otimes a_n \otimes b_r}\bra{b_s} + \left[ \sum_{\substack{m, n\\ \braket{a_n}{N^A a_m} = 0}} \ket{a_m \otimes a_n} \right] \otimes \idmat_B. \nonumber
\end{align}
To show that $N = (A \otimes \idmat_B)(\idmat_A \otimes U)$, we calculate
\begin{align*}
    (A \otimes \idmat_B)(\idmat_A \otimes U) &= \sum_{\substack{i, j, k\\m, n, r, s\\ \braket{a_n}{N^A a_m} \neq 0}} \frac{\braket{a_j}{N^A a_k}\braket{a_n\otimes b_r}{N\,a_m \otimes b_s}}{\braket{a_n}{N^A a_m}} \left[ (\ket{a_j}\bra{a_k} \otimes \bra{a_k \otimes a_j} \otimes \idmat_B)(\idmat_A \otimes \ket{a_m \otimes a_n \otimes b_r}\bra{b_s}) \right] 
    \\&+ \sum_{\substack{i, j, k\\m, n\\ \braket{a_n}{N^A a_m} = 0}} \braket{a_j}{N^A a_k} (\ket{a_j}\bra{a_k} \otimes \bra{a_k \otimes a_j} \otimes \idmat_B)(\idmat_A \otimes \ket{a_m \otimes a_n} \otimes \idmat_B) \\
    &= \sum_{\substack{i, j, k, r, s\\ \braket{a_j}{N^A a_k} \neq 0}} \frac{\braket{a_j}{N^A a_k}\braket{a_j\otimes b_r}{N\,a_k \otimes b_s}}{\braket{a_j}{N^A a_k}}  \;\ket{a_j}\bra{a_k} \otimes \ket{b_r}\bra{b_s} \\ &+ \sum_{\substack{i, j, k\\ \braket{a_j}{N^A a_k} = 0}} \braket{a_j}{N^A a_k} \ket{a_j}\bra{a_k} \otimes \idmat_B\\
    &= N.
\end{align*}
For the last step, observe that the second sum vanishes and that one can drop the constraint that $\braket{a_j}{N^A a_k} \neq 0$ in the first sum (after cancellation), because $\braket{a_j\otimes b_r}{N\,a_k \otimes b_s} = 0$, if $\braket{a_j}{N^A a_k} = 0$. To see this last claim, note that, since $N$ is non-negative and semicausal, we have 
\begin{align*}
    0 \leq \braket{a_j\otimes b_r}{N\,a_k \otimes b_s} \leq \braket{a_j\otimes b_r}{N\,a_k \otimes \idvec_B} = \braket{a_j}{N^A a_k} \braket{b_r}{\idvec_B} = 0.
\end{align*}
It is clear, that $A$ and $U$ are non-negative, since $N$ and thus also $N^A$ are non-negative by assumption. It remains to show that $U$ is row-stochastic. We have
\begin{align*}
    U \ket{\idvec_{B}} &= \sum_{\substack{m, n, r, s\\ \braket{a_n}{N^A a_m} \neq 0}} \frac{\braket{a_n \otimes b_r}{N\, a_m \otimes b_s}}{\braket{a_n}{N^A a_m}} \ket{a_m \otimes a_n \otimes b_r} +  \sum_{\substack{m, n, s\\ \braket{a_n}{N^A a_m} = 0}} \ket{a_m \otimes a_n \otimes b_s} \\
    &= \sum_{\substack{m, n, r\\ \braket{a_n}{N^A a_m} \neq 0}} \frac{\braket{a_n \otimes b_r}{N\, a_m \otimes \idvec_B}}{\braket{a_n}{N^A a_m}} \ket{a_m \otimes a_n \otimes b_r} +  \sum_{\substack{m, n, s\\ \braket{a_n}{N^A a_m} = 0}} \ket{a_m \otimes a_n \otimes b_s} \\
    &= \sum_{\substack{m, n, r\\ \braket{a_n}{N^A a_m} \neq 0}} \ket{a_m \otimes a_n \otimes b_r} +  \sum_{\substack{m, n, s\\ \braket{a_n}{N^A a_m} = 0}} \ket{a_m \otimes a_n \otimes b_s} \\
    &= \ket{\idvec_{EB}},
\end{align*}
where we used the condition that $N$ is semicausal to obtain the third line. This finishes the proof.
\end{proof}

\begin{rem}
Theorem \ref{ClassicalSemicausalIsSemilocalizable} can be extended to weak-$^*$ continuous non-negative maps on the Banach space of bounded real sequences, but this requires extra care and does not yield additional insight beyond the previous proof.
\end{rem}

\subsection{Generators of semigroups of classical semicausal non-negative maps}

The main goal of this section is to establish a structure theorem for the generators of semigroups of non-negative semicausal maps. 
First, recall that a (norm)-continuous semigroup $\left\{N_t\right\}_{t \geq 0} \subseteq \blt(\R^\mathbb{A}\otimes \R^\mathbb{B})$ has a generator $Q \in \blt(\R^\mathbb{A}\otimes \R^\mathbb{B})$ such that $N_t = e^{tQ}$. A classical result states that $N_t$ is non-negative for all $t \geq 0$ if and only if the generator $Q$ can be written in the form $Q = \Phi - K$, where $\Phi$ is non-negative and $K$ is a diagonal matrix w.r.t.~the canonical basis~\cite{Liggett.2010}. A second, crucial observation is that $N_t$ is Heisenberg $B \not\to A$ semicausal for all $t \geq 0$ if and only if $Q$ is Heisenberg $B \not\to A$ semicausal. To see this, let us first show that the reduced maps $\left\{N^A_t\right\}_{t \geq 0}$ also form a norm-continuous semigroup of non-negative maps. Since non-negativity is clear, we derive the semigroup properties ($N^A_0 = \idmat_A$, $N^A_{t+s} = N^A_t N^A_s$ and continuity) from the corresponding ones of $\{N_t\}_{t \geq 0}$:
\begin{align*}
    N^A_0 &= (\idmat_A \otimes \bra{b_1}) (N^A_0 \otimes \ket{\idvec_B} ) = (\idmat_A \otimes \bra{b_1}) N_0 (\idmat_A \otimes \ket{\idvec_B}) = (\idmat_A \otimes \bra{b_1})(\idmat_A \otimes \ket{\idvec_B}) = \idmat_A,\\
    N^A_{t + s} &= (\idmat_A \otimes \bra{b_1}) (N^A_{t+s} \otimes \ket{\idvec_B} ) = (\idmat_A \otimes \bra{b_1}) N_{t + s} (\idmat_A \otimes \ket{\idvec_B}) = (\idmat_A \otimes \bra{b_1}) N_{t}N_{s} (\idmat_A \otimes \ket{\idvec_B}) \\&= (\idmat_A \otimes \bra{b_1}) N_{t} (\idmat_A \otimes \ket{\idvec_B}) N^A_{s} = (\idmat_A \otimes \bra{b_1}) (\idmat_A \otimes \ket{\idvec_B}) N^A_{t} N^A_{s}  = N^A_t N^A_s,\\
    \norm{N_t^A - N^A_s} &= \sup_{\norm{x}_\infty = 1} \norm{(N^A_t - N^A_s)\ket{x}}_\infty = \sup_{\norm{x}_\infty = 1} \norm{((N^A_t - N^A_s)\ket{x}) \otimes \ket{\idvec_B}}_\infty = \sup_{\norm{x}_\infty = 1} \norm{(N_t - N_s) (\ket{x} \otimes \ket{\idvec_B})}_\infty \\&\leq \sup_{\norm{y}_\infty = 1} \norm{(N_t - N_s)\ket{y}} = \norm{N_t - N_s}.
\end{align*}
Thus, we conclude that $N^A_t = e^{tQ^A}$ for some generator $Q^A \in \blt(\R^\mathbb{A})$. 
We further have 
\begin{align*}
    Q(\idmat_A \otimes \ket{\idvec_B}) &= \frac{d}{dt}\bigg\vert_{t = 0} N_t(\idmat_A \otimes \ket{\idvec_B})
    \\&= \frac{d}{dt}\bigg\vert_{t = 0} (\idmat_A \otimes \ket{\idvec_B})N^A_t  \\
    &= (\idmat_A \otimes \ket{\idvec_B})Q^A.
\end{align*}
Thus, $Q$ is semicausal if $N_t$ is semicausal for all $t \geq 0$. Conversely, if $Q$ is semicausal, then $N_t$ is semicausal, since
\begin{align*}
    N_t (\idmat_A \otimes \ket{\idvec_B}) &= e^{tQ} (\idmat_A \otimes \ket{\idvec_B}) \\&= \sum_{k = 0}^\infty \frac{t^k}{k!} Q^k (\idmat_A \otimes \ket{\idvec_B}) \\&= \sum_{k = 0}^\infty \frac{t^k}{k!} (\idmat_A \otimes \ket{\idvec_B}) \left(Q^A\right)^k \\&= (\idmat_A \otimes \ket{\idvec_B}) e^{tQ^A}.
\end{align*}
Therefore, our task reduces to characterizing semicausal maps of the form $Q = \Phi - K$. Let us first remark that it is straight-forward to check (numerically) whether a given map satisfies these two conditions: We just need to check for non-negativity of the off-diagonal elements and whether $(\idmat_A \otimes \bra{b}) Q\ket{a_i \otimes \idvec_B} = 0$, for all $a_i \in \mathbb{A}$ and all $b \in \{ \ket{\idvec_B} \}^\bot$. I.e., semicausality can be checked in terms of $\abs{\mathbb{A}}(\abs{\mathbb{B}} - 1)$ linear equations and $\abs{\mathbb{A}}\abs{\mathbb{B}}(\abs{\mathbb{A}}\abs{\mathbb{B}} - 1)$ linear inequalities. Thus, a desirable result would be a normal form for all Heisenberg $B \not\to A$ semicausal generators $Q$, which allows for generating such maps, rather than checking whether a given maps is of the desired form. The main result of this section is exactly such a normal form. \\
To understand our normal form below, note that there are two natural ways of constructing a generator (remember that the matrix elements are interpreted as transition rates) that does not transmit information from system $B$ to system $A$. First, we can leave system $A$ unchanged and have transitions only on system $B$. The most basic form of such a map is $\ket{a_i}\bra{a_i}\otimes B^{(i)}$, for some $1\leq i\leq \lvert\mathbb{A}\rvert$ and for some $B^{(i)}\in\blt(\R^{\mathbb{B}})$ that is itself a valid generator of a semigroup of row-stochastic maps. That means that $B^{(i)} = \Phi^{(i)} - \mathrm{diag}(\Phi^{(i)}\ket{\idvec_B})$, for some non-negative matrix $\Phi^{(i)} \in \blt(\R^\mathbb{B})$. Second, if we want to act non-trivially on system $A$, we can make both of the two parts of a generator $Q = \Phi - K$, the non-negative part $\Phi \in \blt(\R^{\mathbb{A}} \otimes \R^{\mathbb{B}})$ and the diagonal part $K \in \blt(\R^{\mathbb{A}} \otimes \R^{\mathbb{B}})$, semicausal separately. Such a map has the form $\Phi_{sc} - K_A \otimes \idmat_B$, where $\Phi_{sc}$ is semicausal non-negative and $K_A \in \blt(\R^{\mathbb{A}})$ is diagonal. The fact that (convex) combinations of these basic building blocks already give rise to the most general form of semicausal generators for semigroups of non-negative bounded linear maps is the content of our next theorem, which establishes the desired normal form. 

\begin{thm}[Generators of classical semigroups of semicausal non-negative maps]\label{Thm:ClassicalSemicausalGeneratorDecompositionCharacterization}
A map $Q\in \blt(\R^{\mathbb{A}} \otimes \R^{\mathbb{B}})$ is the generator of a (norm-continuous) semigroup of Heisenberg $B\not\to A$ semicausal non-negative linear maps if and only if there exist a non-negative Heisenberg $B\not\to A$ semicausal map $\Phi_{sc} \in \blt(\R^{\mathbb{A}} \otimes \R^{\mathbb{B}})$, a diagonal map $K_A \in \blt(\R^{\mathbb{A}} \otimes \R^{\mathbb{B}})$, and linear maps $B^{(i)} \in \blt(\R^{\mathbb{B}})$ that generate (norm-continuous) semigroups of row-stochastic maps, for $1\leq i\leq \lvert\mathbb{A}\rvert$, such that
\begin{align*}
    Q = \Phi_{sc} - K_A\otimes \idmat_B + \sum\limits_{i=1}^{\lvert\mathbb{A}\rvert}\ket{a_i}\bra{a_i}\otimes B^{(i)}.
\end{align*}
In that case, $\Phi_{sc}$ can be chosen 'block-off-diagonal', i.e., $\Phi_{sc} = \sum_{i \neq j} \ket{a_i}\bra{a_j} \otimes \Phi_{sc}^{(i j)}$, for some collection of (non-negative) maps $\Phi^{(i j)}_{sc} \in \blt(\R^{\mathbb{B}})$. 
\end{thm}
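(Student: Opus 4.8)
The plan is to pass to the block decomposition of $Q$ with respect to the $\mathbb{A}$-system and read off the three summands from the blocks. Write $Q = \sum_{i,j} \ket{a_i}\bra{a_j} \otimes Q^{(ij)}$, where $Q^{(ij)} := (\bra{a_i} \otimes \idmat_B) Q (\ket{a_j} \otimes \idmat_B) \in \blt(\R^\mathbb{B})$. The two ingredients I would use for the ``only if'' direction are: (i) because $Q$ generates a semigroup of \emph{non-negative} maps, it has the classical GKLS form $Q = \Phi - K$ with $\Phi$ non-negative and $K$ diagonal, so every off-diagonal entry of $Q$ (in the product basis) is non-negative; hence each off-diagonal block $Q^{(ij)}$ with $i \neq j$ is a non-negative matrix and the off-diagonal part (in $\mathbb{B}$) of each diagonal block $Q^{(ii)}$ is non-negative; and (ii) because $Q$ generates a semigroup of semicausal maps, $Q$ is itself Heisenberg $B \not\to A$ semicausal (as shown earlier in this section), which I would rewrite blockwise as $Q^{(ij)} \ket{\idvec_B} = (Q^A)_{ij} \ket{\idvec_B}$ for all $i,j$. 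In words, $\ket{\idvec_B}$ is a common eigenvector of all blocks $Q^{(ij)}$, with eigenvalue the corresponding entry of the reduced generator $Q^A$.

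With these two facts the decomposition essentially writes itself. I would set $\Phi_{sc} := \sum_{i \neq j} \ket{a_i}\bra{a_j} \otimes Q^{(ij)}$, which is non-negative by (i) and, using $Q^{(ij)}\ket{\idvec_B} = (Q^A)_{ij}\ket{\idvec_B}$, satisfies $\Phi_{sc}(\idmat_A \otimes \ket{\idvec_B}) = \Phi_{sc}^A \otimes \ket{\idvec_B}$, hence is semicausal and block-off-diagonal as required. For the diagonal blocks I would set $K_A := -\sum_i (Q^A)_{ii}\, \ket{a_i}\bra{a_i}$ (a diagonal matrix on $\R^\mathbb{A}$) and $B^{(i)} := Q^{(ii)} - (Q^A)_{ii}\, \idmat_B$. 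Then $Q = \Phi_{sc} - K_A \otimes \idmat_B + \sum_i \ket{a_i}\bra{a_i} \otimes B^{(i)}$ holds blockwise by construction. It remains to check that each $B^{(i)}$ generates a semigroup of row-stochastic maps: its off-diagonal part equals that of $Q^{(ii)}$, which is non-negative by (i), so $B^{(i)}$ is a valid non-negative generator; and $B^{(i)}\ket{\idvec_B} = Q^{(ii)}\ket{\idvec_B} - (Q^A)_{ii}\ket{\idvec_B} = 0$ by the blockwise semicausality, which is exactly the row-stochasticity constraint at the level of generators.

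For the ``if'' direction I would verify the two closure properties directly from the given decomposition. Non-negativity of the generated semigroup follows by collecting off-diagonal (product-basis) entries: $\Phi_{sc}$ is non-negative, each $\ket{a_i}\bra{a_i} \otimes B^{(i)}$ has non-negative off-diagonal part since $B^{(i)}$ is a row-stochastic generator, and $-K_A \otimes \idmat_B$ only affects the diagonal; so $Q$ has the form (non-negative) $-$ (diagonal) and generates a non-negative semigroup. Semicausality follows from $\Phi_{sc}(\idmat_A \otimes \ket{\idvec_B}) = \Phi_{sc}^A \otimes \ket{\idvec_B}$, $(K_A \otimes \idmat_B)(\idmat_A \otimes \ket{\idvec_B}) = K_A \otimes \ket{\idvec_B}$, and $(\ket{a_i}\bra{a_i} \otimes B^{(i)})(\idmat_A \otimes \ket{\idvec_B}) = \ket{a_i}\bra{a_i} \otimes B^{(i)}\ket{\idvec_B} = 0$, so that $Q(\idmat_A \otimes \ket{\idvec_B}) = (\Phi_{sc}^A - K_A) \otimes \ket{\idvec_B}$ has semicausal form; invoking the generator-level equivalences recalled earlier then finishes the argument.

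I do not expect a serious obstacle here: the only genuinely new observation is the blockwise reformulation of semicausality as $Q^{(ij)}\ket{\idvec_B} = (Q^A)_{ij}\ket{\idvec_B}$, and the one point that needs care is the bookkeeping of splitting each diagonal block $Q^{(ii)}$ into a scalar contribution $-(K_A)_{ii}\idmat_B$ (absorbed into $K_A$) and a remainder $B^{(i)}$ annihilating $\ket{\idvec_B}$, choosing $(K_A)_{ii} = -(Q^A)_{ii}$ precisely so that $B^{(i)}\ket{\idvec_B} = 0$.
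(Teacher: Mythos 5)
Your proposal is correct and follows essentially the same route as the paper's proof: the same block decomposition $Q=\sum_{i,j}\ket{a_i}\bra{a_j}\otimes Q^{(ij)}$, the same key observation that semicausality makes $\ket{\idvec_B}$ a common eigenvector with eigenvalues $\bra{a_i}Q^A\ket{a_j}$, and the same choices $\Phi_{sc}=\sum_{i\neq j}\ket{a_i}\bra{a_j}\otimes Q^{(ij)}$, $K_A=-\sum_i (Q^A)_{ii}\ket{a_i}\bra{a_i}$, $B^{(i)}=Q^{(ii)}-(Q^A)_{ii}\idmat_B$. The only cosmetic difference is that you verify semicausality of $\Phi_{sc}$ directly from the eigenvector relation, while the paper deduces it by writing $\Phi_{sc}$ as a linear combination of the semicausal maps $Q$, $K_A\otimes\idmat_B$, and $\sum_i\ket{a_i}\bra{a_i}\otimes B^{(i)}$; both arguments are valid.
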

\begin{proof}
It is straight-forward to check that a generator $Q$ of the given form has non-negative off-diagonal entries w.r.t.~the standard basis and is Heisenberg $B\not\to A$ semicausal. By the above discussion, this means that such a generator indeed gives rise to a semigroup of semicausal non-negative maps.

\noindent We prove the converse. Suppose $Q$ is the generator of a semigroup of non-negative linear maps. Then we can expand it as $Q = \sum_{i,j=1}^{\lvert\mathbb{A}\rvert} \ket{a_i}\bra{a_j}\otimes Q^{(ij)}$, where the operators $Q^{(ij)}\in\blt(\R^{\mathbb{B}})$ are non-negative for $i \neq j$ and of the form of a generator of a non-negative semigroup (i.e., non-negative minus diagonal) for $i = j$. This decomposition, together with semicausality, implies that for all $1\leq i,j\leq \lvert\mathbb{A}\rvert$,
\begin{align*}
Q^{(ij)}\ket{\idvec_B} 
= \left(\bra{a_i}\otimes \idvec_B\right) Q \left(\ket{a_j}\otimes \ket{\idvec_B} \right)
= \bra{a_i}Q^A\ket{a_j}\cdot \ket{\idvec_B}.
\end{align*}
In other words, $\ket{\idvec_B}$ is an eigenvector of every $Q^{(ij)}$, with corresponding eigenvalue $\lambda^{(ij)} = \bra{a_i}Q^A\ket{a_j}$. Hence, if we define $B^{(i)}\in\blt(\R^{\mathbb{B}})$ as $B^{(i)} := Q^{(ii)} - \lambda^{(ii)}\idmat_B$, then $B^{i}$ generates a semigroup of non-negative maps (since $Q^{(ij)}$ does and $\lambda^{(ii)}\idmat_B$ is diagonal) and satisfies (by construction), $B^{(i)} \ket{\idvec_B} = 0$. So $B^{(i)}$ generates a semigroup of row-stochastic maps.\\
With this notation, we can rewrite $Q$ as 
$$
Q 
= \underbrace{\sum\limits_{i\neq j} \ket{a_i}\bra{a_j}\otimes Q^{(ij)}}_{=:\Phi_{sc}} - \underbrace{\sum\limits_{i=1}^{\lvert\mathbb{A}\rvert}-\lambda^{(ii)}\ket{a_i}\bra{a_i}}_{=:K_A}\otimes \idmat_B + \sum\limits_{i=1}^{\lvert\mathbb{A}\rvert}\ket{a_i}\bra{a_i}\otimes B^{(i)}. 
$$
Note that $\Phi_{sc}$ is semicausal, since it can be written as the linear combination of the three semicausal maps $Q$, $K_A \otimes \idmat_B$ and $\sum_i \ket{a_i}\bra{a_i} \otimes B^{(i)}$. Thus, we have reached the claimed form.   
\end{proof}

By applying Theorem \ref{ClassicalSemicausalIsSemilocalizable}, we can further expand the $\Phi$ part:

\begin{cor} \label{Corolary1ClassicalSemicausalGenerators}
A map $Q\in \blt(\R^{\mathbb{A}} \otimes \R^{\mathbb{B}})$ is the generator of a (norm-continuous) semigroup of Heisenberg $B\not\to A$ semicausal non-negative linear maps if and only if there exist a (finite) alphabet $\mathbb{E}$, a (non-negative) row-stochastic matrix $U \in \blt(\R^{\mathbb{B}}; \R^{\mathbb{E}} \otimes \R^{\mathbb{B}})$, a non-negative matrix $A \in \blt( \R^{\mathbb{A}}\otimes \R^{\mathbb{E}}; \R^{\mathbb{A}})$, a diagonal matrix $K_A \in \blt(\R^{\mathbb{A}} \otimes \R^{\mathbb{B}})$, and maps $B^{(i)} \in \blt(\R^{\mathbb{B}})$ that generate (norm-continuous) semigroups of (row-)stochastic maps, for $1\leq i\leq \lvert\mathbb{A}\rvert$, such that
\begin{align*}
    Q = (A \otimes \idmat_B)(\idmat_A \otimes U) - K_A\otimes \idmat_B + \sum\limits_{i=1}^{\lvert\mathbb{A}\rvert}\ket{a_i}\bra{a_i}\otimes B^{(i)}.
\end{align*}
In that case, we can choose $\abs{\mathbb{E}} = \abs{\mathbb{A}}^2$.
\end{cor}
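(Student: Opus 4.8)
The plan is to obtain this corollary as an immediate consequence of Theorem~\ref{Thm:ClassicalSemicausalGeneratorDecompositionCharacterization} combined with the semilocalizability result, Theorem~\ref{ClassicalSemicausalIsSemilocalizable}. The only difference between the two statements is that the abstract semicausal non-negative part $\Phi_{sc}$ appearing in Theorem~\ref{Thm:ClassicalSemicausalGeneratorDecompositionCharacterization} gets replaced by its explicit semilocalizable factorization $(A \otimes \idmat_B)(\idmat_A \otimes U)$; every other ingredient (the diagonal $K_A$ and the row-stochastic generators $B^{(i)}$) carries over unchanged. So the proof is essentially a bookkeeping substitution between the two normal forms.

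For the \emph{if}-direction, I would start from a map $Q$ of the claimed form and set $\Phi_{sc} := (A \otimes \idmat_B)(\idmat_A \otimes U)$. Reading the \emph{if}-direction of Theorem~\ref{ClassicalSemicausalIsSemilocalizable}, since $U$ is row-stochastic and $A$ is non-negative, this $\Phi_{sc}$ is a non-negative Heisenberg $B \not\to A$ semicausal map. Then $Q = \Phi_{sc} - K_A \otimes \idmat_B + \sum_i \ket{a_i}\bra{a_i} \otimes B^{(i)}$ is exactly of the form treated in Theorem~\ref{Thm:ClassicalSemicausalGeneratorDecompositionCharacterization}, whose \emph{if}-direction then yields that $Q$ generates a semigroup of Heisenberg $B \not\to A$ semicausal non-negative maps.

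For the \emph{only if}-direction, I would invoke Theorem~\ref{Thm:ClassicalSemicausalGeneratorDecompositionCharacterization} to write a given generator $Q$ of such a semigroup as $Q = \Phi_{sc} - K_A \otimes \idmat_B + \sum_i \ket{a_i}\bra{a_i} \otimes B^{(i)}$, with $\Phi_{sc}$ non-negative and semicausal, $K_A$ diagonal, and each $B^{(i)}$ a row-stochastic generator. Applying Theorem~\ref{ClassicalSemicausalIsSemilocalizable} to $\Phi_{sc}$ then produces a finite alphabet $\mathbb{E}$ with $\abs{\mathbb{E}} = \abs{\mathbb{A}}^2$, a row-stochastic $U \in \blt(\R^\mathbb{B}; \R^\mathbb{E} \otimes \R^\mathbb{B})$, and a non-negative $A \in \blt(\R^\mathbb{A} \otimes \R^\mathbb{E}; \R^\mathbb{A})$ such that $\Phi_{sc} = (A \otimes \idmat_B)(\idmat_A \otimes U)$. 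Substituting this factorization into the decomposition gives precisely the claimed normal form, and the bound on $\abs{\mathbb{E}}$ is inherited verbatim from Theorem~\ref{ClassicalSemicausalIsSemilocalizable}.

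Since both ingredients are already established, there is essentially no technical obstacle: the corollary follows by a direct substitution. The only point requiring a moment of care is to confirm that the semilocalizable factorization preserves exactly the two properties of $\Phi_{sc}$ demanded by Theorem~\ref{Thm:ClassicalSemicausalGeneratorDecompositionCharacterization}, namely non-negativity and Heisenberg $B \not\to A$ semicausality, which is immediate from the equivalence proved in Theorem~\ref{ClassicalSemicausalIsSemilocalizable}.
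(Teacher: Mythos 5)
Your proposal is correct and is exactly the paper's route: the paper derives Corollary~\ref{Corolary1ClassicalSemicausalGenerators} by taking the normal form of Theorem~\ref{Thm:ClassicalSemicausalGeneratorDecompositionCharacterization} and expanding the semicausal non-negative part $\Phi_{sc}$ via the semilocalizability equivalence of Theorem~\ref{ClassicalSemicausalIsSemilocalizable}, with the bound $\abs{\mathbb{E}} = \abs{\mathbb{A}}^2$ inherited from the latter. Your substitution argument in both directions, including the observation that the factorization $(A \otimes \idmat_B)(\idmat_A \otimes U)$ is automatically non-negative and Heisenberg $B \not\to A$ semicausal, is precisely what the paper leaves implicit.
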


One should also note that with the notation of Corollary \ref{Corolary1ClassicalSemicausalGenerators}, the reduced map is given by $Q^A = (A (\idmat_A \otimes \ket{\idvec_B})) - K_A$. So, the reduced dynamics only depends on the operators $A$ and $K_A$. Further note that if we require the semigroup to consist of non-negative semicausal maps that are also row-stochastic, then we obtain the additional requirement that $K_A \ket{\idvec_A} = A \ket{\idvec_{AE}}$, which completely determines $K_A$. For completeness and later use, we write down the form of the generators non-negative semigroups that are Schrödinger $B \not\to A$ semicausal.
\begin{cor} \label{CorolarySchroedingerClassicalSemicausalGenerators}
A map $Q\in \blt(\R^{\mathbb{A}} \otimes \R^{\mathbb{B}})$ is the generator of a (norm-continuous) semigroup of Schrödinger $B\not\to A$ semicausal non-negative linear maps if and only if there exist a (finite) alphabet $\mathbb{E}$, a (non-negative) column-stochastic matrix $U \in \blt(\R^{\mathbb{E}} \otimes \R^{\mathbb{B}}; \R^{\mathbb{B}})$, a non-negative matrix $A \in \blt(\R^{\mathbb{A}}; \R^{\mathbb{A}}\otimes \R^{\mathbb{E}})$, a diagonal matrix $K_A \in \blt(\R^{\mathbb{A}} \otimes \R^{\mathbb{B}})$, and maps $B^{(i)} \in \blt(\R^{\mathbb{B}})$ that generate (norm-continuous) semigroups of column-stochastic maps, for $1\leq i\leq \lvert\mathbb{A}\rvert$, such that
\begin{align*}
    Q = (\idmat_A \otimes U)(A \otimes \idmat_B) - K_A\otimes \idmat_B + \sum\limits_{i=1}^{\lvert\mathbb{A}\rvert}\ket{a_i}\bra{a_i}\otimes B^{(i)}.
\end{align*}
In that case, we can choose $\abs{\mathbb{E}} = \abs{\mathbb{A}}^2$.
\end{cor}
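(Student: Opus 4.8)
The plan is to reduce the claim to the already-established Heisenberg version, Corollary~\ref{Corolary1ClassicalSemicausalGenerators}, by means of the transposition duality recorded after Definition~\ref{Defn:ClassicalSemicausality}: a matrix $M$ is column (Schrödinger) $B\not\to A$ semicausal if and only if $M^T$ is row (Heisenberg) $B\not\to A$ semicausal, and transposition obviously preserves entry-wise non-negativity. Since transposition is a linear involution satisfying $(e^{tQ})^T = e^{tQ^T}$, the family $\{e^{tQ}\}_{t\geq 0}$ is the semigroup generated by $Q$ exactly when $\{e^{tQ^T}\}_{t\geq 0}$ is the (automatically norm-continuous) semigroup generated by $Q^T$. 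Putting these two facts together, $Q$ generates a semigroup of Schrödinger $B\not\to A$ semicausal non-negative maps if and only if $Q^T$ generates a semigroup of Heisenberg $B\not\to A$ semicausal non-negative maps.

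First I would apply Corollary~\ref{Corolary1ClassicalSemicausalGenerators} to $Q^T$, obtaining a finite alphabet $\mathbb{E}$ with $\abs{\mathbb{E}} = \abs{\mathbb{A}}^2$, a row-stochastic $\tilde{U} \in \blt(\R^\mathbb{B}; \R^\mathbb{E}\otimes\R^\mathbb{B})$, a non-negative $\tilde{A} \in \blt(\R^\mathbb{A}\otimes\R^\mathbb{E}; \R^\mathbb{A})$, a diagonal $\tilde{K}_A$, and generators $\tilde{B}^{(i)}$ of row-stochastic semigroups such that
\begin{align*}
    Q^T = (\tilde{A} \otimes \idmat_B)(\idmat_A \otimes \tilde{U}) - \tilde{K}_A\otimes \idmat_B + \sum_{i=1}^{\abs{\mathbb{A}}}\ket{a_i}\bra{a_i}\otimes \tilde{B}^{(i)}.
\end{align*}
Then I would transpose this identity back, using that transposition is an anti-homomorphism respecting tensor factors, i.e.\ $(XY)^T = Y^T X^T$ and $(X\otimes Y)^T = X^T \otimes Y^T$. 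Setting $U := \tilde{U}^T$, $A := \tilde{A}^T$, $K_A := \tilde{K}_A$, and $B^{(i)} := (\tilde{B}^{(i)})^T$ then yields exactly
\begin{align*}
    Q = (\idmat_A \otimes U)(A \otimes \idmat_B) - K_A\otimes \idmat_B + \sum_{i=1}^{\abs{\mathbb{A}}}\ket{a_i}\bra{a_i}\otimes B^{(i)},
\end{align*}
which is the desired form.

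What remains is to verify that each transposed ingredient lands in the required class, so that the two directions of the equivalence match up. Non-negativity is preserved, so $A$ is non-negative; a diagonal matrix is its own transpose, so $K_A$ is diagonal; and since $\ket{a_i}\bra{a_i}$ is symmetric, the post-processing terms keep their form. The stochasticity statements follow because transposition interchanges the row- and column-stochastic conditions: the relation $\tilde{U}\ket{\idvec_B} = \ket{\idvec_{EB}}$ becomes $\bra{\idvec_B}U = \bra{\idvec_{EB}}$, making $U$ column-stochastic, and the generator condition $\tilde{B}^{(i)}\ket{\idvec_B} = 0$ becomes $\bra{\idvec_B}B^{(i)} = 0$, so each $B^{(i)}$ generates a column-stochastic semigroup. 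The converse is obtained by running the same transposition argument in reverse, starting from a $Q$ of the claimed form. The proof is essentially pure bookkeeping; the only point requiring mild care, and the place where an error would most easily creep in, is tracking how the stochasticity normalizations and the domains and codomains of $U$ and $A$ are swapped under transposition.
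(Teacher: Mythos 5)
Your proof is correct and matches the paper's (implicit) argument: the paper states this corollary without proof precisely because it is obtained from Corollary~\ref{Corolary1ClassicalSemicausalGenerators} by the transposition duality noted after Definition~\ref{Defn:ClassicalSemicausality}, which is exactly the reduction you carry out. Your verification that transposition swaps row/column stochasticity and preserves non-negativity, diagonality, and the projectors $\ket{a_i}\bra{a_i}$ is the complete bookkeeping needed.
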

\noindent Similar to the row-stochastic case, $B^{(i)}$ generates a semigroup of column-stochastic maps if and only if $B^{(i)} = \Phi^{(i)} - \mathrm{diag}(\bra{\idvec_B}\Phi^{(i)})$, for some non-negative matrix $\Phi^{(i)} \in \blt(\R^\mathbb{B})$

\subsection{Generators of semigroups of classical superchannels}

We finally turn to semigroups of classical superchannels, that is, a collection of classical superchannels $\left\{\hat{S_t} \right\}_{t \geq 0}$, such that $\hat{S}_0 = \idop$, $\hat{S}_{t+s} = \hat{S}_t \hat{S}_s$ and the map $t \mapsto \hat{S}_t$ is continuous (w.r.t.~any and thus all of the equivalent norms in finite dimensions). To formulate a technically slightly stronger result, we call a linear map $\hat{S}$ a preselecting supermap, if  $\choi^C_{A;B} \circ \hat{S} \circ (\choi^C_{A;B})^{-1}$ is a non-negative Schrödinger $B \not\to A$ semicausal map. Theorem \ref{Thm:ClassicalCorrespondenceSuperchannelsSemicausalMaps} then tells us that a superchannel is a special preselecting supermap. The result of this section is the following: 

\begin{thm} \label{thm:normalFormClassicalSuperchannels}
A linear map $\hat{Q} : \blt(\R^{\mathbb{A}}; \R^{\mathbb{B}}) \rightarrow \blt(\R^{\mathbb{A}}; \R^{\mathbb{B}})$ generates a semigroup of classical preselecting supermaps if and only if there exists a (finite) alphabet $\mathbb{E}$, a column-stochastic matrix $U \in \blt(\R^\mathbb{E} \otimes \R^\mathbb{B}; \R^\mathbb{B})$, a non-negative matrix $A \in \blt(\R^\mathbb{A}; \R^\mathbb{A} \otimes \R^\mathbb{E})$, a diagonal matrix $K_A \in \blt(\R^\mathbb{A})$ and a collection of generators of semigroups of column-stochastic matrices $B^{(i)} \in \blt(\R^\mathbb{B})$, such that 
\begin{align} \label{Eq:PreselectingNormalForm}
    \hat{Q}(M) = U (M \otimes \idmat_E) A - M K_A + \sum_{i = 1}^{\abs{\mathbb{A}}} B^{(i)} M \ket{a_i}\bra{a_i}.
\end{align}
Furthermore, $\hat{Q}$ generates a semigroup of classical superchannels if and only if $\hat{Q}$ generates a semigroup of preselecting supermaps and $\braket{a_i}{K_A a_i} = \braket{\idvec_{AE}}{A a_i}$, for all $1 \leq i \leq \abs{\mathbb{A}}$. In this case, $\hat{Q}$ is given by
\begin{align} \label{Eq:SuperchannelNormalForm}
    \hat{Q}(M) = U (M \otimes \idmat_E) A - \sum_{i = 1}^{\abs{\mathbb{A}}} \braket{\idvec_{AE}}{A a_i} M \ket{a_i}\bra{a_i} + \sum_{i = 1}^{\abs{\mathbb{A}}} B^{(i)} M \ket{a_i}\bra{a_i}.
\end{align}
\end{thm}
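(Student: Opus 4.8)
The plan is to reduce the statement to Corollary~\ref{CorolarySchroedingerClassicalSemicausalGenerators} via the classical Choi–Jamiołkowski isomorphism, exactly as the definition of a preselecting supermap suggests. Since $\choi^C_{A;B}$ is a linear bijection, a family $\{\hat{S}_t\}_{t\geq0}$ is a norm-continuous semigroup of preselecting supermaps if and only if $S_t := \choi^C_{A;B}\circ\hat{S}_t\circ(\choi^C_{A;B})^{-1}$ is a norm-continuous semigroup of non-negative Schrödinger $B\not\to A$ semicausal maps; conjugation by $\choi^C_{A;B}$ intertwines the two semigroups, and since $e^{t\Psi\hat{Q}\Psi^{-1}}=\Psi e^{t\hat{Q}}\Psi^{-1}$ it also intertwines their generators, so $\hat{Q}$ generates the former precisely when $Q := \choi^C_{A;B}\circ\hat{Q}\circ(\choi^C_{A;B})^{-1}$ generates the latter. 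First I would record this correspondence and then invoke Corollary~\ref{CorolarySchroedingerClassicalSemicausalGenerators} to write $Q = (\idmat_A\otimes U)(\tilde{A}\otimes\idmat_B) - K_A\otimes\idmat_B + \sum_i \ket{a_i}\bra{a_i}\otimes B^{(i)}$, with $U$ column-stochastic, $\tilde{A}$ non-negative, $K_A$ diagonal and each $B^{(i)}$ a generator of a column-stochastic semigroup.

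The technical core is to translate this normal form for $Q$ back to a normal form for $\hat{Q}$ term by term. For this I would use the two vectorization identities $\choi^C_{A;B}(XM) = (\idmat_A\otimes X)\choi^C_{A;B}(M)$ for $X\in\blt(\R^\mathbb{B})$ and $\choi^C_{A;B}(MY)=(Y^T\otimes\idmat_B)\choi^C_{A;B}(M)$ for $Y\in\blt(\R^\mathbb{A})$, both of which follow immediately from $\choi^C_{A;B}(M)=\sum_k\ket{a_k}\otimes M\ket{a_k}$. The second and third summands then translate directly: $-K_A\otimes\idmat_B$ becomes $M\mapsto -MK_A$ (using that $K_A$ is diagonal, hence symmetric), and $\ket{a_i}\bra{a_i}\otimes B^{(i)}$ becomes $M\mapsto B^{(i)}M\ket{a_i}\bra{a_i}$ (using $(\ket{a_i}\bra{a_i})^T=\ket{a_i}\bra{a_i}$). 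For the first summand I would verify that $(\idmat_A\otimes U)(\tilde{A}\otimes\idmat_B)$ corresponds to $M\mapsto U(M\otimes\idmat_E)A$, where $A$ is obtained from $\tilde{A}$ by transposing only the two $\mathbb{A}$-indices, i.e.\ $A=\tilde{A}^{T_A}$, with the flip on $\R^\mathbb{E}\otimes\R^\mathbb{B}$ absorbed into $U$. Since transposition preserves entrywise non-negativity, $A$ is non-negative if and only if $\tilde{A}$ is, so this establishes~\eqref{Eq:PreselectingNormalForm} in both directions.

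For the refinement to superchannels I would combine Theorem~\ref{Thm:ClassicalCorrespondenceSuperchannelsSemicausalMaps} with a reduced-generator computation mirroring the one in the previous subsection. By that theorem, $\hat{S}_t$ is a superchannel (rather than merely a preselecting supermap) exactly when the reduced map satisfies $S_t^A\ket{\idvec_A}=\ket{\idvec_A}$. As in the Heisenberg case, the reduced maps $\{S_t^A\}_{t\geq0}$ form a norm-continuous semigroup with some generator $Q^A$, and $S_t^A\ket{\idvec_A}=\ket{\idvec_A}$ for all $t$ is equivalent to $Q^A\ket{\idvec_A}=0$. I would then read off $Q^A$ from the normal form of $Q$ by applying $(\idmat_A\otimes\bra{\idvec_B})$ on the left: the summand $\sum_i\ket{a_i}\bra{a_i}\otimes B^{(i)}$ drops out because $\bra{\idvec_B}B^{(i)}=0$, the column-stochasticity $\bra{\idvec_B}U=\bra{\idvec_{EB}}$ collapses the first summand, and one is left with $Q^A = (\idmat_A\otimes\bra{\idvec_E})\tilde{A} - K_A$.

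Finally I would translate $Q^A\ket{\idvec_A}=0$ into the claimed condition. Testing against $\bra{a_i}$ and using the transpose relation between $A$ and $\tilde{A}$ rewrites $(\bra{a_i}\otimes\bra{\idvec_E})\tilde{A}\ket{\idvec_A}$ as $\braket{\idvec_{AE}}{A a_i}$, while $\braket{a_i}{K_A\idvec_A}=\braket{a_i}{K_A a_i}$ is the $i$-th diagonal entry of $K_A$; hence $Q^A\ket{\idvec_A}=0$ reads $\braket{a_i}{K_A a_i}=\braket{\idvec_{AE}}{A a_i}$ for all $i$. Substituting these values into~\eqref{Eq:PreselectingNormalForm} rewrites $-MK_A = -\sum_i\braket{\idvec_{AE}}{A a_i}\,M\ket{a_i}\bra{a_i}$, which yields~\eqref{Eq:SuperchannelNormalForm}. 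I expect the main obstacle to be precisely the bookkeeping of transposes in the vectorization identities: one must track that the $A$ appearing in the theorem is the $\mathbb{A}$-transpose of the operator produced by Corollary~\ref{CorolarySchroedingerClassicalSemicausalGenerators}, since getting this wrong swaps input and output $\mathbb{A}$-indices and makes the reduced-map condition come out as $\bra{\idvec_A}Q^A=0$ rather than the correct $Q^A\ket{\idvec_A}=0$.
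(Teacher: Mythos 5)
Your proposal is correct and follows essentially the same route as the paper: conjugate the normal form of Corollary~\ref{CorolarySchroedingerClassicalSemicausalGenerators} through the classical Choi–Jamiołkowski isomorphism term by term (with $A=\tilde{A}^{T_A}$, the flip absorbed into $U$, and $K_A=\tilde{K}_A^T$), then use Theorem~\ref{Thm:ClassicalCorrespondenceSuperchannelsSemicausalMaps} and differentiation to reduce the superchannel case to $Q^A\ket{\idvec_A}=0$, which yields exactly the condition $\braket{a_i}{K_A a_i}=\braket{\idvec_{AE}}{A a_i}$. Your explicit remarks that partial transposition preserves entrywise non-negativity and that the reduced generator is $(\idmat_A\otimes\bra{\idvec_E})\tilde{A}-K_A$ (with a bra, correcting what appears as a ket typo in the paper's displayed formula) are sound and match the paper's computation.
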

\begin{proof}
The main idea is to relate the generators of superchannels to those of semicausal maps. This relation is given by definition for preselecting supermaps and by Theorem \ref{Thm:ClassicalCorrespondenceSuperchannelsSemicausalMaps} for superchannels. For a generator $\hat{Q}$ of a semigroup of preselecting supermaps $\{\hat{S}_t\}_{t \geq 0}$, we have
\begin{align*}
    \hat{Q} = \frac{d}{dt} \bigg\vert_{t = 0} \hat{S}_t = (\choi^C_{A;B})^{-1} \frac{d}{dt} \bigg\vert_{t = 0} \left[ \choi^C_{A;B} \circ \hat{S}_t \circ (\choi^C_{A;B})^{-1}\right] \choi^C_{A;B}
\end{align*}
Thus $\hat{Q}$ generates a semigroup of preselecting supermaps if and only if $\hat{Q}$ can be written as $\hat{Q} = (\choi^C_{A;B})^{-1} \circ Q \circ \choi^C_{A;B}$, for some generator $Q$ of a semigroup of non-negative Schrödinger $B \not\to A$ semicausal maps. Thus to prove the first part of our Theorem, we simply take the normal form in Corollary \ref{CorolarySchroedingerClassicalSemicausalGenerators} and compute the similarity transformation above.\\ 
For $\ket{\Omega} = \sum_i \ket{a_i} \otimes \ket{a_i} \in \R^\mathbb{A} \otimes \R^\mathbb{A}$ and an operator $X_A \in \blt(\R^\mathbb{A})$, the well known identity $(X_A \otimes \idmat_A)\ket{\Omega} = (\idmat_A \otimes X_A^T) \ket{\Omega}$ can be proven by a direct calculation. Simmilarly, it is easy to show that for $X_A \in \blt(\R^\mathbb{A}; \R^\mathbb{A} \otimes \R^\mathbb{E})$, the slightly more general identity $(X_A \otimes \idmat_A) \ket{\Omega} = (\idmat_A \otimes \mathbb{F}_{A;E} X_A^{T_A}) \ket{\Omega}$ holds, where $\mathbb{F}_{A; E}$ is the flip operator that exchanges systems $A$ and $E$. We use these two identities in the following calculations. \\
For $\tilde{A} \in \blt(\R^\mathbb{A}; \R^\mathbb{A} \otimes \R^\mathbb{E})$ and $\tilde{U} \in \blt(\R^\mathbb{E} \otimes \R^\mathbb{B}; \R^\mathbb{B})$, we have, for any $M\in\blt(\R^{\mathbb{A}}; \R^{\mathbb{B}})$,
\begin{align*}
    (\choi^C_{A;B})^{-1} (\idmat_A \otimes \tilde{U}) (\tilde{A} \otimes \idmat_B) \choi^C_{A;B}(M) &= (\choi^C_{A;B})^{-1} (\idmat_A \otimes \tilde{U}) (\tilde{A} \otimes \idmat_B) (\idmat_A \otimes M) \ket{\Omega} \\
    &= (\choi^C_{A;B})^{-1} (\idmat_A \otimes (\tilde{U} (\idmat_E \otimes M))) (\tilde{A} \otimes \idmat_A) \ket{\Omega} \\
    &= (\choi^C_{A;B})^{-1} (\idmat_A \otimes (\tilde{U} (\idmat_E \otimes M))) (\idmat_A \otimes \mathbb{F}_{A ; E}\tilde{A}^{T_A}) \ket{\Omega} \\
    &= (\choi^C_{A;B})^{-1} (\idmat_A \otimes (\tilde{U} (\idmat_E \otimes M) \mathbb{F}_{A ; E}\tilde{A}^{T_A})) \ket{\Omega} \\
    &=  (\choi^C_{A;B})^{-1} \choi^C_{A;B}(\tilde{U} (\idmat_E \otimes M) \mathbb{F}_{A ; E}\tilde{A}^{T_A}) \\
    &= \tilde{U} (\idmat_E \otimes M) \mathbb{F}_{A ; E}\tilde{A}^{T_A} \\
    &= (\tilde{U}\mathbb{F}_{B ; E}) (M \otimes \idmat_E) \tilde{A}^{T_A}.
\end{align*}
For $\tilde{K}_A \in \blt(\R^\mathbb{A})$, we get, for any $M\in\blt(\R^{\mathbb{A}}; \R^{\mathbb{B}})$,
\begin{align*}
    (\choi^C_{A;B})^{-1} (K_A \otimes \idmat_B) \choi^C_{A;B}(M) &= (\choi^C_{A;B})^{-1} (\tilde{K}_A \otimes \idmat_B) (\idmat_A \otimes M) \ket{\Omega} \\
    &= (\choi^C_{A;B})^{-1} (\idmat_A \otimes M) (\tilde{K}_A \otimes \idmat_A) \ket{\Omega} \\
    &= (\choi^C_{A;B})^{-1} (\idmat_A \otimes M) (\idmat_A \otimes \tilde{K}_A^T) \ket{\Omega} \\
    &= (\choi^C_{A;B})^{-1} (\idmat_A \otimes M \tilde{K}_A^T) \ket{\Omega} \\
    &= (\choi^C_{A;B})^{-1} \choi^C_{A;B}(M \tilde{K}_A^T) \\
    &= M \tilde{K}_A^T.
\end{align*}
And finally, for an operator $\tilde{B}^{(i)}\in\blt(\mathbb{R}^{\mathbb{B}})$ and for any $1 \leq i \leq \abs{\mathbb{A}}$, we have, for any $M\in\blt(\R^{\mathbb{A}}; \R^{\mathbb{B}})$,
\begin{align*}
    (\choi^C_{A;B})^{-1} (\ket{a_i}\bra{a_i} \otimes B^{(i)}) \choi^C_{A;B}(M) &=  (\choi^C_{A;B})^{-1} (\ket{a_i}\bra{a_i} \otimes B^{(i)}) (\idmat_A \otimes M) \ket{\Omega}\\
    &= (\choi^C_{A;B})^{-1} (\idmat_A \otimes B^{(i)} M) (\ket{a_i}\bra{a_i} \otimes \idmat_A) \ket{\Omega} \\
    &= (\choi^C_{A;B})^{-1} (\idmat_A \otimes B^{(i)} M) (\idmat_A \otimes \ket{a_i}\bra{a_i}) \ket{\Omega} \\
    &= (\choi^C_{A;B})^{-1} ( \idmat_A \otimes B^{(i)} M \ket{a_i}\bra{a_i} ) \ket{\Omega} \\
    &= (\choi^C_{A;B})^{-1} \choi^C_{A;B}(B^{(i)} M \ket{a_i}\bra{a_i}) \\
    &= B^{(i)} M \ket{a_i}\bra{a_i}.
\end{align*}
Applying the results of these calculations term by term to the normal form in Corollary \ref{CorolarySchroedingerClassicalSemicausalGenerators} yields the first claim, where we defined $A = \tilde{A}^{T_A}$, $U = \tilde{U}\mathbb{F}_{B ; E}$, $K_A = \tilde{K}_A^T$ and $B^{(i)} = \tilde{B}^{(i)}$.\\
If the semigroup $\{\hat{S}_t\}_{t \geq 0}$ consists of superchannels, that is, preselecting maps s.t.~(by Theorem \ref{Thm:ClassicalCorrespondenceSuperchannelsSemicausalMaps}) the reduced maps $S^A_t$ of the semigroup of semicausal maps $S_t := \choi^C_{A;B} \circ \hat{S}_t \circ (\choi^C_{A;B})^{-1}$ (which are defined by the requirement that $(\idmat_A \otimes \ket{\idvec_B})S_t = S^A_t (\idmat_A \otimes \ket{\idvec_B})$) satisfy $S^A_t \ket{\idvec_A} = \ket{\idvec_A}$, then differentiating this relation yields
\begin{align*}
    Q^A\ket{\idvec_A} = \frac{d}{dt} \bigg\vert_{t = 0} S^A_t \ket{\idvec_A} = \frac{d}{dt} \bigg\vert_{t = 0} \ket{\idvec_A} = 0. 
\end{align*}
We conclude that $\hat{Q}$ generates a semigroup of superchannels if and only if $Q$ generates a semigroup of semicausal maps and $Q^A \ket{\idvec_A} = 0$. We obtain directly from Corollary \ref{CorolarySchroedingerClassicalSemicausalGenerators} that $Q^A = (\idmat_A \otimes \ket{\idvec_E})\tilde{A} - \tilde{K}_A$. It follows that
\begin{align}
    \braket{a_i \idvec_E}{\tilde{A}\idvec_A} =  \braket{a_i \idvec_E}{A^{T_A} \idvec_A} = \braket{\idvec_{AE}}{A a_i} = \braket{a_i}{\tilde{K}_A \idvec_A} = \braket{a_i}{K_A a_i},
\end{align}
where we used that $\tilde{K}_A = K_A$ is diagonal in the last step. This is the condition claimed in the theorem. Finally, \eqref{Eq:SuperchannelNormalForm} is obtained by combining this condition with \eqref{Eq:PreselectingNormalForm}.   
\end{proof}

\newpage
\section{The Quantum Case}\label{sct:quantum}

We now turn to the quantum case. As introduced and described in more detail in~\cite{Chiribella_2008}, a quantum superchannel is a map that maps quantum channels to quantum channels while preserving the probabilistic structure of the theory. To achieve the latter, it is usually required that a quantum superchannel is a linear map and that probabilistic transformations, i.e., trace non-increasing CP-maps, should be mapped to probabilistic transformations, even if we add an innocent bystander. When dealing with superchannels, we will restrict ourselves to the finite-dimensional case, and leave the infinite-dimensional case~\cite{InfiniteSuperchannels} for future work. We follow~\cite{Chiribella_2008} and define superchannels as follows:

\begin{defn}[Superchannels]
A linear map $\hat{S} : \blt(\trcl(\mathcal{H}_A); \trcl(\mathcal{H}_B)) \rightarrow \blt(\trcl(\mathcal{H}_A); \trcl(\mathcal{H}_B))$ is called a superchannel if for all $n \in \N$ the map $\hat{S}_n = \idop_{\blt(\trcl(\C^n))} \otimes \hat{S}$ satisfies that $\hat{S}_n(T)$ is a probabilistic transformation whenever $T \in \blt(\trcl(\C^n \otimes \mathcal{H}_A); \trcl(\C^n \otimes \mathcal{H}_B))$ is a probabilistic transformation and that $\hat{S}_n(T)$ is a quantum channel whenever $T \in \blt(\trcl(\C^n \otimes \mathcal{H}_A); \trcl(\C^n \otimes \mathcal{H}_B))$ is a quantum channel. 
\end{defn}

A related concept is that of a semicausal quantum channel, which is a quantum channel on a bipartite space $\mathcal{H}_A \otimes \mathcal{H}_B$ such that no communication from $B$ to $A$ is allowed. Following~\cite{Beckman.2001, Eggeling.2002}, we formalize this as follows:
\begin{defn}[Semicausality]
A bounded linear map $L_* : \trcl(\mathcal{H}_A \otimes \mathcal{H}_B) \rightarrow \trcl(\mathcal{H}_A \otimes \mathcal{H}_B)$ is called Schrödinger $B \not\to A$ semicausal if there exists $L^A_* : \trcl(\mathcal{H}_A) \rightarrow \trcl(\mathcal{H}_A)$ such that $\ptr{B}{L_*(\rho)} = L^A_*(\ptr{B}{\rho})$, for all $\rho \in \trcl(\mathcal{H}_A \otimes \mathcal{H}_B)$. Similarly, $L : \blt(\mathcal{H}_A \otimes \mathcal{H}_B) \rightarrow \blt(\mathcal{H}_A \otimes \mathcal{H}_B)$ is called Heisenberg $B \not\to A$ semicausal, if there exists $L^A : \blt(\mathcal{H}_A) \rightarrow \blt(\mathcal{H}_A)$, such that $L(X_A \otimes \idmat_B) = L^A(X_A) \otimes \idmat_B$, for all $X_A \in \blt(\mathcal{H}_A)$.     
\end{defn}
The map $L_*$ is Schrödinger $B \not \to A$ semicausal if and only if the dual map $L := L_*^*$ is normal and Heisenberg $B \not\to A$ semicausal. We will often omit the Schrödinger or Heisenberg attribute if it is clear from the context. 
This section is structured analogously to the section about the classical case. Namely, we will start by reminding the reader of the connection between semicausal maps and superchannels as well as the characterization of semicausal CP-maps in terms of semilocalizable maps, as schematically shown in Fig. \ref{fig:connecting_the_notions}. We then turn to the study of the generators of semigroups of semicausal CP-maps and finally use the correspondence between superchannels and semicausal channels to obtain the corresponding results of the generators of semigroups of superchannels. 

\begin{figure}[!ht]
    \centering
    \includegraphics[scale=1]{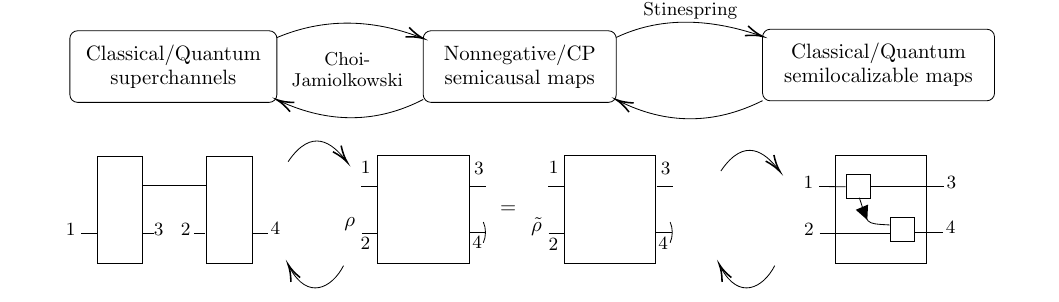}
    \caption{Visualization of the relation between the notions of superchannels, semicausal maps and semilocalizable maps. Superchannels and semicausal maps are related via a similarity transform with the Choi-Jamiołkowski isomorphism. Schrödinger $B \not\to A$ semicausal maps are those maps whose output, after tracing out system 4, does not depend on input 2 ($\rho$ or $\tilde{\rho}$). Semicausal maps are precisely those maps that allow for one-way communication only. This is called semilocalizability.  }
    \label{fig:connecting_the_notions}
\end{figure}

\subsection{Superchannels, semicausal channels, and semilocalizable channels}

We first state the characterization of superchannels in terms of semicausal maps, obtained in \cite{Chiribella_2008}: 
\begin{thm} \label{Thm:RelationQunatumSemicausalSuperchannels}
For finite-dimensional spaces $\mathcal{H}_A$ and $\mathcal{H}_B$, let $\hat{S} : \blt(\trcl(\mathcal{H}_A); \trcl(\mathcal{H}_B)) \rightarrow \blt(\trcl(\mathcal{H}_A); \trcl(\mathcal{H}_B))$
be a linear map and define $S = \choi_{A; B} \circ \hat{S} \circ \choi_{A;B}^{-1}$. Then $\hat{S}$ is a superchannel if and only if $S$ is CP and Schrödinger $B \not\to A$ semicausal such that the reduced map $S^A$ satisfies $S^A(\idmat_A) = \idmat_A$. 
\end{thm}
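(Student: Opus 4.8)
The plan is to mirror the proof of the classical correspondence (Theorem~\ref{Thm:ClassicalCorrespondenceSuperchannelsSemicausalMaps}), translating every statement about $\hat{S}$ acting on channels into a statement about $S$ acting on the cone of Choi matrices. The dictionary I would use consists of the three facts recalled in the preliminaries: $T$ is CP iff $\choi_{A;B}(T)\geq 0$; $T$ is trace-preserving iff $\ptr{B}{\choi_{A;B}(T)}=\idmat_A$; and $\ptr{A}{\choi_{A;B}(T)}=T(\idmat_A)$. Since $\choi_{A;B}\circ\hat{S}=S\circ\choi_{A;B}$ by definition of $S$, the map $\hat{S}$ sends a channel $T$ to a channel exactly when $S$ maps the Choi matrix $\tau=\choi_{A;B}(T)$ to a PSD matrix with $\ptr{B}{S(\tau)}=\idmat_A$, and the analogous reformulation holds for probabilistic transformations with the constraint $\ptr{B}{\cdot}\leq\idmat_A$.

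For the ``if'' direction, I would assume $S$ is CP and Schrödinger $B\not\to A$ semicausal with $S^A(\idmat_A)=\idmat_A$. First I would record that $S^A$ is itself CP and unital: fixing any state $\sigma_B$, semicausality gives $S^A(\rho_A)=\ptr{B}{S(\rho_A\otimes\sigma_B)}$ independently of $\sigma_B$, so $S^A$ is a composition of CP maps, and unitality is exactly the hypothesis. Complete positivity of $S$ then handles positivity of $\hat{S}_n(T)$ for every bystander dimension $n$, while trace preservation is immediate from semicausality and unitality, $\ptr{B}{S(\tau)}=S^A(\ptr{B}{\tau})=S^A(\idmat_A)=\idmat_A$. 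The sub-normalized (probabilistic) case uses that $S^A$ is positive and unital, hence order-preserving, so $\ptr{B}{\tau}\leq\idmat_A$ forces $S^A(\ptr{B}{\tau})\leq\idmat_A$.

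For the ``only if'' direction, I would assume $\hat{S}$ is a superchannel. Complete positivity of $S$ comes from the bystander clause: since $\hat{S}_n$ maps probabilistic transformations to probabilistic transformations for all $n$, and since (up to positive rescaling) every PSD matrix is the Choi matrix of a trace-non-increasing CP map, the induced action of $S$ with ancilla preserves positivity, which is precisely complete positivity of $S$. To extract semicausality I would imitate the $C_0=C_0'$ step of the classical proof: let $\cC_0$ be the real span of differences $\tau_1-\tau_2$ of Choi matrices of channels, and show that $\cC_0$ equals the space $\{\omega=\omega^\dagger : \ptr{B}{\omega}=0\}$ of self-adjoint operators with vanishing $B$-marginal, using that any such $\omega$ can be split and padded with $\idmat_B/\dim\mathcal{H}_B$ terms into a difference of channel Choi matrices. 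Because $\hat{S}$ preserves channels, $S$ preserves the affine hyperplane $\ptr{B}{\cdot}=\idmat_A$ and hence maps $\cC_0$ into itself; defining $S^A(X_A):=\tfrac{1}{\dim\mathcal{H}_B}\ptr{B}{S(X_A\otimes\idmat_B)}$ and decomposing an arbitrary $\sigma$ along the projection onto the $\idmat_B$-component and its complement then yields the semicausality identity $\ptr{B}{S(\sigma)}=S^A(\ptr{B}{\sigma})$. Finally, $S^A(\idmat_A)=\idmat_A$ follows, as in the classical computation, by applying $S$ to the Choi matrix $\tfrac{1}{\dim\mathcal{H}_B}\idmat_{AB}$ of the channel $\rho\mapsto\tfrac{1}{\dim\mathcal{H}_B}\tr{\rho}\idmat_B$ and tracing out $B$.

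The main obstacle I expect is the precise bookkeeping of the bystander in the complete-positivity step: one must track how $\choi(\hat{S}_n(T))$ is expressed through $\idop\otimes S$ acting on a suitably permuted Choi matrix, since the ancilla $\C^n$ appears on both the reference and system legs and the tensor factors must be reordered before $S$ is applied. Getting this permutation right is exactly what converts the operational ``complete positivity with an innocent bystander'' condition into honest complete positivity of $S\in\blt(\blt(\mathcal{H}_A\otimes\mathcal{H}_B))$. The semicausality extraction is the second delicate point: unlike the classical simplex argument, one has to verify that the convex cone of channel Choi matrices affinely spans the full hyperplane $\ptr{B}{\cdot}=\idmat_A$, so that $\cC_0$ is genuinely all of $\{\ptr{B}{\cdot}=0\}$ and the reduced map $S^A$ is well defined.
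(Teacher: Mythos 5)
The paper itself never proves Theorem \ref{Thm:RelationQunatumSemicausalSuperchannels}: it is quoted as a known result from \cite{Chiribella_2008}, and the only proof of this type contained in the paper is for the classical analogue, Theorem \ref{Thm:ClassicalCorrespondenceSuperchannelsSemicausalMaps}, which the authors explicitly describe as being modeled on the quantum argument of \cite{Chiribella_2008}. Your proposal is, in effect, a correct reconstruction of that argument: it is the quantum version of the paper's classical proof, step for step. The rescaling observation (every PSD operator is, up to a positive scalar, the Choi matrix of a trace-non-increasing CP map) correctly converts the bystander clause of the superchannel definition into complete positivity of $S$; your space $\cC_0$ is the exact analogue of the paper's $C_0$, and the padding construction $\tau_\pm=\epsilon\,\omega_\pm+(\idmat_A-\epsilon\ptr{B}{\omega_\pm})\otimes\idmat_B/\dim\mathcal{H}_B$ does establish $\cC_0=\{\omega=\omega^\dagger:\ptr{B}{\omega}=0\}$; splitting $\sigma$ into its $\idmat_B$-component and a remainder with vanishing $B$-marginal reproduces the paper's projections $P^\bot$, $P$ and yields the semicausality identity; and evaluating $S$ at $\idmat_{AB}/\dim\mathcal{H}_B$ gives $S^A(\idmat_A)=\idmat_A$ exactly as the classical proof evaluates $\hat{S}$ at the stochastic matrix $\ket{\idvec_B}\bra{\idvec_A}/\abs{\mathbb{B}}$.

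Two places deserve one more line in a complete write-up, though neither is a genuine gap. First, in the ``if'' direction your displayed trace-preservation computation is for $n=1$; for $n>1$ one needs the routine remark that semicausality lifts to ancillas, i.e.\ $\ptr{B}{(\idop_m\otimes S)(\omega)}=(\idop_m\otimes S^A)(\ptr{B}{\omega})$, checked on product operators and extended by linearity, after which your unitality argument applies verbatim under the bystander. Second, in the ``only if'' direction the permuted bystander produces ancillas of dimension $n^2$ rather than $n$; this is harmless, since positivity preservation with a larger ancilla implies it with any smaller one, but it should be said when converting the operational condition into complete positivity of $S\in\blt(\blt(\mathcal{H}_A\otimes\mathcal{H}_B))$.
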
 

The next result is due to Eggeling, Schlingemann, and Werner~\cite{Eggeling.2002}, who proved it in the finite-dimensional setting. The following form, which is a generalization of~\cite{Eggeling.2002} to the infinite-dimensional case, and which has previously been shown in~\cite[Theorem $4$]{kretschmann2005quantum}, can be obtained from our main result (Theorem \ref{Thm:SemicausalInfiniteDimMainResult}) by setting $K = 0$: 

\begin{thm} \label{Thm:QuantumSemicausalisSemilocalizable}
A map $\Phi \in CP_\sigma(\mathcal{H}_A \otimes \mathcal{H}_B)$ is Heisenberg $B \not\to A$ semicausal if and only if there exists a (separable) Hilbert space $\mathcal{H}_E$, a unitary $U \in \unitary(\mathcal{H}_E \otimes \mathcal{H}_B; \mathcal{H}_B \otimes \mathcal{H}_E)$ and arbitrary operator $A \in \blt(\mathcal{H}_A; \mathcal{H}_A \otimes \mathcal{H}_E)$, such that  
\begin{align} \label{Eq:QuantumSemilocalozability}
    \Phi(X) &= V^\dagger \left(X \otimes \idmat_E \right) V, \text{ with } V = (\idmat_A \otimes U)(A \otimes \idmat_B). 
\end{align}
If $\mathcal{H}_A$ and $\mathcal{H}_B$ are finite-dimensional, with dimensions $d_A$ and $d_B$, then $\mathcal{H}_E$ can be chosen such that $\mathrm{dim}(\mathcal{H}_E) \leq \left(d_A d_B\right)^2$. 
\end{thm}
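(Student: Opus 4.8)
The plan is to reduce both implications to the Stinespring dilation theorem and the essential uniqueness of minimal dilations, applied not to $\Phi$ directly but to the auxiliary normal CP map $\Theta : \blt(\mathcal{H}_A) \to \blt(\mathcal{H}_A \otimes \mathcal{H}_B)$ defined by $\Theta(X_A) := \Phi(X_A \otimes \idmat_B)$. The ``if'' direction is a direct computation: given $V = (\idmat_A \otimes U)(A \otimes \idmat_B)$ with $U$ unitary, one plugs in $X = X_A \otimes \idmat_B$ and uses that $X_A$ commutes through $\idmat_A \otimes U$ together with $U^\dagger(\idmat_B \otimes \idmat_E)U = U^\dagger U = \idmat_E \otimes \idmat_B$ to get
\begin{align*}
\Phi(X_A \otimes \idmat_B) = (A^\dagger \otimes \idmat_B)(X_A \otimes \idmat_E \otimes \idmat_B)(A \otimes \idmat_B) = \bigl(A^\dagger (X_A \otimes \idmat_E) A\bigr) \otimes \idmat_B,
\end{align*}
which exhibits the reduced map $\Phi^A(X_A) = A^\dagger (X_A \otimes \idmat_E) A$ and proves Heisenberg $B \not\to A$ semicausality.

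For the ``only if'' direction I would first record that $\Phi^A$ is itself a normal CP map. Indeed, compressing the semicausality identity by a unit vector $\chi \in \mathcal{H}_B$ gives $\Phi^A(X_A) = (\idmat_A \otimes \bra{\chi})\,\Phi(X_A \otimes \idmat_B)\,(\idmat_A \otimes \ket{\chi})$, a composition of normal CP maps. Let $A \in \blt(\mathcal{H}_A; \mathcal{H}_A \otimes \mathcal{H}_E)$ be a \emph{minimal} Stinespring dilation of $\Phi^A$, so that $\Phi^A(X_A) = A^\dagger(X_A \otimes \idmat_E)A$. Now I exhibit two Stinespring dilations of $\Theta$. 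On one hand, any Stinespring dilation $V \in \blt(\mathcal{H}_{AB}; \mathcal{H}_{AB} \otimes \mathcal{H}_F)$ of $\Phi$ gives $\Theta(X_A) = V^\dagger (X_A \otimes \idmat_B \otimes \idmat_F) V$, a dilation of $\Theta$ with environment $\mathcal{H}_B \otimes \mathcal{H}_F$. On the other hand, the semicausal form gives $\Theta(X_A) = (A \otimes \idmat_B)^\dagger (X_A \otimes \idmat_E \otimes \idmat_B)(A \otimes \idmat_B)$, a dilation with environment $\mathcal{H}_E \otimes \mathcal{H}_B$.

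The crucial observation is that the second dilation is minimal: by minimality of $A$ the set $\{(X_A \otimes \idmat_E)A\psi_A\}$ is dense in $\mathcal{H}_A \otimes \mathcal{H}_E$, and tensoring with the free $\mathcal{H}_B$-component fills out $\mathcal{H}_A \otimes \mathcal{H}_E \otimes \mathcal{H}_B$. Essential uniqueness of minimal dilations then produces an isometry $U' : \mathcal{H}_E \otimes \mathcal{H}_B \to \mathcal{H}_B \otimes \mathcal{H}_F$ with $V = (\idmat_A \otimes U')(A \otimes \idmat_B)$. It remains to upgrade $U'$ to a unitary of the required flip type $\mathcal{H}_E \otimes \mathcal{H}_B \to \mathcal{H}_B \otimes \mathcal{H}_E$. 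Here I would enlarge the environment by identifying the new $\mathcal{H}_E$ with $\mathcal{H}_F$ and embedding the minimal environment of $\Phi^A$ isometrically into $\mathcal{H}_F$; this is legitimate because minimality of the second $\Theta$-dilation forces $\dim \mathcal{H}_E \le \dim \mathcal{H}_F$ (the minimal dilation has the smallest environment among all dilations of $\Theta$). After this reassignment $A$ extends trivially, $U'$ becomes a partial isometry between subspaces of $\mathcal{H}_E \otimes \mathcal{H}_B$ and $\mathcal{H}_B \otimes \mathcal{H}_E$, and extending it across the orthogonal complements yields the desired unitary $U$ with $V = (\idmat_A \otimes U)(A \otimes \idmat_B)$ unchanged on the range of $A \otimes \idmat_B$. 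The finite-dimensional bound follows at once: the minimal dilation of $\Phi : \blt(\mathcal{H}_{AB}) \to \blt(\mathcal{H}_{AB})$ has $\dim \mathcal{H}_F \le (d_A d_B)^2$, whence $\dim \mathcal{H}_E \le (d_A d_B)^2$.

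I expect the main obstacle to be precisely this last upgrade from the isometry $U'$ to a flip-type unitary $U$ acting between $\mathcal{H}_E \otimes \mathcal{H}_B$ and $\mathcal{H}_B \otimes \mathcal{H}_E$ \emph{with the same space $\mathcal{H}_E$ on both sides}. In finite dimensions this is bookkeeping with equal total dimensions, but in the infinite-dimensional setting one must ensure that the orthogonal complements of the domain and range of $U'$ have matching (cardinal) dimension so that the extension to a unitary actually exists; choosing the enlarged $\mathcal{H}_E$ separable and sufficiently large resolves this, and this is the step requiring the ``extra care'' characteristic of the infinite-dimensional case.
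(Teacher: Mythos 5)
Your proposal is correct and takes essentially the same route as the paper's own argument (which is embedded in the proof of Theorem \ref{Thm:SemicausalInfiniteDimMainResult}, of which this theorem is the $K=0$ special case): a minimal Stinespring dilation $A$ of the reduced map $\Phi^A$, essential uniqueness applied to the two dilations of $X_A \mapsto \Phi(X_A \otimes \idmat_B)$ to produce the isometry, and then extension of that isometry to a flip-type unitary, with dimension counting in finite dimensions. For the infinite-dimensional extension step, where you write ``choosing the enlarged $\mathcal{H}_E$ separable and sufficiently large,'' the paper's concrete choice is $\mathcal{H}_E = \mathcal{H}_F \oplus \tilde{\mathcal{H}}_E$ (the direct sum of the two environments), which makes the orthogonal complements of the domain and range of the isometry automatically isomorphic.
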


We call a normal CP-map $\Phi \in \mathrm{CP}_\sigma(\mathcal{H}_A \otimes \mathcal{H}_B)$ \emph{semilocalizable} if its Stinespring dilation can be written in the form of Eq.~\eqref{Eq:QuantumSemilocalozability}. With that nomenclature, the above Theorem is exactly the quantum analogue of Theorem \ref{ClassicalSemicausalIsSemilocalizable}.

\subsection{Generators of semigroups of semicausal CP maps} \label{subsec:SemicausalCPSemigroups}

The main goal of this section is to establish a structure theorem for the generators of semigroups of semicausal CP-maps, the proof-structure of which is highlighted in Fig. \ref{fig:proof_structure}. This is our main technical contribution.
\begin{figure}[!ht]
    \centering
    \includegraphics[scale=1]{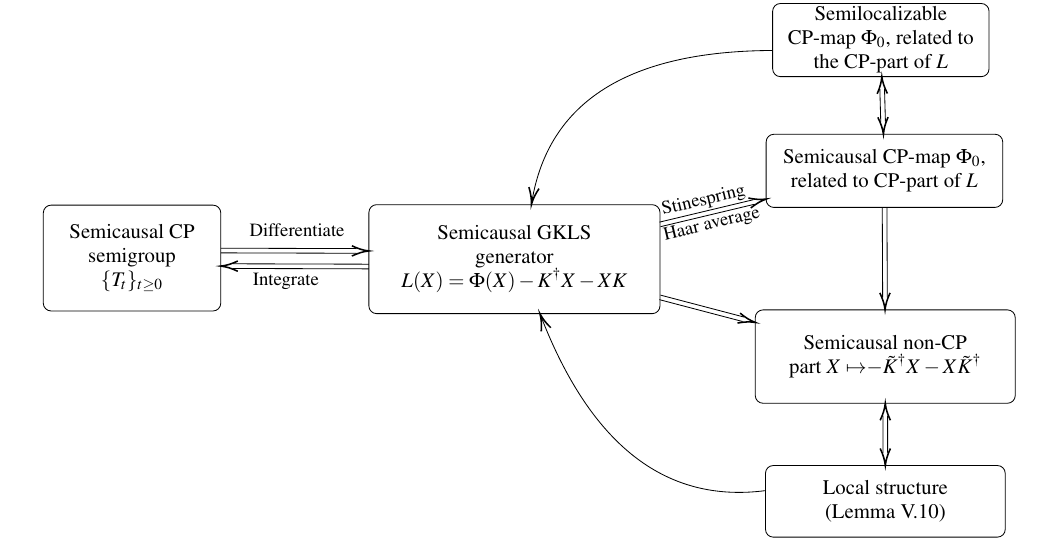}
    \caption{Overview of the proof structure leading to the normal form for semicausal Lindblad generators (Theorem \ref{Thm:SemicausalInfiniteDimMainResult}). We first observe that semicausality of the CP semigroup is equivalent to semicausality of the corresponding GKLS generator $L$. The insight is then that we can construct a CP-map $\Phi_0$, that is closely related to the CP-part of $L$ and that is semicausal (Lemma \ref{KeyLemmaInf}). From the semilocalizable form of $\Phi_0$, we then obtain an explicit form for the CP-part of $L$. This, together with the observation that a semicausal non-CP part has to have a local form, yields the desired normal form.  }
    \label{fig:proof_structure}
\end{figure} 
To get started, recall that a generator $L \in \blt(\blt(\mathcal{H}_A \otimes \mathcal{H}_B))$ generates a norm-continuous semigroup $\{ T_t \}_{t \geq 0} \subseteq \mathrm{CP}_\sigma(\mathcal{H}_A \otimes \mathcal{H}_B)$ of CP-maps (i.e., $T_t = e^{tL}$) if and only if $L$ can be written in GKLS-form, i.e., if and only if there exists $\Phi \in \mathrm{CP}_\sigma(\mathcal{H}_A \otimes \mathcal{H}_B)$ and $K \in \blt(\mathcal{H}_A \otimes \mathcal{H}_B)$ such that 
\begin{align} \label{Eq:GKLSFormInsemigroupChapter}
    L(X) = \Phi(X) - K^\dagger X - X K, \quad X \in \blt(\mathcal{H}_A \otimes \mathcal{H}_B).
\end{align}

As in the classical case, we continue by showing that $T_t$ is Heisenberg $B\not\to A$ semicausal, for all $t \geq 0$, if and only if $L$ is Heisenberg $B \not\to A$ semicausal. We start by showing that the family of reduced maps $\{ T^A_t \}_{t \geq 0}$ also forms a norm-continuous semigroup of normal CP-maps. 
That $T^A_t$ is normal and CP follows, since for any density operator $\rho_B \in \trcl(\mathcal{H}_B)$, we have
\begin{align*}
    T^A_t = \mathrm{tr}_{\rho_B} \circ T_t \circ D, 
\end{align*}
where $D \in \mathrm{CP}_\sigma(\mathcal{H}_A; \mathcal{H}_A \otimes \mathcal{H}_B)$ is defined by $D(X_A) = X_A \otimes \idmat_B$. So $T^A_t$ is a normal CP-map as composition of normal CP-maps. It remains to check the semigroup properties ($T_0^A = \idop_A$, $T^A_{t+s} = T^A_{t} T^A_{s}$ and norm-continuity). We have
\begin{align*}
    T_0^A(X_A) &= \ptr{\rho_B}{T_0(X_A \otimes \idmat_B)} = \ptr{\rho_B}{X_A \otimes \idmat_B} = X_A,\\
    T^A_{t+s}(X_A) &= \ptr{\rho_B}{T_{t+s}(X_A \otimes \idmat_B)} = \ptr{\rho_B}{T_{t}(T_{s}(X_A \otimes \idmat_B))} = \ptr{\rho_B}{T_t(T_s^A(X_A) \otimes \idmat_B)} =  \ptr{\rho_B}{(T_t^A T_s^A(X_A)) \otimes \idmat_B} = T_t^A T_s^A(X_A),\\
    \norm{T_t^A - T_s^A} &= \sup_{ \norm{X_A}_{\blt(\mathcal{H}_A)} = 1 } \norm{T_t^A(X_A) - T^A_s(X_A)}_{\blt(\mathcal{H}_A)} = \sup_{ \norm{X_A}_{\blt(\mathcal{H}_A)} = 1 } \norm{(T_t^A(X_A) - T^A_s(X_A)) \otimes \idmat_B}_{\blt(\mathcal{H}_A \otimes \mathcal{H}_B)} \\&= \sup_{ \norm{X_A}_{\blt(\mathcal{H}_A)} = 1 } \norm{T_t(X_A \otimes \idmat_B) - T_s(X_A \otimes \idmat_B)}_{\blt(\mathcal{H}_A \otimes \mathcal{H}_B)} \leq \sup_{\norm{X}_{\blt(\mathcal{H}_A\otimes \mathcal{H}_B)} = 1 } \norm{T_t(X) - T_s(X)}_{\blt(\mathcal{H}_A \otimes \mathcal{H}_B)} = \norm{T_t - T_s}.
\end{align*}
Thus, we conclude that $T_t^A = e^{t L^A}$, for some generator $L^A \in \blt(\blt(\mathcal{H}_A))$ of normal CP-maps. We further have
\begin{align*}
    L(X_A \otimes \idmat_B) = \frac{d}{dt} \bigg\vert_{t = 0} T_t(X_A \otimes \idmat_B) = \frac{d}{dt}\bigg\vert_{t = 0} T_t^A(X_A) \otimes \idmat_B = L^A(X_A) \otimes \idmat_B.
\end{align*}
Thus, $L$ is semicausal if $T_t$ is semicausal for all $t \geq 0$. Conversely, if $L$ is semicausal, then $T_t$ is semicausal for all $t \geq 0$, since 
\begin{align*}
    T_t(X_A \otimes \idmat_B) &= e^{tL}(X_A \otimes \idmat_B) \\&= \sum_{k = 0}^\infty \frac{t^k}{k!} L^k(X_A \otimes \idmat_B) \\ &= \sum_{k = 0}^\infty \frac{t^k}{k!} \left(L^A\right)^k(X_A) \otimes \idmat_B \\
    &= e^{tL^A}(X_A) \otimes \idmat_B.
\end{align*}
Therefore, our task reduces to characterizing semicausal maps in GKLS-form, i.e., we want to determine the corresponding $\Phi$ and $K$. Our main result (Theorem \ref{Thm:SemicausalInfiniteDimMainResult}) is a normal form which allows us to list all semicausal generators $L$. 

Before we delve into this, we treat the inverse question: Given some $L \in \blt(\blt(\mathcal{H}_A \otimes \mathcal{H}_B))$, is it a semicausal generator? A computationally efficiently chackable criterion can be constructed via the Choi-Jamiołkowski isomorphism. If $\mathcal{H}_A$ and $\mathcal{H}_B$ are finite-dimensional and $L \in \blt(\blt(\mathcal{H}_A \otimes \mathcal{H}_B))$ is given, then we define $\mathfrak{L} = \choi_{AB; AB}(L) \in \blt(\mathcal{H}_{A_1} \otimes \mathcal{H}_{B_1} \otimes \mathcal{H}_{A_2} \otimes \mathcal{H}_{B_2})$, where the Choi-Jamiołkowski isomorphism is defined w.r.t. the orthogonal bases $\{ \ket{a_i} \}_{i = 1}^{\mathrm{dim}(\mathcal{H}_A)} $ and $\{ \ket{b_j} \}_{j = 1}^{\mathrm{dim}(\mathcal{H}_B)}$ of $\mathcal{H}_A$ and $\mathcal{H}_B$, respectively, and where the spaces $\mathcal{H}_{A_1} = \mathcal{H}_{A_2} = \mathcal{H}_A$ and $\mathcal{H}_{B_1} = \mathcal{H}_{B_2} = \mathcal{H}_B$ are introduced for notational convenience. Furthermore, define $P^\bot \in \blt(\mathcal{H}_{A_1} \otimes \mathcal{H}_{B_1} \otimes \mathcal{H}_{A_2} \otimes \mathcal{H}_{B_2})$ to be the orthogonal projection onto the orthogonal complement of $\{\ket{\Omega}\}$, where $\ket{\Omega} = \sum_{i,j} \ket{a_i} \otimes \ket{b_j} \otimes \ket{a_i} \otimes \ket{b_j}$.

\begin{lem} \label{Lem:VerifySemicausality}
A linear map $L : \blt(\mathcal{H}_A \otimes \mathcal{H}_B) \rightarrow \blt(\mathcal{H}_A \otimes \mathcal{H}_B)$ is the generator of a semigroup of Heisenberg $B\not\to A$ semicausal CP-maps if and only if 
\begin{itemize}
    \item $\mathfrak{L}$ is self-adjoint and $P^\bot \mathfrak{L} P^\bot \geq 0$, and
    \item $\ptr{B_1}{\mathfrak{L}} = \mathfrak{L}^A \otimes \idmat_{B_2}$, for some (then necessarily self-adjoint) $\mathfrak{L}^A \in \blt(\mathcal{H}_{A_1} \otimes \mathcal{H}_{A_2})$.
\end{itemize}
The generated semigroup is unital (i.e., $T_t(\idmat_{AB}) = \idmat_{AB}$, for $t \geq 0$) if and only if $\ptr{A_1}{\mathfrak{L}^A} = 0$.

Furthermore, a linear map $L : \blt(\mathcal{H}_A \otimes \mathcal{H}_B) \rightarrow \blt(\mathcal{H}_A \otimes \mathcal{H}_B)$ is the generator of a semigroup of Schrödinger $B \not\to A$ semicausal CP-maps if and only if 
\begin{itemize}
    \item $\mathfrak{L}$ is self-adjoint and $P^\bot \mathfrak{L} P^\bot \geq 0$, and
    \item $(\mathbb{F}_{A_1;B_1} \otimes \idmat_{A_2}) \ptr{B_2}{\mathfrak{L}} (\mathbb{F}_{A_1;B_1} \otimes \idmat_{A_2}) = \idmat_{B_1} \otimes \mathfrak{L}^A$, for some (then necessarily self-adjoint) $\mathfrak{L}^A \in \blt(\mathcal{H}_{A_1} \otimes \mathcal{H}_{A_2})$.
\end{itemize}
The generated semigroup is trace-preserving (i.e., $\tr{T_t(\rho)} = \tr{\rho}$, for $\rho \in \blt(\mathcal{H}_A \otimes \mathcal{H}_B)$ and $t \geq 0$) if and only if $\ptr{A_2}{\mathfrak{L}^A} = 0$. 
\end{lem}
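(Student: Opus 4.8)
The plan is to reduce the statement to two facts already in hand: the equivalence, established in the discussion preceding the lemma, that the semigroup $\{T_t\}_{t\geq 0}$ is semicausal if and only if its generator $L$ is semicausal; and the GKLS--Choi criterion recalled in the preliminaries, namely that $e^{tL}$ is a CP-map for all $t$ if and only if $\mathfrak{L}$ is self-adjoint and $P^\bot\mathfrak{L}P^\bot\geq 0$. The first bullet in both the Heisenberg and the Schrödinger case is then \emph{exactly} this CP criterion. Note that it reads identically in both pictures: the criterion characterizes conditional complete positivity of the Choi matrix, and the Choi matrices of a map and of its dual differ only by a transpose and the global flip $\mathbb{F}_{A_1B_1;A_2B_2}$, operations that preserve self-adjointness and positivity and fix the vector $\ket{\Omega}$ (hence also $P^\bot$). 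It therefore only remains to translate the generator-level semicausality into the stated conditions on $\mathfrak{L}$, and to convert the unital / trace-preserving conditions.

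For the Heisenberg case I would compute $\ptr{B_1}{\mathfrak{L}}$ directly from $\mathfrak{L}=\sum_{ij,kl}\ket{a_ib_j}\bra{a_kb_l}\otimes L(\ket{a_ib_j}\bra{a_kb_l})$. Tracing out $B_1$ forces $j=l$, and using $\sum_j\ket{a_ib_j}\bra{a_kb_j}=\ket{a_i}\bra{a_k}\otimes\idmat_B$ gives
\[
\ptr{B_1}{\mathfrak{L}}=\sum_{i,k}\ket{a_i}\bra{a_k}_{A_1}\otimes L(\ket{a_i}\bra{a_k}\otimes\idmat_B)_{A_2B_2}.
\]
Heisenberg semicausality $L(X_A\otimes\idmat_B)=L^A(X_A)\otimes\idmat_B$ turns the right-hand side into $\choi_{A;A}(L^A)\otimes\idmat_{B_2}=\mathfrak{L}^A\otimes\idmat_{B_2}$, which is the forward implication with $\mathfrak{L}^A:=\choi_{A;A}(L^A)$. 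Conversely, if $\ptr{B_1}{\mathfrak{L}}=\mathfrak{L}^A\otimes\idmat_{B_2}$, sandwiching both sides between $\bra{a_i}_{A_1}$ and $\ket{a_k}_{A_1}$ recovers $L(\ket{a_i}\bra{a_k}\otimes\idmat_B)$ on the left and an operator of the form (something on $A_2$)$\,\otimes\idmat_{B_2}$ on the right; defining $L^A:=\choi_{A;A}^{-1}(\mathfrak{L}^A)$ and extending by linearity yields $L(X_A\otimes\idmat_B)=L^A(X_A)\otimes\idmat_B$. Self-adjointness of $\mathfrak{L}^A$ is inherited from that of $\mathfrak{L}$, since the partial trace preserves it and $\mathfrak{L}^A\otimes\idmat_{B_2}$ self-adjoint forces $\mathfrak{L}^A$ self-adjoint. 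The unital statement follows by differentiating $T_t(\idmat_{AB})=\idmat_{AB}$ to get $L(\idmat_{AB})=0$, which by semicausality equals $L^A(\idmat_A)\otimes\idmat_B$; invoking the identity $\ptr{A_1}{\choi_{A;A}(L^A)}=L^A(\idmat_A)$ from the preliminaries closes this point.

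The Schrödinger case is analogous but probes the \emph{output} $B$-space. I would first note that ``semigroup semicausal iff generator semicausal'' carries over, either by rerunning the short argument preceding the lemma or by dualizing it, using that $L$ is Schrödinger semicausal iff $L^\ast$ is Heisenberg semicausal. To get the Choi condition I compute $\ptr{B_2}{\mathfrak{L}}$; Schrödinger semicausality $\ptr{B}{L(\rho)}=L^A(\ptr{B}{\rho})$ applied to $\rho=\ket{a_ib_j}\bra{a_kb_l}$ gives $\ptr{B}{L(\ket{a_ib_j}\bra{a_kb_l})}=\delta_{jl}\,L^A(\ket{a_i}\bra{a_k})$, so that
\[
\ptr{B_2}{\mathfrak{L}}=\sum_{i,k}\ket{a_i}\bra{a_k}_{A_1}\otimes\idmat_{B_1}\otimes L^A(\ket{a_i}\bra{a_k})_{A_2}.
\]
Here the surviving $\idmat_{B_1}$ sits between the two $A$-factors, whereas $\mathfrak{L}^A=\choi_{A;A}(L^A)$ lives on $A_1A_2$; conjugating by $\mathbb{F}_{A_1;B_1}$ moves $B_1$ to the front and produces exactly $\idmat_{B_1}\otimes\mathfrak{L}^A$. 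This is the source of the flip, absent in the Heisenberg case where the surviving identity already occupied the final factor $B_2$. The converse and the self-adjointness of $\mathfrak{L}^A$ proceed as before, and the trace-preserving statement follows by differentiating $\tr{T_t(\rho)}=\tr{\rho}$ to $\tr{L(\rho)}=0$, which by semicausality reduces to $\tr{L^A(\sigma)}=0$ for all $\sigma$, equivalent to $\ptr{A_2}{\mathfrak{L}^A}=0$.

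The main obstacle is bookkeeping rather than conceptual: one must keep the four copies $A_1,B_1,A_2,B_2$ and their input/output roles straight, and correctly identify which partial trace extracts each flavour of semicausality --- $\ptr{B_1}$ for the Heisenberg condition (probing inputs $X_A\otimes\idmat_B$) and $\ptr{B_2}$ for the Schrödinger condition (probing $\ptr{B}$ of the output). The one genuinely nontrivial point to argue cleanly is the asymmetry that produces the flip $\mathbb{F}_{A_1;B_1}$ only in the Schrödinger criterion, together with the observation that the CP criterion of the first bullet is invariant under passage to the dual map, so that the single self-adjointness-plus-positivity condition on $\mathfrak{L}$ serves both pictures.
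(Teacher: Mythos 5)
Your proposal is correct and follows essentially the same route as the paper: it invokes the known conditional-complete-positivity criterion for the first bullet, translates generator-level semicausality into the partial-trace condition on $\mathfrak{L}$ by the same direct Choi-matrix computation (your matrix-element sandwiching in the converse is just a rewriting of the paper's use of $\choi_{A;A}^{-1}$), and obtains the unital/trace-preserving statements by differentiation together with the identity $\ptr{A}{\choi(S)}=S(\idmat)$. For the Schrödinger case the paper offers both the ``similar lines'' computation you carry out and the duality identity $\choi_{AB;AB}(L^*)=\mathbb{F}\left[\choi_{AB;AB}(L)\right]^T\mathbb{F}$ that you implicitly verify when arguing the CP criterion is picture-independent, so no substantive difference remains.
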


Thus, checking whether a map $L$ is the generator of a semigroup of semicausal CP-maps reduces to checking several semidefinite constraints. In particular, the problem to optimize over all semicausal generators is a semidefinite program.

\begin{proof}
It is known (see, e.g., the appendix in \cite{wolf2008assessing}) that $L$ generates a semigroup of CP-maps if and only if $\mathfrak{L}$ is self-adjoint and $P^\bot \mathfrak{L} P^\bot \geq 0$. This criterion goes by the name of conditional complete positivity~\cite{evans1977dilations}.
Thus, it remains to translate the other criteria to the level of Choi-Jamiołkowski operators. If $L$ is Heisenberg $B \not\to A$ semicausal, then
\begin{align*}
    \ptr{B_1}{\mathfrak{L}} &= \ptr{B_1}{ (\idop_{A_1B_1} \otimes L)(\ket{\Omega}\bra{\Omega}) } \\
    &= (\idop_{A_1} \otimes L)(\ket{\Omega_A}\bra{\Omega_A} \otimes \idmat_{B_2}) \\
    &= (\idop_{A_1} \otimes L^A)(\ket{\Omega_A}\bra{\Omega_A}) \otimes \idmat_{B_2} \\
    &= \mathfrak{L}^A \otimes \idmat_{B_2},
\end{align*}
where we defined $\ket{\Omega_A} = \sum_i \ket{a_i} \otimes \ket{a_i} \in \mathcal{H}_{A_1} \otimes \mathcal{H}_{A_2}$ and $\mathfrak{L}^A = (\idop_{A_1} \otimes L^A)(\ket{\Omega_A}\bra{\Omega_A})$. 
Conversely, if $\ptr{B_1}{\mathfrak{L}} = \mathfrak{L}^A \otimes \idmat_{B_2}$, define $L^A = \choi_{A;A}^{-1}(\mathfrak{L}^A)$. Then
\begin{align*}
    L(X_A \otimes \idmat_{B_1}) &= \ptr{A_1B_1}{\left((X_A^T \otimes \idmat_{B_1}) \otimes \idmat_{A_2B_2}\right) \mathfrak{L}} \\
    &= \ptr{A_1}{ \left(X_A^T \otimes \idmat_{A_2B_2} \right) \ptr{B_1}{\mathfrak{L}}} \\
    &= \ptr{A_1}{(X_A^T \otimes \idmat_{A_2B_2}) (\mathfrak{L}^A \otimes \idmat_{B_2})} \\
    &= \ptr{A_1}{(X_A^T \otimes \idmat_{A_2}) \mathfrak{L}^A} \otimes \idmat_{B_2}\\
    &= \choi_{A;A}^{-1}(\mathfrak{L}^A)(X_A) \otimes \idmat_{B_2} \\
    &= L^A(X_A) \otimes \idmat_{B}.
\end{align*}
Finally, it is known that a semigroup of CP-maps is unital if and only if $L(\idmat_{A_2B_2}) = 0$. But this is equivalent to our criterion, since a simple calculation shows that 
\begin{align*}
    \ptr{A_1B_1}{\mathfrak{L}} = L(\idmat_{A_2B_2}).
\end{align*}
This finishes the proof for the Heisenberg picture case. 
The Schrödinger case can be proven along similar lines, or be obtained directly from the Heisenberg case via the identity $\choi_{AB;AB}(L^*) = \mathbb{F}_{A_1B_1; A_2B_2} \left[\choi_{AB; AB}(L)\right]^T \mathbb{F}_{A_1B_1; A_2B_2}$.
\end{proof}
Let us now return to the main goal of this section: finding a normal form for semicausal generators in GKLS-form. 
We motivate (and interpret) our normal form as the `quantization' of the normal form for generators of classical semicausal semigroups (Theorem \ref{Thm:ClassicalSemicausalGeneratorDecompositionCharacterization}). 
In the classical case, the normal form had two building blocks: an operator of the form $Q_1 = \Phi_{sc} - K_A \otimes \idmat_B$, where $\Phi_{sc}$ is non-negative and semicausal and an operator of the form $Q_2 = \sum_{i = 1}^{\abs{\mathbb{A}}} \ket{a_i}\bra{a_i} \otimes B^{(i)}$, where the $B^{(i)}$'s are generators of row-stochastic maps, (i.e., $B^{(i)}$ generates a non-negative semigroup and $B^{(i)} \ket{\idvec_B} = 0$). 
It is straightforward to guess a quantum analogue for the first building block: a generator $L_1 \in \blt(\blt(\mathcal{H}_A \otimes \mathcal{H}_B))$ defined by
\begin{align} \label{Eq:QuantumFirstBuildingBlock}
    L_1(X) = \Phi_{sc}(X) - (K_A \otimes \idmat_B)^\dagger X - X (K_A \otimes \idmat_B),
\end{align}
where $\Phi_{sc} \in \mathrm{CP}_\sigma(\mathcal{H}_A \otimes \mathcal{H}_B)$, given in Stinespring form by $\Phi_{sc}(X) = V_{sc}^\dagger (X \otimes \idmat_E) V_{sc}$, is semicausal. One readily verifies that $L_1$ defines a semicausal generator. To `quantize' the second building block, note that $Q_2$ does not induce any change on system $A$. Indeed, since 
\begin{align} \label{Eq:OneSystemPartExponentialClassically}
    e^{tQ_2}(\idmat_A \otimes \ket{\idvec_B}) = \sum_{i = 1}^{\abs{\mathbb{A}}} \ket{a_i}\bra{a_i} \otimes (e^{tB^{(i)}}\ket{\idvec_B}) = \sum_{i = 1}^{\abs{\mathbb{A}}} \ket{a_i} \bra{a_i} \otimes \ket{\idvec_B} = \idmat_A \otimes \ket{\idvec_B},
\end{align}
the generated semigroup looks like the identity on system $A$.
In the quantum case, semigroups that do not induce any change on system $A$ are more restricted, since any information-gain about system $A$ inevitably disturbes system $A$ - so there can be no conditioning as in the classical case. Indeed, if one requires that $T_t \in \mathrm{CP}_\sigma(\mathcal{H}_A \otimes \mathcal{H}_B)$ satisfies the quantum analogue of Eq. \eqref{Eq:OneSystemPartExponentialClassically}, namely 
\begin{align} \label{Eq:IdentityReducedMapCondition}
    T_t(X_A \otimes \idmat_B) = X_A \otimes \idmat_B,
\end{align}
for all $X_A \in \blt(\mathcal{H}_A)$, then $T_t = \idop_A \otimes \Theta_t$ for some unital map $\Theta_t \in \mathrm{CP}_\sigma(\mathcal{H}_B)$, see Appendix \ref{Ap:InformationDisturbanceLemma} for a proof.
Differentiation of $T_t = \idop_A \otimes \Theta_t$ at $t = 0$ now implies that the generator of a semigroup of CP-maps that satisfy \eqref{Eq:IdentityReducedMapCondition} are of the form $\idop_A \otimes \hat{B}$, where $\hat{B}$ generates a semigroup of unital CP-maps (i.e., $\hat{B}(\idmat_B) = 0$). 
To conclude, the two building blocks are operators of the form of $L_1$ in Eq. \eqref{Eq:QuantumFirstBuildingBlock} and maps $L_2$ of the form
\begin{align*}
    L_2(X) = (\idmat_A \otimes B)^\dagger (X \otimes \idmat_E) (\idmat_A \otimes B) - \frac{1}{2} \acomu{\idmat_A \otimes B^\dagger B}{X} + i\comu{\idmat_A \otimes H_B}{X},
\end{align*}
with $B \in \blt(\mathcal{H}_B; \mathcal{H}_B \otimes \mathcal{H}_E)$ and a self-adjoint $H_B \in \blt(\mathcal{H}_B)$. 

In the classical case, we obtained the normal form (Theorem \ref{Thm:ClassicalSemicausalGeneratorDecompositionCharacterization}) by taking a convex combination of the basic building blocks. This corresponds to probabilistically choosing one or the other. In quantum theory, there is is a more general concept: superposition. To account for this, we construct our normal form not as a convex combination of the maps $L_1$ and $L_2$, but by taking a linear combination (superposition) of the Stinespring operators $V_{sc}$ and $\idmat_A \otimes B$ as the Stinespring operator of the CP-part of the GKLS-form (note here that the coefficients can be absorbed into $V_{sc}$ and $\idmat_A \otimes B$, respectively). This means that if $L$ is given by Eq. \eqref{Eq:GKLSFormInsemigroupChapter} with $\Phi(X) = V^\dagger (X \otimes \idmat_E) V$, then we take $V = V_{sc} + \idmat_A \otimes B$. It turns out that $K$ can then be chosen such that $L$ becomes semicausal. Also note that we can further decompose $V_{sc} = (\idmat_A \otimes U)(A \otimes \idmat_B)$, as in Theorem \ref{Thm:QuantumSemicausalisSemilocalizable}. 

Our main technical result is that the heuristics employed in the `quantization' procedure above is sound, i.e., that the generators constructed in that way are the only semicausal generators in GKLS-form.

\begin{thm} \label{Thm:SemicausalInfiniteDimMainResult}
Let $L : \blt(\mathcal{H}_A \otimes \mathcal{H}_B) \rightarrow \blt(\mathcal{H}_A \otimes \mathcal{H}_B)$ be defined by $L(X) = \Phi(X) - K^\dagger X - X K$, with $\Phi \in \mathrm{CP}_\sigma(\mathcal{H}_A \otimes \mathcal{H}_B)$ and $K \in \blt(\mathcal{H}_A \otimes \mathcal{H}_B)$. Then $L$ is Heisenberg $B \not\to A$ semicausal if and only if there exists a (separable) Hilbert space $\mathcal{H}_E$, a unitary $U \in \unitary(\mathcal{H}_E \otimes \mathcal{H}_B; \mathcal{H}_B \otimes \mathcal{H}_E)$, a self-adjoint operator $H_B \in \blt(\mathcal{H}_B)$, and arbitrary operators $A \in \blt(\mathcal{H}_A; \mathcal{H}_A \otimes \mathcal{H}_E)$, $B \in \blt(\mathcal{H}_B; \mathcal{H}_B \otimes \mathcal{H}_E)$ and $K_A \in \blt(\mathcal{H}_A)$, such that  
\begin{subequations}
\begin{align} 
    \Phi(X) &= V^\dagger \left(X \otimes \idmat_E \right) V, \text{ with } V = (\idmat_A \otimes U)(A \otimes \idmat_B) + (\idmat_A \otimes B),  \label{eq:GeneralFormInfDimA}\\ 
    K &=  (\idmat_A \otimes B^\dagger U) (A \otimes \idmat_B) + \frac{1}{2} \idmat_A \otimes B^\dagger B + K_A \otimes \idmat_B + \idmat_A \otimes iH_B.\label{eq:GeneralFormInfDimB}
\end{align}
\end{subequations}
If $\mathcal{H}_A$ and $\mathcal{H}_B$ are finite-dimensional, with dimensions $d_A$ and $d_B$, then $\mathcal{H}_E$ can be chosen such that $\mathrm{dim}(\mathcal{H}_E) \leq \left(d_A d_B\right)^2$.
\end{thm}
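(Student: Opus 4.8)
The plan is to prove the two implications separately: the forward (``if'') direction is a direct computation, while the converse (``only if'') direction carries essentially all the difficulty and follows the proof architecture sketched in Fig.~\ref{fig:proof_structure}.

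\emph{The easy direction.} Given the claimed forms of $\Phi$ and $K$, I would simply evaluate $L(X_A \otimes \idmat_B)$ and check that it lies in the subalgebra $\mathcal{N} := \blt(\mathcal{H}_A)\otimes\idmat_B$. Writing $V = V_{sc} + \idmat_A \otimes B$ with $V_{sc} = (\idmat_A \otimes U)(A \otimes \idmat_B)$, I expand $\Phi(X_A\otimes\idmat_B) = V^\dagger(X_A\otimes\idmat_B\otimes\idmat_E)V$ into four terms. Because $U$ is unitary, the pure $V_{sc}$ term collapses to $(A^\dagger(X_A\otimes\idmat_E)A)\otimes\idmat_B$, and the pure local term is of the form $\idmat_A\otimes(\cdots)$; the two cross terms are engineered to be cancelled exactly by the first two summands $(\idmat_A \otimes B^\dagger U)(A\otimes\idmat_B)+\tfrac12\idmat_A\otimes B^\dagger B$ of $K$ in \eqref{eq:GeneralFormInfDimB}. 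The summand $K_A\otimes\idmat_B$ contributes $-(K_A^\dagger X_A + X_A K_A)\otimes\idmat_B\in\mathcal{N}$, and the $\idmat_A\otimes iH_B$ contributions cancel to zero (as $H_B=H_B^\dagger$). This is routine bookkeeping and yields $L^A(X_A)=A^\dagger(X_A\otimes\idmat_E)A - K_A^\dagger X_A - X_A K_A$.

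\emph{The hard direction.} The starting observation is that Heisenberg $B\not\to A$ semicausality of $L$ is exactly the statement that $L$ leaves the subalgebra $\mathcal{N}$ invariant. The obstruction is that the GKLS data $(\Phi,K)$ is not unique and the CP part $\Phi$ is in general \emph{not} semicausal (indeed the normal form forces the genuinely non-semicausal $V=V_{sc}+\idmat_A\otimes B$, with the cross terms of $\Phi$ compensated inside $K$). My central step -- matching Lemma~\ref{KeyLemmaInf} -- is to manufacture a canonically associated \emph{semicausal} CP map $\Phi_0$ by symmetrizing $\Phi$ over the unitaries of $\mathcal{H}_B$: set $\Phi_0 = \mathrm{avg}_{u_B}\,\beta_{u_B}^{-1}\circ\Phi\circ\beta_{u_B}$, where $\beta_{u_B}(\cdot)=(\idmat_A\otimes u_B)(\cdot)(\idmat_A\otimes u_B)^\dagger$. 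Since each $\beta_{u_B}$ fixes $X_A\otimes\idmat_B$ and the average of $\beta_{u_B}$ is the conditional expectation onto $\mathcal{N}$, one gets $\Phi_0(X_A\otimes\idmat_B)\in\mathcal{N}$, i.e.\ $\Phi_0$ is semicausal, while remaining CP and normal; this is precisely ``lifting'' the invariant $^\ast$-subalgebra $\mathcal{N}$ from the generator to the CP level.

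With $\Phi_0$ in hand I would apply the semilocalizability theorem (Theorem~\ref{Thm:QuantumSemicausalisSemilocalizable}) to write its Stinespring operator in the semilocalizable form $V_{sc}=(\idmat_A\otimes U)(A\otimes\idmat_B)$, producing $U$ and $A$. The remaining work is reconstruction: using the GKLS gauge freedom (shifting the Stinespring operator of $\Phi$ by $\idmat_{AB}\otimes\ket{\phi}$, which changes Kraus operators only by scalars and can be absorbed into $K$), I would arrange that a Stinespring operator of $\Phi$ takes the form $V=V_{sc}+\idmat_A\otimes B$, identifying the local operator $B$ as the ``$A$-trivial remainder.'' Then I set $\tilde K=(\idmat_A\otimes B^\dagger U)(A\otimes\idmat_B)+\tfrac12\idmat_A\otimes B^\dagger B$ and verify (by the easy-direction computation) that $X\mapsto\Phi(X)-\tilde K^\dagger X - X\tilde K$ is semicausal. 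Consequently $X\mapsto -(K-\tilde K)^\dagger X - X(K-\tilde K)$, being the difference of two semicausal maps and having vanishing CP part, is a semicausal generator with no CP part; the local-structure lemma (Lemma~\ref{Lem:CharactierizationWithoutCPPart}) then forces $K-\tilde K = K_A\otimes\idmat_B + \idmat_A\otimes iH_B$, which completes \eqref{eq:GeneralFormInfDimB}. The finite-dimensional bound $\mathrm{dim}(\mathcal{H}_E)\le(d_Ad_B)^2$ comes from carrying a minimal Stinespring dilation throughout, exactly as in Theorem~\ref{Thm:QuantumSemicausalisSemilocalizable}.

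\emph{Main obstacle.} The delicate point is the symmetrization $\Phi_0$ in \emph{infinite} dimensions: $\unitary(\mathcal{H}_B)$ is non-compact, so there is no Haar probability measure. I expect to circumvent this exactly as the preliminaries foreshadow -- averaging over finite-dimensional truncations (using the two-sided $^\ast$-ideal and partial-trace machinery), producing a bounded net of CP maps, and extracting an ultraweakly convergent subsequence via sequential Banach--Alaoglu -- and then verifying that the limit is still normal, CP, and semicausal. The second subtle point is the reconstruction step: transferring the semilocalizable data of $\Phi_0$ back onto a Stinespring operator of the original $\Phi$ in the form $V_{sc}+\idmat_A\otimes B$, which is where the gauge freedom of the GKLS decomposition must be used to absorb the non-semicausal remainder of $\Phi$ into $K$. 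These two steps, rather than the final algebraic assembly, are where I anticipate the real effort.
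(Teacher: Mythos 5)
Your forward direction is fine, and your endgame (semilocalizability applied to $\Phi_0$, then Lemma~\ref{Lem:CharactierizationWithoutCPPart} applied to the leftover non-CP part) is exactly the paper's. But your central step --- the construction of $\Phi_0$ by twirling $\Phi$ over unitaries of $\mathcal{H}_B$ --- has a genuine gap, and it cannot be repaired by the gauge freedom you invoke. First, a warning sign: your $\Phi_0$ is semicausal for \emph{every} CP-map $\Phi$, whether or not $L$ is semicausal, since $\int (\idmat_A\otimes u^\dagger)Y(\idmat_A\otimes u)\,du = \frac{1}{d_B}\ptr{B}{Y}\otimes\idmat_B$ lands in $\blt(\mathcal{H}_A)\otimes\C\idmat_B$ regardless of $Y$. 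So your construction never uses the hypothesis, and the entire content of the theorem is deferred to the ``reconstruction'' step --- which fails. What the proof needs (and what the paper's Lemma~\ref{KeyLemmaInf} delivers) is a semicausal CP-map whose Stinespring operator is a \emph{local shift} $V-\idmat_A\otimes \tilde{B}$ of the Stinespring operator $V$ of the original $\Phi$; only this Stinespring-level relation makes $\Phi-\Phi_0$, restricted to $\blt(\mathcal{H}_A)\otimes\idmat_B$, a term of the form $\kappa^\dagger(\cdot)+(\cdot)\kappa$ with $\kappa=(\idmat_A\otimes\tilde{B}^\dagger)V_{sc}+\frac{1}{2}\idmat_A\otimes\tilde{B}^\dagger\tilde{B}$, i.e.\ exactly the first two summands of \eqref{eq:GeneralFormInfDimB}. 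Your twirled $\Phi_0$ has no such relation to $\Phi$. Concretely, take $\mathcal{H}_A=\mathcal{H}_B=\C^2$, $\Phi(X)=(\sigma_x\otimes\sigma_z)X(\sigma_x\otimes\sigma_z)$, $K=\frac{1}{2}\idmat$; then $L$ is semicausal and $\Phi$ has Kraus rank $1$, while your twirl gives $\Phi_0(X)=\int(\sigma_x\otimes u^\dagger\sigma_z u)X(\sigma_x\otimes u^\dagger\sigma_z u)\,du$, which has Kraus rank $3$ (its Kraus operators span $\sigma_x\otimes\{\text{traceless matrices}\}$). If some Stinespring operator of $\Phi$ equaled $V_{sc}+\idmat_A\otimes B$ with $V_{sc}$ a Stinespring operator of $\Phi_0$, then comparing the $\sigma_x\otimes\blt(\mathcal{H}_B)$ and $\idmat_A\otimes\blt(\mathcal{H}_B)$ components of each Kraus operator would force every Kraus operator of $\Phi_0$ to be a multiple of $\sigma_x\otimes\sigma_z$ and $B=0$, i.e.\ $\Phi_0=\Phi$ --- a contradiction. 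The GKLS gauge freedom you cite only shifts Kraus operators by multiples of $\idmat_{AB}$ and cannot bridge this.

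The paper's route to the key lemma is where semicausality of $L$ actually enters, and it averages over system $A$, not system $B$: Lindblad's polarization maps $\Psi_M(X)=L(M^\dagger XM)-M^\dagger L(XM)-L(M^\dagger X)M+M^\dagger L(X)M=\left[(M\otimes\idmat_E)V-VM\right]^\dagger(X\otimes\idmat_E)\left[(M\otimes\idmat_E)V-VM\right]$ are semicausal for $M=M_A\otimes\idmat_B$ \emph{because $L$ is} (Lemma~\ref{SemicausalPolarizationLemmaInfd}); Haar-averaging the polarized maps over unitaries of finite-dimensional subspaces $\mathcal{H}_n\subseteq\mathcal{H}_A$ and applying the Weingarten formula (Lemma~\ref{HaarMeasureIntegrationLemma}) collapses the averaged conjugations of $V$ to $P_n\otimes\frac{1}{n}\ptr{P_n}{V}$, so that the resulting semicausal CP-map has Stinespring operator precisely $V(P_n\otimes\idmat_B)-P_n\otimes\frac{1}{n}\ptr{P_n}{V}$; in finite dimensions this gives $\tilde{B}=\ptr{A}{V}/d_A$, and in infinite dimensions one passes to ultraweak limits via Banach--Alaoglu together with the closedness statement of Lemma~\ref{WeakStarClosednesslemma}. (Incidentally, this also means the truncation issue you flag concerns subspaces of $\mathcal{H}_A$, not $\mathcal{H}_B$.) If you replace your twirl by this construction, the remainder of your outline is correct and coincides with the paper's proof.
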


\begin{rem}
    Note that the characterization in Theorem~\ref{Thm:SemicausalInfiniteDimMainResult} is for generators of Heisenberg $B \not\to A$ semicausal dynamical semigroups. There are two special cases of interest: 
    First, if we want the dynamical semigroup to be unital, then we need to further impose $L(\idmat_{A}\otimes\idmat_B) = 0$ in the normal form above, which is equivalent to $A^\dagger A = K_A + K_A^\dagger$ -- a constraint that also appears in the usual Linblad form. Second, if the dynamical semigroup corresponds (in the sense of Theorem~\ref{Thm:RelationQunatumSemicausalSuperchannels}) to a semigroup of superchannels, then we additionally require that the reduced generator satisfies $L^A_* (\idmat_{A})=0$. We will use this in the ``translation step'' in Theorem~\ref{Thm:CharacterizationGeneratorsSuperchannels}.
\end{rem}

\begin{rem}
In the finite-dimensional case the proof of Theorem \ref{Thm:SemicausalInfiniteDimMainResult} is constructive. In Appendix \ref{Appe:ComputationalConstruction} we discuss in detail how to obtain the operators $A$, $U$, $K_A$, $B$ and $H_B$ starting from the conditions in Lemma \ref{Lem:VerifySemicausality}. 
\end{rem}

The remainder of this section is devoted to the proof of Theorem \ref{Thm:SemicausalInfiniteDimMainResult}, whose structure is highlighted in Fig.~\ref{fig:proof_structure}. 

We begin with a technical observation about certain Haar integrals. 

\begin{lem} \label{HaarMeasureIntegrationLemma}
Let $\mathcal{H}_n$ be an $n$-dimensional subspace of $\mathcal{H}_A$ with orthogonal projection $P_n \in \blt(\mathcal{H}_A)$ and let $V \in \blt(\mathcal{H}_A\otimes \mathcal{H}_B; \mathcal{H}_A\otimes \mathcal{H}_C)$. Then
\begin{align} \label{FiniteUnitaryHaarIntegral}
    \int_{\unitary_P(\mathcal{H}_n)} (U \otimes \idmat_C)\, V\, (U^\dagger\otimes \idmat_B) \,dU =  P_n \otimes \frac{1}{n} \ptr{P_n}{V},
\end{align}
where the integration is w.r.t. the Haar measure on $\unitary_P(\mathcal{H}_n)$. It follows that $\norm{P_n \otimes \frac{1}{n} \ptr{P_n}{V}} \leq \norm{V}$. \\
Furthermore if $\mathcal{H}$ is separable infinite-dimensional, with orthonormal basis $\{\ket{e_i}\}_{i \in \N}$ and $\mathcal{H}_n = \mathrm{span}\{\ket{e_1}, \ket{e_2}, \dots, \ket{e_n}\}$, then there exists $B \in \blt(\mathcal{H}_B;\mathcal{H}_C)$ and an ultraweakly convergent subsequence of $\left(P_n \otimes \frac{1}{n} \ptr{P_n}{V}\right)_{n \in \N}$ with limit $\idmat_A \otimes B$.
\end{lem}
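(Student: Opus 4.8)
The plan is to treat the three claims in turn: the averaging identity \eqref{FiniteUnitaryHaarIntegral}, the resulting norm bound, and the infinite-dimensional limit. The identity is a twirl computation that I would first reduce to a genuine average over $\unitary(\mathcal{H}_n)$ and then evaluate by Schur-type averaging.

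For the reduction, recall that every $U \in \unitary_P(\mathcal{H}_n)$ satisfies $U = P_n U = U P_n$. Hence, writing $V' := (P_n \otimes \idmat_C)\, V\, (P_n \otimes \idmat_B)$ for the compression of $V$ to $\mathcal{H}_n$ in the $\mathcal{H}_A$-tensor factor, one has $(U \otimes \idmat_C)\, V\, (U^\dagger \otimes \idmat_B) = (U \otimes \idmat_C)\, V'\, (U^\dagger \otimes \idmat_B)$ for every such $U$, so the integrand depends on $V$ only through $V'$. Fixing an orthonormal basis $\{\ket{f_i}\}_{i=1}^n$ of $\mathcal{H}_n$, I would expand $V' = \sum_{i,j=1}^n \ket{f_i}\bra{f_j} \otimes V'_{ij}$ with $V'_{ij} := (\bra{f_i} \otimes \idmat_C)\, V\, (\ket{f_j} \otimes \idmat_B) \in \blt(\mathcal{H}_B; \mathcal{H}_C)$, a finite sum since $\dim \mathcal{H}_n = n$. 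Interchanging this finite sum with the integral reduces everything to the scalar twirl $\int_{\unitary(\mathcal{H}_n)} U \ket{f_i}\bra{f_j} U^\dagger \, dU$. By left-invariance of the Haar measure this operator commutes with every unitary on $\mathcal{H}_n$, hence equals $c_{ij} P_n$ (viewed as a multiple of the identity on $\mathcal{H}_n$) by Schur's lemma; taking traces gives $c_{ij} = \delta_{ij}/n$. Since $\sum_i V'_{ii} = \ptr{P_n}{V}$ by definition of the partial trace with respect to $P_n$, reassembling yields $\tfrac1n P_n \otimes \ptr{P_n}{V}$, which is \eqref{FiniteUnitaryHaarIntegral}. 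The norm bound is then immediate: each integrand satisfies $\norm{(U \otimes \idmat_C)\, V\, (U^\dagger \otimes \idmat_B)} \leq \norm{V}$ because $U$ is a partial isometry, and averaging operators of norm at most $\norm{V}$ against a probability measure cannot increase the norm.

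For the limit, set $B_n := \tfrac1n \ptr{P_n}{V}$, so that $W_n := P_n \otimes \frac1n \ptr{P_n}{V} = P_n \otimes B_n$ and $\norm{B_n} = \norm{W_n} \leq \norm{V}$ by the bound just proven. Since $\mathcal{H}_B, \mathcal{H}_C$ are separable, $\trcl(\mathcal{H}_C; \mathcal{H}_B)$ is separable, so the sequential Banach--Alaoglu theorem (as recalled in the preliminaries) yields a subsequence $(B_{n_k})_k$ converging ultraweakly to some $B \in \blt(\mathcal{H}_B; \mathcal{H}_C)$ with $\norm{B} \leq \norm{V}$. It then remains to show $W_{n_k} = P_{n_k} \otimes B_{n_k} \to \idmat_A \otimes B$ ultraweakly. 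I would test against elementary tensors $\rho = \sigma_A \otimes \sigma$ with $\sigma_A \in \trcl(\mathcal{H}_A)$ and $\sigma \in \trcl(\mathcal{H}_C; \mathcal{H}_B)$: here $\tr{\rho^\dagger W_{n_k}} = \tr{\sigma_A^\dagger P_{n_k}}\,\tr{\sigma^\dagger B_{n_k}}$, where $\tr{\sigma_A^\dagger P_{n_k}} \to \tr{\sigma_A^\dagger}$ (as $P_n \to \idmat_A$ strongly and $\sigma_A$ is trace class) and $\tr{\sigma^\dagger B_{n_k}} \to \tr{\sigma^\dagger B}$, so the product converges to $\tr{(\sigma_A \otimes \sigma)^\dagger (\idmat_A \otimes B)}$. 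Since finite linear combinations of such elementary tensors are trace-norm dense in $\trcl(\mathcal{H}_A \otimes \mathcal{H}_C; \mathcal{H}_A \otimes \mathcal{H}_B)$ and the $W_{n_k}$ are uniformly bounded by $\norm{V}$, a standard $\varepsilon/3$ argument upgrades this to convergence of $\tr{\rho^\dagger W_{n_k}}$ for every trace-class $\rho$, i.e.\ to ultraweak convergence $W_{n_k} \to \idmat_A \otimes B$.

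The routine parts are the twirl evaluation and the norm estimate. The main obstacle is the final step, where one must combine the merely strong convergence $P_n \to \idmat_A$ with the merely ultraweak convergence $B_{n_k} \to B$ into ultraweak convergence of the tensor product $P_{n_k} \otimes B_{n_k}$; neither factor converges in a strong enough sense on its own, and it is precisely the uniform bound from the norm estimate together with the density of elementary tensors in the trace class that makes the passage to the limit legitimate.
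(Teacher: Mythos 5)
Your proof is correct and follows essentially the same route as the paper's: evaluate the twirl (you via Schur's lemma, the paper by citing the Weingarten formula, which is the same first-moment computation), deduce the norm bound by averaging over the probability (Haar) measure, and obtain the infinite-dimensional limit via sequential Banach--Alaoglu. Your concluding $\varepsilon/3$ density argument merely makes explicit the paper's asserted fact that tensor products of (norm-bounded) ultraweakly convergent sequences converge ultraweakly, so it is a more detailed rendering of the same step rather than a different approach.
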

\begin{proof}
To calculate the integral, we employ the Weingarten formula \cite{8178732, Collins2006, Fukuda_2019}, which for the relevant case reads:
\begin{align*}
    \int_{\unitary_P(\mathcal{H}_n)} U_{i\,j} U^\dagger_{j^\prime\, i^\prime} \,dU = \frac{1}{n} \delta_{i\, i^\prime} \delta_{j\, j^\prime},  
\end{align*}
where $U_{i\,j} = \braket{f_i}{U f_j}$ and $U^\dagger_{j^\prime\,i^\prime} = \braket{f_{j^\prime}}{U^\dagger f_{i^\prime}}$, for some orthonormal basis $\{\ket{f_1}, \ket{f_2}, \dots, \ket{f_n}\}$ of $\mathcal{H}_n$. A basis expansion then yields
\begin{align*}
    \int_{\unitary_P(\mathcal{H}_n)} (U \otimes \idmat_C)\, V\, (U^\dagger\otimes \idmat_B) \,dU = \sum_{i, j, i^\prime, j^\prime = 1}^n \left[ \ket{f_{i}}\bra{f_{i^\prime}} \otimes \left( \left(\bra{f_{j}} \otimes \idmat_C\right) \, V \, \left(\ket{f_{j^\prime}} \otimes \idmat_B\right)\right)  \int_{\unitary_P(\mathcal{H}_n)} U_{i\,j} U^\dagger_{j^\prime\, i^\prime} \,dU \right] = P_n \otimes \frac{1}{n} \ptr{P_n}{V}.
\end{align*}
For the second claim, we note that a standard estimate of the integral yields $\norm{\frac{1}{n} \ptr{P_n}{V}} = \norm{P_n \otimes \frac{1}{n} \ptr{P_n}{V}} \leq \norm{V}$. Thus the sequence $\left(\frac{1}{n} \ptr{P_n}{V}\right)_{n \in \N}$ is bounded and hence, by Banach-Alaoglu, has an ultraweakly convergent subsequence, whose limit we call $B$. The claim then follows by observing that, under the separability assumption, $\left(P_n\right)_{n \in \N}$ converges ultraweakly to $\idmat_A$ and that the tensor product of two ultraweakly convergent sequences converges ultraweakly.  
\end{proof}

As a first step towards our main result, we provide a characterization of those semicausal Lindblad generators that can be written with vanishing CP part.

\begin{lem} \label{Lem:CharactierizationWithoutCPPart}
Let $L : \blt(\mathcal{H}_A \otimes \mathcal H_B) \rightarrow \blt(\mathcal{H}_A \otimes \mathcal{H}_B)$, $L(X) := -K^\dagger X - XK$, with $K \in \blt(\mathcal{H}_A \otimes \mathcal{H}_B)$. Then $L$ is Heisenberg $B \not\to A$ semicausal if and only if there exist $K_A \in \blt(\mathcal{H}_A)$ and a self-adjoint $H_B \in \blt(\mathcal{H}_B)$, with $K = K_A \otimes \idmat_B + \idmat_A \otimes iH_B$.  
\end{lem}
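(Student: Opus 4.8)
The plan is to reduce everything to the behaviour of $K$ on inputs of the form $X_A \otimes \idmat_B$ and to split $K$ into its self-adjoint and skew-adjoint parts. For the easy (``if'') direction I would simply substitute $K = K_A \otimes \idmat_B + \idmat_A \otimes iH_B$ into $L(X_A \otimes \idmat_B) = -K^\dagger(X_A\otimes\idmat_B) - (X_A\otimes\idmat_B)K$. Since $H_B$ is self-adjoint we have $(iH_B)^\dagger = -iH_B$, so the two contributions $+X_A\otimes iH_B$ and $-X_A\otimes iH_B$ cancel, leaving $L(X_A\otimes\idmat_B) = -(K_A^\dagger X_A + X_A K_A)\otimes\idmat_B$, which is manifestly of the form $L^A(X_A)\otimes\idmat_B$. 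Hence $L$ is semicausal.

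For the ``only if'' direction, write $K = G + iH$ with $G = \tfrac12(K+K^\dagger)$ and $H = \tfrac{1}{2i}(K - K^\dagger)$ both self-adjoint. Evaluating semicausality at $X_A = \idmat_A$ gives $L(\idmat_A\otimes\idmat_B) = -(K+K^\dagger) = -2G$, and semicausality forces this to have the form $(\cdot)\otimes\idmat_B$; since $G$ is self-adjoint this yields $G = G_A \otimes \idmat_B$ for a self-adjoint $G_A\in\blt(\mathcal{H}_A)$. Substituting back, $L(X_A\otimes\idmat_B) = -(G_A X_A + X_A G_A)\otimes\idmat_B + i\comu{H}{X_A\otimes\idmat_B}$, where the first summand is already $A$-local. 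Thus semicausality reduces to the single condition that $\comu{H}{X_A\otimes\idmat_B}\in\blt(\mathcal{H}_A)\otimes\idmat_B$ for every $X_A$.

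The crux is the resulting structural claim: a self-adjoint $H$ whose commutators $\comu{H}{X_A\otimes\idmat_B}$ are all $A$-local must have the form $H_A\otimes\idmat_B + \idmat_A\otimes H_B$. I would prove this via the $B$-matrix elements $H_{jk} := (\idmat_A\otimes\bra{b_j})H(\idmat_A\otimes\ket{b_k})\in\blt(\mathcal{H}_A)$ in an orthonormal basis $\{\ket{b_j}\}$ of $\mathcal{H}_B$. A direct computation shows the $(j,k)$-block of $\comu{H}{X_A\otimes\idmat_B}$ equals $\comu{H_{jk}}{X_A}$, whereas the $(j,k)$-block of any $Y\otimes\idmat_B$ equals $\delta_{jk}Y$. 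Hence the condition reads $\comu{H_{jk}}{X_A} = \delta_{jk}Y(X_A)$ for all $X_A$. For $j\neq k$ this says $H_{jk}$ commutes with all of $\blt(\mathcal{H}_A)$, so $H_{jk} = c_{jk}\idmat_A$ by irreducibility (the commutant of $\blt(\mathcal{H}_A)$ is $\C\idmat_A$); for $j=k$ it gives $\comu{H_{jj}-H_{11}}{X_A}=0$, so $H_{jj} = H_{11} + c_{jj}\idmat_A$. Setting $H_A := H_{11}$, the operator $W := H - H_A\otimes\idmat_B$ then has every $B$-block proportional to $\idmat_A$, which is precisely the statement $W\in(\blt(\mathcal{H}_A)\otimes\idmat_B)' = \idmat_A\otimes\blt(\mathcal{H}_B)$. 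Therefore $W = \idmat_A\otimes H_B$ for a bounded $H_B$, giving $H = H_A\otimes\idmat_B + \idmat_A\otimes H_B$.

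Finally, since $H = H^\dagger$, averaging this identity with its adjoint lets me take $H_A$ and $H_B$ self-adjoint. Assembling, $K = G + iH = (G_A + iH_A)\otimes\idmat_B + \idmat_A\otimes iH_B$, so $K_A := G_A + iH_A\in\blt(\mathcal{H}_A)$ (arbitrary) together with the self-adjoint $H_B$ realise the claimed form. I expect the main obstacle to lie in the infinite-dimensional step: justifying that these block manipulations are legitimate for a general bounded (not norm-summable) $H$, and in particular invoking the tensor-product commutation theorem $(\blt(\mathcal{H}_A)\otimes\idmat_B)' = \idmat_A\otimes\blt(\mathcal{H}_B)$, which conveniently produces a bounded $H_B$ directly, so that one never has to reconstruct $H_B$ from the possibly divergent series $\sum_{j,k}c_{jk}\ket{b_j}\bra{b_k}$.
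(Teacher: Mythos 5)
Your proof is correct, and it takes a genuinely different route from the paper's. The paper handles the ``only if'' direction by applying $L$ to partial isometries $U\otimes\idmat_B$, $U\in\unitary_P(\mathcal{H}_n)$, and Haar-averaging (Lemma \ref{HaarMeasureIntegrationLemma}), which yields $(P_n\otimes\idmat_B)K = -P_n\otimes\frac{1}{n}\ptr{P_n}{K^\dagger} - L_n^A\otimes\idmat_B$; in finite dimensions this gives the product form of $K$ at once, while in infinite dimensions an ultraweak limit is extracted via sequential Banach--Alaoglu (hence separability), and a second use of semicausality then forces the self-adjoint part of the $B$-component to be proportional to $\idmat_B$, producing $H_B$. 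You avoid all of this machinery: splitting $K=G+iH$ into self-adjoint parts, you pin down $G=G_A\otimes\idmat_B$ from $L(\idmat_{AB})$ alone, reduce semicausality to the condition that $\comu{H}{X_A\otimes\idmat_B}$ be $A$-local for every $X_A$, and solve it by block analysis --- the $(j,k)$ block of the commutator is $\comu{H_{jk}}{X_A}$, so off-diagonal blocks are scalars and diagonal blocks agree with $H_{11}$ up to scalars --- followed by the commutation theorem $(\blt(\mathcal{H}_A)\otimes\idmat_B)'=\idmat_A\otimes\blt(\mathcal{H}_B)$. Your route buys elementarity and uniformity: no measure theory, no compactness, no separability assumption, and finite and infinite dimensions are treated identically, since each block is a single bounded operator (no infinite sums arise) and the commutation theorem hands you a bounded $H_B$ directly; your closing worry is unfounded, as that theorem has an elementary proof in this full-algebra case (define $H_B$ by restricting $W$ to $\psi_0\otimes\mathcal{H}_B$ and propagate using $\ket{\psi}\bra{\psi_0}\otimes\idmat_B$). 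What the paper's heavier route buys is reuse: the Haar-averaging lemma underlying it is exactly the tool redeployed in Lemma \ref{KeyLemmaInf} to construct the semicausal CP-part of a general generator, a step where a commutant argument like yours has no evident substitute. (Minor remark: your final averaging step is superfluous --- $H_A=H_{11}$ is automatically self-adjoint as a diagonal block of the self-adjoint $H$, and then $\idmat_A\otimes H_B = H - H_A\otimes\idmat_B$ is self-adjoint, so $H_B$ is too.)
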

\begin{proof}
If $K = K_A \otimes \idmat_B + \idmat_A \otimes iH_B$, then $L(X_A \otimes \idmat_B) = (- K_A^\dagger X_A - K_A X_A) \otimes \idmat_B + X_A \otimes (iH_B-iH_B) = (- K_A^\dagger X_A - X_A K_A) \otimes \idmat_B$. Hence, $L$ is semicausal. Conversely, suppose $L$ is semicausal with $L(X_A \otimes \idmat_B) = L^A(X_A)\otimes\idmat_B$. Let $\mathcal{H}_n$ be an $n$-dimensional subspace of $\mathcal{H}_A$ and $U \in \unitary_P(\mathcal{H}_n)$. Then
\begin{align*}
    \left(L(U\otimes \idmat_B)\right)(U^\dagger\otimes \idmat_B) = - K^\dagger (P_n\otimes\idmat_B) - (U\otimes \idmat_B) K (U^\dagger \otimes \idmat_B) = (L^A(U)\,U^\dagger)\otimes \idmat_B,
\end{align*} 
where $P_n \in \blt(\mathcal{H}_A)$ is the orthogonal projection onto $\mathcal{H}_n$. 
We integrate both sides w.r.t.~the Haar measure on $\unitary_P(\mathcal{H}_n)$. Lemma \ref{HaarMeasureIntegrationLemma} and some rearrangement and taking the conjugate yields
\begin{align} \label{eq:OnlyKPartProofApproxEquation}
    (P_n\otimes\idmat_B) K  = - P_n \otimes \frac{1}{n} \ptr{P_n}{K^\dagger} - L^A_n \otimes \idmat_B,
\end{align}
for some operator $L^A_n \in \blt(\mathcal{H}_A)$. If $\mathcal{H}_A$ is finite-dimensional, we can take $\mathcal{H}_n = \mathcal{H}_A$, so that $P_n = \idmat_A$. Hence $K = - \tilde{K}_A \otimes \idmat_B - \idmat_A \otimes B$, with $B = \frac{1}{n} \ptr{A}{K^\dagger}$ and $\tilde{K}_A = L^A_n$. If $\mathcal{H}_A$ is separable infinite-dimensional, we obtain the same result via a limiting procedure $n \rightarrow \infty$ as follows: Let $\{\ket{e_i}\}_{i \in \N}$ be an orthonormal basis of $\mathcal{H}_A$ and set $\mathcal{H}_n = \mathrm{span}\{\ket{e_1}, \ket{e_2}, \dots, \ket{e_n}\}$. Then, the second part of Lemma \ref{HaarMeasureIntegrationLemma} allows us to pass to a subsequence of $\left(P_n \otimes \frac{1}{n} \ptr{P_n}{K^\dagger}\right)_{n \in \N}$ that converges ultraweakly to a limit $\idmat_A \otimes B$. The corresponding subsequence of $\left((P_n \otimes \idmat_B)K\right)_{n \in N}$ converges ultraweakly to $K$, and hence that subsequence of $\left( L_n^A \otimes \idmat_B \right)_{n \in \N}$ converges ultraweakly to a limit $\tilde{K}_A \otimes \idmat_B$. I.e., we get $K=  -\tilde{K}_A\otimes\idmat_B - \idmat_A\otimes B$. Therefore,
\begin{align*}
    0 = L(X_A \otimes \idmat_B) - L(X_A \otimes \idmat_B) = (L^A(X_A) - \tilde{K}_A^\dagger X_A - X_A \tilde{K}_A) \otimes \idmat_B - X_A \otimes (B + B^\dagger),
\end{align*}
which can only be true for all $X_A$, if $B + B^\dagger$ is proportional to $\idmat_B$. Since $B + B^\dagger$ is self-adjoint, we have $B + B^\dagger = 2r\idmat_B$, for some $r \in \R$. We can then set $iH_B := r\idmat_B - B$ and $K_A := -\tilde{K}_A - r\idmat$, so that $H_B$ is self-adjoint and $K = K_A \otimes \idmat + \idmat \otimes iH_B$. 
\end{proof}

If we had restricted our attention to Hamiltonian generators and unitary groups in finite dimensions, an analog of this Lemma would have already followed from the fact that semicausal unitaries are tensor products, which was proved in \cite{Beckman.2001} (and reproved in \cite{Piani.2006}).\\

As another technical ingredient, the following lemma establishes a closedness property of the set of semicausal maps. 

\begin{lem} \label{WeakStarClosednesslemma}
Let $\left(V_m\right)_{m \in \N}$ and $\left(W_n\right)_{n \in \N}$ be ultraweakly convergent sequences in $\blt(\mathcal{H}_A \otimes \mathcal{H}_B; \mathcal{H}_A \otimes \mathcal{H}_B \otimes \mathcal{H}_E)$, with limits $V$ and $W$. Suppose that for all $m,n \in \N$, the map $\Phi_{m, n} : \blt(\mathcal{H}_A \otimes \mathcal{H}_B) \rightarrow \blt(\mathcal{H}_A \otimes \mathcal{H}_B)$, defined by $\Phi_{m, n}(X) = V_m^\dagger (X \otimes \idmat_E) W_n$ is Heisenberg $B \not\to A$ semicausal. Then the map $\Phi : \blt(\mathcal{H}_A \otimes \mathcal{H}_B) \rightarrow \blt(\mathcal{H}_A \otimes \mathcal{H}_B)$, defined by $\Phi(V) = V^\dagger (X \otimes \idmat_E) W$, is also Heisenberg $B \not\to A$ semicausal.
\end{lem}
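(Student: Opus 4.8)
The plan is to recast Heisenberg $B\not\to A$ semicausality as membership in an ultraweakly closed subspace and then to pass to the limit in one sequence at a time. The key observation is that $\Phi$ is Heisenberg $B\not\to A$ semicausal exactly when $\Phi(X_A \otimes \idmat_B)$ lies in $\blt(\mathcal{H}_A) \otimes \idmat_B$ for every $X_A \in \blt(\mathcal{H}_A)$. This subspace is the commutant $(\idmat_A \otimes \blt(\mathcal{H}_B))^\prime$, and I would first record that it is ultraweakly closed: for each fixed $Y_B \in \blt(\mathcal{H}_B)$ the map $Z \mapsto \comu{Z}{\idmat_A \otimes Y_B}$ is ultraweakly continuous, so $\{Z : \comu{Z}{\idmat_A \otimes Y_B} = 0\}$ is ultraweakly closed, and $\blt(\mathcal{H}_A) \otimes \idmat_B$ is the intersection of these sets over all $Y_B$.

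Next I would isolate the continuity facts needed, all immediate from the two-sided $^*$-ideal property of $\trcl$ recalled in Section~\ref{sct:prelims}. For a fixed bounded operator $C$, both left multiplication $Z \mapsto CZ$ and right multiplication $Z \mapsto ZC$ are ultraweakly continuous, since $\tr{\rho^\dagger C Z} = \tr{(C^\dagger \rho)^\dagger Z}$ and $\tr{\rho^\dagger Z C} = \tr{(\rho C^\dagger)^\dagger Z}$ with $C^\dagger \rho, \rho C^\dagger \in \trcl$; likewise the adjoint $Z \mapsto Z^\dagger$ is ultraweakly continuous. Consequently, ultraweak convergence of a sequence is preserved when a single fixed bounded factor is multiplied in on either side or when the adjoint is taken.

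With these tools I would run an iterated limit. Fix $X_A \in \blt(\mathcal{H}_A)$ and write $\tilde{X} = X_A \otimes \idmat_B \otimes \idmat_E$, so that $\Phi_{m,n}(X_A \otimes \idmat_B) = V_m^\dagger \tilde{X} W_n$ and $\Phi(X_A \otimes \idmat_B) = V^\dagger \tilde{X} W$. For fixed $m$, left-multiplication continuity applied to $W_n \to W$ shows that $V_m^\dagger \tilde{X} W_n \to V_m^\dagger \tilde{X} W$ ultraweakly as $n \to \infty$. Each $V_m^\dagger \tilde{X} W_n$ lies in the ultraweakly closed set $\blt(\mathcal{H}_A) \otimes \idmat_B$ by hypothesis, hence so does the limit $V_m^\dagger \tilde{X} W$. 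Now let $m \to \infty$: from $V_m \to V$ ultraweakly we get $V_m^\dagger \to V^\dagger$ ultraweakly, and right-multiplication by the fixed operator $\tilde{X} W$ is ultraweakly continuous, so $V_m^\dagger \tilde{X} W \to V^\dagger \tilde{X} W = \Phi(X_A \otimes \idmat_B)$ ultraweakly. Since every term again lies in $\blt(\mathcal{H}_A) \otimes \idmat_B$, so does $\Phi(X_A \otimes \idmat_B)$. Thus for each $X_A$ there is a unique (as $\idmat_B \neq 0$) operator $\Phi^A(X_A) \in \blt(\mathcal{H}_A)$ with $\Phi(X_A \otimes \idmat_B) = \Phi^A(X_A) \otimes \idmat_B$; uniqueness forces $X_A \mapsto \Phi^A(X_A)$ to be linear, so $\Phi$ is Heisenberg $B \not\to A$ semicausal.

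The main obstacle is that multiplication is not jointly ultraweakly continuous, so one cannot take the limits in $V_m$ and $W_n$ simultaneously. The remedy is precisely the successive, one-variable-at-a-time passage above, combined with the ultraweak closedness of $\blt(\mathcal{H}_A) \otimes \idmat_B$, which ensures that semicausality is inherited at each intermediate limit. A secondary point worth stating carefully is that the governing topology is the ultraweak one rather than the norm topology, which is why the commutant description of semicausality (rather than a direct norm estimate) is the right device.
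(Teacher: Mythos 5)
Your proof is correct, and its skeleton---passing to the limit in one sequence at a time precisely because multiplication is not jointly ultraweakly continuous---is the same as the paper's. The genuine difference lies in how semicausality is inherited at each intermediate limit. The paper works at the level of trace functionals: for fixed $m$ it shows that $\tr{\rho\,(\Phi^A_{m,n}(X_A)\otimes\idmat_B)}$ converges for every trace-class $\rho$, and then needs an auxiliary compactness argument (Banach--Alaoglu plus the uniform boundedness principle, relegated to a footnote) to conclude that the operators $\Phi^A_{m,n}(X_A)\otimes\idmat_B$ themselves converge ultraweakly to something of the form $\Phi^A_m(X_A)\otimes\idmat_B$; the subsequent $m$-limit is handled by conjugating traces, which is your adjoint-continuity step in disguise. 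You instead recast semicausality as membership of $\Phi(X_A\otimes\idmat_B)$ in the commutant $(\idmat_A\otimes\blt(\mathcal{H}_B))^\prime=\blt(\mathcal{H}_A)\otimes\idmat_B$, record once that this subspace is ultraweakly closed, and then need only the ultraweak continuity of one-sided multiplication and of the adjoint---both consequences of the same trace-class ideal property the paper uses. This buys a cleaner argument: the compactness/uniform-boundedness footnote disappears, the tensor-product form of the intermediate limits is automatic rather than asserted, and linearity of the reduced map $\Phi^A$ falls out of uniqueness. What the paper's more hands-on route buys is an explicit realization of the reduced maps $\Phi^A_m$ as limits of the $\Phi^A_{m,n}$, which is mildly more constructive but not needed for the statement.
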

\begin{proof}
For $X_A \in \blt(\mathcal{H}_A)$ and $\rho \in \trcl(\mathcal{H}_A \otimes \mathcal{H}_B)$ we have that $\rho\, V_m^\dagger (X_A \otimes \idmat_B \otimes \idmat_E) \in \trcl(\mathcal{H}_A \otimes \mathcal{H}_B \otimes \mathcal{H}_E; \mathcal{H}_A \otimes \mathcal{H}_B)$, since the trace-class operators are an ideal in the bounded operators. Hence, by definition of the ultraweak topology,
\begin{align*}
    \tr{\rho\, V_m^\dagger (X_A \otimes \idmat_B \otimes \idmat_E) W} = \lim_{n \rightarrow \infty} \tr{\rho\, V_m^\dagger (X_A \otimes \idmat_B \otimes \idmat_E) W_n} = \lim_{n \rightarrow \infty} \tr{\rho\, (\Phi^A_{m, n}(X_A) \otimes \idmat_B)}.
\end{align*}
Since $\tr{\rho\, \Phi^A_{m, n}(X_A) \otimes \idmat_B}$ converges as $n\to\infty$ for every $\rho\in\trcl(\mathcal{H}_A \otimes \mathcal{H}_B)$, the sequence $\left(\Phi^A_{m,n}(X_A) \otimes \idmat_B\right)_{n \in \N}$ converges ultraweakly~\footnote{Uniqueness of such a limit is clear. Existence follows by the Banach-Alaoglu Theorem and an application of the uniform boundedness priciple, which implies that the sequence $\Phi^A_{m,n}(X_A)$ is norm-bounded}. We call the limit $\Phi^A_{m}(X_A) \otimes \idmat_B$. It is then easy to see that $\Phi^A_{m}(X_A)$, viewed as a map on $\blt(\mathcal{H}_A)$ is linear and continuous. This tells us that the map $\Phi_{m} : \blt(\mathcal{H}_A \otimes \mathcal{H}_B) \rightarrow \blt(\mathcal{H}_A \otimes \mathcal{H}_B)$, defined by $\Phi_{m}(X) = V_m^\dagger (X \otimes \idmat_E) W$ is semicausal for all $m \in \N$. Furthermore, we have that $\rho^\dagger\, W^\dagger (X_A^\dagger \otimes \idmat_B \otimes \idmat_E) \in \trcl(\mathcal{H}_A \otimes \mathcal{H}_B \otimes \mathcal{H}_E; \mathcal{H}_A \otimes \mathcal{H}_B)$ for all $X_A \in \blt(\mathcal{H}_A)$ and $\rho \in \trcl(\mathcal{H}_A \otimes \mathcal{H}_B)$ and thus
\begin{align*}
    \tr{\rho\, V^\dagger (X_A \otimes \idmat_B \otimes \idmat_E) W} &= \overline{\tr{\rho^\dagger W^\dagger (X_A^\dagger \otimes \idmat_B \otimes \idmat_E) V}} = \lim_{m \rightarrow \infty} \overline{\tr{\rho^\dagger W^\dagger (X_A^\dagger \otimes \idmat_B \otimes \idmat_E) V_m}} = \lim_{m \rightarrow \infty} \tr{\rho\, V_m^\dagger (X_A \otimes \idmat_B \otimes \idmat_E) W} \\&= \lim_{m \rightarrow \infty} \tr{\rho\, (\Phi^A_m(X_A) \otimes \idmat_E)}.
\end{align*}
Repeating the argument above then shows that $\Phi$ is semicausal. 
\end{proof}

As a final preparatory step we observe that, given a semicausal Lindblad generator, we can use its CP part to define a family of semicausal CP-maps.

\begin{lem} \label{SemicausalPolarizationLemmaInfd}
Let $L : \blt(\mathcal{H}_A \otimes \mathcal H_B) \rightarrow \blt(\mathcal{H}_A \otimes \mathcal{H}_B)$ be defined by $L(X) := V^\dagger (X \otimes \idmat_E) V - K^\dagger X - X K$, with $V \in \blt(\mathcal{H}_A \otimes \mathcal{H}_B; \mathcal{H}_A \otimes \mathcal{H}_B \otimes \mathcal{H}_E)$ and $K \in \blt(\mathcal{H}_A \otimes \mathcal{H}_B)$. If $L$ is Heisenberg $B \not\to A$ semicausal, then the map $S_{Y,Z} : \blt(\mathcal{H}_A \otimes \mathcal{H}_B) \rightarrow \blt(\mathcal{H}_A \otimes \mathcal{H}_B)$, defined by
\begin{align*}
    S_{Y,Z}(X) = \left[V (Z \otimes \idmat_B) - (Z \otimes \idmat_B \otimes \idmat_E)V\right]^\dagger\; (X \otimes \idmat_E)\; \left[V (Y \otimes \idmat_B) - (Y \otimes \idmat_B \otimes \idmat_E)V\right],
\end{align*} 
is Heisenberg $B \not\to A$ semicausal for every $Y, Z \in \blt(\mathcal{H}_A)$.
\end{lem}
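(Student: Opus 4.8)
The plan is to recognize $S_{Y,Z}$ as a kind of discrete second derivative (a polarization) of $L$ in which the non-CP part drops out. First I would abbreviate $W_Y := V(Y \otimes \idmat_B) - (Y \otimes \idmat_B \otimes \idmat_E)V \in \blt(\mathcal{H}_A \otimes \mathcal{H}_B; \mathcal{H}_A \otimes \mathcal{H}_B \otimes \mathcal{H}_E)$, so that $S_{Y,Z}(X) = W_Z^\dagger (X \otimes \idmat_E) W_Y$. Expanding the product gives four terms, each of which I would rewrite purely in terms of the CP part $\Phi(X) = V^\dagger(X \otimes \idmat_E)V$, using the elementary identities $(X \otimes \idmat_E)(Y \otimes \idmat_B \otimes \idmat_E) = \big(X(Y\otimes\idmat_B)\big)\otimes\idmat_E$ and $(Z^\dagger \otimes \idmat_B \otimes \idmat_E)(X \otimes \idmat_E) = \big((Z^\dagger\otimes\idmat_B)X\big)\otimes\idmat_E$ to move the $A$-factors through the ancilla. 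This yields $S_{Y,Z}(X) = (Z^\dagger\otimes\idmat_B)\Phi(X)(Y\otimes\idmat_B) - (Z^\dagger\otimes\idmat_B)\Phi\big(X(Y\otimes\idmat_B)\big) - \Phi\big((Z^\dagger\otimes\idmat_B)X\big)(Y\otimes\idmat_B) + \Phi\big((Z^\dagger\otimes\idmat_B)X(Y\otimes\idmat_B)\big)$.

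The crucial step is then to substitute $\Phi(\,\cdot\,) = L(\,\cdot\,) + K^\dagger(\,\cdot\,) + (\,\cdot\,)K$ and check that every $K$-contribution cancels. There are four distinct $K$-monomials, namely $(Z^\dagger\otimes\idmat_B)K^\dagger X(Y\otimes\idmat_B)$, $(Z^\dagger\otimes\idmat_B)XK(Y\otimes\idmat_B)$, $(Z^\dagger\otimes\idmat_B)X(Y\otimes\idmat_B)K$, and $K^\dagger(Z^\dagger\otimes\idmat_B)X(Y\otimes\idmat_B)$, and each appears exactly twice with opposite signs among the four terms. Once this bookkeeping is done I obtain the clean identity
\begin{align*}
    S_{Y,Z}(X) &= (Z^\dagger \otimes \idmat_B)\, L(X)\, (Y \otimes \idmat_B) - (Z^\dagger \otimes \idmat_B)\, L\big(X(Y \otimes \idmat_B)\big) \\
    &\quad - L\big((Z^\dagger \otimes \idmat_B)X\big)(Y \otimes \idmat_B) + L\big((Z^\dagger \otimes \idmat_B)X(Y \otimes \idmat_B)\big),
\end{align*}
which expresses $S_{Y,Z}$ entirely through the full generator $L$. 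This is the conceptual heart of the lemma: although $V$ (and hence $W_Y$) need not itself be semicausal, the particular commutator combination defining $S_{Y,Z}$ is insensitive to the non-CP part $K$ and therefore inherits the structure of $L$.

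To finish I would evaluate the identity at $X = X_A \otimes \idmat_B$. Each of the four arguments of $L$ is then again of product form with the $B$-factor equal to $\idmat_B$ (e.g.\ $(Z^\dagger\otimes\idmat_B)(X_A\otimes\idmat_B)(Y\otimes\idmat_B) = (Z^\dagger X_A Y)\otimes\idmat_B$), so Heisenberg $B \not\to A$ semicausality of $L$ applies to each, giving $L\big((\,\cdot\,)\otimes\idmat_B\big) = L^A(\,\cdot\,)\otimes\idmat_B$. Collecting the four resulting terms, every factor keeps its $\idmat_B$ on the $B$-system, so $S_{Y,Z}(X_A\otimes\idmat_B) = \big[Z^\dagger L^A(X_A)Y - Z^\dagger L^A(X_AY) - L^A(Z^\dagger X_A)Y + L^A(Z^\dagger X_A Y)\big]\otimes\idmat_B$, exhibiting the required reduced map and proving semicausality of $S_{Y,Z}$. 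I expect the only real obstacle to be the careful tracking of which space each operator acts on during the four-term expansion and the verification that the $K$-monomials cancel in exact pairs; everything after the displayed identity is then immediate from the hypothesis on $L$.
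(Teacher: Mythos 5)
Your proof is correct and is essentially the paper's own argument: the identity you derive expressing $S_{Y,Z}$ purely through $L$ is exactly the polarized form of Lindblad's identity $\Psi_M(X) = L(M^\dagger XM) - M^\dagger L(XM) - L(M^\dagger X)M + M^\dagger L(X)M = \left[(M\otimes\idmat_E)V - VM\right]^\dagger (X\otimes\idmat_E)\left[(M\otimes\idmat_E)V - VM\right]$ on which the paper relies, the only difference being that the paper proves the diagonal case $\Psi_{M_A\otimes\idmat_B}$ and then reaches $S_{Y,Z}$ via the polarization sum $\frac{1}{4}\sum_{k=0}^3 i^k \Psi_{M+i^kN}$, whereas you obtain the bilinear identity directly by expanding and checking the pairwise cancellation of the $K$-monomials. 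Your concluding step, evaluating at $X_A\otimes\idmat_B$ and applying semicausality of $L$ term by term, matches the paper's reasoning.
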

\begin{proof}
For every $M \in \blt(\mathcal{H}_A \otimes \mathcal{H}_B)$, we define the map $\Psi_M : \blt(\mathcal{H}_A \otimes \mathcal{H}_B) \rightarrow \blt(\mathcal{H}_A \otimes \mathcal{H}_B)$ by
\begin{align*}
    \Psi_M(X) &= L(M^\dagger XM) - M^\dagger L(XM) - L(M^\dagger X)M + M^\dagger L(X)M
    \\ &= \left[ (M \otimes \idmat_E)V - V M \right]^\dagger (X \otimes \idmat_E) \left[ (M \otimes \idmat_E) V - V M \right].
\end{align*}
This map has already been used, for a different purpose, in Lindblad's original work \cite[Eq. 5.1]{Lindblad.1976}.
It follows from the semicausality of $L$ that, if we choose $M = M_A \otimes \idmat_B$, for some $M_A \in \blt(\mathcal{H}_A)$, then $\Psi_M$ is semicausal. 
Furthermore, a calculation shows that
\begin{align*}
    \frac{1}{4}\sum_{k = 0}^3  i^k \Psi_{M+i^k N}(X) = \left[V N - (N \otimes \idmat_E)V\right]^\dagger\; (X\otimes \idmat_E)\; \left[V M - (M \otimes \idmat_E)V\right].
\end{align*}
By choosing $N = Z \otimes \idmat_B$ and $M = Y \otimes \idmat_B$, it follows that $S_{Y,Z}$ is the linear combination of four semicausal maps, and hence is itself semicausal. 
\end{proof}

We now combine this Lemma with an integration over the Haar measure to obtain the key Lemma in our proof.

\begin{lem} \label{KeyLemmaInf}
Let $L : \blt(\mathcal{H}_A \otimes \mathcal H_B) \rightarrow \blt(\mathcal{H}_A \otimes \mathcal{H}_B)$ be defined by $L(X) := V^\dagger (X \otimes \idmat_E) V - K^\dagger X - X K$, with $V \in \blt(\mathcal{H}_A \otimes \mathcal{H}_B; \mathcal{H}_A \otimes \mathcal{H}_B \otimes \mathcal{H}_E)$ and $K \in \blt(\mathcal{H}_A \otimes \mathcal{H}_B)$. If $L$ is Heisenberg $B \not\to A$ semicausal, then there exists $B \in \blt(\mathcal{H}_B; \mathcal{H}_B \otimes \mathcal{H}_E)$ such that the map $S : \blt(\mathcal{H}_A \otimes \mathcal{H}_B) \rightarrow \blt(\mathcal{H}_A \otimes \mathcal{H}_B)$, defined by
\begin{align*}
    S(X) = \left[V - \idmat_A \otimes B\right]^\dagger (X \otimes \idmat_E) \left[V - \idmat_A \otimes B\right],
\end{align*} 
is also Heisenberg $B \not\to A$ semicausal. \\Furthermore, if $\mathcal{H}_A$ is finite-dimensional, then we can choose $B = \ptr{A}{V}/\mathrm{dim}(\mathcal{H}_A)$.
\end{lem}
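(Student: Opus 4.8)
The plan is to realize $\tilde{V} := V - \idmat_A \otimes B$ as a Haar average of conjugated Stinespring operators, and to exploit the \emph{sesquilinear} structure of $S$ together with the closedness of the semicausal maps (Lemma \ref{WeakStarClosednesslemma}). For a partial isometry $U \in \unitary_P(\mathcal{H}_n)$, with range projection $P_n$, I would set
\[
\hat{W}_U := -(U^\dagger \otimes \idmat_{BE})\, W_U, \qquad W_U := V(U \otimes \idmat_B) - (U \otimes \idmat_{BE})V ,
\]
where $W_U$ is precisely the object appearing in Lemma \ref{SemicausalPolarizationLemmaInfd}. The central computation is that for any two partial isometries $U,U'$ the map $\Psi_{U,U'}(X) := \hat{W}_U^\dagger (X \otimes \idmat_E)\hat{W}_{U'}$ is Heisenberg $B\not\to A$ semicausal. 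Indeed, pushing the conjugating unitaries through gives $\Psi_{U,U'}(X) = S_{U',U}\big((U \otimes \idmat_B)\,X\,(U'^\dagger \otimes \idmat_B)\big)$ with $S_{Y,Z}$ the semicausal map of Lemma \ref{SemicausalPolarizationLemmaInfd} (which holds for all $Y,Z \in \blt(\mathcal{H}_A)$, partial isometries included); evaluating at $X = X_A \otimes \idmat_B$ keeps the inner conjugation local, namely $(U X_A U'^\dagger)\otimes\idmat_B$, so semicausality of $S_{U',U}$ is inherited.

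The conceptual crux — and the main obstacle — is that $S$ is sesquilinear in its operator, so I cannot simply average the single-slot quadratic forms $\Psi_{U,U}$ to produce $S$. The resolution is to average the two slots of $\Psi_{U,U'}$ against \emph{independent} copies of the Haar measure. By Fubini the double integral factorizes,
\[
\int_{\unitary_P(\mathcal{H}_n)}\!\int_{\unitary_P(\mathcal{H}_n)} \Psi_{U,U'}(X)\,dU\,dU' = \hat{V}_n^\dagger (X \otimes \idmat_E)\,\hat{V}_n, \qquad \hat{V}_n := \int_{\unitary_P(\mathcal{H}_n)}\hat{W}_U\,dU ,
\]
and averaging preserves the semicausal form (the reduced map of the average is the average of the reduced maps), so the left-hand side is semicausal. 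Using Lemma \ref{HaarMeasureIntegrationLemma} to evaluate the single average yields $\hat{V}_n = (P_n \otimes \idmat_{BE})V - P_n \otimes \tfrac{1}{n}\ptr{P_n}{V}$.

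In finite dimensions I take $n = d_A := \mathrm{dim}(\mathcal{H}_A)$ and $P_n = \idmat_A$, so $\hat{V}_{d_A} = V - \idmat_A \otimes \tfrac{1}{d_A}\ptr{A}{V} = \tilde V$, and $S(X) = \hat{V}_{d_A}^\dagger(X\otimes\idmat_E)\hat{V}_{d_A}$ is semicausal with $B = \ptr{A}{V}/\mathrm{dim}(\mathcal{H}_A)$, as claimed. For the infinite-dimensional case I take $\mathcal{H}_n = \mathrm{span}\{\ket{e_1},\dots,\ket{e_n}\}$ and pass to the limit: $(P_n \otimes \idmat_{BE})V \to V$ ultraweakly (the $P_n$ converge strongly to $\idmat_A$ and the sequence is norm-bounded, where ultraweak and weak operator topologies agree), while the second part of Lemma \ref{HaarMeasureIntegrationLemma} supplies a subsequence along which $P_n \otimes \tfrac1n \ptr{P_n}{V} \to \idmat_A \otimes B$ ultraweakly, defining $B \in \blt(\mathcal{H}_B;\mathcal{H}_B\otimes\mathcal{H}_E)$. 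Hence $\hat{V}_n \to \tilde{V} = V - \idmat_A \otimes B$ ultraweakly along that subsequence. Since the cross maps $\hat{V}_m^\dagger(X\otimes\idmat_E)\hat{V}_n$ are \emph{also} semicausal (the identical double-integral argument over $\unitary_P(\mathcal{H}_m)\times\unitary_P(\mathcal{H}_n)$), Lemma \ref{WeakStarClosednesslemma} applied with both approximating sequences equal to this subsequence gives that the limit $S(X) = \tilde{V}^\dagger(X\otimes\idmat_E)\tilde{V}$ is Heisenberg $B\not\to A$ semicausal. Thus the quadratic-versus-linear mismatch is handled by the independent double averaging, and the closedness lemma is exactly what lets this trick survive the passage to the infinite-dimensional limit.
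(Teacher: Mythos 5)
Your proposal is correct and follows essentially the same route as the paper's proof: the semicausal maps from Lemma \ref{SemicausalPolarizationLemmaInfd}, an independent double Haar average that factorizes into $\hat{V}_n^\dagger (X \otimes \idmat_E)\hat{V}_m$ and is evaluated via Lemma \ref{HaarMeasureIntegrationLemma}, and then (in infinite dimensions) passage to an ultraweakly convergent subsequence combined with Lemma \ref{WeakStarClosednesslemma}, using cross-semicausality of $\hat{V}_m^\dagger(X\otimes\idmat_E)\hat{V}_n$ exactly as the paper does with its two subspaces $\mathcal{H}_n,\mathcal{H}_m$. The only difference is cosmetic: you absorb the conjugating partial isometries into the Stinespring operators so that they act on the input $X$ (placing the projection as $(P_n\otimes\idmat_{BE})V$), whereas the paper conjugates the output of $S_{U,W}$ (yielding $V(P_n\otimes\idmat_B)$); both expressions have the same ultraweak limit $V-\idmat_A\otimes B$ and coincide in finite dimensions.
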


\begin{proof}
Let $\mathcal{H}_n$ and $\mathcal{H}_m$ be $n$ and $m$ dimensional subspaces of $\mathcal{H}_A$ with respective orthogonal projections $P_n \in \blt(\mathcal{H}_A)$ and $P_m \in \blt(\mathcal{H}_A)$. Since for every $U \in \unitary_P(\mathcal{H}_n)$ and $W \in \unitary_P(\mathcal{H}_m)$, the map $S_{U,W}$, defined in Lemma \ref{SemicausalPolarizationLemmaInfd} is semicausal, also the map $\overline{S} : \blt(\mathcal{H}_A \otimes \mathcal{H}_B) \rightarrow \blt(\mathcal{H}_A \otimes \mathcal{H}_B)$, defined by 
\begin{align*}
    \overline{S}(X) := \int_{\unitary_P(\mathcal{H}_n)} \int_{\unitary_P(\mathcal{H}_m)} (U \otimes \idmat_B) S_{U,W}(X) (W^\dagger \otimes \idmat_B)\; dW dU,
\end{align*}
is semicausal. Writing out the definition of $S_{U,W}$ yields 
\begin{align*}
     \overline{S}(X) &= \left[V (P_n \otimes \idmat_B) - \int_{\unitary_P(\mathcal{H}_n)} (U \otimes \idmat_B \otimes \idmat_E) V (U^\dagger \otimes \idmat_E) dU \right]^\dagger (X \otimes \idmat_E) \left[V (P_m \otimes \idmat_B) - \int_{\unitary_P(\mathcal{H}_m)} (W \otimes \idmat_B \otimes \idmat_E) V (W^\dagger \otimes \idmat_B) dW \right] \\
    &= \left[V(P_n \otimes \idmat_B) - P_n \otimes \frac{1}{n} \ptr{P_n}{V}  \right]^\dagger (X \otimes \idmat_E) \left[V(P_m \otimes \idmat_B) - P_m \otimes \frac{1}{m} \ptr{P_m}{V}  \right],
\end{align*}
where the last line was obtained by using Lemma \ref{HaarMeasureIntegrationLemma}. If $\mathcal{H}_A$ is finite-dimensional, we can choose $\mathcal{H}_n = \mathcal{H}_m = \mathcal{H}_A$, so that $P_n = P_m = \idmat_A$, and obtain the desired result immediately. If $\mathcal{H}_A$ is separable infinite-dimensional and $\{\ket{e_i}\}_{i \in \N}$ is an orthonormal basis and $\mathcal{H}_k := \mathrm{span}\{\ket{e_1}, \ket{e_2}, \dots, \ket{e_k}\}$, then, by Lemma  \ref{HaarMeasureIntegrationLemma}, the sequence $\left(P_k \otimes \frac{1}{k}\ptr{P_k}{V} \right)_{k \in \N}$ has an ultraweakly convergent subsequence with a limit $\idmat_A \otimes B$, where $B \in \blt(\mathcal{H}_B; \mathcal{H}_B \otimes \mathcal{H}_E)$. Furthermore, since $\left(P_k\right)_{k \in \N}$ converges ultraweakly to $\idmat_A$, we have that the sequence $\left( V(P_k \otimes \idmat_B) - P_k \otimes \frac{1}{k} \ptr{P_k}{V} \right)_{k \in \N}$ has a subsequence that converges ultraweakly to $V - \idmat_A \otimes B$. Hence, by passing to subsequences, we can apply Lemma \ref{WeakStarClosednesslemma}, which yields that $S$ is semicausal.
\end{proof}

\begin{rem}
The previous two lemmas are at the heart of our result. They illustrate a (to the best of our knowledge) novel technique that allows to characterize GKLS generators with a certain constraint, if this constraint is well understood for completely positive maps. It seems useful to develop this method more generally, but this is beyond the scope of the present work.
\end{rem}

With these tools at hand, we can now prove our main result.

\begin{proof} (Theorem \ref{Thm:SemicausalInfiniteDimMainResult})
A straightforward calculation shows that $L$, defined via \eqref{eq:GeneralFormInfDimA} and \eqref{eq:GeneralFormInfDimB} is semicausal. To prove the converse, note that by the Stinespring dilation theorem, there exist a separable Hilbert space $\tilde{\mathcal{H}}_E$ and $\tilde{V} \in \blt(\mathcal{H}_A \otimes \mathcal{H}_B; \mathcal{H}_A \otimes \mathcal{H}_B \otimes \tilde{\mathcal{H}}_E)$, such that $\Phi(X) = \tilde{V}^\dagger (X \otimes \idmat_E) \tilde{V}$. It is well known (see, e.g., \cite[Thm. 2.1 and Thm. 2.2]{WolfQuantumChannels}) that if $\mathcal{H}_A$ and $\mathcal{H}_B$ are finite-dimensional with dimensions $d_A$ and $d_B$, then $\tilde{\mathcal{H}}_E$ can be chosen such that $\mathrm{dim}(\tilde{\mathcal{H}}_E) \leq \left(d_Ad_B\right)^2$. By Lemma \ref{KeyLemmaInf}, there exists $\tilde{B} \in \blt(\mathcal{H}_B; \mathcal{H}_B \otimes \tilde{\mathcal{H}}_E)$ such that the map $\Phi_0 \in \mathrm{CP}_\sigma(\mathcal{H}_A \otimes \mathcal{H}_B)$, defined by $\Phi_0(X)=\left[\tilde{V} - \idmat_A \otimes \tilde{B}\right]^\dagger (X \otimes \idmat_E) \left[\tilde{V} - \idmat_A \otimes \tilde{B}\right]$ is semicausal. We define $V_{sc} = \tilde{V} - \idmat \otimes \tilde{B}$ and obtain 
\begin{align*}
    \Phi(X_A \otimes \idmat_B) = \Phi_0(X_A \otimes \idmat_B) + \kappa^\dagger(X_A \otimes \idmat_B) + (X_A \otimes \idmat_B)\kappa,
\end{align*}
where $\kappa =  (\idmat_A \otimes \tilde{B}^\dagger)V_{sc} + \frac{1}{2}\left( \idmat_A \otimes \tilde{B}^\dagger \tilde{B} \right)$. Since $L$ and $\Phi_0$ are semicausal, we can write $L(X_A \otimes \idmat) = L^A(X_A) \otimes \idmat_B$ and $\Phi_0(X_A \otimes \idmat_B) = \Phi_0^A(X_A) \otimes \idmat_B$, for all $X_A \in \blt(\mathcal{H}_A)$. Hence,
\begin{align} \label{Eq:EqToFindFormOfKSc}
     L(X_A \otimes \idmat_B) - \Phi_0(X_A \otimes \idmat_B) = \left(L^A(X_A) - \Phi_0^A(X_A)\right) \otimes \idmat_B = -(K - \kappa)^\dagger(X_A \otimes \idmat_B) - (X_A \otimes \idmat_B)(K - \kappa). 
\end{align}
It follows that the map defined by $X \mapsto -(K - \kappa)^\dagger X - X(K - \kappa)$ is semicausal. Thus, Lemma \ref{Lem:CharactierizationWithoutCPPart} implies that there exist $K_A \in \blt(\mathcal{H}_A)$ and a self-adjoint $H_B \in \blt(\mathcal{H}_B)$, such that $K - \kappa = K_A \otimes \idmat + \idmat \otimes iH_B$. \\
What we have achieved so far is that $\tilde{V} = V_{sc} + \idmat \otimes \tilde{B}$ and $K = (\idmat_A \otimes \tilde{B}^\dagger)V_{sc} + \frac{1}{2} \idmat \otimes \tilde{B}^\dagger\tilde{B} + K_A \otimes \idmat + \idmat \otimes iH_B$. So, if we can decompose $V_{sc} = (\idmat_A \otimes U)(A \otimes \idmat_B)$, then we are basically done. But this decomposition is given (up to details) by the equivalence between semicausal and semilocalizable channels \cite{Eggeling.2002}. Since the conclusion in \cite{Eggeling.2002} was in the finite-dimensional setting, we will repeat the argument here, showing that it goes through also for infinite-dimensional spaces, while paying special attention to the dimensions of the spaces involved. 
Since $\Phi_0 \in \mathrm{CP}_\sigma(\mathcal{H}_A \otimes \mathcal{H}_B)$ and $\Phi_0(X_A \otimes \idmat_B) = \Phi_0^A(X_A) \otimes \idmat_B$, we also have $\Phi_0^A \in \mathrm{CP}_\sigma(\mathcal{H}_A)$. By the Stinespring dilation theorem (for normal CP-maps), there exist a separable Hilbert space $\mathcal{H}_F$ and $W \in \blt(\mathcal{H}_A; \mathcal{H}_A \otimes \mathcal{H}_F)$ such that $\Phi_0^A(X_A) = W^\dagger (X_A \otimes \idmat_F) W$ and such that $\mathrm{span}\left\{ (X_A \otimes \idmat_F)W \ket{\psi} \mid X_A \in \blt(\mathcal{H}_A), \ket{\psi} \in \mathcal{H}_A \right\}$ is dense in $\mathcal{H}_A \otimes \mathcal{H}_F$. The last condition is called the minimality condition. We then get
\begin{align*}
    V_{sc}^\dagger (X_A \otimes \idmat_B \otimes \idmat_{\tilde{E}}) V_{sc} = (W \otimes \idmat_B)^\dagger (X_A \otimes \idmat_{F} \otimes \idmat_B) (W \otimes \idmat_B)
\end{align*}
Clearly, $\mathrm{span}\left\{ (X_A \otimes \idmat_{F} \otimes \idmat_B)(W \otimes \idmat_B) \ket{\psi} \mid X_A \in \blt(\mathcal{H}_A), \ket{\psi} \in \mathcal{H}_A \otimes \mathcal{H}_B \right\}$ is dense in $\mathcal{H}_A \otimes \mathcal{H}_F \otimes \mathcal{H}_B$. Thus, by minimality, there exists an isometry $\tilde{U} \in \blt(\mathcal{H}_F \otimes \mathcal{H}_B; \mathcal{H}_B \otimes \tilde{\mathcal{H}}_E)$, such that $V_{sc} = (\idmat_A \otimes \tilde{U})(W \otimes \idmat_B)$. 
In the finite-dimensional case, the fact that $\tilde{U}$ is an isometry then implies that $\mathrm{dim}(\mathcal{H}_F) \leq \dim(\tilde{\mathcal{H}}_E)$, such that we can think of $\mathcal{H}_F$ as a subspace of $\tilde{\mathcal{H}}_E$. Thus, $\tilde{U}$ can be extended to a unitary operator $\hat{\tilde{U}} \in \unitary(\tilde{\mathcal{H}}_E \otimes \mathcal{H}_B; \mathcal{H}_B \otimes \tilde{\mathcal{H}}_E)$. Then, defining $\mathcal{H}_E = \tilde{\mathcal{H}}_E$, $U = \hat{\tilde{U}}$, $B = \tilde{B}$ and $A = W$ proves the claim in this case. In the infinite-dimensional case, we can take $\mathcal{H}_E = \mathcal{H}_F \oplus \tilde{\mathcal{H}}_E$. We can now view both $\tilde{\mathcal{H}}_E \otimes \mathcal{H}_B$ and $\mathcal{H}_F \otimes \mathcal{H}_B$ as closed subspaces of $\mathcal{H}_E \otimes \mathcal{H}_B$. Then $\left(\tilde{U}(\mathcal{H}_F \otimes \mathcal{H}_B)\right)^\bot$ and $\left(\mathcal{H}_F \otimes \mathcal{H}_B\right)^\bot$ are isomorphic. Hence $\tilde{U}$ can be extended to a unitary operator $\hat{\tilde{U}} \in \unitary(\mathcal{H}_E \otimes \mathcal{H}_B; \mathcal{H}_B \otimes \mathcal{H}_E)$. We finish the proof by defining $U = \hat{\tilde{U}}$, $B = (\idmat_B \otimes \idmat_{\tilde{E} \to E}) \tilde{B}$ and $A = (\idmat_A \otimes \idmat_{F \to E}) W$ where $\idmat_{\tilde{E} \to E}$ and $\idmat_{F \to E}$ denote the isometric embeddings of $\tilde{\mathcal{H}}_E$ and $\mathcal{H}_F$ into $\mathcal{H}_E$, respectively.  
\end{proof}

As a first consequence, we obtain the analogous theorem for semigroups of Schrödinger $B \not\to A$ semicausal CP-maps:

\begin{cor} \label{cor:SchrodingerSemicausalForm}
Let $L : \trcl(\mathcal{H}_A \otimes \mathcal{H}_B) \rightarrow \trcl(\mathcal{H}_A \otimes \mathcal{H}_B)$ be defined by $L(\rho) = \Phi_S(\rho) - K \rho - \rho K^\dagger$, where $\Phi_S \in \mathrm{CP}_S(\mathcal{H}_A \otimes \mathcal{H}_B)$, with $\Phi_S(\rho) = \ptr{E}{V \rho V^\dagger}$ and $K \in \blt(\mathcal{H}_A \otimes \mathcal{H}_B)$. Then $L$ is Schrödinger $B \not\to A$ semicausal, if and only if, $K$, $V$ and $\mathcal{H}_E$ can be chosen as in \eqref{eq:GeneralFormInfDimA} and \eqref{eq:GeneralFormInfDimB}. 
\end{cor}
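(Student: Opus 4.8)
The plan is to reduce the statement entirely to the Heisenberg-picture result (Theorem~\ref{Thm:SemicausalInfiniteDimMainResult}) via the duality between the Schrödinger and Heisenberg pictures. Recall the duality equivalence stated after the definition of semicausality: a bounded map on $\trcl(\mathcal{H}_A \otimes \mathcal{H}_B)$ is Schrödinger $B \not\to A$ semicausal if and only if its dual is normal and Heisenberg $B \not\to A$ semicausal. So the first move is to compute the dual of the given Schrödinger generator $L$ and recognize it as a generator of exactly the form already treated.

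First I would check that $L$ is a well-defined bounded (trace-norm) linear map, so that dualizing is legitimate: $\Phi_S$ is a Schrödinger CP-map and hence bounded, while $\rho \mapsto K\rho$ and $\rho \mapsto \rho K^\dagger$ are bounded on $\trcl(\mathcal{H}_A \otimes \mathcal{H}_B)$ by the two-sided ideal property. Next I would compute $L^*$ term by term using the defining relation $\tr{X^\dagger L(\rho)} = \tr{(L^*(X))^\dagger \rho}$. A direct calculation gives $\Phi_S^*(X) = V^\dagger (X \otimes \idmat_E) V$ (with the same dilation operator $V$), that the dual of $\rho \mapsto K\rho$ is $X \mapsto K^\dagger X$, and that the dual of $\rho \mapsto \rho K^\dagger$ is $X \mapsto XK$. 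Hence
\begin{align*}
    L^*(X) = V^\dagger (X \otimes \idmat_E) V - K^\dagger X - X K,
\end{align*}
which is precisely the GKLS form of Theorem~\ref{Thm:SemicausalInfiniteDimMainResult} with the very same $V$, $K$, and auxiliary space $\mathcal{H}_E$. Moreover $\Phi_S^*$ is normal because $\Phi_S$ is CP, and the non-CP part is ultraweakly continuous, so $L^*$ is normal, which is exactly the hypothesis required to invoke the duality equivalence.

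The conclusion then follows immediately: $L$ is Schrödinger $B \not\to A$ semicausal if and only if $L^*$ is normal and Heisenberg $B \not\to A$ semicausal, and by Theorem~\ref{Thm:SemicausalInfiniteDimMainResult} the latter holds if and only if $V$ and $K$ admit the representations~\eqref{eq:GeneralFormInfDimA} and~\eqref{eq:GeneralFormInfDimB}. Since these involve the identical operators in both pictures, this is exactly the claimed statement, including the finite-dimensional bound on $\mathrm{dim}(\mathcal{H}_E)$.

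The part requiring care is not a genuine obstacle but bookkeeping of the duality: one must confirm that the same $V$ appears in the Schrödinger dilation $\Phi_S(\rho) = \ptr{E}{V \rho V^\dagger}$ and in the Heisenberg dilation $\Phi_S^*(X) = V^\dagger (X \otimes \idmat_E) V$, and that the adjoint/sign pattern of the non-CP part $-K\rho - \rho K^\dagger$ dualizes correctly to $-K^\dagger X - XK$ (rather than, say, $-K X - X K^\dagger$). Once the dual generator is matched to the Heisenberg generator with identical data, all the analytic work — the Haar averaging of Lemma~\ref{KeyLemmaInf}, the closedness argument of Lemma~\ref{WeakStarClosednesslemma}, and the semilocalizability step — has already been carried out in the proof of Theorem~\ref{Thm:SemicausalInfiniteDimMainResult}, so nothing further is needed here.
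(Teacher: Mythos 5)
Your proposal is correct and is exactly the argument the paper intends: the corollary is stated as an immediate consequence of Theorem~\ref{Thm:SemicausalInfiniteDimMainResult}, with the Schrödinger--Heisenberg duality (a map on $\trcl(\mathcal{H}_A\otimes\mathcal{H}_B)$ is Schrödinger $B\not\to A$ semicausal iff its dual is normal and Heisenberg $B\not\to A$ semicausal) doing all the work, and your term-by-term dualization $\Phi_S^*(X)=V^\dagger(X\otimes\idmat_E)V$, $(K\rho)^*\colon X\mapsto K^\dagger X$, $(\rho K^\dagger)^*\colon X\mapsto XK$ is the correct bookkeeping that matches the same $V$, $K$, $\mathcal{H}_E$ in both pictures. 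Nothing is missing.
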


As a further corollary, we translate the results above to the familiar representation in terms of jump-operators (by going from Stinespring to Kraus).

\begin{cor}\label{Cor:NormalFormSemicausalLindbladGenerator}
A map $L : \trcl(\mathcal{H}_A \otimes \mathcal{H}_B) \rightarrow \trcl(\mathcal{H}_A \otimes \mathcal{H}_B)$ generates a (trace-)norm-continuous semigroup of trace-preserving Schrödinger $B \not\to A$ semicausal CP-maps, if and only if there exists $\{\phi_j\}_j \subset \blt(\mathcal{H}_A \otimes \mathcal{H}_B)$, $\{B_j\}_j \subset \blt(\mathcal{H}_B)$ and $H_A\in\blt(\mathcal{H}_A)$ and $H_B\in\blt(\mathcal{H}_B)$ such that $\{\phi_j\}_j$ is a set of Kraus operators of a Schrödinger $B \not\to A$ semicausal CP-map and $\{B_j\}_j$ is a set of Kraus operators of some CP-map, such that
\begin{align*}
    L(\rho) &= - i\comu{H_A \otimes \idmat_B + \idmat_A \otimes H_B}{\rho} \\ &+ \sum_j \left(\phi_j + \idmat_A \otimes B_j\right) \rho \left(\phi_j + \idmat_A \otimes B_j\right)^\dagger - \frac{1}{2}\acomu{\idmat_A \otimes B_j^\dagger B_j + \phi_j^\dagger \phi_j}{ \rho } - (\idmat_A \otimes B_j^\dagger)\phi_j \rho - \rho \phi_j^\dagger (\idmat_A \otimes B_j) 
\end{align*}
\end{cor}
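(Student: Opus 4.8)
The plan is to obtain the jump-operator form as a Kraus-level rewriting of the Stinespring-level characterization already in hand. First I would invoke Corollary~\ref{cor:SchrodingerSemicausalForm}: a generator $L(\rho)=\Phi_S(\rho)-K\rho-\rho K^\dagger$ with $\Phi_S(\rho)=\ptr{E}{V\rho V^\dagger}$ is Schrödinger $B\not\to A$ semicausal if and only if $V$ and $K$ can be chosen in the semilocalizable form \eqref{eq:GeneralFormInfDimA}, \eqref{eq:GeneralFormInfDimB}, with $V=(\idmat_A\otimes U)(A\otimes\idmat_B)+\idmat_A\otimes B$. The only genuinely new ingredient is trace-preservation, which for a GKLS generator is equivalent to $\tr{L(\rho)}=0$ for all $\rho$, i.e.\ to $K+K^\dagger=V^\dagger V$ (since $\Phi_S^*(\idmat)=V^\dagger V$). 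So the first step is to compute $V^\dagger V$ and $K+K^\dagger$ from \eqref{eq:GeneralFormInfDimA}, \eqref{eq:GeneralFormInfDimB}: both share the terms $(\idmat_A\otimes B^\dagger U)(A\otimes\idmat_B)$, its adjoint, and $\idmat_A\otimes B^\dagger B$ (the $\idmat_A\otimes iH_B$ contributions cancel in $K+K^\dagger$), so the trace-preservation identity collapses to $K_A+K_A^\dagger=A^\dagger A$ on the $A$-factor. Writing $H_A:=\tfrac{1}{2i}(K_A-K_A^\dagger)$, this says exactly $K_A=\tfrac12 A^\dagger A+iH_A$ with $H_A$ self-adjoint, which is what produces the Hamiltonian $H_A\otimes\idmat_B$ in the claimed form.

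Next I would pass from Stinespring to Kraus, using the equivalence recalled in the preliminaries (with convergence of the Kraus series in the appropriate topology). Fixing an orthonormal basis $\{\ket{e_j}\}_j$ of $\mathcal{H}_E$, I define $\phi_j:=(\idmat_A\otimes\idmat_B\otimes\bra{e_j})(\idmat_A\otimes U)(A\otimes\idmat_B)$ and $B_j:=(\idmat_B\otimes\bra{e_j})B$. Then the Kraus operators of $\Phi_S$ are precisely $\phi_j+\idmat_A\otimes B_j$, giving the first line of the dissipator. By Theorem~\ref{Thm:QuantumSemicausalisSemilocalizable} the map $X\mapsto V_{sc}^\dagger(X\otimes\idmat_E)V_{sc}$ with $V_{sc}=(\idmat_A\otimes U)(A\otimes\idmat_B)$ is Heisenberg $B\not\to A$ semicausal, so $\{\phi_j\}_j$ is a Kraus family of a Schrödinger $B\not\to A$ semicausal CP-map and $\{B_j\}_j$ a Kraus family of some CP-map on $\mathcal{H}_B$, as required.

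The core computation is then to match the effective drift operator of the claimed jump form against $K$. Rewriting the stated non-CP part as $-K_{\mathrm{eff}}\rho-\rho K_{\mathrm{eff}}^\dagger-i\comu{H_A\otimes\idmat_B+\idmat_A\otimes H_B}{\rho}$ identifies
\begin{align*}
K_{\mathrm{eff}}=\tfrac12\sum_j\left(\idmat_A\otimes B_j^\dagger B_j+\phi_j^\dagger\phi_j\right)+\sum_j(\idmat_A\otimes B_j^\dagger)\phi_j+i\left(H_A\otimes\idmat_B+\idmat_A\otimes H_B\right).
\end{align*}
Using the completeness relation $\sum_j\ket{e_j}\bra{e_j}=\idmat_E$ I would evaluate the three sums in closed form: $\sum_j B_j^\dagger B_j=B^\dagger B$, $\sum_j\phi_j^\dagger\phi_j=A^\dagger A\otimes\idmat_B$ (using $U^\dagger U=\idmat_{E\otimes B}$), and $\sum_j(\idmat_A\otimes B_j^\dagger)\phi_j=(\idmat_A\otimes B^\dagger)V_{sc}=(\idmat_A\otimes B^\dagger U)(A\otimes\idmat_B)$. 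Substituting and using $K_A=\tfrac12 A^\dagger A+iH_A$ shows $K_{\mathrm{eff}}$ equals the $K$ of \eqref{eq:GeneralFormInfDimB} term by term, so the stated generator coincides with $\Phi_S(\rho)-K\rho-\rho K^\dagger$. For the converse I would run this backwards: given the jump form, Theorem~\ref{Thm:QuantumSemicausalisSemilocalizable} assembles the semicausal Kraus family $\{\phi_j\}_j$ into a semilocalizable $V_{sc}=(\idmat_A\otimes U)(A\otimes\idmat_B)$ and the $\{B_j\}_j$ into a single $B$, whence $V=V_{sc}+\idmat_A\otimes B$ and the matching $K$ are reconstructed, and Corollary~\ref{cor:SchrodingerSemicausalForm} together with the trace-preservation identity yields the claim.

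I expect the main obstacle to be bookkeeping: keeping the tensor-factor orderings straight across the swap unitary $U\in\unitary(\mathcal{H}_E\otimes\mathcal{H}_B;\mathcal{H}_B\otimes\mathcal{H}_E)$, and verifying that the anti-Hermitian pieces of the cross terms $\sum_j(\idmat_A\otimes B_j^\dagger)\phi_j$ recombine with the $iH_B$ of $K$ to reproduce exactly $-i\comu{H_A\otimes\idmat_B+\idmat_A\otimes H_B}{\rho}$, with no residual Hamiltonian. A secondary point requiring care is that the separated, non-symmetrized placement of the cross terms in the stated dissipator is precisely what makes trace-preservation manifest, so I would double-check that $K_{\mathrm{eff}}+K_{\mathrm{eff}}^\dagger=\sum_j(\phi_j+\idmat_A\otimes B_j)^\dagger(\phi_j+\idmat_A\otimes B_j)$ holds identically, confirming that the trace-preserving property is genuinely encoded in the form.
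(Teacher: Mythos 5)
Your ``only if'' direction is correct and is exactly the paper's proof: the paper's entire argument consists of defining the Kraus operators $(\idmat_{AB}\otimes\bra{e_j})V$ from the normal form $V$ of Theorem \ref{Thm:SemicausalInfiniteDimMainResult} (equivalently Corollary \ref{cor:SchrodingerSemicausalForm}), and your additional bookkeeping is right: trace preservation is equivalent to $K+K^\dagger=V^\dagger V$, which collapses to $K_A+K_A^\dagger=A^\dagger A$, i.e.\ $K_A=\tfrac12 A^\dagger A+iH_A$, and your closed-form evaluation of $\sum_j\phi_j^\dagger\phi_j$, $\sum_jB_j^\dagger B_j$ and $\sum_j(\idmat_A\otimes B_j^\dagger)\phi_j$ correctly shows $K_{\mathrm{eff}}=K$.

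The gap is in your converse. Theorem \ref{Thm:QuantumSemicausalisSemilocalizable} does \emph{not} assemble the given family $\{\phi_j\}_j$ into a semilocalizable $V_{sc}$: it only guarantees that the CP-map $\sum_j\phi_j(\cdot)\phi_j^\dagger$ admits \emph{some} Stinespring dilation of the form $(\idmat_A\otimes U)(A\otimes\idmat_B)$, whose Kraus family agrees with $\{\phi_j\}_j$ only up to an isometry $W$ acting on the environment index. Because the cross terms couple $\phi_j$ to $B_j$ through the shared index $j$, replacing the $\phi$-family forces you to co-transform the $B$-family by the same $W$; since $W$ is an isometry but generally not unitary, the component of the assembled $B$ orthogonal to the range of $W$ is destroyed (already $\sum_jB_j(\cdot)B_j^\dagger$ and $\sum_jB_j^\dagger B_j$ change under the substitution), so ``$V=V_{sc}+\idmat_A\otimes B$ with the matching $K$'' is not a representation in the normal form required to invoke Corollary \ref{cor:SchrodingerSemicausalForm}. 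The clean repair bypasses that corollary: check semicausality of the jump form directly. Under $\ptr{B}{\cdot}$ the four cross terms cancel in pairs, because $\ptr{B}{\sigma(\idmat_A\otimes M_B)}=\ptr{B}{(\idmat_A\otimes M_B)\sigma}$ for any trace-class $\sigma$ and bounded $M_B$; the pure-$\phi$ part is semicausal since $\sum_j\phi_j(\cdot)\phi_j^\dagger$ is Schrödinger semicausal and $\sum_j\phi_j^\dagger\phi_j=C_A\otimes\idmat_B$ for some $C_A\geq 0$ (Heisenberg semicausality of the dual map evaluated at $\idmat_A$); and the pure-$B$ part equals $\idop_A\otimes\mathcal{L}_B$ with $\mathcal{L}_B$ trace-annihilating, hence is killed by $\ptr{B}{\cdot}$. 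Combined with your identity $K_{\mathrm{eff}}+K_{\mathrm{eff}}^\dagger=\sum_j(\phi_j+\idmat_A\otimes B_j)^\dagger(\phi_j+\idmat_A\otimes B_j)$ and the equivalence, established in Section \ref{subsec:SemicausalCPSemigroups}, between semicausality of a GKLS generator and of the semigroup it generates, this closes the converse. (Your route can also be saved by splitting $B$ into the components parallel and orthogonal to the environment range of the minimal dilation of $\sum_j\phi_j(\cdot)\phi_j^\dagger$, the orthogonal part contributing a harmless extra $\idop_A\otimes\mathcal{L}_B$ term, but that requires considerably more care than your sketch suggests.)
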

\begin{proof}
A simple calculation by defining the Kraus operators as $(\idmat_{AB} \otimes \ket{e_i})V$, with $\{\ket{e_j}\}_j$ an orthonormal basis of $\mathcal{H}_E$ and $V$ given by Theorem \ref{Thm:SemicausalInfiniteDimMainResult}.  
\end{proof}

We conclude this section about semicausal semigroups with an example that uses our normal form in full generality. 

\begin{figure}[!ht]
    \centering
    \includegraphics[scale = 0.5]{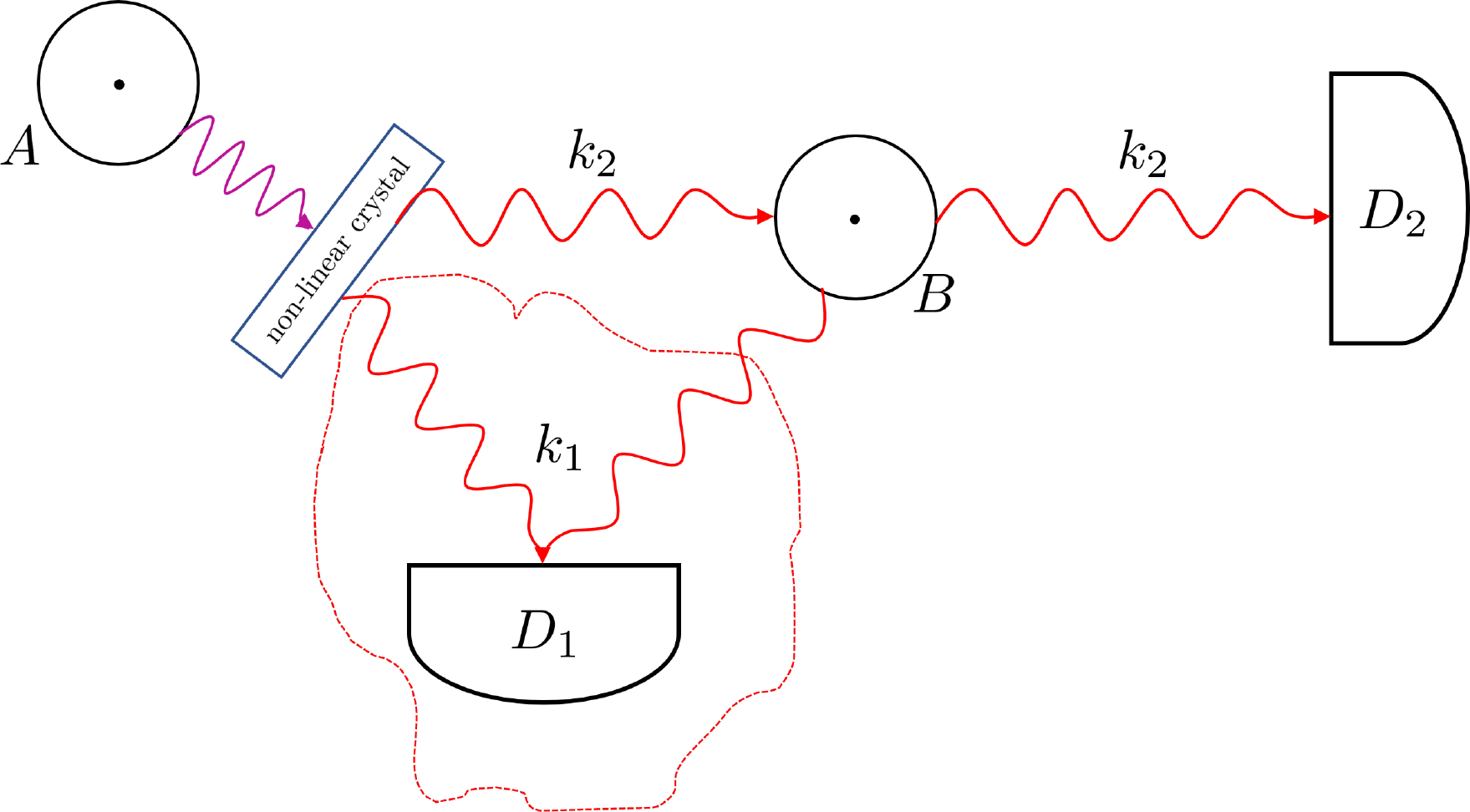}
    \caption{Systems $A$ and $B$ describe $2$-level systems, respectively. The allowed interactions are infinitesimally described as follows: If $A$ is in its excited state, it can emit a photon. Through parametric down-conversion, the photon is converted into two photons (of lower energy). One of those two photons, $k_1$, is sent to a detector $D_1$. The other, $k_2$, is sent to $B$. If $B$ is in its ground state, it absorbs $k_2$. If $B$ is in its excited state, it cannot absorb $k_2$, so $k_2$ passes through $B$ and travels to a detector $D_2$. Additionally, in this case, $B$ can emit a photon, indistinguishable from $k_1$, to $D_1$. }
    \label{fig:example}
\end{figure}

\begin{ex}
We consider the scenario of two $2$-level atoms that can interact according to the processes specified in Fig.~\ref{fig:example}.
We can describe this process either via a dilation (as in Theorem \ref{Thm:SemicausalInfiniteDimMainResult}) or via the Kraus operators (as in Corollary \ref{Cor:NormalFormSemicausalLindbladGenerator}). In the dilation picture, we introduce an auxiliary Hilbert space $\mathcal{H}_E:=\mathcal{H}_1\otimes\mathcal{H}_2$, where $\mathcal{H}_i$ is for the $i^{\textrm{th}}$ photon. Then, the process is described by $V=(\idmat_A\otimes U)(A\otimes \idmat_{B}) + (\idmat_{A}\otimes B)$, with
\begin{align*}
    A\in\blt\left(\mathcal{H}_A; \mathcal{H}_A\otimes\mathcal{H}_E \right),~A &= \ket{0}\bra{1}_A \otimes \ket{11}_E,\\
    B\in\blt\left(\mathcal{H}_B; \mathcal{H}_B\otimes\mathcal{H}_E  \right),~B &= \ket{10}_E\otimes \ket{0}\bra{1}_B,\\
    U\in\unitary(\mathcal{H}_E \otimes \mathcal{H}_B; \mathcal{H}_B \otimes \mathcal{H}_E),~U &= \mathbb{F}_{E;B}( \idmat_{\mathcal{H}_1}\otimes \tilde{U}), 
\end{align*}
where $\tilde{U}\in\unitary\left(\mathcal{H}_2\otimes \mathcal{H}_B \right)$ is determined by
\begin{align*}
    \tilde{U}\ket{00}_{\mathcal{H}_2 B} = \ket{00}_{\mathcal{H}_2 B},~
    \tilde{U}\ket{10}_{\mathcal{H}_2 B} = \ket{01}_{\mathcal{H}_2 B},~
    \tilde{U}\ket{11}_{\mathcal{H}_2 B} = \ket{11}_{\mathcal{H}_2 B}.
\end{align*}
The crucial feature of this example is that the CP-part of the generator ($\ptr{E}{V \cdot V^\dagger}$) cannot be written as a convex combination of the two building blocks ($\Phi_{sc}$ and $\idop_A \otimes \hat{B}$). As mentioned also in the quantization procedure before, this is a pure quantum feature and stems from the fact that it cannot be determined if a photon arriving at the detector $D_1$ came from $B$ or $A$. Hence, the system remains in a superposition state.  \\ 
We can also look at the usual representation via jump operators. This can be achieved by switching from dilations to Kraus operators. We obtain the two jump-operators
\begin{align*}
    L_1 := \underbrace{L_e\otimes L_a}_{=:\phi_1} + \idmat_A\otimes \underbrace{L_e}_{B_1},~
    L_2 := \underbrace{L_e\otimes\ket{1}\bra{1}}_{=:\phi_2},
\end{align*}
where $L_e = \ket{0}\bra{1}$ and $L_a = L_e^\dagger$ describe emission and absorption of a photon, respectively. Thus, the usual Lindblad equation reads:
\begin{align*}
    \frac{d\rho}{dt} = (L_e \otimes L_a + \idmat_A \otimes L_e) \rho (L_e \otimes L_a + \idmat_A \otimes L_e) + (\idmat_A \otimes L_e) \rho (\idmat_A \otimes L_e) - \frac{1}{2} \acomu{\idmat_A \otimes L_e^\dagger L_e + L_e^\dagger L_e \otimes \idmat_B}{\rho}
\end{align*}
It is also possible and instructive to consider the reduced dynamics on system $A$, which can also be described by a Lindblad equation, since $B$ does not communicate to $A$ (this is not true otherwise):
\begin{align*}
    \frac{d \rho_A}{dt} = L_e \rho_A L_e^\dagger - \frac{1}{2} \acomu{L_e^\dagger L_e}{\rho_A},
\end{align*}
where $\rho_A(t) = \ptr{B}{\rho(t)}$. Not surprisingly (given our model), this describes an atom emitting photons. 
\end{ex}

\subsection{Generators of semigroups of quantum superchannels} \label{subsec:QuantumSuperchannels}

We finally turn to semigroups of quantum superchannels (on finite-dimensional spaces), that is, a collection of quantum superchannels $\{\hat{S}_t\}_{t \geq 0} \subseteq \blt(\blt(\blt(\mathcal{H}_A);\blt( \mathcal{H}_B)))$, such that $\hat{S}_0 = \idop$, $\hat{S}_{t+s} = \hat{S}_t \hat{S}_s$ and the map $t \mapsto \hat{S}_t$ is continuous (w.r.t.~any and thus all of the equivalent norms on the finite-dimensional space $\blt(\blt(\blt(\mathcal{H}_A);\blt( \mathcal{H}_B)))$). To formulate a technically slightly stronger result, we call a map $\hat{S} \in \blt(\blt(\blt(\mathcal{H}_A);\blt( \mathcal{H}_B)))$ a preselecting supermap if $\choi_{A;B} \circ \hat{S} \circ \choi^{-1}_{A;B}$ is a Schrödinger $B \not\to A$ semicausal CP-map. Theorem \ref{Thm:RelationQunatumSemicausalSuperchannels} then tells us that a superchannel is a special preselecting supermap. Again, as for semicausal CP-maps, we characterize the generators of semigroups of preselecting supermaps and superchannels in two ways: First, we answer how to determine if a given map $\hat{L} \in \blt(\blt(\blt(\mathcal{H}_A);\blt( \mathcal{H}_B)))$ is such a generator. Second, we provide a normal form for all generators. 

The answer to the first question is really a corollary of Lemma \ref{Lem:VerifySemicausality} together with Theorem \ref{Thm:RelationQunatumSemicausalSuperchannels}. To this end, define $\hat{\mathfrak{L}} := \choi_{AB; AB}\left( \choi_{A;B} \circ \hat{L} \circ \choi^{-1}_{A;B} \right) \in \blt(\mathcal{H}_{A_1} \otimes \mathcal{H}_{B_1} \otimes \mathcal{H}_{A_2} \otimes \mathcal{H}_{B_2})$, where we fix some orthonormal bases $\{\ket{a_i}\}_{i = 1}^{\mathrm{dim}(\mathcal{H}_A)}$ and $\{\ket{b_j}\}_{j = 1}^{\mathrm{dim}(\mathcal{H}_B)}$ of $\mathcal{H}_A$ and $\mathcal{H}_B$, such that $\choi_{A;B}$ is defined w.r.t.~$\{\ket{a_i}\}_{i = 1}^{\mathrm{dim}(\mathcal{H}_A)}$ and $\choi_{AB; AB}$ is defined w.r.t.~the product of the two bases. Furthermore, we introduced the spaces $\mathcal{H}_{A_1} = \mathcal{H}_{A_2} = \mathcal{H}_A$ and $\mathcal{H}_{B_1} = \mathcal{H}_{B_2} = \mathcal{H}_A$ for notational convenience. Finally, we define $P^\bot \in \blt(\mathcal{H}_{A_1} \otimes \mathcal{H}_{B_1} \otimes \mathcal{H}_{A_2} \otimes \mathcal{H}_{B_2})$ to be the orthogonal projection onto the orthogonal complement of $\{\ket{\Omega}\}$, where $\ket{\Omega} = \sum_{i,j} \ket{a_i} \otimes \ket{b_j} \otimes \ket{a_i} \otimes \ket{b_j}$. We then have

\begin{lem} \label{thm:checkforGeneratorSuperchannel}
A linear map $\hat{L} \in \blt(\blt(\blt(\mathcal{H}_A);\blt( \mathcal{H}_B)))$ generates a semigroup of quantum superchannels if and only if 
\begin{itemize}
    \item $\hat{\mathfrak{L}}$ is self-adjoint and $P^\bot \hat{\mathfrak{L}} P^\bot \geq 0$,
    \item $(\mathbb{F}_{A_1;B_1} \otimes \idmat_{A_2}) \ptr{B_2}{\hat{\mathfrak{L}}} (\mathbb{F}_{A_1;B_1} \otimes \idmat_{A_2}) = \idmat_{B_1} \otimes \hat{\mathfrak{L}}^A$, for some (then necessarily self-adjoint) $\hat{\mathfrak{L}}^A \in \blt(\mathcal{H}_{A_1} \otimes \mathcal{H}_{A_2})$ and
    \item $\ptr{A_1}{\hat{\mathfrak{L}}^A} = 0$.
\end{itemize}
$\hat{L}$ is preselecting if and only if the first two conditions hold.  
\end{lem}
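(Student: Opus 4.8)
The plan is to transport the problem to the level of semicausal CP-maps by conjugating with the Choi–Jamiołkowski isomorphism, reducing everything to Lemma \ref{Lem:VerifySemicausality} and Theorem \ref{Thm:RelationQunatumSemicausalSuperchannels}, in direct analogy with the classical Theorem \ref{thm:normalFormClassicalSuperchannels}.

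First I would set $L := \choi_{A;B} \circ \hat{L} \circ \choi^{-1}_{A;B}$, so that by construction $\hat{\mathfrak{L}} = \choi_{AB;AB}(L)$ is exactly the operator $\mathfrak{L}$ appearing in Lemma \ref{Lem:VerifySemicausality}. Because conjugation by the bijection $\choi_{A;B}$ preserves composition, the identity, and continuity, the map $\hat{L}$ generates a semigroup $\{\hat{S}_t\}_{t \geq 0}$ if and only if $L$ generates the semigroup $\{S_t\}_{t \geq 0}$ with $S_t := \choi_{A;B} \circ \hat{S}_t \circ \choi^{-1}_{A;B}$ and $S_t = e^{tL}$. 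By the definition of a preselecting supermap, $\hat{S}_t$ is preselecting exactly when $S_t$ is a Schrödinger $B \not\to A$ semicausal CP-map; hence $\hat{L}$ generates a semigroup of preselecting supermaps if and only if $L$ generates a semigroup of Schrödinger semicausal CP-maps, which by the Schrödinger part of Lemma \ref{Lem:VerifySemicausality} is precisely the first two conditions on $\hat{\mathfrak{L}} = \mathfrak{L}$. This settles the preselecting claim.

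For the superchannel case I would invoke Theorem \ref{Thm:RelationQunatumSemicausalSuperchannels}: $S_t$ is the Choi-image of a superchannel if and only if it is a Schrödinger semicausal CP-map whose reduced map fixes the identity, $S_t^A(\idmat_A) = \idmat_A$. Combined with the previous step, $\hat{L}$ generates a semigroup of superchannels if and only if the first two conditions hold and $S_t^A(\idmat_A) = \idmat_A$ for all $t \geq 0$. Since the reduced maps again form a norm-continuous semigroup $S_t^A = e^{tL^A}$ (as established for semicausal CP semigroups in the discussion preceding Lemma \ref{Lem:VerifySemicausality}, the Schrödinger version following by duality), the requirement $S_t^A(\idmat_A) = \idmat_A$ for all $t$ is equivalent---by differentiating at $t = 0$, and conversely via the exponential series---to $L^A(\idmat_A) = 0$.

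It then remains only to translate $L^A(\idmat_A) = 0$ into the stated Choi condition $\ptr{A_1}{\hat{\mathfrak{L}}^A} = 0$. The reduced operator $\hat{\mathfrak{L}}^A = \mathfrak{L}^A$ from the second condition is, by the (Schrödinger) proof of Lemma \ref{Lem:VerifySemicausality}, the Choi matrix $\choi_{A;A}(L^A)$ of the reduced generator, so the preliminaries' identity $\ptr{A}{\choi_{A;B}(S)} = S(\idmat_A)$ gives $\ptr{A_1}{\hat{\mathfrak{L}}^A} = L^A(\idmat_A)$; thus the third condition is exactly $L^A(\idmat_A) = 0$. I expect this last identification to be the only delicate point: in the Schrödinger picture $\mathfrak{L}^A$ is extracted via the flips $\mathbb{F}_{A_1;B_1}$ and the partial trace over $B_2$, and one must verify that the resulting operator is genuinely $\choi_{A;A}(L^A)$ rather than a transpose of it. The consistency check is that this same $\mathfrak{L}^A$ reproduces the trace-preservation criterion $\ptr{A_2}{\mathfrak{L}^A} = 0$ of Lemma \ref{Lem:VerifySemicausality}, which matches $\ptr{A_2}{\choi_{A;A}(L^A)} = 0$, i.e. trace-preservation of the reduced (hence, by semicausality, of the full) Schrödinger semigroup.
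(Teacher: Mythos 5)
Your proposal is correct and follows essentially the same route as the paper's proof: both reduce the claim to Theorem \ref{Thm:RelationQunatumSemicausalSuperchannels} and Lemma \ref{Lem:VerifySemicausality} via conjugation with the Choi–Jamiołkowski isomorphism, differentiate $S_t^A(\idmat_A) = \idmat_A$ to obtain $L^A(\idmat_A) = 0$, and conclude through the identity $\ptr{A_1}{\hat{\mathfrak{L}}^A} = L^A(\idmat_A)$. The only difference is that you spell out details the paper leaves implicit (in particular, checking that $\hat{\mathfrak{L}}^A$ really equals $\choi_{A;A}(L^A)$ with no stray transpose), and that delicate point resolves exactly as you anticipate.
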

\begin{proof}
Theorem \ref{Thm:RelationQunatumSemicausalSuperchannels} tells us that $\{\hat{S}_t\}_{t \geq 0}$ forming a semigroup of superchannels is eqiuvalent to $S_t = \choi_{A;B} \circ \hat{S}_t \circ \choi_{A;B}^{-1}$ forming a semigroup of Schrödinger $B \not\to A$ semicausal CP-maps and that the reduced map $S_t^A$ satisfies $S_t^A(\idmat_A) = \idmat_A$. By Lemma \ref{Lem:VerifySemicausality} the semicausal semigroup property is equivalent to the first two conditions in the statement. This proves the claim about preselecting $\hat{L}$.\\ 
By differentiation, it follows that $S_t^A(\idmat_A) = \idmat_A$ is satisfied if and only if $L^A$, the generator of $\{S_t^A\}_{t \geq 0}$, satisfies $L^A(\idmat_A) = 0$. But since $\ptr{A_1}{\hat{\mathfrak{L}}^A} = L^A(\idmat_A)$, the claim follows. 
\end{proof}

We finally turn to a normal form for generators of semigroups of preselecting supermaps and superchannels:

\begin{thm} \label{Thm:CharacterizationGeneratorsSuperchannels}
A linear map $\hat{L} : \blt(\blt(\mathcal{H}_A); \blt(\mathcal{H}_B)) \rightarrow \blt(\blt(\mathcal{H}_A); \blt(\mathcal{H}_B))$ generates a semigroup of hyper-preselecting supermaps if and only if there exist a Hilbert space $\mathcal{H}_E$, a state $\sigma \in \blt(\mathcal{H}_E)$, a unitary $U \in \unitary(\mathcal{H}_B \otimes \mathcal{H}_E)$, a self-adjoint operator $H_B \in \blt(\mathcal{H}_B)$, and arbitrary operators $A \in \blt(\mathcal{H}_A \otimes \mathcal{H}_E)$, $B \in \blt(\mathcal{H}_B \otimes \mathcal{H}_E)$ and $K_A \in \blt(\mathcal{H}_A)$, such that $\hat{L}$ acts on $T \in \blt(\blt(\mathcal{H}_A); \blt(\mathcal{H}_B))$ as $\hat{L}(T) = \hat{\Phi}(T) - \hat{\kappa}_L(T) - \hat{\kappa}_R(T)$, with
\begin{align} \label{Eq:CPPartTranslation}
    \begin{split}
    \hat{\Phi}(T)(\rho) = &\ptr{E}{U\; (T \otimes \idop_E)(A (\rho \otimes \sigma) A^\dagger) \;U^\dagger} + \ptr{E}{B\; (T \otimes \idop_E)((\rho \otimes \sigma) A^\dagger) \;U^\dagger} \\&+ \ptr{E}{U\; (T \otimes \idop_E)(A (\rho \otimes \sigma)) \;B^\dagger} + \ptr{E}{ B\; (T \otimes \idop_E)((\rho \otimes \sigma)) \;B^\dagger},
    \end{split}
\end{align}
\begin{subequations}
\begin{align}
    \hat{\kappa}_L(T)(\rho) &= \ptr{E}{B^\dagger U\; (T \otimes \idop_E)(A (\rho \otimes \sigma))} + \frac{1}{2} \ptr{E}{B^\dagger B (T\otimes \idop_E)(\rho\otimes \sigma)}\;  + T(K_A\,\rho) + iH_B\, T(\rho), \\
    \hat{\kappa}_R(T)(\rho) &= \ptr{E}{(T \otimes \idop_E)((\rho \otimes \sigma)A^\dagger)\;U^\dagger B} +  \frac{1}{2}   \ptr{E}{(T\otimes \idop_E)(\rho\otimes \sigma) B^\dagger B}\; + T(\rho\,K_A^\dagger) -  T(\rho) \,iH_B.
\end{align}
\end{subequations}
We can choose $\sigma$ to be pure and $\mathcal{H}_E$ with $\mathrm{dim}(\mathcal{H}_E) \leq (d_A d_B)^2$, where $d_A$ and $d_B$ are the dimensions of $\mathcal{H}_A$ and $\mathcal{H}_B$, respectively.  \\
Furthermore, $\hat{L}$ generates a semigroup of superchannels if and only if $\hat{L}$ generates a semigroup of preselecting supermaps and $\ptr{\sigma}{A^\dagger A} = K_A + K_A^\dagger$. In that case, we can split $\hat{L}$ into a dissipative part $\hat{D}$ and a \sq{Hamiltonian} part $\hat{H}$, i.e., a part which generates a (semi-)group of invertible superchannels whose inverses are superchannels as well. We have $\hat{L}(T) = \hat{D}(T) + \hat{H}(T)$, with
\begin{align*}
    \hat{D}(T)(\rho) = \ptr{E}{\hat{D}^\prime(T)(\rho)} \quad \text{and} \quad \hat{H}(T)(\rho) = -i\comu{H_B}{T(\rho)} - iT(\comu{H_A}{\rho}),
\end{align*}
where $H_A$ is the imaginary part of $K_A$, where
\begin{subequations} \label{Eq:SuperchannelGenerators}
\begin{align}
    \hat{D}^\prime(T)(\rho) &= U (T \otimes \idop_E)(A(\rho \otimes \sigma)A^\dagger) U^\dagger& &- \frac{1}{2}  (T \otimes \idop_E)(\acomu{A^\dagger A} {\rho \otimes \sigma})& \\
    &+ B (T \otimes \idop_E)(\rho \otimes \sigma) B^\dagger& &- \frac{1}{2} \acomu{B^\dagger B}{ (T \otimes \idop_E)(\rho \otimes \sigma)}& \\
    &+ \comu{U (T \otimes \idop_E)(A(\rho \otimes \sigma))}{B^\dagger}& &+ \comu{B}{(T \otimes \idop_E)((\rho \otimes \sigma)A^\dagger)U^\dagger},&
\end{align}
\end{subequations}
and where $[\cdot,\cdot]$ and $\{\cdot,\cdot\}$ denote the commutator and anticommutator, respectively.
\end{thm}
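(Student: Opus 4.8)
The plan is to mirror the proof of the classical Theorem~\ref{thm:normalFormClassicalSuperchannels}: transport the Schrödinger normal form for generators of semicausal CP-semigroups (Corollary~\ref{cor:SchrodingerSemicausalForm}) through the Choi–Jamiołkowski isomorphism. By the definition of a preselecting supermap together with the differentiation argument used repeatedly above (a norm-continuous semigroup $\{\hat{S}_t\}_{t\geq0}$ and its conjugate $S_t=\choi_{A;B}\circ\hat{S}_t\circ\choi_{A;B}^{-1}$ have generators related by the same similarity transform), $\hat{L}$ generates a semigroup of preselecting supermaps if and only if $\hat{L}=\choi_{A;B}^{-1}\circ L\circ\choi_{A;B}$ for some generator $L$ of a semigroup of Schrödinger $B\not\to A$ semicausal CP-maps. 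First I would insert $L(\rho)=\ptr{E}{V\rho V^\dagger}-K\rho-\rho K^\dagger$ from Corollary~\ref{cor:SchrodingerSemicausalForm} and compute $\choi_{A;B}^{-1}\circ L\circ\choi_{A;B}$ term by term.

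The computational heart is this term-by-term translation. I would use the quantum ricochet identities $(X_A\otimes\idmat_A)\ket{\Omega}=(\idmat_A\otimes X_A^{T})\ket{\Omega}$ and, for the ancilla-adding operator $A\in\blt(\mathcal{H}_A;\mathcal{H}_A\otimes\mathcal{H}_E)$, its flip-corrected form $(A\otimes\idmat_A)\ket{\Omega}=(\idmat_A\otimes\flip_{A;E}A^{T_A})\ket{\Omega}$, exactly as in the classical computation. Writing $V=(\idmat_A\otimes U)(A\otimes\idmat_B)+(\idmat_A\otimes B)$ produces the four terms of $\ptr{E}{V\choi_{A;B}(T)V^\dagger}$; pushing the reference-leg operators onto the channel-input leg converts the Stinespring form into the superchannel circuit form. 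The decisive point is that the ancilla-adding $A$ becomes, after transposition, an operator $A\in\blt(\mathcal{H}_A\otimes\mathcal{H}_E)$ acting on the input together with an effective environment state $\sigma$ (arising from the maximally entangled vector viewed through $A^{T_A}$, and arrangeable to be pure), while $U$ and $B$ turn into post-processing on $\mathcal{H}_B\otimes\mathcal{H}_E$. Matching the resulting expressions gives $\hat{\Phi}$ of \eqref{Eq:CPPartTranslation}; translating the four summands of $K$ --- $K_A\otimes\idmat_B$ into the pre-processing $T(K_A\,\rho)$, $\idmat_A\otimes iH_B$ into the post-processing $iH_B\,T(\rho)$, and the two $B$-dependent summands into the remaining traces --- gives $\hat{\kappa}_L$ and $\hat{\kappa}_R$. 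The dimension bound $\dim(\mathcal{H}_E)\leq(d_Ad_B)^2$ is inherited verbatim from Theorem~\ref{Thm:SemicausalInfiniteDimMainResult}.

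For the superchannel refinement I would invoke Theorem~\ref{Thm:RelationQunatumSemicausalSuperchannels}: a preselecting supermap is a superchannel precisely when the reduced maps satisfy $S_t^A(\idmat_A)=\idmat_A$ for all $t\geq0$. Differentiating at $t=0$ yields $L^A(\idmat_A)=0$ for the reduced generator, equivalently $\ptr{A_1}{\hat{\mathfrak{L}}^A}=0$ in the language of Lemma~\ref{thm:checkforGeneratorSuperchannel}. Since the reduced dynamics depends only on $A$, $K_A$ and $\sigma$, evaluating $L^A(\idmat_A)$ on the normal form converts this into the explicit constraint $\ptr{\sigma}{A^\dagger A}=K_A+K_A^\dagger$ --- that is, the Hermitian part of $K_A$ is pinned down by $A$ and $\sigma$, the quantum counterpart of the classical condition $\braket{a_i}{K_A a_i}=\braket{\idvec_{AE}}{A a_i}$. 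Substituting this back into the preselecting form yields the stated superchannel form.

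Finally, for the dissipative/Hamiltonian split I would write $K_A=\tfrac12(K_A+K_A^\dagger)+iH_A$ with $H_A=\tfrac{1}{2i}(K_A-K_A^\dagger)$ its self-adjoint imaginary part. The superchannel condition replaces $\tfrac12(K_A+K_A^\dagger)$ by $\tfrac12\ptr{\sigma}{A^\dagger A}$; reinserting this Hermitian piece into $\hat{\kappa}_L+\hat{\kappa}_R$ completes the $A^\dagger A$ and $B^\dagger B$ contributions into the symmetric anticommutators of \eqref{Eq:SuperchannelGenerators}, isolating the dissipator $\hat{D}(T)(\rho)=\ptr{E}{\hat{D}^\prime(T)(\rho)}$, while the surviving anti-Hermitian remainders assemble into $\hat{H}(T)(\rho)=-i\comu{H_B}{T(\rho)}-iT(\comu{H_A}{\rho})$. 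I would then check that $\hat{H}$ generates $\hat{S}_t(T)(\rho)=e^{-iH_Bt}\,T(e^{-iH_At}\rho\,e^{iH_At})\,e^{iH_Bt}$, a circuit-form semigroup of invertible superchannels whose inverses are again superchannels, justifying the name \emph{Hamiltonian part}. The main obstacle is the bookkeeping of the second paragraph: keeping track of the transposes and flip operators and, above all, extracting the environment state $\sigma$ from the ancilla-adding operator $A$ --- the passage from the Stinespring representation of the semicausal generator to the pre-processing/memory/post-processing representation of the superchannel --- together with the verification that $\sigma$ can be taken pure and that this translation produces no spurious terms.
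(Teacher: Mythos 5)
Your proposal is correct and follows the paper's own proof essentially step for step: reduce to a generator $L$ of Schrödinger $B\not\to A$ semicausal CP-maps via Choi conjugation and differentiation, push the normal form of Corollary~\ref{cor:SchrodingerSemicausalForm} through $\choi_{A;B}^{-1}\circ(\cdot)\circ\choi_{A;B}$ term by term (the paper packages exactly your ricochet/transpose computation as Lemmas~\ref{Lem:TranslationLemma} and~\ref{Lem:PartialTransposeCalculation}), obtain the superchannel condition $\ptr{\sigma}{A^\dagger A}=K_A+K_A^\dagger$ by differentiating $S_t^A(\idmat_A)=\idmat_A$, and split off $H_A=\tfrac{1}{2i}(K_A-K_A^\dagger)$ to get the dissipative/Hamiltonian decomposition. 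Only one bookkeeping detail is misattributed: $\sigma$ does not emerge from the maximally entangled vector viewed through $A^{T_A}$; rather, the paper fixes an arbitrary unit vector $\ket{\xi}\in\mathcal{H}_E$, factors $\idmat_A\otimes\tilde{B}=\bigl(\idmat_A\otimes\tilde{B}(\bra{\xi}\otimes\idmat_B)\bigr)\bigl((\idmat_A\otimes\ket{\xi})\otimes\idmat_B\bigr)$ so that Lemma~\ref{Lem:TranslationLemma} applies to the cross terms, and then sets $\sigma=\ket{\xi}\bra{\xi}$, $A=\tilde{A}^{T_A}(\idmat_A\otimes\bra{\xi})$ and $B=\tilde{B}(\bra{\xi}\otimes\idmat_B)\flip_{B;E}$, so that the $\ket{\xi}$'s cancel against $\rho\otimes\sigma$ and all operators become square.
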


\begin{rem}
Similarly to Theorem \ref{Thm:SemicausalInfiniteDimMainResult}, the proof of Theorem \ref{Thm:CharacterizationGeneratorsSuperchannels} is constructive. In Appendix \ref{Appe:ComputationalConstruction2} we discuss in detail how to obtain the operators $A$, $U$, $K_A$, $B$, $H_A$ and $H_B$ starting from the conditions in Theorem \ref{thm:checkforGeneratorSuperchannel}. 
\end{rem}

As in the classical case, the proof strategy is to use the relation between superchannels and semicausal channels and Theorem \ref{Thm:SemicausalInfiniteDimMainResult}. As this translation process is more involved than in the classical case, we need two auxilliary lemmas.  

\begin{lem} \label{Lem:TranslationLemma}
Let $S : \blt(\mathcal{H}_A \otimes \mathcal{H}_B) \rightarrow \blt(\mathcal{H}_A \otimes \mathcal{H}_B)$ be given by
\begin{align}
    S(X) = \ptr{E}{(\idmat_A \otimes L_B)(L_A \otimes \idmat_B) X (R_A^\dagger\otimes \idmat_B)(\idmat_A \otimes R_B^\dagger)},
\end{align}
with Hilbert spaces $\mathcal{H}_C$ and $\mathcal{H}_E$, operators $L_A, R_A \in \blt(\mathcal{H}_A; \mathcal{H}_A \otimes \mathcal{H}_C)$ and $L_B, R_B \in \blt(\mathcal{H}_C \otimes \mathcal{H}_B; \mathcal{H}_B \otimes \mathcal{H}_E)$. Then, for $T \in \blt(\blt(\mathcal{H}_A); \blt(\mathcal{H}_B))$ and $\rho \in \blt(\mathcal{H}_A)$, 
\begin{align}
    \left[\choi_{A;B}^{-1} \circ S \circ \choi_{A;B}\right](T)(\rho) = \ptr{E}{V_L (T\otimes \idop_C)\left(W_L \rho W_R^\dagger \right) V_R^\dagger},
\end{align}
with $V_L = L_B \flip_{B;C}$, $V_R = R_B \flip_{B;C}$; and $W_L = L_A^{T_A}$, $W_R = R_A^{T_A}$. Here, the partial transpose on $\mathcal{H}_A$ is taken w.r.t. the basis used to define the Choi–Jamiołkowski isomorphism.
\end{lem}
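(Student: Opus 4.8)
The plan is to verify the identity by a direct computation, inserting the definitions of $\choi_{A;B}$ and its inverse and tracking the tensor registers carefully. Write $C := \choi_{A;B}(T) = (\idop_A \otimes T)(\ket{\Omega}\bra{\Omega})$, with $T$ acting on the second (the $\mathcal{H}_A \to \mathcal{H}_B$) factor, and set $P := (\idmat_A \otimes L_B)(L_A \otimes \idmat_B)$ and $Q := (\idmat_A \otimes R_B)(R_A \otimes \idmat_B)$, so that $S(X) = \ptr{E}{P X Q^\dagger}$. Using $\choi_{A;B}^{-1}(\tau)(\rho) = \ptr{A}{(\rho^T \otimes \idmat_B)\tau}$ and merging the two partial traces, the object to evaluate is
\[
    [\choi_{A;B}^{-1}\circ S\circ\choi_{A;B}](T)(\rho) = \ptr{AE}{(\rho^T \otimes \idmat_B \otimes \idmat_E)\, P\, C\, Q^\dagger}.
\]

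First I would separate off the $B$-side operators. Since $\idmat_A \otimes L_B$ and $\idmat_A \otimes R_B^\dagger$ act as the identity on $\mathcal{H}_A$, whereas $\rho^T$ acts only on $\mathcal{H}_A$, the trace over $\mathcal{H}_A$ may be carried inside, reducing the expression to $\ptr{E}{L_B\, Z\, R_B^\dagger}$ with
\[
    Z := \ptr{A}{(\rho^T \otimes \idmat_{CB})(L_A \otimes \idmat_B)\,C\,(R_A^\dagger \otimes \idmat_B)} \in \blt(\mathcal{H}_C \otimes \mathcal{H}_B).
\]

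The heart of the argument is then to establish
\[
    Z = \flip_{B;C}\big[(T\otimes\idop_C)(L_A^{T_A}\,\rho\,(R_A^{T_A})^\dagger)\big]\flip_{B;C}^\dagger .
\]
Conceptually, this says that transporting the $A$-side operators $L_A, R_A$ --- together with the $\rho^T$ coming from $\choi_{A;B}^{-1}$ --- onto the register where $T$ acts converts them into their partial transposes $L_A^{T_A}, R_A^{T_A}$ and simultaneously cancels the transpose on $\rho$, while the auxiliary factor $\mathcal{H}_C$ created by $L_A, R_A$ has to be reordered past $\mathcal{H}_B$ by a flip. The mechanism is the defining perfect correlation of indices in $\ket{\Omega}$, i.e.\ the non-square ricochet identity $(X\otimes\idmat_A)\ket{\Omega} = (\idmat_A\otimes\flip_{A;C}X^{T_A})\ket{\Omega}$ for $X\in\blt(\mathcal{H}_A;\mathcal{H}_A\otimes\mathcal{H}_C)$, the generalization of $(X_A\otimes\idmat_A)\ket{\Omega}=(\idmat_A\otimes X_A^T)\ket{\Omega}$ already used in the classical section. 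In practice I would expand $C = \sum_{ij}\ket{a_i}\bra{a_j}\otimes T(\ket{a_i}\bra{a_j})$, compute the $\mathcal{H}_C$-matrix elements of the inner partial trace, and check that the coefficient operator multiplying each $T(\ket{a_i}\bra{a_j})$ is identical on both sides; the required index relabelling drops out precisely because the partial transpose swaps the input and output $A$-indices of $L_A$ and $R_A$. Substituting $V_L = L_B\flip_{B;C}$, $V_R = R_B\flip_{B;C}$, $W_L = L_A^{T_A}$ and $W_R = R_A^{T_A}$ into $\ptr{E}{L_B Z R_B^\dagger}$ then gives exactly $\ptr{E}{V_L (T\otimes\idop_C)(W_L\rho W_R^\dagger)V_R^\dagger}$, as claimed.

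The main obstacle is entirely bookkeeping: keeping the orderings of the registers $\mathcal{H}_A, \mathcal{H}_B, \mathcal{H}_C$ (and $\mathcal{H}_E$) consistent and handling the partial-transpose and flip conventions correctly. The one genuinely non-routine point is that $L_A$ and $R_A$ are not square --- they create the auxiliary space $\mathcal{H}_C$ --- so one must use the ricochet identity in its non-square form, whose extra flip $\flip_{A;C}$ is exactly what resurfaces as $\flip_{B;C}$ in $V_L, V_R$ once $T$ has carried $\mathcal{H}_A$ to $\mathcal{H}_B$. Beyond this, both sides are manifestly linear in $T$, so the identity may be confirmed coefficient-by-coefficient with no further analytic input.
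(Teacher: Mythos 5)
Your proposal is correct and takes essentially the same route as the paper's own proof: the paper likewise pulls $L_B$ and $R_B^\dagger$ out of the partial trace over $\mathcal{H}_A$, expands $\choi_{A;B}(T) = \sum_{i,j}\ket{e_i}\bra{e_j}\otimes T(\ket{e_i}\bra{e_j})$, and carries out precisely the matrix-element bookkeeping that your non-square ricochet identity encapsulates, with the flip $\flip_{B;C}$ and the partial transposes $L_A^{T_A}$, $R_A^{T_A}$ emerging exactly as you describe. Your intermediate claim for $Z$ checks out coefficient-by-coefficient against the paper's computation, so there is no gap.
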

\begin{proof}
The proof is a direct calculation. We present it in detail in Appendix \ref{TranslatingSupermapsAppendix}. 
\end{proof}

\begin{lem} \label{Lem:PartialTransposeCalculation}
Let $X \in \blt(\mathcal{H}_A \otimes \mathcal{H}_C; \mathcal{H}_A \otimes \mathcal{H}_B)$, $Y \in \blt(\mathcal{H}_A \otimes \mathcal{H}_B; \mathcal{H}_A \otimes \mathcal{H}_C)$, $\rho \in \trcl(\mathcal{H}_B)$. Then $\ptr{\rho}{XY}^T = \ptr{C}{Y^{T_A}(\idmat_A \otimes \rho)X^{T_A}}$.
\end{lem}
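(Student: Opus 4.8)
The plan is to prove the identity by comparing matrix elements of both sides in the fixed orthonormal basis $\{\ket{a_i}\}_i$ of $\mathcal{H}_A$ with respect to which the (partial) transposes are defined. Both $\ptr{\rho}{XY}^T$ and $\ptr{C}{Y^{T_A}(\idmat_A \otimes \rho)X^{T_A}}$ lie in $\blt(\mathcal{H}_A)$ (on the left we trace out $\mathcal{H}_B$ with weight $\rho$, on the right we trace out $\mathcal{H}_C$), so it suffices to check $\bra{a_k}(\cdot)\ket{a_l}$ for all $k,l$. I also fix orthonormal bases $\{\ket{b_\mu}\}_\mu$ and $\{\ket{c_\kappa}\}_\kappa$ of $\mathcal{H}_B$ and $\mathcal{H}_C$, and I will freely insert resolutions of the identity on $\mathcal{H}_A\otimes\mathcal{H}_B$ and $\mathcal{H}_A\otimes\mathcal{H}_C$.

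For the left-hand side, I first use $\bra{a_k}Z^T\ket{a_l}=\bra{a_l}Z\ket{a_k}$ for $Z\in\blt(\mathcal{H}_A)$, and then test the defining relation of the $\rho$-partial trace against $\sigma=\ket{a_k}\bra{a_l}$. This yields the explicit formula
\[
    \bra{a_k}\ptr{\rho}{XY}^{T}\ket{a_l}=\bra{a_l}\ptr{\rho}{XY}\ket{a_k}=\sum_{\mu,\nu}\bra{b_\mu}\rho\ket{b_\nu}\,\bra{a_l b_\nu}XY\ket{a_k b_\mu}.
\]
Inserting a resolution of the identity on $\mathcal{H}_A\otimes\mathcal{H}_C$ (the intermediate space of the composition $XY$) between $X$ and $Y$ rewrites this purely in terms of matrix elements of $X$ and $Y$. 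For the right-hand side, with $W:=Y^{T_A}(\idmat_A\otimes\rho)X^{T_A}\in\blt(\mathcal{H}_A\otimes\mathcal{H}_C)$, I expand the ordinary partial trace over $\mathcal{H}_C$ as $\bra{a_k}\ptr{C}{W}\ket{a_l}=\sum_\kappa\bra{a_k c_\kappa}W\ket{a_l c_\kappa}$, insert resolutions of the identity on $\mathcal{H}_A\otimes\mathcal{H}_B$ on either side of $\idmat_A\otimes\rho$, and use $\bra{a_q b_\alpha}(\idmat_A\otimes\rho)\ket{a_r b_\beta}=\delta_{qr}\bra{b_\alpha}\rho\ket{b_\beta}$. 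The decisive step is then to remove the partial transposes: from the defining relation $\bra{a_i b_\mu}X\ket{a_j c_\kappa}=\bra{a_j b_\mu}X^{T_A}\ket{a_i c_\kappa}$ and its analogue $\bra{a_i c_\nu}Y\ket{a_j b_\mu}=\bra{a_j c_\nu}Y^{T_A}\ket{a_i b_\mu}$, I replace every matrix element of $X^{T_A}$ and $Y^{T_A}$ by one of $X$, respectively $Y$, with the two $\mathcal{H}_A$-indices interchanged.

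After these substitutions and a harmless relabelling of summation indices, both sides reduce to the single multiple sum
\[
    \sum_{\mu,\nu,p,\kappa}\bra{b_\mu}\rho\ket{b_\nu}\,\bra{a_l b_\nu}X\ket{a_p c_\kappa}\,\bra{a_p c_\kappa}Y\ket{a_k b_\mu},
\]
which establishes the identity. I expect the main obstacle to be the index bookkeeping: one must keep straight that $X$ and $Y$ map between $\mathcal{H}_A\otimes\mathcal{H}_B$ and $\mathcal{H}_A\otimes\mathcal{H}_C$ in \emph{opposite} directions, so that the $\ptr{\rho}$ on the left (tracing $\mathcal{H}_B$ with weight $\rho$) matches the $\ptr{C}$ on the right (an ordinary trace over $\mathcal{H}_C$) precisely because the two transposes on $\mathcal{H}_A$ swap the surviving $\mathcal{H}_A$-labels and effectively reverse the order of the $X$- and $Y$-factors. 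Getting the domain/codomain conventions of $X^{T_A}$ and $Y^{T_A}$ right — and confirming that the $\rho$-weights and the $\mathcal{H}_C$-contraction land on the correct slots — is the only genuinely delicate part; the rest is a routine expansion.
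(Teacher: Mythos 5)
Your proof is correct and takes essentially the same route as the paper's: a direct computation that uses the defining relations of the transpose, the partial transpose, and the $\rho$-partial trace, swaps the $\mathcal{H}_A$-indices via the partial transposes, and matches the resulting sums. The only cosmetic difference is that you compare matrix elements $\bra{a_k}\cdot\ket{a_l}$ (equivalently, test against rank-one $\sigma$) with full basis expansions on $\mathcal{H}_B$ and $\mathcal{H}_C$, whereas the paper pairs both sides against an arbitrary $\sigma\in\trcl(\mathcal{H}_A)$ and only inserts resolutions of the identity on the $\mathcal{H}_A$ factors; in the finite-dimensional setting where the lemma is used, these are equivalent.
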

\begin{proof}
The proof is a direct calculation. We present it in detail in Appendix \ref{TranslatingSupermapsAppendix}. 
\end{proof}

We are finally ready to prove Theorem \ref{Thm:CharacterizationGeneratorsSuperchannels}

\begin{proof} (Theorem \ref{Thm:CharacterizationGeneratorsSuperchannels})
The idea is to relate the generators of superchannels to semicausal maps. This relation is given by definition for preselecting superamps and by Theorem \ref{Thm:RelationQunatumSemicausalSuperchannels} for superchannels. For a generator $\hat{L}$ of a semigroup of preselecting supermaps $\{\hat{S_t}\}_{t \geq 0}$, we have
\begin{align*}
    \hat{L} =  \choi_{A;B}^{-1} \circ \frac{d}{dt}\bigg\vert_{t = 0} [\choi_{A;B} \circ \hat{S}_t \circ \choi_{A;B}^{-1}] \circ \choi_{A;B}.
\end{align*}
Thus $\hat{L}$ generates a semigroup of preselecting supermaps if and only if $\hat{L}$ can be written as $\hat{L} = \choi_{A;B}^{-1} \circ L \circ \choi_{A;B}$ for some generator $L$ of a semigroup of Schrödinger $B \not\to A$ semicausal CP-maps. Thus to prove the first part of our theorem, we can take the normal form in Corollary \ref{cor:SchrodingerSemicausalForm} and compute the similarity transformation above. 
We now execute this in detail. To start with, Corollary \ref{cor:SchrodingerSemicausalForm} tells us that $L(\rho) = \Phi_S(\rho) - K \rho - \rho K^\dagger$, where
\begin{subequations} \label{Eq:GeneralFormInProof}
\begin{align} 
    \Phi_S(\rho) &= \ptr{E}{V \rho V^\dagger}, \text{ with } V = (\idmat_A \otimes \tilde{U})(\tilde{A} \otimes \idmat_B) + (\idmat_A \otimes \tilde{B}),  \label{eq:GeneralFormInfDimA}\\ 
    K &=  (\idmat_A \otimes \tilde{B}^\dagger \tilde{U}) (\tilde{A} \otimes \idmat_B) + \frac{1}{2} \idmat_A \otimes \tilde{B}^\dagger \tilde{B} + \tilde{K}_A \otimes \idmat_B + \idmat_A \otimes i\tilde{H}_B.\label{eq:GeneralFormInfDimB}
\end{align}
\end{subequations}
for some unitary $\tilde{U} \in \unitary(\mathcal{H}_E \otimes \mathcal{H}_B; \mathcal{H}_B \otimes \mathcal{H}_E)$, some self-adjoint $\tilde{H}_B \in \blt(\mathcal{H}_B)$ and some operators $\tilde{A} \in \blt(\mathcal{H}_A; \mathcal{H}_A \otimes \mathcal{H}_E)$, $\tilde{B} \in \blt(\mathcal{H}_B; \mathcal{H}_B \otimes \mathcal{H}_E)$ and $\tilde{K}_A \in \blt(\mathcal{H}_A)$. 
In order to apply Lemma \ref{Lem:TranslationLemma}, we fix a unit vector $\ket{\xi} \in \mathcal{H}_E$ and define $\Xi_A := \idmat_A \otimes \ket{\xi} \in \blt(\mathcal{H}_A; \mathcal{H}_A \otimes \mathcal{H}_E)$ and $\Xi_B := \ket{\xi} \otimes \idmat_B \in \blt(\mathcal{H}_B; \mathcal{H}_E \otimes \mathcal{H}_B)$, so that $\idmat_A \otimes \tilde{B} = (\idmat_A \otimes \tilde{B}\Xi_B^\dagger)(\Xi_A \otimes \idmat_B)$. We can then write 
\begin{align*}
    \Phi_S(\rho) &= \ptr{E}{(\idmat_A \otimes \tilde{U})(\tilde{A} \otimes \idmat_B) \rho (\tilde{A}^\dagger \otimes \idmat_B)(\idmat_A \otimes U^\dagger)} + \ptr{E}{(\idmat_A \otimes \tilde{B}\Xi_B^\dagger)(\Xi_A \otimes \idmat_B) \rho (\Xi_A^\dagger \otimes \idmat_B)(\idmat_A \otimes \Xi_B\tilde{B}^\dagger)} \\
    &+ \ptr{E}{(\idmat_A \otimes \tilde{U})(\tilde{A} \otimes \idmat_B)\rho(\Xi_A^\dagger \otimes \idmat_B)(\idmat_A \otimes \Xi_B\tilde{B}^\dagger)} + \ptr{E}{(\idmat_A \otimes \tilde{B}\Xi_B^\dagger)(\Xi_A \otimes \idmat_B)\rho(\tilde{A}^\dagger \otimes \idmat_B)(\idmat_A \otimes U^\dagger)},
\end{align*}
which is an expression suitable for a term by term application of Lemma \ref{Lem:TranslationLemma}. Doing so yields
\begin{align*}
    \hat{\Phi}(T)(\rho) &:= (\choi^{-1}_{A;B} \circ \Phi_S \circ \choi_{A;B})(T)(\rho) \\
    &= \ptr{E}{U\; (T \otimes \idop_E)(A (\rho \otimes \sigma) A^\dagger) \;U^\dagger} + \ptr{E}{B\; (T \otimes \idop_E)((\rho \otimes \sigma) A^\dagger) \;U^\dagger} \\&+ \ptr{E}{U\; (T \otimes \idop_E)(A (\rho \otimes \sigma)) \;B^\dagger} + \ptr{E}{ B\; (T \otimes \idop_E)((\rho \otimes \sigma)) \;B^\dagger},
\end{align*}
where we defined $U := \tilde{U}\flip_{B;E}$, $B := \tilde{B}\Xi_B^\dagger \flip_{B;E}$, $A := \tilde{A}^{T_A} \Xi_A^\dagger$ and $\sigma := \ket{\xi}\bra{\xi}$. This proves Equation \eqref{Eq:CPPartTranslation}. Similarly, upon defining $\kappa_L(\rho) := K \rho$ we can write~\footnote{The partial trace $\ptr{\C}{\cdot}$ over the one-dimensional space $\C$ is just to ensure formal similarity with Lemma \ref{Lem:TranslationLemma}.} 
\begin{align*}
    \kappa_L(\rho) &= \ptr{E}{(\idmat_A \otimes \flip_{E;B} \Xi_B\tilde{B}^\dagger\tilde{U})(\tilde{A} \otimes \idmat_B) \rho (\Xi_A^\dagger \otimes \idmat_B)(\idmat_A \otimes \flip_{B;E})} + \ptr{E}{(\idmat_A \otimes \flip_{E;B} \Xi_B \tilde{B}^\dagger \tilde{B} \Xi_B^\dagger) (\Xi_A \otimes \idmat_B) \rho (\Xi_A^\dagger \otimes \idmat_B)(\idmat_A \otimes \flip_{B;E})}\\
    & + \ptr{\C}{(\idmat_A \otimes \idmat_B) (\tilde{K}_A \otimes \idmat_B) \rho (\idmat_A \otimes \idmat_B)(\idmat_A \otimes \idmat_B)} + \ptr{\C}{ (\idmat_A \otimes iH_B)(\idmat_A \otimes \idmat_B) \rho (\idmat_A \otimes \idmat_B)(\idmat_A \otimes \idmat_B)},
\end{align*}
and apply Lemma \ref{Lem:TranslationLemma} term by term, which yields 
\begin{align*}
    \hat{\kappa}_L(T)(\rho) &:= (\choi^{-1}_{A;B} \circ \kappa_L \circ \choi_{A;B})(T)(\rho)\\
    &= \ptr{E}{B^\dagger U\; (T \otimes \idop_E)(A (\rho \otimes \sigma))} + \frac{1}{2} \ptr{E}{B^\dagger B (T\otimes \idop_E)(\rho\otimes \sigma)}\;  + T(K_A\,\rho) + iH_B\, T(\rho),
\end{align*}
where $U$, $A$ and $B$ are defined as above and $K_A := (\tilde{K}_A)^T$ and $H_B := \tilde{H}_B$. 
An analogous calculation with $\kappa_R(\rho) := \rho K^\dagger$ and $\hat{\kappa}_R(T) := (\choi^{-1}_{A;B} \circ \kappa_R \circ \choi_{A;B})(T)$ finishes the proof of the first part, since the claim about the dimension of $\mathcal{H}_E$ follows form the corresponding statements in Theorem \ref{Thm:SemicausalInfiniteDimMainResult}.

To prove the second part, first remember that we have observed above that  Theorem \ref{Thm:RelationQunatumSemicausalSuperchannels} implies that $L$ is Schrödinger $B \not\to A$ semicausal, with $\ptr{B}{L(\rho)} = L^A(\ptr{B}{\rho})$. Furthermore, if we write $S_t = \choi_{A;B} \circ \hat{S}_t \circ \choi^{-1}_{A;B}$, then Theorem \ref{Thm:RelationQunatumSemicausalSuperchannels} implies that $S_t$ is Schrödinger $B \not\to A$ semicausal for all $t \geq 0$, with $\ptr{B}{S_t(\rho)} = S_t^A(\ptr{B}{\rho})$ and also $S_t^A(\idmat_A) = \idmat_A$ holds. Differentiating that expression at $t = 0$ yields the equivalent condition $L^A(\idmat_A) = 0$. So, our goal is to incorporate the last condition into the form of \eqref{Eq:GeneralFormInProof}.
To do so, we determine $L^A$ by calculating $\ptr{B}{L(\rho)}$, where $L$ is in the form of \eqref{Eq:GeneralFormInProof}. We obtain $\ptr{B}{L(\rho)} = \ptr{E}{\tilde{A}\,\ptr{B}{\rho}\,\tilde{A}^\dagger} - \tilde{K}_A\ptr{B}{\rho} - \ptr{B}{\rho} \tilde{K}_A^\dagger$. Thus, the condition $L^A(\idmat) = 0$ holds if and only if $\ptr{E}{\tilde{A}\tilde{A}^\dagger} = \tilde{K}_A + \tilde{K}_A^\dagger$. Transposing both sides of this equation and using that the definition of $A$ implies that $\tilde{A} = A^{T_A}\Xi_A$, yields $\left(\ptr{E}{A^{T_A}(\idmat_A \otimes \sigma)(A^\dagger)^{T_A}}\right)^T = K_A + K_A^\dagger$. But the left hand side is, by Lemma \ref{Lem:PartialTransposeCalculation}, equal to $\ptr{\sigma}{A^\dagger A}$. This proves the claim that $\hat{L}$ generates a semigroup of superchannels if and only if $\hat{L}$ is hyper-preselecting and $\ptr{\sigma}{A^\dagger A} = K_A + K_A^\dagger$. Finally,  defining $H_A := \frac{1}{2i}(K_A - K_A^\dagger)$ and a few rearrangements lead to \eqref{Eq:SuperchannelGenerators}.\\
\end{proof}

\section{Conclusion}\label{sct:conclusion}

\paragraph{Summary}
The underlying question of this work was: How can we mathematically characterize the processes that describe the aging of quantum devices? We have argued that, under a Markovianity assumption, such processes can be modeled by continuous semigroups of quantum superchannels. Therefore, the goal of this work was to provide a full characterization of such semigroups of superchannels.

We have derived such a general characterization in terms of the generators of these semigroups. Crucially, we have exploited that superchannels correspond to certain semicausal maps, and that therefore it suffices to characterize generators of semigroups of semicausal maps. We have demonstrated both an efficient procedure for checking whether a given generator is indeed a valid semicausal GKLS generator and a complete characterization of such valid semicausal GKLS generators. The latter is constructive in the sense that it can be used to describe parametrizations of these generators. Aside from the theoretical relevance of these results, they will be valuable in studying properties of these generators numerically. Finally, we have translated these results back to the level of superchannels, thus answering our initial question.

We have also posed and answered the classical counterpart of the above question. I.e., we have characterized the generators semigroups of classical superchannels and of semicausal non-negative maps. These results for the classical case might be of independent interest. From the perspective of quantum information theory, they provide a comparison helpful to understand and interpret the characterizations in the quantum case.

\paragraph{Outlook and open questions}

We conclude by presenting some open questions raised by our work. First, in our proof of the characterization of semicausal GKLS generators, we have described a procedure for constructing a semicausal CP-map associated to such a generator. We believe that this method can be applied to a wide range of problems. Determining the exact scope of this method is currently work in progress. 

Second, there is a wealth of results on the spectral properties of quantum channels and, in particular, semigroups of quantum channels. With the explicit form of generators of semigroups of superchannels now known, we can conduct analogous studies for semigroups of quantum superchannels. Understanding such spectral properties, and potentially how they differ from the properties in the scenario of quantum channels, would in particular lead to a better understanding of the asymptotic behavior of semigroups of superchannels, e.g., w.r.t.~entropy production \cite{PhysRevResearch.3.023096, Gour.2019}, the thermodynamics of quantum channels \cite{PhysRevLett.122.200601} or entanglement-breaking properties \cite{chen2020entanglement}.

A further natural question would be a quantum superchannel analogue of the Markovianity problem: When can a quantum superchannel $\hat{S}$ be written as $e^{\hat{L}}$ for some $\hat{L}$ that generates a semigroup of superchannels? Several works have investigated the Markovianity problem for quantum channels \cite{wolf2008assessing, cubitt2012complexity, cubitt2012extracting, onorati2021fitting} and a divisibility variant of this question, both for quantum channels and for stochastic matrices \cite{wolf2008dividing, bausch2016complexity, caro2021necessary}. It would be interesting to see how these results translate to quantum or classical superchannels. Similarly, we can now ask questions of reachability along Markovian paths. 
Yet another question aiming at understanding Markovianity: If we consider master equations arising from a Markovianity assumption on the underlying process formalized not via semigroups of channels, but instead via semigroups of superchannels, what are the associated classes of (time-dependent) generators and corresponding CPTP evolutions?

Two related directions, both of which will lead to a better understanding of Markovian structures in higher order quantum operations, are: Support our mathematical characterization of the generators of semigroups of superchannels by a physical interpretation, similar to the Monte Carlo wave function interpretation of Lindblad generators of quantum channels. And extend our characterization from superchannels to general higher order maps.

This work has focused on generators of general semigroups of superchannels, without further restrictions. For quantum channels and their Lindblad generators, there exists a well developed theory of locality, at the center of which are Lieb-Robinson bounds \cite{nachtergaele2019quasi}. If we put locality restrictions on generators of superchannels, how do these translate to the generated superchannels?

Finally, an important conceptual direction for future work is to identify further applications of our theory of dynamical semigroups of superchannels. 
In the introduction we gave a physical meaning to semigroups of superchannels by relating them to the decay process of quantum devices. 
This, however, is only one possible interpretation. For example, semigroups of superchannels might also describe a manufacturing process, where a quantum device is created layer-by-layer. We hope that other use-cases will be found in the future.


\section*{Acknowledgments}
Both M.C.C.~and M.H.~thank Michael M.~Wolf for insightful discussions about the contents of this paper. We also thank Li Gao, Lisa Hänggli, Robert König, and Farzin Salek for helpful suggestions for improving the presentation.
Also, M.C.C.~and M.H.~thank the anonymous reviewers from TQC 2022 and from the Journal of Mathematical Physics for their constructive criticism.
M.H.~was supported by the Bavarian excellence network ENB via the International PhD Programme of Excellence \emph{Exploring Quantum Matter} (EXQM).
M.C.C.~gratefully acknowledges support from the TopMath Graduate Center of the TUM Graduate School at the Technische Universität München, Germany, from the TopMath Program at the Elite Network of Bavaria, and from the German Academic Scholarship Foundation (Studienstiftung des deutschen Volkes).

\appendix

\section{Proof of Lemmas \ref{Lem:TranslationLemma} and \ref{Lem:PartialTransposeCalculation}} \label{TranslatingSupermapsAppendix}

In this appendix we provide a complete proof of Lemmas \ref{Lem:TranslationLemma} and \ref{Lem:PartialTransposeCalculation}. 

\begin{lem} \emph{(Restatement of Lemma \ref{Lem:TranslationLemma})}
Let $S : \blt(\mathcal{H}_A \otimes \mathcal{H}_B) \rightarrow \blt(\mathcal{H}_A \otimes \mathcal{H}_B)$ be given by
\begin{align*}
    S(X) = \ptr{E}{(\idmat_A \otimes L_B)(L_A \otimes \idmat_B) X (R_A^\dagger\otimes \idmat_B)(\idmat_A \otimes R_B^\dagger)},
\end{align*}
with Hilbert spaces $\mathcal{H}_C$ and $\mathcal{H}_E$, operators $L_A, R_A \in \blt(\mathcal{H}_A; \mathcal{H}_A \otimes \mathcal{H}_C)$ and $L_B, R_B \in \blt(\mathcal{H}_C \otimes \mathcal{H}_B; \mathcal{H}_B \otimes \mathcal{H}_E)$. Then, for $T \in \blt(\blt(\mathcal{H}_A); \blt(\mathcal{H}_B))$ and $\rho \in \blt(\mathcal{H}_A)$, 
\begin{align*}
    \left[\choi_{A;B}^{-1} \circ S \circ \choi_{A;B}\right](T)(\rho) = \ptr{E}{V_L (T\otimes \idop_C)\left(W_L \rho W_R^\dagger \right) V_R^\dagger},
\end{align*}
with $V_L = L_B \flip_{B;C}$, $V_R = R_B \flip_{B;C}$; and $W_L = L_A^{T_A}$, $W_R = R_A^{T_A}$. Here, the partial transpose on $\mathcal{H}_A$ is taken w.r.t. the basis used to define the Choi–Jamiołkowski isomorphism.
\end{lem}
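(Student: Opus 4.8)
The plan is to prove the identity by a direct computation: unfold the definitions of $\choi_{A;B}$ and $\choi_{A;B}^{-1}$, merge the partial traces, and then transport every operator that acts on the reference copy of $\mathcal{H}_A$ onto the copy on which $T$ acts. Writing $\tau := \choi_{A;B}(T) = (\idop_A\otimes T)(\ket{\Omega}\bra{\Omega})$ and using $\choi_{A;B}^{-1}(\cdot)(\rho) = \ptr{A}{(\rho^T\otimes\idmat_B)\,\cdot}$, I would first combine the outer partial trace over $\mathcal{H}_A$ with the partial trace over $\mathcal{H}_E$ inside $S$ (both commute with the operators not acting on them), obtaining
\begin{align*}
\left[\choi_{A;B}^{-1}\circ S\circ\choi_{A;B}\right](T)(\rho) = \ptr{AE}{(\rho^T\otimes\idmat_B\otimes\idmat_E)\,M\,\tau\,N},
\end{align*}
where $M = (\idmat_A\otimes L_B)(L_A\otimes\idmat_B)$ and $N = (R_A^\dagger\otimes\idmat_B)(\idmat_A\otimes R_B^\dagger)$.

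The technical heart is the generalized ricochet identity for $\ket{\Omega} = \sum_i\ket{a_i}\otimes\ket{a_i}$: for every $G\in\blt(\mathcal{H}_A;\mathcal{H}_A\otimes\mathcal{H}_C)$ one has $(G\otimes\idmat_A)\ket{\Omega} = (\idmat_A\otimes\flip_{A;C}G^{T_A})\ket{\Omega}$, and in the special case where the ancilla is trivial this reduces to $(\rho^T\otimes\idmat_A)\ket{\Omega} = (\idmat_A\otimes\rho)\ket{\Omega}$. I would verify this once by a short index computation; it is precisely the operator-valued analogue of the scalar transpose trick already used in the classical section. The operators $\rho^T$, $L_A$, and $R_A^\dagger$ all act on the reference factor $\mathcal{H}_A$ of $\tau$, so applying the ricochet identity on the left (for $L_A$ and $\rho^T$) and its adjoint version on the right (for $R_A^\dagger$) moves each of them across $\ket{\Omega}\bra{\Omega}$ onto the copy carrying $T$. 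Crucially, the transpose coming from $\choi_{A;B}^{-1}$ cancels the transpose produced by the ricochet for $\rho$, leaving $\rho$ acting unconjugated, whereas $L_A$ and $R_A$ each pick up a single partial transpose, producing exactly $W_L = L_A^{T_A}$ and $W_R = R_A^{T_A}$.

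After this transport, the operators $\rho$, $W_L$, $W_R$, and $T$ all act on the single surviving copy of $\mathcal{H}_A$ (now the input leg of $T$) together with the ancilla $\mathcal{H}_C$, and they assemble into $(T\otimes\idop_C)(W_L\,\rho\,W_R^\dagger)\in\blt(\mathcal{H}_B\otimes\mathcal{H}_C)$. The remaining factors $L_B$ and $R_B^\dagger$ act on the output legs $\mathcal{H}_B\otimes\mathcal{H}_C$, and the flip operators generated by the ricochet combine with the tensor-factor ordering of the $\mathcal{H}_B$ and $\mathcal{H}_C$ legs to yield precisely $V_L = L_B\flip_{B;C}$ and $V_R = R_B\flip_{B;C}$. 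Collecting everything and retaining the trace over $\mathcal{H}_E$ then gives $\ptr{E}{V_L\,(T\otimes\idop_C)(W_L\,\rho\,W_R^\dagger)\,V_R^\dagger}$, as claimed.

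I expect the main obstacle to be purely bookkeeping rather than conceptual: correctly tracking the ordering of the tensor factors $\mathcal{H}_A,\mathcal{H}_B,\mathcal{H}_C,\mathcal{H}_E$ through the two partial traces and ensuring each flip $\flip_{B;C}$ lands on the correct legs, so that the ancilla $\mathcal{H}_C$ appearing on the output of $T$ is matched with the $\mathcal{H}_C$-input of $L_B$ and $R_B$. Beyond the ricochet identity there is no genuine difficulty; the care lies entirely in not dropping or misplacing a flip, which is why the cleanest route is to carry out the substitution term by term and simplify at each stage.
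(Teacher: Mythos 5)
Your proposal is correct, and it reaches the identity by the same basic mechanism as the paper --- transporting $\rho^T$, $L_A$ and $R_A^\dagger$ from the reference leg of the Choi matrix onto the input leg of $T$, where they pick up the partial transposes $W_L = L_A^{T_A}$, $W_R = R_A^{T_A}$ and the flips $\flip_{B;C}$ --- but you organize it around a different key lemma. The paper's proof in Appendix \ref{TranslatingSupermapsAppendix} is a raw basis expansion: it inserts orthonormal bases of $\mathcal{H}_A$ and $\mathcal{H}_C$, converts the $A$- and $C$-side matrix elements using full transposes and complex conjugates ($L_A^T$, $\overline{R}_A$), and then reassembles the index sums into $V_L (T\otimes\idop_C)(W_L\rho W_R^\dagger)V_R^\dagger$. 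You instead invoke the operator-level ricochet identity $(G\otimes\idmat_A)\ket{\Omega} = (\idmat_A\otimes\flip_{A;C}G^{T_A})\ket{\Omega}$, which is exactly the tool the paper uses for the classical counterpart of this statement (Theorem \ref{thm:normalFormClassicalSuperchannels}); in effect you port the classical proof to the quantum setting. Your route is more modular and makes the cancellation of the two transposes on $\rho$ and the origin of $W_L,W_R$ transparent, at the cost of two auxiliary facts that must be stated and verified: the product rule $\bigl((\rho^T\otimes\idmat_C)L_A\bigr)^{T_A} = L_A^{T_A}\rho$, and the commutation of the superoperator with flip conjugation, $(\idop_C\otimes T)\bigl[\flip_{A;C}\,X\,\flip_{C;A}\bigr] = \flip_{B;C}\bigl[(T\otimes\idop_C)(X)\bigr]\flip_{C;B}$. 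The latter, together with the collapse of the reference leg under $\mathrm{tr}_A$ (which replaces the reference structure of $\ket{\Omega}\bra{\Omega}$ by an identity inside $T$'s argument and so produces $W_L\rho W_R^\dagger$), is precisely the bookkeeping you flag as the main obstacle; both steps check out, so there is no gap.
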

\begin{proof}
Let $\{\ket{e_i}\}_i$ be the orthonormal basis of $\mathcal{H}_A$, w.r.t.~which the Choi–Jamiołkowski isomorphism is defined. Let $\{\ket{c_n}\}_n$ be an orthonormal basis of $\mathcal{H}_C$. Then the formal calculation, which is an algebraic version of drawing the corresponding tensor-network pictures, can be executed as follows:
\begin{align*}
    \left[\choi_{A;B}^{-1} \circ S \circ \choi_{A;B}\right](T)(\rho) &= \ptr{A}{(\rho^T \otimes \idmat_B)\, \ptr{E}{(\idmat_A \otimes L_B)(L_A \otimes \idmat_B) \choi_{A;B}(T) (R_A^\dagger \otimes \idmat_B)(\idmat_A \otimes R_B^\dagger)}} \\
    &= \ptr{E}{L_B\, \ptr{A}{(\rho^T \otimes \idmat_C \otimes \idmat_B)(L_A \otimes \idmat_B) \choi_{A;B}(T) (R_A^\dagger \otimes \idmat_B)} R_B^\dagger} \\
    &= \sum_{i, j}\ptr{E}{L_B\, \left( \ptr{A}{(\rho^T \otimes \idmat_C) L_A \ket{e_i} \bra{e_j} R_A^\dagger} \otimes T(\ket{e_i}\bra{e_j} \right) R_B^\dagger} \\
    &= \sum_{i, j, k, m, n} \braket{e_k\,c_n}{\left((\rho^T \otimes \idmat_C) L_A \ket{e_i} \bra{e_j} R_A^\dagger\right)\, e_k\,c_m}\; \ptr{E}{L_B\,  \left(\ket{c_n}\bra{c_m} \otimes T(\ket{e_i}\bra{e_j})\right) R_B^\dagger}\\
    &=  \sum_{i, j, m, n} \braket{e_i}{\left( L_A^T (\rho \otimes \ket{c_n}\bra{c_m}) \overline{R}_A \right)\,e_j}\; \ptr{E}{L_B\,  \left(\ket{c_n}\bra{c_m} \otimes T(\ket{e_i}\bra{e_j})\right) R_B^\dagger} \\
    &= \sum_{m, n} \ptr{E}{L_B\,  \left(\ket{c_n}\bra{c_m} \otimes T\left(L_A^T (\rho \otimes \ket{c_n}\bra{c_m}) \overline{R}_A \right)\right) R_B^\dagger}\\
    &= \ptr{E}{L_B \flip_{B;C}  (T\otimes \idop_C)\left(\left[\sum_n (\idmat_A \otimes \ket{c_n})L_A^T(\idmat_A \otimes \ket{c_n})\right]\, \rho \,  \left[\sum_m (\idmat_A \otimes \ket{c_m})R_A^T(\idmat_A \otimes \ket{c_m})  \right]^\dagger\right) \flip_{B;C} R_B^\dagger} \\
    &= \ptr{E}{V_L (T\otimes \idop_C)\left(W_L \rho W_R^\dagger \right) V_R^\dagger}.
\end{align*}
\end{proof}

\begin{lem}
Let $X \in \blt(\mathcal{H}_A \otimes \mathcal{H}_C; \mathcal{H}_A \otimes \mathcal{H}_B)$, $Y \in \blt(\mathcal{H}_A \otimes \mathcal{H}_B; \mathcal{H}_A \otimes \mathcal{H}_C)$, $\rho \in \trcl(\mathcal{H}_B)$. Then $\ptr{\rho}{XY}^T = \ptr{C}{Y^{T_A}(\idmat_A \otimes \rho)X^{T_A}}$.
\end{lem}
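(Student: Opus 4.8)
The plan is to reduce the identity to an entrywise computation in the orthonormal basis $\{\ket{a_i}\}_i$ of $\mathcal{H}_A$ with respect to which the (partial) transpose is defined, and then close the argument with cyclicity of the trace. As a preliminary step I would rewrite the weighted partial trace as an honest partial trace: applying the defining relation of $\mathrm{tr}_\rho$ now with the reference system $\mathcal{H}_B$, for any $Z \in \blt(\mathcal{H}_A \otimes \mathcal{H}_B)$ one has $\ptr{\rho}{Z} = \ptr{B}{(\idmat_A \otimes \rho) Z} \in \blt(\mathcal{H}_A)$. This is checked by pairing both sides with an arbitrary $\sigma \in \trcl(\mathcal{H}_A)$ and invoking the defining property of $\mathrm{tr}_B$, where the $^\ast$-ideal property ensures that $(\idmat_A \otimes \rho) Z$ is trace class in the $B$-slot. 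The same $^\ast$-ideal property guarantees that every operator appearing under $\ptr{C}{\cdot}$ on the right-hand side is trace class, so both sides are well defined.

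Next I would expand $X = \sum_{i,j} \ket{a_i}\bra{a_j} \otimes X_{ij}$ and $Y = \sum_{k,l} \ket{a_k}\bra{a_l} \otimes Y_{kl}$, with blocks $X_{ij} \in \blt(\mathcal{H}_C; \mathcal{H}_B)$ and $Y_{kl} \in \blt(\mathcal{H}_B; \mathcal{H}_C)$. From the definition of the partial transpose, the two $A$-labels of each block are swapped while the block itself is left in place, i.e. $X^{T_A} = \sum_{i,j} \ket{a_i}\bra{a_j} \otimes X_{ji}$ and $Y^{T_A} = \sum_{k,l} \ket{a_k}\bra{a_l} \otimes Y_{lk}$. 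Computing the left-hand side then gives $XY = \sum_{i,j,l} \ket{a_i}\bra{a_l} \otimes X_{ij}Y_{jl}$, whence $(\ptr{\rho}{XY})_{ab} = \sum_j \tr{\rho\, X_{aj} Y_{jb}}$, and transposing the surviving $A$-operator swaps $a,b$ to yield $(\ptr{\rho}{XY}^T)_{ab} = \sum_j \tr{\rho\, X_{bj} Y_{ja}}$. On the other side, multiplying out $Y^{T_A}(\idmat_A \otimes \rho) X^{T_A}$ and tracing over $\mathcal{H}_C$ produces $(\ptr{C}{Y^{T_A}(\idmat_A \otimes \rho) X^{T_A}})_{ab} = \sum_l \tr{Y_{la}\, \rho\, X_{bl}}$.

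Finally I would match the two expressions: by cyclicity of the trace, $\tr{\rho\, X_{bj} Y_{ja}} = \tr{Y_{ja}\, \rho\, X_{bj}}$, and renaming the summation index $j \to l$ shows the two entrywise formulas agree for all $a,b$, which is exactly the claimed operator identity. The only real obstacle here is the index bookkeeping, and in particular getting the partial-transpose convention right: one must confirm which of the two $A$-indices is flipped in each block and that the full transpose acting on the reduced operator $\ptr{\rho}{XY}$ swaps precisely the two remaining $A$-labels. Once this convention is pinned down, the statement collapses to cyclicity of the trace, so no analytic subtleties beyond the trace-class justifications above arise (and indeed the argument works verbatim whenever $\mathcal{H}_A$ is finite-dimensional, as in the setting where the Choi–Jamio{\l}kowski isomorphism and transpose are defined).
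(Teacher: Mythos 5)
Your proof is correct and in substance matches the paper's own argument: both are direct computations in the orthonormal basis of $\mathcal{H}_A$ that hinge on getting the partial-transpose convention right and close with cyclicity of the trace. The only difference is organizational — the paper verifies the identity by pairing both sides with an arbitrary $\sigma \in \trcl(\mathcal{H}_A)$ and manipulating under a global trace (a duality argument using the defining properties of $\ptr{\rho}{\cdot}$ and $\ptr{C}{\cdot}$), whereas you compute the matrix entries of both sides explicitly; in the finite-dimensional setting where the transpose is defined, the two formulations are interchangeable.
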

\begin{proof}
Let $\{\ket{a_i}\}_i$ be the orthonormal basis w.r.t. which the transposition is taken. Using the general identity $\tr{M^T} = \tr{M}$, the definition of the trace w.r.t. a trace-class operator and the cyclicity of the trace, we obtain for every $\sigma \in \trcl(\mathcal{H}_A)$,
\begin{align*}
    \tr{\sigma \ptr{\rho}{XY}^T} &= \tr{\sigma^T \ptr{\rho}{XY}} \\
    &= \tr{(\sigma^T \otimes \rho) XY} \\
    &= \sum_{i,j,k} \tr{(\bra{a_i} \otimes \idmat_B) (\sigma^T \otimes \rho) (\ket{a_j}\bra{a_j} \otimes \idmat_B) X (\ket{a_k}\bra{a_k}\otimes \idmat_C) Y (\ket{a_i}\otimes \idmat_B)} \\
    &= \sum_{i,j,k} \tr{(\bra{a_j} \otimes \idmat_B) (\sigma \otimes \rho) (\ket{a_i}\bra{a_k} \otimes \idmat_B) X^{T_A} (\ket{a_j}\bra{a_i}\otimes \idmat_C) Y^{T_A} (\ket{a_k}\otimes \idmat_B)} \\
    &= \sum_k \tr{\rho (\bra{a_k}\otimes \idmat_B) X^{T_A} \left(\left(\sum_{i,j} \braket{a_j}{\sigma \,a_i} \ket{a_i}\bra{a_j} \right) \otimes \idmat_C \right) Y^{T_A} (\ket{a_k} \otimes \idmat_B) } \\
    &= \tr{(\idmat_A \otimes \rho) X^{T_A} (\sigma \otimes \idmat_C) Y^{T_A}} \\
    &= \tr{\sigma\ptr{C}{Y^{T_A} (\idmat_A \otimes \rho) X^{T_A}}}.
\end{align*}
This proves the claim.
\end{proof}

\section{No information without disturbance} \label{Ap:InformationDisturbanceLemma}

Here we prove a `no information without disturbance'-like lemma that yielded a useful interpretation in the main text. 
\begin{lem}
Let $T \in \mathrm{CP}_\sigma(\mathcal{H}_A \otimes \mathcal{H}_B)$ be such that
\begin{align} \label{Eq:SpecialSemicausality}
    T(X_A \otimes \idmat_B) = X_A \otimes \idmat_B, \tag{B1}
\end{align}
for all $X_A \in \blt(\mathcal{H}_A)$. Then $T(X) = (\idmat_A \otimes W^\dagger) (X \otimes \idmat_E) (\idmat_A \otimes W)$, for all $X \in \blt(\mathcal{H}_A \otimes \mathcal{H}_B)$ and some isometry $W \in \blt(\mathcal{H}_B; \mathcal{H}_B \otimes \mathcal{H}_E)$, where $\mathcal{H}_E$ is some Hilbert space. 
\end{lem}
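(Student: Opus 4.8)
The plan is to combine the Stinespring dilation of $T$ with a Kadison--Schwarz-type intertwining argument. First I would invoke the Stinespring dilation theorem for normal CP-maps (as recalled in the preliminaries) to write $T(X) = V^\dagger (X \otimes \idmat_E) V$ for some separable Hilbert space $\mathcal{H}_E$ and some $V \in \blt(\mathcal{H}_A \otimes \mathcal{H}_B; \mathcal{H}_A \otimes \mathcal{H}_B \otimes \mathcal{H}_E)$. Setting $X_A = \idmat_A$ in the hypothesis \eqref{Eq:SpecialSemicausality} shows $T(\idmat_{AB}) = \idmat_{AB}$, hence $V^\dagger V = \idmat_{AB}$, so $V$ is an isometry.

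The key step is to show that $V$ intertwines the left action of $\blt(\mathcal{H}_A)$, namely that
\[
    (X_A \otimes \idmat_B \otimes \idmat_E)\, V = V\,(X_A \otimes \idmat_B) \qquad \text{for all } X_A \in \blt(\mathcal{H}_A).
\]
I would prove this by a Schwarz computation: writing $a := X_A \otimes \idmat_B$ and $\pi(a) := a \otimes \idmat_E$ (so that $\pi$ is a unital $^\ast$-homomorphism and $T(Y) = V^\dagger \pi(Y) V$), and using $V^\dagger V = \idmat$, I would expand
\[
    \norm{(\pi(a) V - V a)\ket{\psi}}^2 = \braket{\psi}{\left(T(a^\dagger a) - T(a)^\dagger a - a^\dagger T(a) + a^\dagger a\right)\psi}.
\]
The hypothesis gives $T(a) = a$ and, because $a^\dagger a = X_A^\dagger X_A \otimes \idmat_B$ is again of the special form in \eqref{Eq:SpecialSemicausality}, also $T(a^\dagger a) = a^\dagger a$. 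Substituting these, the right-hand side collapses to $a^\dagger a - a^\dagger a - a^\dagger a + a^\dagger a = 0$, which yields the intertwining relation since $\ket{\psi}$ is arbitrary.

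With the intertwining relation in hand, I would apply the standard fact that any bounded $V : \mathcal{H}_A \otimes \mathcal{K} \to \mathcal{H}_A \otimes \mathcal{K}'$ satisfying $(X_A \otimes \idmat_{\mathcal{K}'}) V = V(X_A \otimes \idmat_{\mathcal{K}})$ for all $X_A$ must have the form $V = \idmat_A \otimes W$ for a unique $W \in \blt(\mathcal{K}; \mathcal{K}')$; here $\mathcal{K} = \mathcal{H}_B$ and $\mathcal{K}' = \mathcal{H}_B \otimes \mathcal{H}_E$. Concretely, fixing an orthonormal basis $\{\ket{a_i}\}_i$ of $\mathcal{H}_A$ and setting $V_{ij} := (\bra{a_i} \otimes \idmat_{\mathcal{K}'})\, V\, (\ket{a_j} \otimes \idmat_{\mathcal{K}})$, the intertwining relation tested against $X_A = \ket{a_k}\bra{a_l}$ forces $V_{ij} = \delta_{ij} W$ with $W := V_{11}$. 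Substituting $V = \idmat_A \otimes W$ into the Stinespring form gives exactly $T(X) = (\idmat_A \otimes W^\dagger)(X \otimes \idmat_E)(\idmat_A \otimes W)$, while $V^\dagger V = \idmat_{AB}$ forces $W^\dagger W = \idmat_B$, so $W$ is an isometry, as claimed.

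The main obstacle, and the only place requiring genuine care, is the passage to the infinite-dimensional setting in this last step: the decomposition $V = \idmat_A \otimes W$ must be justified through weak/strong-operator convergence of the basis expansion rather than a finite sum, which I expect to be routine given the separability of the spaces and the boundedness of $V$. An alternative, more algebraic route avoids the explicit Schwarz expansion by noting that $\blt(\mathcal{H}_A) \otimes \idmat_B$ lies in the multiplicative domain of $T$, so that $T$ is a bimodule map over this subalgebra; this forces $T(\idmat_A \otimes X_B)$ into the commutant $\idmat_A \otimes \blt(\mathcal{H}_B)$, hence $T = \idop_A \otimes \Theta$ for a unital $\Theta \in \mathrm{CP}_\sigma(\mathcal{H}_B)$, after which dilating $\Theta$ produces the isometry $W$.
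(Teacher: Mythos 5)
Your proposal is correct, but it takes a genuinely different route from the paper. The paper's proof is essentially two lines: writing $T(X)=V^\dagger(X\otimes\idmat_E)V$, the hypothesis says that $V$ and $\idmat_{AB}$ are Stinespring operators of one and the same CP-map $X_A\mapsto X_A\otimes\idmat_B$ (with environments $\mathcal{H}_B\otimes\mathcal{H}_E$ and $\mathcal{H}_B$, respectively), and since the dilation given by $\idmat_{AB}$ is manifestly minimal, the uniqueness theorem for minimal Stinespring dilations recalled in the preliminaries immediately produces an isometry $W\in\blt(\mathcal{H}_B;\mathcal{H}_B\otimes\mathcal{H}_E)$ with $V=(\idmat_A\otimes W)\idmat_{AB}$, which is the claim. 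You instead establish the intertwining relation $(X_A\otimes\idmat_B\otimes\idmat_E)V=V(X_A\otimes\idmat_B)$ by hand via the multiplicative-domain/Schwarz expansion $\norm{(\pi(a)V-Va)\ket{\psi}}^2=\bra{\psi}\left(T(a^\dagger a)-T(a^\dagger)a-a^\dagger T(a)+a^\dagger a\right)\ket{\psi}=0$, which is sound because $a$, $a^\dagger$ and $a^\dagger a$ are all again of the special form and $V^\dagger V=T(\idmat_{AB})=\idmat_{AB}$; you then invoke the structure theorem for intertwiners of ampliations to conclude $V=\idmat_A\otimes W$, and the basis-expansion argument behind that theorem does go through in infinite dimensions with strong-operator convergence (in fact this commutation lemma appears, commented out, in the paper's own source). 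In effect, your two steps re-derive the special case of Stinespring uniqueness that the paper quotes as a black box: the paper's route buys brevity and structural parallelism with its proof of the semicausal-equals-semilocalizable theorem, while yours buys self-containedness, using only the existence of a dilation plus elementary computations. Your closing alternative, that the multiplicative-domain argument forces $T=\idop_A\otimes\Theta$ for a unital $\Theta\in\mathrm{CP}_\sigma(\mathcal{H}_B)$, is also valid and matches how the lemma is actually used in the main text.
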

\begin{proof}
This claim follows from the uniqueness of the minimal Stinespring dilation in the same way as the ``semicausal = semilocalizable'' Theorem. Write Eq.~\eqref{Eq:SpecialSemicausality} in Stinespring form as
\begin{align*}
    V^\dagger (X_A \otimes \idmat_B \otimes \idmat_E) V = X_A \otimes \idmat_B,
\end{align*}
for some $V \in \blt(\mathcal{H}_A \otimes \mathcal{H}_B; \mathcal{H}_A \otimes \mathcal{H}_B \otimes \mathcal{H}_C)$. Then $V$ and $\idmat_{AB}$ are the Stinespring operators of the same CP-map ($X_A \mapsto X_A \otimes \idmat_B$) and the latter clearly belongs to a minimal dilation. Thus there exists an isometry $W \in \blt(\mathcal{H}_B; \mathcal{H}_B \otimes \mathcal{H}_E)$ such that $V = (\idmat_A \otimes W) \idmat_{AB}$. This is the claim. 
\end{proof}

Note that the lemma above is just a formulation of the `obvious' fact that if system $A$ undergoes a closed system evolution ($\idop_A$), then there is no interaction with an external system $B$. 

\section{Constructive Approach to Theorem \ref{Thm:SemicausalInfiniteDimMainResult}} \label{Appe:ComputationalConstruction}

In this appendix, we are going to describe in detail, how one can computationally construct the operators $A$, $U$, $B$, $K_A$ and $H_B$ in Theorem \ref{Thm:SemicausalInfiniteDimMainResult}, if the conditions of Lemma \ref{Lem:VerifySemicausality} are met. 

Since it is important for an actual implementation on a computer, let us be very precise about notation. We introduce indexed copies of $\mathcal{H}_A$ and $\mathcal{H}_B$, i.e. $\mathcal{H}_{A_0} = \mathcal{H}_{A_1} = \mathcal{H}_{A_2} = \mathcal{H}_{A}$ and $\mathcal{H}_{B_0} = \mathcal{H}_{B_1} = \mathcal{H}_{B_2} = \mathcal{H}_{B}$. Furthermore, we fix orthonormal bases $\{\ket{a_i}\}_{i = 1}^{d_A}$ and $\{\ket{b_i}\}_{i = 1}^{d_B}$ of $\mathcal{H}_A$ and $\mathcal{H}_B$, respectively. We use the symbol $\Omega$ with some subscript to denote the maximally entangled state on various systems. For example $\ket{\Omega_{A_1; A_2}} := \sum_i \ket{a_i}\otimes \ket{a_i} \in \mathcal{H}_{A_1}\otimes \mathcal{H}_{A_2}$ and $\ket{\Omega_{A_1B_1; A_2B_2}} =  \sum_{i,j} \ket{a_i} \otimes \ket{b_j} \otimes \ket{a_i}\otimes \ket{b_j} \in \mathcal{H}_{A_1} \otimes \mathcal{H}_{B_1} \otimes \mathcal{H}_{A_2} \otimes \mathcal{H}_{B_2}$. We further reserve reserve $P \in \blt(\mathcal{H}_{A_1} \otimes \mathcal{H}_{B_1} \otimes \mathcal{H}_{A_2} \otimes \mathcal{H}_{B_2})$ for the orthogonal projection onto $\mathrm{span}\{\ket{\Omega_{A_1B_1; A_2B_2}}\}$ (i.e. $P = (d_A d_B)^{-1} \ket{\Omega_{A_1B_1; A_2B_2}}\bra{\Omega_{A_1B_1; A_2B_2}}$) and take $P^\bot = \idmat_{A_1B_1A_2B_2} - P$. 

Now, let $\mathfrak{L} \in \blt(\mathcal{H}_{A_1} \otimes \mathcal{H}_{B_1} \otimes \mathcal{H}_{A_2} \otimes \mathcal{H}_{B_2})$ be given as in Lemma \ref{Lem:VerifySemicausality} then we can compute the operators $A$, $U$, $B$, $K_A$ and $H_B$ via the following fifteen steps:

\begin{enumerate}
    \item Compute $\tau = P^\bot \mathfrak{L} P^\bot$.
    \item Compute $V = (\idmat_{A_0B_0} \otimes \sqrt{\tau})(\ket{\Omega_{A_0B_0;A_1B_1}} \otimes \idmat_{A_2B_2})$.
    \item Define $\mathcal{H}_E := \mathcal{H}_{A_1} \otimes \mathcal{H}_{B_1} \otimes \mathcal{H}_{A_2} \otimes \mathcal{H}_{B_2}$, so that $V \in \blt(\mathcal{H}_{A} \otimes \mathcal{H}_{B}; \mathcal{H}_{A} \otimes \mathcal{H}_{B} \otimes \mathcal{H}_E)$. (identification)
    \item Compute $B = \frac{1}{d_A} \ptr{A}{V}$.
    \item Compute $V_{sc} = V - \idmat_A \otimes B$.
    \item Compute $\tau_{sc} = (\idmat_{A_1B_1} \otimes V_{sc})^\dagger (\ket{\Omega_{A_1B_1;AB}}\bra{\Omega_{A_1B_1;AB}} \otimes \idmat_E) (\idmat_{A_1B_1} \otimes V_{sc}) \in \blt(\mathcal{H}_{A_1} \otimes \mathcal{H}_{B_1} \otimes \mathcal{H}_{A} \otimes \mathcal{H}_{B})$.
    \item Choose any unit vector $\ket{\beta} \in \mathcal{H}_{B}$.
    \item Compute $\tau^A_{sc} = (\idmat_{A_1A_2} \otimes \bra{\beta}) \ptr{B_1}{\tau_{sc}} (\idmat_{A_1A_2} \otimes \ket{\beta})$.
    \item Compute $\mathcal{H}_F = \mathrm{range}(\sqrt{\tau^A_{sc}})$, so that $\sqrt{\tau_{sc}^A} \in \blt(\mathcal{H}_{A_1}\otimes \mathcal{H}_{A_2}; \mathcal{H}_F)$ is surjective.
    \item Compute $A = (\idmat_{A_0} \otimes \sqrt{\tau^A_{sc}})(\ket{\Omega_{A_0;A_1}} \otimes \idmat_{A_2})$.
    \item Compute $U$ as \textit{the} solution of the system of linear equations $\mathcal{M}(U) = V_{sc}$, where the $d_A^2d_B^2d_E \times d_F d_B^2d_E$-matrix $\mathcal{M} : \blt(\mathcal{H}_F \otimes \mathcal{H}_B; \mathcal{H}_B \otimes \mathcal{H}_E) \rightarrow \blt(\mathcal{H}_A \otimes \mathcal{H}_B; \mathcal{H}_A \otimes \mathcal{H}_B \otimes \mathcal{H}_E)$ is defined by $\mathcal{M}(U) = (\idmat_A \otimes U)(A \otimes \idmat_B)$. Clearly, we must first represent $\mathcal{M}$ w.r.t. some basis. 
    \item Compute $K = -\ptr{A_1B_1}{P\mathfrak{L}P^\bot + \frac{1}{2} \tr{P\mathfrak{L}} P}$, where we identify $\mathcal{H}_{A_2}\otimes \mathcal{H}_{B_2} = \mathcal{H}_{A}\otimes \mathcal{H}_{B}$ so that $K \in \blt(\mathcal{H}_A \otimes \mathcal{H}_B)$
    \item Compute $K_{sc} = K - (\idmat_A \otimes B^\dagger)V_{sc} - \frac{1}{2} \idmat_A \otimes B^\dagger B$.
    \item Compute $K_A = \frac{1}{d_B}\ptr{B}{K_{sc}}$.
    \item Compute $H_B = \frac{-i}{d_A} \ptr{A}{K_{sc} - K_A \otimes \idmat_B}$. 
\end{enumerate}
Note that the procedure above computes an isometry $U \in \blt(\mathcal{H}_F \otimes \mathcal{H}_B; \mathcal{H}_B \otimes \mathcal{H}_E)$ which can then be extended to a unitary, if necessary. In that case we also have to embed $\mathcal{H}_F$ into $\mathcal{H}_E$ and redefine $A$ accordingly. More precisely, we need to execute the following additional steps
\begin{enumerate}
    \setcounter{enumi}{15}
    \item Compute $\idmat_{F \rightarrow E} = \idmat_{A_1} \otimes \ket{\beta}_{B_1} \otimes \idmat_{A_2} \otimes \ket{\beta}_{B_2}$
    \item Redefine $A \leftarrow (\idmat_{A_0} \otimes \idmat_{F \rightarrow E})A$
    \item Extend $U$ via the following steps :
    \begin{enumerate}
        \item Compute $\hat{U} = U(\idmat_{F \rightarrow E}^\dagger \otimes \idmat_B)$. 
        \item Compute an orthonormal basis $\{\ket{f^\bot_i}\}_{i = 1}^N$ of $\mathrm{range}(\idmat_{EB} - \hat{U}^\dagger \hat{U})$.
        \item Compute an orthonormal basis $\{\ket{r^\bot_i}\}_{i = 1}^N$ of $\mathrm{range}(\idmat_{BE} - \hat{U} \hat{U}^\dagger)$.
        \item Redefine $U \leftarrow \hat{U} + \sum_{i = 1}^N \ket{r_i^\bot}\bra{f_i^\bot}$.
    \end{enumerate}
\end{enumerate}

Let us comment on why the steps above give the right result. In general, we have $$\mathfrak{L} = P^\bot \mathfrak{L} P^\bot + P\mathfrak{L}P^\bot + P^\bot \mathfrak{L} P + P\mathfrak{L}P = \tau + (P\mathfrak{L}P^\bot + \frac{1}{2} \tr{P\mathfrak{L}} P) + (P^\bot\mathfrak{L}P + \frac{1}{2} \tr{P\mathfrak{L}} P).$$ Thus the maps $\Phi$ and $K$ appearing in the GKLS-form in Theorem \ref{Thm:SemicausalInfiniteDimMainResult} can be extracted from the previous equation by applying the inverse of the Choi–Jamiołkowski isomorphism. One readily obtains $\Phi = \choi_{AB;AB}^{-1} \circ \tau$ and $K = -\ptr{A_1B_1}{P\mathfrak{L}P^\bot + \frac{1}{2} \tr{P\mathfrak{L}} P}$.  

\begin{itemize}
    \item Step 2 computes the Stinespring dilation of a CP-map whose Choi–Jamiołkowski operator is $\tau$. A direct computation shows that $\tau = (\idmat_{A_1B_1} \otimes V)^\dagger (\ket{\Omega_{A_1B_1;A_2B_2}}\bra{\Omega_{A_1B_1;A_2B_2}} \otimes \idmat_E) (\idmat_{A_1B_1} \otimes V)$.
    \item Step 4 computes the operator $B$ in the representation. In the proof of Theorem \ref{Thm:SemicausalInfiniteDimMainResult}, $B$ was obtained from $\tilde{B}$, which in turn was obtained from $V$ and Lemma \ref{KeyLemmaInf}. In the finite-dimensional setting, Lemma \ref{KeyLemmaInf} constructs $B$ exactly as is written down above.
    \item Steps 6,7 and 8 define $\tau_{sc}$ as the Choi–Jamiołkowski operator of a CP-map with Stinespring operator $V_{sc}$. Thus, according to the proof of Theorem \ref{Thm:SemicausalInfiniteDimMainResult}, $\tau$ is the Choi–Jamiołkowski of a Heisenberg $B \not \to A$ semicausal map. And semicausality is expressed on the level of Choi–Jamiołkowski operators by the existence of an operator $\tau_{sc}^A$, such that $\ptr{B_1}{\tau_{sc}} = \tau_{sc}^A \otimes \idmat_{B_2}$ (compare with the proof of Lemma \ref{Lem:VerifySemicausality}). Using this relation makes clear, that step 8 extracts $\tau_{sc}^A$ from $\tau_{sc}$ and that the result is independent of the choice of $\ket{\beta}$. 
    \item Step 10 defines $A$ as the Stinespring dilation of the (reduced) map whose Choi–Jamiołkowski operator is $\tau_{sc}^A$. The dilation constructed in this way is minimal. This is exactly the way in which the operator $W = A$ was constructed in the proof of Theorem \ref{Thm:SemicausalInfiniteDimMainResult}. 
    \item Step 11 obtains $U$ by solving the defining relation (for $\tilde{U}$) in the proof of Theorem \ref{Thm:SemicausalInfiniteDimMainResult}. One might wonder why the solution to this system of equations is unique (even though $\mathcal{M}$ is not a square matrix). Uniqueness follows from the minimality of $A \otimes \idmat_B$, that is, vectors the form $(X_A \otimes \idmat_{FB})(A \otimes \idmat_B)\ket{\psi}$ span $\mathcal{H}_A \otimes \mathcal{H}_B \otimes \mathcal{H}_E$. In detail, if $U$ and $U^\prime$ satisfy $\mathcal{M}(U) = \mathcal{M}(U^\prime)$, then $0 = (\idmat_A \otimes (U-U^\prime))(A \otimes \idmat_B)$ and hence $0 = (\idmat_A \otimes (U-U^\prime))(X_A \otimes \idmat_{FB})(A \otimes \idmat_B)\ket{\psi}$. By linearity, this implies $U-U^\prime = 0$. 
    \item Step 12 computes the operator $K$ in the GKLS-form according to the discussion above. 
    \item Step 13 defines an operator $K_{sc}$, which according the statement of Theorem \ref{Thm:SemicausalInfiniteDimMainResult} and also due to the discussion below Eq. \eqref{Eq:EqToFindFormOfKSc} is of the form $K_A \otimes \idmat_B + \idmat_A \otimes iH_B$.
    \item Steps 14 and 15 extract $K_A$ and $H_B$ from $K_{sc}$. Note that such a decomposition is not unique, since for any $\lambda \in \R$, the transformation $K_A \rightarrow K_A + i\lambda \idmat_A$, $H_B \rightarrow H_B - \lambda \idmat_B$ leaves $K_{sc}$ invariant. This transformation, however, allows us to choose $H_B$ traceless. In that case steps 14 and 15 determine $K_A$ and $H_B$. 
\end{itemize}

\section{Constructive Approach to Theorem \ref{Thm:CharacterizationGeneratorsSuperchannels}} \label{Appe:ComputationalConstruction2}

In this appendix, we are going to describe in detail, how one can computationally the operators $A$, $U$, $B$, $H_A$ and $H_B$ in Theorem \ref{Thm:CharacterizationGeneratorsSuperchannels}, if the conditions of Lemma \ref{thm:checkforGeneratorSuperchannel} are met. We use the notation from Appendix \ref{Appe:ComputationalConstruction}. 

Given the operator $\hat{\mathfrak{L}} \in \blt(\mathcal{H}_{A_1} \otimes \mathcal{H}_{B_1} \otimes \mathcal{H}_{A_2} \otimes \mathcal{H}_{B_2})$ be given as in Lemma \ref{thm:checkforGeneratorSuperchannel} then we can compute the operators $A$, $U$, $B$, $H_A$ and $H_B$ via the following eight steps:

\begin{enumerate}
    \item Apply the steps 1-18 in the protocol in Appendix \ref{Appe:ComputationalConstruction} to $\hat{\mathfrak{L}}$. This yields $\mathcal{H}_E = \mathcal{H}_{A_1} \otimes \mathcal{H}_{B_1} \otimes \mathcal{H}_{A_2} \otimes \mathcal{H}_{B_2}$, $\tilde{A} \in \blt(\mathcal{H}_{A_2}; \mathcal{H}_{A_0} \otimes \mathcal{H}_E)$, $\tilde{U} \in \blt(\mathcal{H}_E \otimes \mathcal{H}_B; \mathcal{H}_B \otimes \mathcal{H}_E)$, $\tilde{K}_A \in \blt(\mathcal{H}_A)$ and $\tilde{H}_B \in \blt(\mathcal{H}_B)$.
    \item Choose any unit vector $\ket{\xi} \in \mathcal{H}_E$.
    \item Compute $\sigma = \ket{\xi}\bra{\xi}$.
    \item Compute $A = (\idmat_{A_{-1}} \otimes \idmat_E \otimes \bra{\Omega_{A_0;A_3}})(\idmat_{A_{-1}} \otimes \flip_{A_0; E}\tilde{A} \otimes \idmat_{A_3})(\ket{\Omega_{A_{-1}; A_2}} \otimes \idmat_{A_3} \otimes \bra{\xi})$.
    \item Compute $B = \tilde{B}(\idmat_B \otimes \bra{\xi})$.
    \item Compute $U = \tilde{U}\flip_{B;E}$.
    \item Set $H_B = \tilde{H}_B$.
    \item Calculate $H_A = \frac{1}{2i}(\tilde{K}_A^T - \tilde{K}_A^{\dagger T})$, where the transposition is w.r.t. the $\{\ket{a_i}\}$ basis defined in Appendix \ref{Appe:ComputationalConstruction}.
\end{enumerate}

Let us comment on why the steps above yield the right result:

\begin{itemize}
    \item Step 1 can be executed, since the assumptions of Lemma \ref{Lem:VerifySemicausality} are the first two assumptions in Lemma \ref{thm:checkforGeneratorSuperchannel}. 
    \item Steps 2, 3 define $\sigma$ as in the proof of Theorem \ref{Thm:CharacterizationGeneratorsSuperchannels}.
    \item Step 4 is a more explicit expression for $\tilde{A}^{T_A}\Xi_A^\dagger$ in the proof of Theorem \ref{Thm:CharacterizationGeneratorsSuperchannels}.
    \item Steps 5, 6 and 7 are exactly the definitions of $B$, $U$ and $\mathcal{H}_B$ in the proof of Theorem \ref{Thm:CharacterizationGeneratorsSuperchannels}.
    \item For step 8, we note that the condition $\ptr{A_1}{\hat{\mathfrak{L}}^A} = 0$ implies $L^A(\idmat) = 0$ so that we can follow the last few sentences in the proof of Theorem \ref{Thm:CharacterizationGeneratorsSuperchannels}.
\end{itemize}

\bibliography{aipsamp}

\providecommand{\noopsort}[1]{}\providecommand{\singleletter}[1]{#1}%
\begin{thebibliography}{43}%
\makeatletter
\providecommand \@ifxundefined [1]{%
 \@ifx{#1\undefined}
}%
\providecommand \@ifnum [1]{%
 \ifnum #1\expandafter \@firstoftwo
 \else \expandafter \@secondoftwo
 \fi
}%
\providecommand \@ifx [1]{%
 \ifx #1\expandafter \@firstoftwo
 \else \expandafter \@secondoftwo
 \fi
}%
\providecommand \natexlab [1]{#1}%
\providecommand \enquote  [1]{``#1''}%
\providecommand \bibnamefont  [1]{#1}%
\providecommand \bibfnamefont [1]{#1}%
\providecommand \citenamefont [1]{#1}%
\providecommand \href@noop [0]{\@secondoftwo}%
\providecommand \href [0]{\begingroup \@sanitize@url \@href}%
\providecommand \@href[1]{\@@startlink{#1}\@@href}%
\providecommand \@@href[1]{\endgroup#1\@@endlink}%
\providecommand \@sanitize@url [0]{\catcode `\\12\catcode `\$12\catcode
  `\&12\catcode `\#12\catcode `\^12\catcode `\_12\catcode `\%12\relax}%
\providecommand \@@startlink[1]{}%
\providecommand \@@endlink[0]{}%
\providecommand \url  [0]{\begingroup\@sanitize@url \@url }%
\providecommand \@url [1]{\endgroup\@href {#1}{\urlprefix }}%
\providecommand \urlprefix  [0]{URL }%
\providecommand \Eprint [0]{\href }%
\providecommand \doibase [0]{http://dx.doi.org/}%
\providecommand \selectlanguage [0]{\@gobble}%
\providecommand \bibinfo  [0]{\@secondoftwo}%
\providecommand \bibfield  [0]{\@secondoftwo}%
\providecommand \translation [1]{[#1]}%
\providecommand \BibitemOpen [0]{}%
\providecommand \bibitemStop [0]{}%
\providecommand \bibitemNoStop [0]{.\EOS\space}%
\providecommand \EOS [0]{\spacefactor3000\relax}%
\providecommand \BibitemShut  [1]{\csname bibitem#1\endcsname}%
\let\auto@bib@innerbib\@empty
\bibitem [{\citenamefont {Chiribella}\ \emph
  {et~al.}(2008{\natexlab{a}})\citenamefont {Chiribella}, \citenamefont
  {D{\textquotesingle}Ariano},\ and\ \citenamefont
  {Perinotti}}]{Chiribella_2008}%
  \BibitemOpen
  \bibfield  {author} {\bibinfo {author} {\bibfnamefont {G.}~\bibnamefont
  {Chiribella}}, \bibinfo {author} {\bibfnamefont {G.~M.}\ \bibnamefont
  {D{\textquotesingle}Ariano}}, \ and\ \bibinfo {author} {\bibfnamefont
  {P.}~\bibnamefont {Perinotti}},\ }\href {\doibase 10.1209/0295-5075/83/30004}
  {\bibfield  {journal} {\bibinfo  {journal} {{EPL} (Europhysics Letters)}\
  }\textbf {\bibinfo {volume} {83}},\ \bibinfo {pages} {30004} (\bibinfo {year}
  {2008}{\natexlab{a}})}\BibitemShut {NoStop}%
\bibitem [{\citenamefont {Beckman}\ \emph {et~al.}(2001)\citenamefont
  {Beckman}, \citenamefont {Gottesman}, \citenamefont {Nielsen},\ and\
  \citenamefont {Preskill}}]{Beckman.2001}%
  \BibitemOpen
  \bibfield  {author} {\bibinfo {author} {\bibfnamefont {D.}~\bibnamefont
  {Beckman}}, \bibinfo {author} {\bibfnamefont {D.}~\bibnamefont {Gottesman}},
  \bibinfo {author} {\bibfnamefont {M.~A.}\ \bibnamefont {Nielsen}}, \ and\
  \bibinfo {author} {\bibfnamefont {J.}~\bibnamefont {Preskill}},\ }\href
  {\doibase 10.1103/PhysRevA.64.052309} {\bibfield  {journal} {\bibinfo
  {journal} {Phys. Rev. A}\ }\textbf {\bibinfo {volume} {64}},\ \bibinfo
  {pages} {052309} (\bibinfo {year} {2001})}\BibitemShut {NoStop}%
\bibitem [{\citenamefont {Chiribella}\ \emph
  {et~al.}(2008{\natexlab{b}})\citenamefont {Chiribella}, \citenamefont
  {D'Ariano},\ and\ \citenamefont {Perinotti}}]{PhysRevLett.101.060401}%
  \BibitemOpen
  \bibfield  {author} {\bibinfo {author} {\bibfnamefont {G.}~\bibnamefont
  {Chiribella}}, \bibinfo {author} {\bibfnamefont {G.~M.}\ \bibnamefont
  {D'Ariano}}, \ and\ \bibinfo {author} {\bibfnamefont {P.}~\bibnamefont
  {Perinotti}},\ }\href {\doibase 10.1103/PhysRevLett.101.060401} {\bibfield
  {journal} {\bibinfo  {journal} {Phys. Rev. Lett.}\ }\textbf {\bibinfo
  {volume} {101}},\ \bibinfo {pages} {060401} (\bibinfo {year}
  {2008}{\natexlab{b}})}\BibitemShut {NoStop}%
\bibitem [{\citenamefont {Chiribella}\ \emph {et~al.}(2009)\citenamefont
  {Chiribella}, \citenamefont {D'Ariano},\ and\ \citenamefont
  {Perinotti}}]{PhysRevA.80.022339}%
  \BibitemOpen
  \bibfield  {author} {\bibinfo {author} {\bibfnamefont {G.}~\bibnamefont
  {Chiribella}}, \bibinfo {author} {\bibfnamefont {G.~M.}\ \bibnamefont
  {D'Ariano}}, \ and\ \bibinfo {author} {\bibfnamefont {P.}~\bibnamefont
  {Perinotti}},\ }\href {\doibase 10.1103/PhysRevA.80.022339} {\bibfield
  {journal} {\bibinfo  {journal} {Phys. Rev. A}\ }\textbf {\bibinfo {volume}
  {80}},\ \bibinfo {pages} {022339} (\bibinfo {year} {2009})}\BibitemShut
  {NoStop}%
\bibitem [{\citenamefont {Bisio}\ and\ \citenamefont
  {Perinotti}(2019)}]{doi:10.1098/rspa.2018.0706}%
  \BibitemOpen
  \bibfield  {author} {\bibinfo {author} {\bibfnamefont {A.}~\bibnamefont
  {Bisio}}\ and\ \bibinfo {author} {\bibfnamefont {P.}~\bibnamefont
  {Perinotti}},\ }\href {\doibase 10.1098/rspa.2018.0706} {\bibfield  {journal}
  {\bibinfo  {journal} {Proceedings of the Royal Society A: Mathematical,
  Physical and Engineering Sciences}\ }\textbf {\bibinfo {volume} {475}},\
  \bibinfo {pages} {20180706} (\bibinfo {year} {2019})},\ \Eprint
  {http://arxiv.org/abs/https://royalsocietypublishing.org/doi/pdf/10.1098/rspa.2018.0706}
  {https://royalsocietypublishing.org/doi/pdf/10.1098/rspa.2018.0706}
  \BibitemShut {NoStop}%
\bibitem [{\citenamefont {Oreshkov}\ \emph {et~al.}(2012)\citenamefont
  {Oreshkov}, \citenamefont {Costa},\ and\ \citenamefont
  {Brukner}}]{Oreshkov2012}%
  \BibitemOpen
  \bibfield  {author} {\bibinfo {author} {\bibfnamefont {O.}~\bibnamefont
  {Oreshkov}}, \bibinfo {author} {\bibfnamefont {F.}~\bibnamefont {Costa}}, \
  and\ \bibinfo {author} {\bibfnamefont {{\v{C}}.}~\bibnamefont {Brukner}},\
  }\href {\doibase 10.1038/ncomms2076} {\bibfield  {journal} {\bibinfo
  {journal} {Nature Communications}\ }\textbf {\bibinfo {volume} {3}},\
  \bibinfo {pages} {1092} (\bibinfo {year} {2012})}\BibitemShut {NoStop}%
\bibitem [{\citenamefont {Chiribella}\ \emph
  {et~al.}(2013{\natexlab{a}})\citenamefont {Chiribella}, \citenamefont
  {D'Ariano}, \citenamefont {Perinotti},\ and\ \citenamefont
  {Valiron}}]{PhysRevA.88.022318}%
  \BibitemOpen
  \bibfield  {author} {\bibinfo {author} {\bibfnamefont {G.}~\bibnamefont
  {Chiribella}}, \bibinfo {author} {\bibfnamefont {G.~M.}\ \bibnamefont
  {D'Ariano}}, \bibinfo {author} {\bibfnamefont {P.}~\bibnamefont {Perinotti}},
  \ and\ \bibinfo {author} {\bibfnamefont {B.}~\bibnamefont {Valiron}},\ }\href
  {\doibase 10.1103/PhysRevA.88.022318} {\bibfield  {journal} {\bibinfo
  {journal} {Phys. Rev. A}\ }\textbf {\bibinfo {volume} {88}},\ \bibinfo
  {pages} {022318} (\bibinfo {year} {2013}{\natexlab{a}})}\BibitemShut
  {NoStop}%
\bibitem [{\citenamefont {Castro-Ruiz}\ \emph {et~al.}(2018)\citenamefont
  {Castro-Ruiz}, \citenamefont {Giacomini},\ and\ \citenamefont
  {Brukner}}]{PhysRevX.8.011047}%
  \BibitemOpen
  \bibfield  {author} {\bibinfo {author} {\bibfnamefont {E.}~\bibnamefont
  {Castro-Ruiz}}, \bibinfo {author} {\bibfnamefont {F.}~\bibnamefont
  {Giacomini}}, \ and\ \bibinfo {author} {\bibfnamefont {{\v{C}}.}~\bibnamefont
  {Brukner}},\ }\href {\doibase 10.1103/PhysRevX.8.011047} {\bibfield
  {journal} {\bibinfo  {journal} {Phys. Rev. X}\ }\textbf {\bibinfo {volume}
  {8}},\ \bibinfo {pages} {011047} (\bibinfo {year} {2018})}\BibitemShut
  {NoStop}%
\bibitem [{\citenamefont {Selby}\ \emph {et~al.}(2020)\citenamefont {Selby},
  \citenamefont {Sainz},\ and\ \citenamefont
  {Horodecki}}]{selby2020revisiting}%
  \BibitemOpen
  \bibfield  {author} {\bibinfo {author} {\bibfnamefont {J.~H.}\ \bibnamefont
  {Selby}}, \bibinfo {author} {\bibfnamefont {A.~B.}\ \bibnamefont {Sainz}}, \
  and\ \bibinfo {author} {\bibfnamefont {P.}~\bibnamefont {Horodecki}},\
  }\href@noop {} {\enquote {\bibinfo {title} {Revisiting dynamics of quantum
  causal structures -- when can causal order evolve?}}\ } (\bibinfo {year}
  {2020}),\ \Eprint {http://arxiv.org/abs/2008.12757} {arXiv:2008.12757
  [quant-ph]} \BibitemShut {NoStop}%
\bibitem [{\citenamefont {Eggeling}\ \emph {et~al.}(2002)\citenamefont
  {Eggeling}, \citenamefont {Schlingemann},\ and\ \citenamefont
  {Werner}}]{Eggeling.2002}%
  \BibitemOpen
  \bibfield  {author} {\bibinfo {author} {\bibfnamefont {T.}~\bibnamefont
  {Eggeling}}, \bibinfo {author} {\bibfnamefont {D.}~\bibnamefont
  {Schlingemann}}, \ and\ \bibinfo {author} {\bibfnamefont {R.~F.}\
  \bibnamefont {Werner}},\ }\href {\doibase 10.1209/epl/i2002-00579-4}
  {\bibfield  {journal} {\bibinfo  {journal} {Europhysics Letters ({EPL})}\
  }\textbf {\bibinfo {volume} {57}},\ \bibinfo {pages} {782} (\bibinfo {year}
  {2002})}\BibitemShut {NoStop}%
\bibitem [{\citenamefont {Piani}\ \emph {et~al.}(2006)\citenamefont {Piani},
  \citenamefont {Horodecki}, \citenamefont {Horodecki},\ and\ \citenamefont
  {Horodecki}}]{Piani.2006}%
  \BibitemOpen
  \bibfield  {author} {\bibinfo {author} {\bibfnamefont {M.}~\bibnamefont
  {Piani}}, \bibinfo {author} {\bibfnamefont {M.}~\bibnamefont {Horodecki}},
  \bibinfo {author} {\bibfnamefont {P.}~\bibnamefont {Horodecki}}, \ and\
  \bibinfo {author} {\bibfnamefont {R.}~\bibnamefont {Horodecki}},\ }\href
  {\doibase 10.1103/PhysRevA.74.012305} {\bibfield  {journal} {\bibinfo
  {journal} {Phys. Rev. A}\ }\textbf {\bibinfo {volume} {74}},\ \bibinfo
  {pages} {012305} (\bibinfo {year} {2006})}\BibitemShut {NoStop}%
\bibitem [{\citenamefont {Reed}\ and\ \citenamefont
  {Simon}(1980)}]{Reed:104857}%
  \BibitemOpen
  \bibfield  {author} {\bibinfo {author} {\bibfnamefont {M.}~\bibnamefont
  {Reed}}\ and\ \bibinfo {author} {\bibfnamefont {B.}~\bibnamefont {Simon}},\
  }\href {https://cds.cern.ch/record/104857} {\emph {\bibinfo {title} {{Methods
  of modern mathematical physics; Rev. ed.}}}}\ (\bibinfo  {publisher}
  {Academic Press},\ \bibinfo {address} {San Diego, CA},\ \bibinfo {year}
  {1980})\BibitemShut {NoStop}%
\bibitem [{\citenamefont {Attal}(2021{\natexlab{a}})}]{AttalPartialTrace}%
  \BibitemOpen
  \bibfield  {author} {\bibinfo {author} {\bibfnamefont {S.}~\bibnamefont
  {Attal}},\ }\href
  {http://math.univ-lyon1.fr/homes-www/attal/Partial_traces.pdf} {\enquote
  {\bibinfo {title} {Tensor products and partial traces},}\ } (\bibinfo {year}
  {2021}{\natexlab{a}})\BibitemShut {NoStop}%
\bibitem [{\citenamefont {Davies}(1976)}]{davies1976quantum}%
  \BibitemOpen
  \bibfield  {author} {\bibinfo {author} {\bibfnamefont {E.}~\bibnamefont
  {Davies}},\ }\href {https://books.google.de/books?id=I5kuAAAAIAAJ} {\emph
  {\bibinfo {title} {Quantum Theory of Open Systems}}}\ (\bibinfo  {publisher}
  {Academic Press},\ \bibinfo {year} {1976})\BibitemShut {NoStop}%
\bibitem [{\citenamefont {Attal}(2021{\natexlab{b}})}]{AttalCPMaps}%
  \BibitemOpen
  \bibfield  {author} {\bibinfo {author} {\bibfnamefont {S.}~\bibnamefont
  {Attal}},\ }\href {http://math.univ-lyon1.fr/~attal/Quantum_Channels.pdf}
  {\enquote {\bibinfo {title} {Quantum channels},}\ } (\bibinfo {year}
  {2021}{\natexlab{b}})\BibitemShut {NoStop}%
\bibitem [{\citenamefont {Choi}(1975)}]{CHOI1975285}%
  \BibitemOpen
  \bibfield  {author} {\bibinfo {author} {\bibfnamefont {M.-D.}\ \bibnamefont
  {Choi}},\ }\href {\doibase https://doi.org/10.1016/0024-3795(75)90075-0}
  {\bibfield  {journal} {\bibinfo  {journal} {Linear Algebra and its
  Applications}\ }\textbf {\bibinfo {volume} {10}},\ \bibinfo {pages} {285}
  (\bibinfo {year} {1975})}\BibitemShut {NoStop}%
\bibitem [{\citenamefont {Jamiołkowski}(1972)}]{JAMIOLKOWSKI1972275}%
  \BibitemOpen
  \bibfield  {author} {\bibinfo {author} {\bibfnamefont {A.}~\bibnamefont
  {Jamiołkowski}},\ }\href {\doibase
  https://doi.org/10.1016/0034-4877(72)90011-0} {\bibfield  {journal} {\bibinfo
   {journal} {Reports on Mathematical Physics}\ }\textbf {\bibinfo {volume}
  {3}},\ \bibinfo {pages} {275} (\bibinfo {year} {1972})}\BibitemShut {NoStop}%
\bibitem [{\citenamefont {Engel}\ and\ \citenamefont
  {Nagel}(2006)}]{engel2006one}%
  \BibitemOpen
  \bibfield  {author} {\bibinfo {author} {\bibfnamefont {K.}~\bibnamefont
  {Engel}}\ and\ \bibinfo {author} {\bibfnamefont {R.}~\bibnamefont {Nagel}},\
  }\href@noop {} {\emph {\bibinfo {title} {One-Parameter Semigroups for Linear
  Evolution Equations}}},\ Graduate Texts in Mathematics\ (\bibinfo
  {publisher} {Springer New York},\ \bibinfo {year} {2006})\BibitemShut
  {NoStop}%
\bibitem [{\citenamefont {Lindblad}(1976)}]{Lindblad.1976}%
  \BibitemOpen
  \bibfield  {author} {\bibinfo {author} {\bibfnamefont {G.}~\bibnamefont
  {Lindblad}},\ }\href {\doibase 10.1007/BF01608499} {\bibfield  {journal}
  {\bibinfo  {journal} {Communications in Mathematical Physics}\ }\textbf
  {\bibinfo {volume} {48}},\ \bibinfo {pages} {119} (\bibinfo {year}
  {1976})}\BibitemShut {NoStop}%
\bibitem [{\citenamefont {Gorini}\ \emph {et~al.}(1976)\citenamefont {Gorini},
  \citenamefont {Kossakowski},\ and\ \citenamefont {Sudarshan}}]{Gorini.1976}%
  \BibitemOpen
  \bibfield  {author} {\bibinfo {author} {\bibfnamefont {V.}~\bibnamefont
  {Gorini}}, \bibinfo {author} {\bibfnamefont {A.}~\bibnamefont {Kossakowski}},
  \ and\ \bibinfo {author} {\bibfnamefont {E.~C.~G.}\ \bibnamefont
  {Sudarshan}},\ }\href {\doibase 10.1063/1.522979} {\bibfield  {journal}
  {\bibinfo  {journal} {Journal of Mathematical Physics}\ }\textbf {\bibinfo
  {volume} {17}},\ \bibinfo {pages} {821} (\bibinfo {year} {1976})}\BibitemShut
  {NoStop}%
\bibitem [{\citenamefont {Wolf}\ \emph {et~al.}(2008)\citenamefont {Wolf},
  \citenamefont {Eisert}, \citenamefont {Cubitt},\ and\ \citenamefont
  {Cirac}}]{wolf2008assessing}%
  \BibitemOpen
  \bibfield  {author} {\bibinfo {author} {\bibfnamefont {M.~M.}\ \bibnamefont
  {Wolf}}, \bibinfo {author} {\bibfnamefont {J.}~\bibnamefont {Eisert}},
  \bibinfo {author} {\bibfnamefont {T.~S.}\ \bibnamefont {Cubitt}}, \ and\
  \bibinfo {author} {\bibfnamefont {J.~I.}\ \bibnamefont {Cirac}},\ }\href
  {\doibase 10.1103/PhysRevLett.101.150402} {\bibfield  {journal} {\bibinfo
  {journal} {Physical review letters}\ }\textbf {\bibinfo {volume} {101}},\
  \bibinfo {pages} {150402} (\bibinfo {year} {2008})}\BibitemShut {NoStop}%
\bibitem [{\citenamefont {Evans}\ and\ \citenamefont
  {Lewis}(1977)}]{evans1977dilations}%
  \BibitemOpen
  \bibfield  {author} {\bibinfo {author} {\bibfnamefont {D.}~\bibnamefont
  {Evans}}\ and\ \bibinfo {author} {\bibfnamefont {J.}~\bibnamefont {Lewis}},\
  }\href {https://books.google.de/books?id=KqJLzQEACAAJ} {\emph {\bibinfo
  {title} {Dilations of Irreversible Evolutions in Algebraic Quantum
  Theory}}},\ Communications of the Dublin Institute for Adv. Studies, Series A
  24\ (\bibinfo  {publisher} {Dublin Institute for Advanced Studies},\ \bibinfo
  {year} {1977})\BibitemShut {NoStop}%
\bibitem [{\citenamefont {B{\'a}tkai}\ \emph {et~al.}(2017)\citenamefont
  {B{\'a}tkai}, \citenamefont {Fijav{\v{z}}},\ and\ \citenamefont
  {Rhandi}}]{batkai2017positive}%
  \BibitemOpen
  \bibfield  {author} {\bibinfo {author} {\bibfnamefont {A.}~\bibnamefont
  {B{\'a}tkai}}, \bibinfo {author} {\bibfnamefont {M.}~\bibnamefont
  {Fijav{\v{z}}}}, \ and\ \bibinfo {author} {\bibfnamefont {A.}~\bibnamefont
  {Rhandi}},\ }\href {https://books.google.de/books?id=gEcfDgAAQBAJ} {\emph
  {\bibinfo {title} {Positive Operator Semigroups: From Finite to Infinite
  Dimensions}}},\ Operator Theory: Advances and Applications\ (\bibinfo
  {publisher} {Springer International Publishing},\ \bibinfo {year}
  {2017})\BibitemShut {NoStop}%
\bibitem [{\citenamefont {Liggett}(2010)}]{Liggett.2010}%
  \BibitemOpen
  \bibfield  {author} {\bibinfo {author} {\bibfnamefont {T.~M.}\ \bibnamefont
  {Liggett}},\ }\href@noop {} {\emph {\bibinfo {title} {Continuous time Markov
  processes: an introduction}}},\ Vol.\ \bibinfo {volume} {113}\ (\bibinfo
  {publisher} {American Mathematical Soc.},\ \bibinfo {year}
  {2010})\BibitemShut {NoStop}%
\bibitem [{\citenamefont {Chiribella}\ \emph
  {et~al.}(2013{\natexlab{b}})\citenamefont {Chiribella}, \citenamefont
  {Toigo},\ and\ \citenamefont {Umanità}}]{InfiniteSuperchannels}%
  \BibitemOpen
  \bibfield  {author} {\bibinfo {author} {\bibfnamefont {G.}~\bibnamefont
  {Chiribella}}, \bibinfo {author} {\bibfnamefont {A.}~\bibnamefont {Toigo}}, \
  and\ \bibinfo {author} {\bibfnamefont {V.}~\bibnamefont {Umanità}},\ }\href
  {\doibase 10.1142/S1230161213500030} {\bibfield  {journal} {\bibinfo
  {journal} {Open Systems \& Information Dynamics}\ }\textbf {\bibinfo {volume}
  {20}},\ \bibinfo {pages} {1350003} (\bibinfo {year} {2013}{\natexlab{b}})},\
  \Eprint {http://arxiv.org/abs/https://doi.org/10.1142/S1230161213500030}
  {https://doi.org/10.1142/S1230161213500030} \BibitemShut {NoStop}%
\bibitem [{\citenamefont {Kretschmann}\ and\ \citenamefont
  {Werner}(2005)}]{kretschmann2005quantum}%
  \BibitemOpen
  \bibfield  {author} {\bibinfo {author} {\bibfnamefont {D.}~\bibnamefont
  {Kretschmann}}\ and\ \bibinfo {author} {\bibfnamefont {R.~F.}\ \bibnamefont
  {Werner}},\ }\href {\doibase 10.1103/PhysRevA.72.062323} {\bibfield
  {journal} {\bibinfo  {journal} {Phys. Rev. A}\ }\textbf {\bibinfo {volume}
  {72}},\ \bibinfo {pages} {062323} (\bibinfo {year} {2005})}\BibitemShut
  {NoStop}%
\bibitem [{\citenamefont {{Collins}}(2003)}]{8178732}%
  \BibitemOpen
  \bibfield  {author} {\bibinfo {author} {\bibfnamefont {B.}~\bibnamefont
  {{Collins}}},\ }\href {\doibase 10.1155/S107379280320917X} {\bibfield
  {journal} {\bibinfo  {journal} {International Mathematics Research Notices}\
  }\textbf {\bibinfo {volume} {2003}},\ \bibinfo {pages} {953} (\bibinfo {year}
  {2003})}\BibitemShut {NoStop}%
\bibitem [{\citenamefont {Collins}\ and\ \citenamefont
  {{\'{S}}niady}(2006)}]{Collins2006}%
  \BibitemOpen
  \bibfield  {author} {\bibinfo {author} {\bibfnamefont {B.}~\bibnamefont
  {Collins}}\ and\ \bibinfo {author} {\bibfnamefont {P.}~\bibnamefont
  {{\'{S}}niady}},\ }\href {\doibase 10.1007/s00220-006-1554-3} {\bibfield
  {journal} {\bibinfo  {journal} {Communications in Mathematical Physics}\
  }\textbf {\bibinfo {volume} {264}},\ \bibinfo {pages} {773} (\bibinfo {year}
  {2006})},\ \Eprint {http://arxiv.org/abs/0402073} {arXiv:0402073 [math-ph]}
  \BibitemShut {NoStop}%
\bibitem [{\citenamefont {Fukuda}\ \emph {et~al.}(2019)\citenamefont {Fukuda},
  \citenamefont {König},\ and\ \citenamefont {Nechita}}]{Fukuda_2019}%
  \BibitemOpen
  \bibfield  {author} {\bibinfo {author} {\bibfnamefont {M.}~\bibnamefont
  {Fukuda}}, \bibinfo {author} {\bibfnamefont {R.}~\bibnamefont {König}}, \
  and\ \bibinfo {author} {\bibfnamefont {I.}~\bibnamefont {Nechita}},\ }\href
  {\doibase 10.1088/1751-8121/ab434b} {\bibfield  {journal} {\bibinfo
  {journal} {Journal of Physics A: Mathematical and Theoretical}\ }\textbf
  {\bibinfo {volume} {52}},\ \bibinfo {pages} {425303} (\bibinfo {year}
  {2019})}\BibitemShut {NoStop}%
\bibitem [{Note1()}]{Note1}%
  \BibitemOpen
  \bibinfo {note} {Uniqueness of such a limit is clear. Existence follows by
  the Banach-Alaoglu Theorem and an application of the uniform boundedness
  priciple, which implies that the sequence $\Phi ^A_{m,n}(X_A)$ is
  norm-bounded}\BibitemShut {NoStop}%
\bibitem [{\citenamefont {Wolf}(2012)}]{WolfQuantumChannels}%
  \BibitemOpen
  \bibfield  {author} {\bibinfo {author} {\bibfnamefont {M.~M.}\ \bibnamefont
  {Wolf}},\ }\href
  {https://www-m5.ma.tum.de/foswiki/pub/M5/Allgemeines/MichaelWolf/QChannelLecture.pdf}
  {\enquote {\bibinfo {title} {Quantum channels \& operations: Guided tour},}\
  } (\bibinfo {year} {2012})\BibitemShut {NoStop}%
\bibitem [{Note2()}]{Note2}%
  \BibitemOpen
  \bibinfo {note} {The partial trace $\protect \mathrm {tr}_{\protect \mathbb
  {C}}\left [\cdot \right ]$ over the one-dimensional space $\protect \mathbb
  {C}$ is just to ensure formal similarity with Lemma \ref
  {Lem:TranslationLemma}.}\BibitemShut {Stop}%
\bibitem [{\citenamefont {Gour}\ and\ \citenamefont
  {Wilde}(2021)}]{PhysRevResearch.3.023096}%
  \BibitemOpen
  \bibfield  {author} {\bibinfo {author} {\bibfnamefont {G.}~\bibnamefont
  {Gour}}\ and\ \bibinfo {author} {\bibfnamefont {M.~M.}\ \bibnamefont
  {Wilde}},\ }\href {\doibase 10.1103/PhysRevResearch.3.023096} {\bibfield
  {journal} {\bibinfo  {journal} {Phys. Rev. Research}\ }\textbf {\bibinfo
  {volume} {3}},\ \bibinfo {pages} {023096} (\bibinfo {year}
  {2021})}\BibitemShut {NoStop}%
\bibitem [{\citenamefont {{Gour}}(2019)}]{Gour.2019}%
  \BibitemOpen
  \bibfield  {author} {\bibinfo {author} {\bibfnamefont {G.}~\bibnamefont
  {{Gour}}},\ }\href {\doibase 10.1109/TIT.2019.2907989} {\bibfield  {journal}
  {\bibinfo  {journal} {IEEE Transactions on Information Theory}\ }\textbf
  {\bibinfo {volume} {65}},\ \bibinfo {pages} {5880} (\bibinfo {year}
  {2019})}\BibitemShut {NoStop}%
\bibitem [{\citenamefont {Faist}\ \emph {et~al.}(2019)\citenamefont {Faist},
  \citenamefont {Berta},\ and\ \citenamefont
  {Brand\~ao}}]{PhysRevLett.122.200601}%
  \BibitemOpen
  \bibfield  {author} {\bibinfo {author} {\bibfnamefont {P.}~\bibnamefont
  {Faist}}, \bibinfo {author} {\bibfnamefont {M.}~\bibnamefont {Berta}}, \ and\
  \bibinfo {author} {\bibfnamefont {F.}~\bibnamefont {Brand\~ao}},\ }\href
  {\doibase 10.1103/PhysRevLett.122.200601} {\bibfield  {journal} {\bibinfo
  {journal} {Phys. Rev. Lett.}\ }\textbf {\bibinfo {volume} {122}},\ \bibinfo
  {pages} {200601} (\bibinfo {year} {2019})}\BibitemShut {NoStop}%
\bibitem [{\citenamefont {Chen}\ and\ \citenamefont
  {Chitambar}(2020)}]{chen2020entanglement}%
  \BibitemOpen
  \bibfield  {author} {\bibinfo {author} {\bibfnamefont {S.}~\bibnamefont
  {Chen}}\ and\ \bibinfo {author} {\bibfnamefont {E.}~\bibnamefont
  {Chitambar}},\ }\href {\doibase 10.22331/q-2020-07-16-299} {\bibfield
  {journal} {\bibinfo  {journal} {Quantum}\ }\textbf {\bibinfo {volume} {4}},\
  \bibinfo {pages} {299} (\bibinfo {year} {2020})}\BibitemShut {NoStop}%
\bibitem [{\citenamefont {Cubitt}\ \emph
  {et~al.}(2012{\natexlab{a}})\citenamefont {Cubitt}, \citenamefont {Eisert},\
  and\ \citenamefont {Wolf}}]{cubitt2012complexity}%
  \BibitemOpen
  \bibfield  {author} {\bibinfo {author} {\bibfnamefont {T.~S.}\ \bibnamefont
  {Cubitt}}, \bibinfo {author} {\bibfnamefont {J.}~\bibnamefont {Eisert}}, \
  and\ \bibinfo {author} {\bibfnamefont {M.~M.}\ \bibnamefont {Wolf}},\ }\href
  {\doibase 10.1007/s00220-011-1402-y} {\bibfield  {journal} {\bibinfo
  {journal} {Communications in Mathematical Physics}\ }\textbf {\bibinfo
  {volume} {310}},\ \bibinfo {pages} {383} (\bibinfo {year}
  {2012}{\natexlab{a}})}\BibitemShut {NoStop}%
\bibitem [{\citenamefont {Cubitt}\ \emph
  {et~al.}(2012{\natexlab{b}})\citenamefont {Cubitt}, \citenamefont {Eisert},\
  and\ \citenamefont {Wolf}}]{cubitt2012extracting}%
  \BibitemOpen
  \bibfield  {author} {\bibinfo {author} {\bibfnamefont {T.~S.}\ \bibnamefont
  {Cubitt}}, \bibinfo {author} {\bibfnamefont {J.}~\bibnamefont {Eisert}}, \
  and\ \bibinfo {author} {\bibfnamefont {M.~M.}\ \bibnamefont {Wolf}},\ }\href
  {\doibase 10.1103/PhysRevLett.108.120503} {\bibfield  {journal} {\bibinfo
  {journal} {Physical review letters}\ }\textbf {\bibinfo {volume} {108}},\
  \bibinfo {pages} {120503} (\bibinfo {year} {2012}{\natexlab{b}})}\BibitemShut
  {NoStop}%
\bibitem [{\citenamefont {Onorati}\ \emph {et~al.}(2021)\citenamefont
  {Onorati}, \citenamefont {Kohler},\ and\ \citenamefont
  {Cubitt}}]{onorati2021fitting}%
  \BibitemOpen
  \bibfield  {author} {\bibinfo {author} {\bibfnamefont {E.}~\bibnamefont
  {Onorati}}, \bibinfo {author} {\bibfnamefont {T.}~\bibnamefont {Kohler}}, \
  and\ \bibinfo {author} {\bibfnamefont {T.}~\bibnamefont {Cubitt}},\
  }\href@noop {} {\enquote {\bibinfo {title} {Fitting quantum noise models to
  tomography data},}\ } (\bibinfo {year} {2021}),\ \Eprint
  {http://arxiv.org/abs/2103.17243} {arXiv:2103.17243 [quant-ph]} \BibitemShut
  {NoStop}%
\bibitem [{\citenamefont {Wolf}\ and\ \citenamefont
  {Cirac}(2008)}]{wolf2008dividing}%
  \BibitemOpen
  \bibfield  {author} {\bibinfo {author} {\bibfnamefont {M.~M.}\ \bibnamefont
  {Wolf}}\ and\ \bibinfo {author} {\bibfnamefont {J.~I.}\ \bibnamefont
  {Cirac}},\ }\href {\doibase 10.1007/s00220-008-0411-y} {\bibfield  {journal}
  {\bibinfo  {journal} {Communications in Mathematical Physics}\ }\textbf
  {\bibinfo {volume} {279}},\ \bibinfo {pages} {147} (\bibinfo {year}
  {2008})}\BibitemShut {NoStop}%
\bibitem [{\citenamefont {Bausch}\ and\ \citenamefont
  {Cubitt}(2016)}]{bausch2016complexity}%
  \BibitemOpen
  \bibfield  {author} {\bibinfo {author} {\bibfnamefont {J.}~\bibnamefont
  {Bausch}}\ and\ \bibinfo {author} {\bibfnamefont {T.}~\bibnamefont
  {Cubitt}},\ }\href {\doibase 10.1016/j.laa.2016.03.041} {\bibfield  {journal}
  {\bibinfo  {journal} {Linear algebra and its applications}\ }\textbf
  {\bibinfo {volume} {504}},\ \bibinfo {pages} {64} (\bibinfo {year}
  {2016})}\BibitemShut {NoStop}%
\bibitem [{\citenamefont {Caro}\ and\ \citenamefont
  {Graswald}(2021)}]{caro2021necessary}%
  \BibitemOpen
  \bibfield  {author} {\bibinfo {author} {\bibfnamefont {M.~C.}\ \bibnamefont
  {Caro}}\ and\ \bibinfo {author} {\bibfnamefont {B.~R.}\ \bibnamefont
  {Graswald}},\ }\href {\doibase 10.1063/5.0031760} {\bibfield  {journal}
  {\bibinfo  {journal} {Journal of Mathematical Physics}\ }\textbf {\bibinfo
  {volume} {62}},\ \bibinfo {pages} {042203} (\bibinfo {year}
  {2021})}\BibitemShut {NoStop}%
\bibitem [{\citenamefont {Nachtergaele}\ \emph {et~al.}(2019)\citenamefont
  {Nachtergaele}, \citenamefont {Sims},\ and\ \citenamefont
  {Young}}]{nachtergaele2019quasi}%
  \BibitemOpen
  \bibfield  {author} {\bibinfo {author} {\bibfnamefont {B.}~\bibnamefont
  {Nachtergaele}}, \bibinfo {author} {\bibfnamefont {R.}~\bibnamefont {Sims}},
  \ and\ \bibinfo {author} {\bibfnamefont {A.}~\bibnamefont {Young}},\ }\href
  {\doibase 10.1063/1.5095769} {\bibfield  {journal} {\bibinfo  {journal}
  {Journal of Mathematical Physics}\ }\textbf {\bibinfo {volume} {60}},\
  \bibinfo {pages} {061101} (\bibinfo {year} {2019})}\BibitemShut {NoStop}%
\end{thebibliography}%

\end{document}